\documentclass[pdflatex,sn-mathphys-num]{sn-jnl}

\usepackage{array}
\usepackage{caption}
\usepackage{dcolumn}
\usepackage{graphicx}%
\usepackage{multirow}%
\usepackage{amsmath,amssymb,amsfonts}%
\usepackage{amsthm}%
\usepackage{mathrsfs}%
\usepackage[title]{appendix}%
\usepackage{textcomp}%
\usepackage{manyfoot}%
\usepackage{booktabs}%
\usepackage{algorithmicx}%
\usepackage{algpseudocode}%
\usepackage{listings}%
\newcommand{\toolshort}{JR}
\newcommand{\tool}{Java Ranger}
\usepackage[ruled,linesnumbered]{algorithm2e}
\usepackage{etoolbox}
\usepackage{textcomp}

\newcommand{\soha}[1]{%
  \iftoggle{majorcomment}{%
    \textcolor{red}{#1}%
  }{%
    \textcolor{black}{#1}%
  }%
}

\newtoggle{majorcomment}
\newtoggle{journalreport}
\newtoggle{technicalreport}
\toggletrue{technicalreport}
\togglefalse{journalreport}
\togglefalse{majorcomment}

\newcommand{\techreport}[1]{%
  \iftoggle{technicalreport}{#1}{}%
}
\newcommand{\journalreport}[1]{%
  \iftoggle{journalreport}{#1}{}%
}

\usepackage{syntax}
\usepackage{calc}
\usepackage{url}
\usepackage{color}
\usepackage{graphicx}
\usepackage{mathtools}
\usepackage{float}
\usepackage{soul}
\usepackage{multirow}
\usepackage{adjustbox}
\usepackage{listings}
\usepackage{cleveref}
\usepackage[english]{babel}
\usepackage{longtable}
\usepackage{amsmath,amssymb,amsfonts}
\usepackage{amsthm}
\newtheorem{thm}{Theorem}[section]
\newtheorem{lem}[thm]{Lemma}
\newtheorem{cor}[thm]{Corollary}

\newtheorem{prop}[thm]{Property}
\usepackage{graphicx}
\usepackage{textcomp}
\newcolumntype{d}[1]{D..{#1}}
\newcommand{\tw}[1]{\texttt{#1}}

\usepackage{amssymb,proof}
\usepackage{inconsolata}
\usepackage{tcolorbox}
\tcbuselibrary{listings, breakable}

\lstdefinestyle{mystyle}{
    language=Java,
    basicstyle=\footnotesize\ttfamily,
    numbers=left,
    numberstyle=\footnotesize\color{gray},
    stepnumber=1,
    numbersep=0pt,
    backgroundcolor=\color{white},
    showspaces=false,
    showstringspaces=false,
    showtabs=false,
    tabsize=2,
    captionpos=b,
    breaklines=false,
    breakatwhitespace=false,
    keywordstyle=\color{blue},
    commentstyle=\color{green!50!black},
    stringstyle=\color{red},
    keepspaces=true,
    escapeinside={(*@}{@*)}
}
\newcommand{\noop}[2]{noOp_{#1}(#2)}
\usepackage{xcolor}

\usepackage{cancel}
\newcommand\hcancel[2][black]{\setbox0=\hbox{$#2$}%
\rlap{\raisebox{.45\ht0}{\textcolor{#1}{\rule{\wd0}{1pt}}}}#2}

\newcommand{\rn}[1]{\scriptscriptstyle\mathtt{#1}}

\newcommand{\return}[1]{return\; #1}

\newcolumntype{C}[1]{>{\arraybackslash}p{#1}}

\newcommand{\concestep}[4]{(#1,#2)\ \leadsto_{c}{#4}\ #3}
\newcommand{\concsstep}[4]{(#1,#2)\ \Rightarrow_{c}{#4}\ #3}

\newcommand{\vdashconcestep}[3]{#1 \vdash #2 \ \leadsto_{c} #3}

\newcommand{\rangerstep}[5]{(#1,#2)\ \longrightarrow_{#5}(#3,#4) }

\newcommand{\rangersteplhsprime}[4]{(#1,#2)\ \longrightarrow_{#4}{#3}}
\newcommand{\rangerstepwrapper}[4]{#1\ \longrightarrow_{#4}(#2,#3) }
\newcommand{\rangersteprhs}[4]{#1\ \longrightarrow_{#4}(#2,#3) }
\newcommand{\rangerstepnolookup}[3]{#1\ \longrightarrow_{#3}#2 } 
\newcommand{\rangerexpstep}[5]{(#1,#2)\ \mapsto_{#5}(#3,#4) }
\newcommand{\rangerexpstepnostate}[3]{#1\ \mapsto_{#3} #2 }

\newcommand{\func}[1]{\textit{#1}}

\newcommand{\regiondef}[2]{(\overrightarrow{#1}.#2)}

\newcommand{\lookup}[2]{#1(#2)}

\newcommand{\phiexp}[3]{\phi(#1,#2,#3)}

\newcommand{\myif}[3]{if \;#1\;then\;#2\;else\;#3}
\newcommand*\unaryOp{\mathbin{\vcenter{\hbox{\rule{.9ex}{.9ex}}}}}
\newcommand{\binaryOp}{\diamond}
\newcommand{\gammaexp}[3]{\gamma(#1,#2,#3)}
\newcommand{\binaryop}[2]{#1 \binaryOp #2}
\newcommand{\unaryop}[1]{\unaryOp #1}

\newcommand{\putfield}[3]{put\_field(#1,#2,#3)}

\newcommand{\getfield}[3]{get\_field(#1,#2,#3)}

\newcommand{\invoke}[4]{invoke(#1,#2,#3,\overrightarrow{#4})}

\newcommand{\newobj}[2]{new(#1,#2)}

\newcommand{\update}[3]{#1[#2\xleftarrow[]{}#3]}

\theoremstyle{thmstyleone}%
\theoremstyle{thmstyletwo}%

\theoremstyle{thmstylethree}%

\begin{document}

\title{A Formal Semantics for Java Symbolic Evaluation using Large-Block Encoding}


\author*[1,2]{\fnm{Soha} \sur{Hussein}}\email{soha.hussein@cis.asu.edu.eg}

\author[3]{\fnm{Stephen} \sur{McCamant}}\email{mccamant@cs.umn.edu}

\author[4]{\fnm{Kelton} \sur{OBrien}}\email{kelpobri@iu.edu}

\author[3]{\fnm{Kuen-Bang} \sur{{Hou(Favonia)}}}\email{kbh@umn.edu}

\author[5]{\fnm{Michael} \sur{Whalen}}\email{mww@amazon.com}

\author[6]{\fnm{Vaibhav} \sur{Sharma}}\email{vsharma@zoox.com}

\affil*[1]{\orgdiv{Faculty of Computer and Information Sciences}, \orgname{Ain Shams University}, \orgaddress{\country{Egypt}}}

\affil*[2]{\orgdiv{
Faculty of Computing and Information Sciences}, \orgname{Egypt University of Informatics}, \orgaddress{\country{Egypt}}}

\affil[3]{\orgdiv{Faculty of Computer Science and Engineering}, \orgname{University of Minnesota}, \orgaddress{\country{USA}}}
\affil[4]{\orgdiv{Luddy School of Informatics, Computing, and Engineering}, \orgname{Indiana University}, \orgaddress{\country{USA}}}

\affil[5]{\orgname{Amazon Web Services}, \orgaddress{\country{USA}}}

\affil[6]{\orgname{Zoox, Inc.}, \orgaddress{\country{USA}}}


\abstract{

Symbolic execution plays a critical role in software reliability, as they are used 
to find bugs, generate test cases, and provide correctness guarantees, particularly for safety-critical systems. 
Yet their own 
correctness is rarely subject to formal scrutiny, as it is typically 
established empirically by evaluating tool behavior across many programs. 
This leaves open the possibility that the tools themselves introduce 
unsoundness, potentially invalidating the verification results they produce 
and undermining the very guarantees they are meant to provide. 

In this paper, we address this 
gap by providing the formal treatment of symbolic execution with 
path-merging, an optimization that improves path explosion by summarizing 
branching code regions into disjunctive constraints rather than exploring each 
path independently. Specifically, we target Java Ranger, a path-merging tool 
for Java programs that progressively transforms imperative Java code toward 
the language of formal logic through a series of code transformations. 
We 
formalize each of these transformations and prove their soundness with respect 
to a simplified version of the Java concrete semantics, establishing that Java 
Ranger's path-merging process preserves program semantics.


}

\keywords{Symbolic Execution, Path-Merging, Program Proofs, Program Analysis, Compilers}



\maketitle

\section{Introduction}\label{sec1}

Symbolic execution (SE)~\cite{symbolicsurvey} is a formal testing and verification technique that ensures reliability and safety of programs, particularly for critical software where correctness guarantees are essential. Rather than executing programs on concrete inputs, SE treats inputs as symbolic values and explores program paths by collecting path conditions; i.e., logical constraints that characterize the inputs that reach each path. This generality makes SE applicable to a wide range of tasks including test input generation~\cite{dart,cute,testjr}, program and specification repair~\cite{syminfer,contractdr}, equivalence checking~\cite{ramos,adaptorsynth}, vulnerability finding~\cite{driller,angr}, invariant discovery~\cite{feedbackinvariantdiscovery}, and protocol correctness checking~\cite{transport}.


%
Despite the widespread use of symbolic execution for assuring critical software, surprisingly little attention has been paid to formally establishing the correctness of the tools themselves. This creates a peculiar situation: we rely on symbolic execution to verify programs, yet the verifiers remain unverified. In practice, confidence in these tools is built empirically by running them on many programs and observing their behavior over time. While this approach can surface errors that manifest as visible failures (crashes or obviously incorrect results), it provides no formal guarantees about the tool's correctness.

The benchmarking practices in the field reflect this empirical approach. The Software Verification Competition (SV-COMP)~\cite{svcomp2026} provides a large suite of programs with known verdicts against which tools are tested. Notably, the competition's scoring scheme explicitly anticipates unsound behavior: a tool is penalized 16 points for incorrectly reporting a safe program as unsafe (false positive), and 32 points for incorrectly reporting an unsafe program as safe (false negative)~\cite{svcomp2026rules}. The asymmetric penalties acknowledge that claiming safety incorrectly is more dangerous than spurious warnings, but the very existence of these penalties assumes that tools will produce wrong answers.
This assumption is borne out in practice. In SV-COMP 2026, among 7 participating tools for Java verification, 4 tools produced a combined total of 21 incorrect verdicts: GDart (2), JBMC (14), jLiSA (1), and MLB (4). 

It is crucial to distinguish between two fundamentally different sources of incorrect verdicts. First, bounded search (depth limits, timeouts) may prevent complete exploration. A "safe" verdict could be sound for all explored paths but incomplete for the full program. Second, tools may contain flaws in their core reasoning: bugs in modeling language semantics, program translation, or constraint solving. Here, even explored paths are analyzed incorrectly. Without formal semantics, these failure modes are indistinguishable, if at all observable. Users cannot tell whether "safe" means the bug was unreached or incorrectly analyzed. Formal semantics and soundness proofs address the second problem (the focus of our paper): ensuring that whatever a tool analyzes, it analyzes correctly.

Ironically, even establishing ground truth for verification benchmarks is challenging due to a circular dependency: benchmarks rely on tools to reveal bugs, yet we need correct verdicts to evaluate whether tools work correctly. This is reflected in practice: verification tasks are often added to SV-COMP with provisional verdicts, which are later revised when tools discover previously undetected bugs.
Even tools with strong empirical track records can harbor subtle soundness bugs that remain undetected for years. Java-Ranger won Gold Medals at SV-COMP in 2020, 2021, and 2025, yet during this work, we discovered a bug that had existed since its inception in 2020. 
\soha{The bug was a subtle soundness flaw in the Field-SSA transformation that typically remains latent under standard coding patterns, surfacing only in specific nested-aliasing configurations: Java Ranger failed to preserve sequential execution order when resolving field-manipulation statements. More precisely, during fixpoint evaluation, it could process later field statements before earlier ones that were still awaiting reference resolution, leading to out-of-order chaining of the created SSA variables and producing incorrect symbolic summaries.}
This bug was never caught by test suites or competition evaluations. It only surfaced when we attempted to prove soundness of the transformation and the proof failed. This exemplifies what formal verification of tools provides: bugs that are invisible to empirical testing become provably impossible when the tool's correctness is established formally.

In this paper, we address this gap by targeting the most semantically complex 
component of symbolic execution tools, where the risk of unsoundness is 
highest. Formally verifying a complete symbolic execution engine end-to-end 
is an ambitious undertaking that remains an open challenge, but a natural 
first step is to focus on the component whose incorrectness would have the 
most severe consequences. We therefore target Java Ranger (JR)~\cite{javaranger,svcomp2026}, a symbolic 
execution tool with path-merging for Java programs with many applications~\cite{jrfuzz,contractdr,testjr}. JR operates by identifying 
regions of code where branching is about to occur during dynamic exploration, 
and attempts to collapse each such region into a disjunctive predicate that 
collectively describes the behavior of all paths through it. This summarization 
is achieved through a sequence of systematic \textit{transformations}, each of 
which abstracts away one language feature, progressively moving the 
representation of the region from imperative Java toward the language of 
formal logic. A bug in any one of these transformations could silently corrupt 
the summary, causing JR to report verdicts that do not reflect the behavior of 
the original program. 
\soha{We therefore formalize each of JR's transformations over a simplified subset of Java features operating on an intermediate representation (IR). Assuming soundness of the underlying symbolic execution engine, we establish that the produced summaries faithfully preserve concrete program semantics, provided that the summarization fixpoint loop terminates.}
Although the formalization we provide in this paper is not machine-checked, we view this as a foundational step toward a broader formal treatment of symbolic execution tools.


The remainder of the paper is organized as follows. Section~\ref{sec:overview} presents an overview of Java Ranger and its transformations. Section~\ref{sec:related-work} presents related work and Section~\ref{sec:concrete} presents the concrete semantics of the bytecode-like language that we use for proving soundness of JR transformations. Section~\ref{sec:DSEEnv} presents the underlying assumptions and structure of traditional symbolic execution and Section~\ref{sec:JREnv} presents the initial state of Java Ranger and its environment. The subsequent sections (Section~\ref{sec:gamma}-Section~\ref{sec:constpropagation}) formalize and prove the soundness of each transformation in turn: $\gamma$-creation (Section~\ref{sec:gamma}), which identifies the region of interest for summarization; early-return elimination (Section~\ref{sec:earlyreturns} and Section~\ref{sec:finalreturn}), which normalizes control flow by removing intermediate returns; variable renaming (Section~\ref{sec:renaming}), which ensures variable names are unique across paths; variable substitution (Section~\ref{sec:substitution}), which propagates variable definitions forward through the region; method inlining (Section~\ref{sec:inlining}), which eliminates method invocations by substituting their definitions; and field SSA creation (Section~\ref{sec:field}), which converts field operations into SSA form while accounting for possible side effects of method invocations. Finally, Section~\ref{sec:fixpoint} brings the transformations together, presents the Java Ranger fixpoint algorithm, and proves the soundness of the overall path-merging process\soha{, provided that the fixpoint summarization loop terminates and assuming soundness of the underlying symbolic execution engine.}
\section{Java Ranger Overview}
\label{sec:overview}

Java Ranger is a path-merging symbolic execution for Java programs that operates by identifying regions of code where branching is about to occur during dynamic exploration, and attempts to collapse each such region into a disjunctive predicate that collectively describes the behavior of all paths through it. This summarization is achieved through a sequence of systematic \textit{transformations}, each of which abstracts away one language feature, progressively moving the representation of the region from imperative Java toward the language of formal logic.

The transformation pipeline proceeds as follows. The \textit{higher-order} transformation eliminates method invocations by inlining their definitions into the region. The \textit{field} transformation eliminates heap references by introducing Static Single Assignment (SSA) local variables that reflect the value or computation associated with each reference. Finally, \textit{single-path-cases} identifies constructs that resist summarization, such as heap allocations and exceptions, and marks them as \textit{exit points} to be explored individually. Together, these transformations produce a disjunctive constraint that Java Ranger submits to the underlying constraint solver in place of enumerating each path individually. %
At any point during the summarization, if any of the transformations fails, then JR reverts to using traditional SE exploring each path separately.

For example, consider the code in Listing~\ref{lst:example}. The code takes in an integer and it separates the number of set bits from the remaining bits. For example, for $i=20$ the output is $11000000$, assuming that $i$ is an 8-bit integer. To do that, the program runs in a loop while $i$ is not equal zero. In each iteration, it checks if the rightmost bit is one {\bf line 4}. If yes, then the program performs an unsigned right shift, sets the leftmost bit on {\bf (line 10)}, and removes the rightmost bit in $i$ {\bf (line 11)}. Otherwise, the program tries to find the number of rightmost trailing zeros, and then shifts $i$ by that number {\bf (lines 7, 9)}.  


\begin{tcolorbox}[
    listing only,
    listing options={style=mystyle},
    colback=white,
    colframe=black!30,
    top=2pt, bottom=2pt
]
\begin{lstlisting}[caption={Example: separates
the number of set bits from the remaining bits},label={lst:example}, captionpos=b]
public int separateBits(int i) {
 int j = 9;
 while (i != 0) {
   int hasZeroTrail = (i & 1);
   if (hasZeroTrail == 0) {
     int numOfZeros = Integer.numTrailZero(i);
     i = i >> numOfZeros;
   } else {
     j = j >>> 1;
     j = j * Integer.reverse(1);
     i = i >> 1;
   } }
 return j; }
\end{lstlisting}
\end{tcolorbox}

To analyze the program with symbolic execution, at each iteration of the while loop operating over a single bit, there are two possibilities that the program needs to analyze. This results in a number of paths that are exponential in the number of bits in $i$, i.e., $2^{32}$ paths for an unrestricted 32-bit $i$. Using \toolshort\; one can explore the same program by collapsing branches inside the while loop {\bf (lines 5-12)}, as well as those that exist in \textit{numTrailZero}, resulting in exploring a maximum of 16 paths for the same input.

\toolshort\ operates in two phases, a static phase, and a dynamic phase. The static phase analyzes the code to be executed and identifies potential regions of code that could be merged. The dynamic phase attempts to summarize a specific code region into a disjunctive constraint. It does so by translating the Java features through a sequence of transformations. We now talk in detail about these two phases.

\subsection{Static Phase}
 During the static phase, the region\rq s to-be-merged code is first recovered into a statement in \toolshort's intermediate representation (IR), and processed through a series of static transformations, i.e., transformations that are not dependent on the dynamic values/objects. 
 In general, Java Ranger starts by creating the Control Flow Graph (CFG) of the program\rq s bytecode. It then scans the graph to find basic blocks with conditional instructions, i.e.,  a basic block with two successors. If one is found, it translates the basic blocks between the conditional instruction and its immediate post dominator as an {\tt if-statement} in \toolshort\rq s IR. 

 \toolshort\ avoids recovering statements with loops. This is done by disregarding a subgraph whose post-dominator that is also a predecessor of a node. While \toolshort\ will disregard recovering this subgraph, it will attempt to recover other subgraphs within the identified loop. In other words, \toolshort\ does not attempt to summarize loops but can summarize statements within a loop if no inner loop exists. 

 For example, the recovered \toolshort\; IR-statement for the example in Listing~\ref{lst:example} is:

 \begin{tcolorbox}[
    listing only,
    listing options={style=mystyle},
    colback=white,
    colframe=black!30,
    top=2pt, bottom=2pt
]
\begin{lstlisting}[caption={Initial IR statement for $seperateBits$}, label={lst:recoveredseperatebits}, captionpos=b]
 if (w7 != 0) {
   w11 := w19 >>> 1; 
   w13 := invoke(w13,Integer,reverse(I)I,1)
   w14 := w11 ^ w13; 
   w15 := w18 >> 1; 
  } else {
   invoke(w9,Integer,numTrailZero(I)I, w18)
   w10 := w18 >> 1; };
  (*@$\phi$@*)(w16, w15, w10); 
  (*@$\phi$@*)(w17, w14, w19); 
\end{lstlisting}
\end{tcolorbox}

Where {\tt w}$7$ refers to $hasZeroTrail$ variable, the first definition of $i$ is {\tt w}$18$, and finally the first definition of $j$ is {\tt w}$19$.

 After the control flow graph is translated into JR's IR, JR applies three static transformations: $\phi$-expressions elimination, early-returns elimination and finally alpha renaming of variables.

\begin{enumerate}
    \item {\bf$\gamma$-Creation}:
    
    In this transformation all if-statements with $\phi$-expressions, are rewritten to assignment statements, where the expression of assignments are $\gamma$-expressions capturing the semantics of the $\phi$-statements. 

    For example, the if-statement in Listing~\ref{lst:recoveredseperatebits} at {\bf line 1} has two trailing $\phi$-statements ($\phi({\tt w}16, {\tt w}15, {\tt w}10)$ and $\phi({\tt w}17, {\tt w}14, {\tt w}19)$) capturing the changes to the $i$ variable in Listing~\ref{lst:example} in {\bf lines 7,11}, and the changes to the $j$ variable also in the same figure at {\bf 9, 10}, respectively. In the $\gamma$-creation transformation, these $\phi$-statements are replaced with assignment statements with $\gamma$-expressions. For example, {\bf lines 10,11} in Listing~\ref{lst:gammaseperatebits} shows the corresponding assignment statements created for these $\phi$-statements.
    
     \begin{tcolorbox}[
    listing only,
    listing options={style=mystyle},
    colback=white,
    colframe=black!30,
    top=2pt, bottom=2pt
]
    \begin{lstlisting}[caption={IR after $\gamma$-creation transformation}, label={lst:gammaseperatebits},captionpos=b]
 if (w7 != 0) {
   w11 := w19 >>> 1; 
   invoke(w13, Integer, reverse(I)I, 1)
   w14 := w11 ^ w13; 
   w15 := w18 >> 1; 
 } else {
   invoke(w9, Integer,numTrailZero(I)I, w18)
   w10 := w18 >> 1; }
 w16 := (*@$\gamma$@*)(w7!=0, w15, w10); 
 w17 := (*@$\gamma$@*)(w7!=0, w14, w19);
\end{lstlisting}
\end{tcolorbox}

    \item {\bf Early-Returns Elimination}:
    For code regions where multiple return statements exists, in this transformation, we construct two expressions, a \emph{conditional-return-expression}, and a \emph{return path constraint}. The conditional-return-expression describes, for each variable, the inner relative path constraint for assigning variables. On the other hand, the return path constraint describes all relative conditions that can exist within a region. the a var. A one that describes explicit return statements, all return statements are collapsed into a single assignment of a symbolic return-variable, with the appropriate return condition captured in a $\gamma$-expression. 
    
    For example, while there are no early-returns in $seperateBits$ there are early returns in the invoked method $numTrailZero(I)I$. Listing~\ref{lst:earlyreturn1} shows \toolshort\; IR-statement recovered from the definition of this method. 

Applying early-return elimination transformation results in listing~\ref{lst:earlyreturn2} where $erResult$ is a symbolic return variable. 
\begin{minipage}[t]{\textwidth}
    \begin{minipage}[t]{0.50\textwidth}
    \begin{tcolorbox}[
    listing only,
    listing options={style=mystyle},
    colback=white,
    colframe=black!30,
    top=2pt, bottom=2pt
]
      \begin{lstlisting}[caption={IR  for $numTrailZero$}, label={lst:earlyreturn1},captionpos=b]
 if (w1 == 0) {
   return w4
 } else {
   skip; 
 }
 ...
 w27 := w24 << 1; 
 w28 := w27 >>> 31; 
 w29 := w25 - w28; 
 return w29
\end{lstlisting}
\end{tcolorbox}
\end{minipage}
\hfill
\begin{minipage}[t]{0.50\textwidth}
 \begin{tcolorbox}[
    listing only,
    listing options={style=mystyle},
    colback=white,
    colframe=black!30,
    top=2pt, bottom=2pt
]
 \begin{lstlisting}[caption={\footnotesize IR after early-return}, label={lst:earlyreturn2},captionpos=b]
  if ( w1 == 0) {
    skip; 
  } else {
    skip; }
  ...
  w27 := w24 << 1; 
  w28 := w27 >>> 31; 
  w29 := w25 - w28; 
  skip; 
  erResult := (*@$\gamma$@*)((w1 == 0), w4, w29); 
\end{lstlisting}
\end{tcolorbox}
\end{minipage}
\end{minipage}

 \item{\bf$\alpha$-Renaming}:
    In this transformation all local variables of Java Ranger are appended a unique index. More precisely, every variable is now a pair of a variable and a unique index. This index is useful to identify multiple encounters of the same region of code during different parts of the execution. Here, since this the first encounter all variables with be paired with a unique index $1$. 
    In this example all local variables are appended with the number $1$, both in the IR statement as well as inside the Ranger state.
     \begin{tcolorbox}[
    listing only,
    listing options={style=mystyle},
    colback=white,
    colframe=black!30,
    top=2pt, bottom=2pt]
        \begin{lstlisting}[caption={ IR for $seperateBits$ after $\alpha$-renaming}, label={lst:alphaseperatebits},captionpos=b]
 if ((w7,1) != 0) {
   (w11,1) := (w19,1) >>> 1; 
   invoke((w13,1),Integer,reverse(I)I,1)
   (w14,1) := (w11,1) ^ (w13,1); 
   (w15,1) := (w18,1) >> 1; 
 } else {
   invoke((w9,1),Integer,numTrailZero(I)I, (w18,1))
   (w10,1) := (w18,1) >> 1; }
 (w16,1) := (*@$\gamma$@*)((w7,1) !=0, (w15,1), (w10,1)); 
 (w17,1) := (*@$\gamma$@*)((w7,1) !=0, (w14,1), (w19,1));
\end{lstlisting}
    \end{tcolorbox}
\end{enumerate}

\subsection{Dynamic Phase}

In this phase, JR uses the state of the dynamic environment in an attempt to eliminate additional Java language features, such as method invocation, field operations and exceptions. The idea here is that using the dynamic state JR can reduce the possibilities of dispatched methods, fields and specific exceptions.
This phase consists of four main transformations

\begin{enumerate}
\item {\bf Local Variable Substitution}:
    The substitution transformation substitutes values of the region's inputs that are defined in $I$. In this example, $I$ is defined to contain {\tt w}$7$, {\tt w}$18$ and {\tt w}$19$ with instantiation-time values being the symbolic expressions $i_a$~{\tt\&}~$1$, $i_a$, and $0$ respectively.
     \begin{tcolorbox}[
    listing only,
    listing options={style=mystyle},
    colback=white,
    colframe=black!30,
    top=2pt, bottom=2pt
]
    \begin{lstlisting}[caption={IR for $seperateBits$ after substitution}, label={lst:subsseperatebits},captionpos=b]
 if (i_a & 1 != 0 ) {
   (w11,1) := 0 >>> 1; 
   invoke((w13,1),Integer,reverse(I)I,1)
   (w14,1) := (w11,1) ^ (w13,1); 
   (w15,1) := i_a >> 1; 
 } else {
   invoke((w9,1),Integer,numTrailZero(I)I, i_a)
   (w10,1) := i_a >> 1;  }
 (w16,1) := (*@$\gamma$@*)((i_a & 1) !=0, (w15,1), (w10,1)); 
 (w17,1) := (*@$\gamma$@*)((i_a & 1) !=0, (w14,1), 0);
\end{lstlisting}    
\end{tcolorbox}
    \item {\bf Method Inlining}:
    To enable method inlining two main steps need to occur. The first binds the passed parameters to the invoked method arguments, the second captures the return value if one exists as an assignment in the caller. 
    This transformation also inlines recursive functions up to a preconfigured depth.
    For example listing~\ref{lst:highorder} shows inlining, where $(${\tt w}$9,1)$ holds the return expression from invoking numTrailZero.
 \begin{tcolorbox}[
    listing only,
    listing options={style=mystyle},
    colback=white,
    colframe=black!30,
    top=2pt, bottom=2pt
]
 \begin{lstlisting}[caption={IR for $numTrailZero$ after method-inlining}, label={lst:highorder},captionpos=b]
 if ((i_a & 1) != 0 ) {
   (w11,1) := 0 >>> 1 ; 
   ...
 } else {
   if (!(i_a != 0)) {
     skip; 
   } else {
     skip; }
   ...
   (w9,1) :=  (*@$\gamma$@*)(!(i_a!=0), 32, else (w29,3)); 
   (w10,1) := i_a >> 1; }
 (w16,1) := (*@$\gamma$@*)(i_a&1!=0, (w15,1), (w10,1)); 
 (w17,1) := (*@$\gamma$@*)(i_a&1!=0, (w14,1), 0); 
\end{lstlisting}
\end{tcolorbox}
    \item {\bf Field and Arrays GSA Creation}:
    The intuition behind this transformation is that, all fields updates can replaced with some fresh ssa local variable that are uniquely indexed. Likewise all fields look-ups are replaced with the corresponding ssa variable. Creating these local variables and using them sensibly to reflect field updates and lookups, is the purpose of this transformation. Therefore the output of this transformation is a bunch of local variable assignments, with no field lookup or update statements. 
    \item {\bf Single Path Cases}:
    \tool\ avoids summarizing object creation and exceptions due to some architectural limitations. However, giving up on path-merging of all code regions that are enclosing object creations or exceptions, would mean that \tool\ will give up on potentially many useful merges. Thus, we have introduced the Single Path Cases transformation.  
    In this transformation, \tool\; identifies two constraints, the first capturing the relative conditions of statements it can summarize, while the second capturing the relative conditions of statements that it cannot summarize, and where traditional symbolic execution needs to dynamically execute. 
    
    In our motivational example, \toolshort\; was able to summarize all the code region. However introducing a print statements on the else side would result in two predicates, one that describe the summary of the code region without the branch that contains the print statement, and another that is used to target the execution of SSE to it.
    \item {\bf Linearization}:
    In this transformation, {\tt if-statements} are removed, and statements on the then and the else side are composed in a sequential composition. This is possible now at this transformation because all local variable assignments are guarded, i.e., are captured by the condition in $\gamma$-statements.
     \begin{tcolorbox}[
    listing only,
    listing options={style=mystyle},
    colback=white,
    colframe=black!30,
    top=2pt, bottom=2pt
]

   \begin{lstlisting}[caption={IR's snippet for $numTrailZero$ after linearization}, label={lst:linearization},captionpos=b]
  (w15,1) := i_a >> 1 ; 
  (w7,3) := i_a << 16 ; 
  (w9,3) := (*@$\gamma$@*)((w7,3)!=0, (w7,3), i_a); 
  (w10,3) := (*@$\gamma$@*)((w7,3)!=0, 15, 31); 
  (w12,3) := (w9,3) << 8; 
  (w13,3) := (w10,3) - 8; 
   . . .  
  (w9,1) :=  (*@$\gamma$@*)( (i_a==0) , 32,  (w29,3)); 
  (w10,1) := i_a >> 1; 
  (w16,1) := (*@$\gamma$@*)(i_a&1!=0, (w15,1), (w10,1)); 
  (w17,1) := (*@$\gamma$@*)(i_a&1!=0, -2147483648, 0); 
\end{lstlisting}
\end{tcolorbox}
\end{enumerate}

After all the transformations are executed successfully, \tool\ proceeds as follows. First, \toolshort\ will create a conjunctive constraint of all the expressions that resulted from the linearization transformation. Then, it will create an execution path to explore the unsummarized behavior from the single path cases, as well as, another one to explore {\tt returns} from the early return transformation. Finally, \tool\ will set up the state of variables in each case, and will allow SE to resume execution by exploring these paths.

Next, we will show the formal semantics of JR's transformations, and the soundness proofs of the main transformations.  
The general form of the proof of the soundness of \toolshort\; is to establish correspondence between any of \toolshort\; transformations and the execution of the concrete semantics. For if \toolshort \; is sound with respect to concrete semantics, then using \toolshort\; in a symbolic execution would be sound as long as the symbolic execution is sound and correct. 

Thus in the rest of the paper is organized as follows: first we present the concrete semantics which we will use to show the soundness of JR's transformations, next, we will present the semantics of main JR's transformations along with their soundeness proofs, and finally, present the overall algorithm of JR.
\section{Related Work}
\label{sec:related-work}
Path explosion is one of the most commonly observed scalability
challenges of symbolic execution, and works prior to
Java Ranger have explored a variety of path merging approaches.
In symbolic execution tools that represent multiple complete symbolic
execution states, one natural approach is to implement a merging operation
on these state representations.
Hansen et al.~\cite{HansenSS2009} and Kuznetsov et al.~\cite{kuznetsov}
explored such state-merging techniques, and MultiSE~\cite{multise}
achieves a similar effect within a somewhat different tool architecture.

By contrast, Java Ranger takes a different approach of statically
summarizing code regions with branching control-flow, more closely
related to the Veritesting approach of Avgerinos et al.~\cite{veritesting}
in their system MergePoint for binary executables.
The static summarization approach is easier to adopt in a symbolic
execution tool that is designed to have only on execution state at once,
like Symbolic PathFinder (SPF)~\cite{spf} that Java Ranger builds upon.
Java Ranger is significantly different from MergePoint because of the
differences between binary code and Java bytecode: for instance Java
Ranger can take advantage of structured control-flow, but needs
to perform its own method inlining because of the prevalence of small
methods in Java.

Many of the invariants (such as SSA form) and transformations (such as
method inlining and constant propagation) used by Java Ranger are
similar to techniques used in compilers, and compilers are the language
transformation tools for which correctness proof techniques are most
developed.
The two main classes of verification tools for compilers are translation
validators~\cite{pnueli1998translation,necula2000translation,lopes2021alive2},
which check the correctness of a particular compilation run,
and certified compilers, such a CompCert~\cite{leroy2009formal} for C
and CakeML~\cite{kumar2014cakeml} for Standard ML, which
are proven once-and-for-all to compile any input program correctly.
%
Our proofs about Java Ranger are once-and-for-all, but in being 
transformations within an intermediate representation, they are similar
to correctness verification of individual compiler passes~\cite{zhao2013formal,lopes2015provably,vanhattum2024lightweight,brown2020towards}.


A number of previous formal descriptions of symbolic execution have
been language-agnostic or used simple languages for clarity.
Schwartz et al.~\cite{schwartz2010all} use a simple formal language to
explain taint analysis and symbolic execution, though they also mention
a number of practical implementation concerns.
%
De Boer and Bonsangue~\cite{deBoer2021symbolic} use a transition system model
to formalize symbolic execution for a sequence of languages of increasing
complexity.
%
In a recent work, Voogd et al.~\cite{voogd2025compositional} present a denotational
semantics for symbolic execution along with a machine-checked formalization.
%
The formalization of Porncharoenwase et al.~\cite{porncharoenwase2022formal} has
the closest purpose to ours in introducing an abstract form of state merging,
which generalizes path merging and can be applied to code with loops and recursion.
They described a machine-checked implementation for a core subset of Scheme and
more realistic optimized tool that is tested against it.

\section{Semantics of the Java-Simplified Concrete Execution}
\label{sec:concrete}
In this section, we define the concrete execution semantics used to establish the soundness of Java Ranger's transformations. To make this formalization tractable, we focus on a core, simplified Java language. We model this language using an intermediate representation derived from stackless Java bytecode in static single assignment (SSA) form~\cite{bilardi2003algorithms}, where each variable is assigned exactly once and over which we formally define the concrete execution rules.


\soha{\subsection{Supported Java Features and Scope of Formalization}}
\label{sec:simplifiedjava}

\soha{
While Java Ranger operates on arbitrary Java bytecode, its summarization transformation applies to a well-defined subset of language constructs. Specifically, the current implementation does not summarize loops, switch statements, exception handling, or object instantiations.}

\soha{To formalize Java Ranger's core operational semantics, we present its rules over a simplified Java language, which includes the following constructs:
\begin{itemize}
    \item Primitive Types: booleans and integers.
    \item Class Types: basic non-static class declarations (excluding interfaces, generics, and access modifiers, treating all classes as public).
    \item Variables: local variables and fields containing either primitive values or object references.
    \item Expressions: standard arithmetic and logical operations, ternary expressions ({\tt ? :}), side-effecting assignments (including compound and increment/decrement operators, e.g., {\tt +=}, {\tt ++}), and field accesses.
    \item Statements: variable assignments, conditional branches ({\tt if}), method invocations, return statements, and object creation.
\end{itemize}
}

\subsection{Concrete Grammar}
\label{sec:concretegrammar}

\begin{figure*}[t]
    \begin{align*}
        integer\quad & n \in \mathbb{Z} \quad & subscript \quad &j \in \mathbb{Z^+}\uplus \{\bot\}  \quad & identifier \quad &a \in Id  \qquad field \quad f \in F\\
class \quad &c \in C \quad & signature \quad &g \in G \quad & reference \quad &r \in R 
    \end{align*}
\begin{align*}
literal\qquad v  &::= n \mid true \mid false  \mid null 
\qquad \qquad var  \quad x,y,i,o,t ::= (a,j) \\
method \qquad m &::=\regiondef{x}{s} \quad\qquad\qquad\qquad\qquad\qquad unary Op \qquad \unaryOp ::= - \mid ^^21   \\
binary Op \qquad \binaryOp &::= + \mid - \mid * \mid \div \mid \& \mid bitwise-or \mid \oplus \mid \% \mid == \mid ^^21= \mid \leq \mid \geq \\
& \qquad \mid \&\& \mid logical\text{-}or \mid >\: \mid <\: \mid \ll \mid \gg \\
exp\qquad e  &::=  v \mid x \mid \binaryop{e_1}{e_2} \mid \unaryop{e} \mid  \gammaexp{e}{e_1}{e_2} \\
simple\text{-}ref. \qquad z &::= r \mid x\\
stmt \qquad s &::= s_1 ; s_2 \mid x := e \mid skip \mid \myif{e}{s_1}{s_2};\phi\text{-}stmt \\
 & \qquad \mid \invoke{x}{z}{g}{y} \mid \return{e}  \mid \putfield{z}{f}{e}\\
 &\qquad \mid \getfield{x}{z}{f} \mid \newobj{x}{c} \\
\phi\text{-}stmt \quad phis &::= \epsilon \mid \phi(x,e_1,e_2);\phi\text{-}stmt\\
\end{align*}

\caption{Context Free Grammar for Concrete Execution} 
\label{fig:concgrammar}
\end{figure*}

Figure~\ref{fig:concgrammar} presents the grammar of \toolshort's IR.
The metavariables used throughout the semantics range over the following domains. Integers $n \in \mathbb{Z}$ and positive subscripts $j \in \mathbb{Z}^+ \uplus \{\bot\}$ are used to index and distinguish program variables, where $\bot$ denotes the absence of a subscript. Variable identifiers $a \in Id$ name program variables, field names $f \in F$ denote object fields, class names $c \in C$ range over the set of declared classes, method signatures $g \in G$ identify methods, and references $r \in R$ denote concrete heap addresses. Note that references are always concrete and are never permitted to take symbolic values.

The expression language of the grammar includes standard literals, variables, and both unary and binary expressions. In addition, it features the $\gamma$-expression, which serves as a compact representation of the conventional conditional (if-then-else) expression. We use $x, y, i, o, t$ to range over variables. Each variable is represented as a pair consisting of a name and a unique index, i.e., $(a, j)$, where the index distinguishes different instances of the same variable across multiple executions of a code region.

Furthermore, the IR models references through simple references $z$, which may either be concrete references $r$ or reference variables $x$. Fields are denoted by $f$, classes by $c$, and method signatures by $g$.

The statement language includes assignments, sequential composition (treated as an associative operator), the {\tt skip} statement, and conditional statements of the form {\tt if}, augmented with trailing $\phi$-functions to capture the merging of variable values across branches. Method invocation is represented as $\invoke{x}{z}{g}{y}$, where $x$ receives the result of invoking method $g$ on reference $z$ with arguments $\overrightarrow{y}$. The IR also includes return statements $\return{e}$, field update statements $\putfield{z}{f}{e}$, which assign the value of expression $e$ to field $f$ of reference $z$, and field access statements $\getfield{x}{z}{f}$, which read the value of field $f$ from reference $z$ into variable $x$. Finally, object creation is represented by $\newobj{x}{c}$, where a new object of class $c$ is allocated and assigned to reference variable $x$.

Note that the full implementation of \toolshort\ also supports array load and store operations. However, since these can be modeled as specific instances of field operations, we omit their formal treatment here without loss of generality.

\subsection{Concrete Environment}
\label{sec:concreteenv}

\begin{figure*}[t]
\[
\text{vals:}\; V=\mathbb{Z} \uplus \{true, false , null \}  \uplus R  
\]

\[
\delta=(\Theta, \eta, \Delta,H\nu,mu)
\]

$$
\begin{array}{llllllllll}
\text{method-stmt:} & \Theta: C \times G  &\longrightarrow & method(\overrightarrow{var},stmt) &
\\
 \text{ref-type:} & \eta: R & \longrightarrow & C &
 \\
\text{locals:} & \Delta: (Id,\mathbb{Z^+}\uplus \{\bot\}) &\longrightarrow & V &
\\
\text{heap:} & H: R \times F &\longrightarrow & V &
\\
\text{new-fresh:} & \nu \in \mathbb{N} & & & \\
\text{return-value:} & \mu \in V \uplus \{\bot\} & & & 
 \\
\end{array}
$$
\caption{Environments and functions of Concrete Semantics}
\label{fig:environment}
\end{figure*}

Fig.~\ref{fig:environment} presents the components of the concrete semantic environment $\delta = (\Theta, \eta, \Delta, H, \nu, \mu)$. The environment captures the essential runtime state required for execution. In particular, $\Theta$ maps method statements to their corresponding signatures, while $\eta$ maps reference identifiers to their associated types. The mapping $\Delta$ represents the local variable environment, and $H$ denotes the heap, which maps objects-fields pairs ($R \times F$) to their associated values. The component $\nu$ is a counter used to generate fresh object identifiers during allocation, and $\mu$ records return values produced during execution.

Values in the semantic domain are drawn from the set $\mathbb{Z}$ of integers, the Boolean constants $\{{\tt true}, {\tt false}\}$, the distinguished value ${\tt null}$, and the set of concrete references $R$.

\subsection{Concrete Semantic Rules}
\label{sec:concreterules}
\journalreport{

\begin{figure*}[h!t]
    \footnotesize
\fbox{%
\parbox{0.98\textwidth}{%
\[
\infer[\rn{\gamma-true}]
 {\vdashconcestep{\Delta}{\gammaexp{e}{e_1}{e_2}}{v}}
{\begin{gathered}
\vdashconcestep{\Delta}{e}{true}
\quad
\vdashconcestep{\Delta}{e_1}{v}
\end{gathered}}
\qquad
\infer[\rn{\gamma-false}]
 {\vdashconcestep{\Delta}{\gammaexp{e}{e_1}{e_2}}{v}}
{\begin{gathered}
\vdashconcestep{\Delta}{e}{false}
\quad
\vdashconcestep{\Delta}{e_2}{v}
\end{gathered}}
\]
\hrule
\vspace{6pt}
\[
\infer[\rn{composition}]
 {
     \Theta\vdash \concsstep{(\eta, \Delta, H, \nu,\bot)}{s_1;s_2}{(\eta'',\Delta'', H'', \nu'',\mu'')}{}
 }
{
\begin{gathered}
    \Theta\vdash \concsstep{(\eta, \Delta, H, \nu,\bot)}{s_1}{(\eta', \Delta', H', \nu',\bot)}{}\\
    \Theta\vdash \concsstep{(\eta', \Delta', H', \nu',\bot)}{s_2}{(\eta'',\Delta'', H'', \nu'',\mu'')}{}
\end{gathered}
}
\]
\[
\infer[\rn{composition-ret}]
 {
     \Theta\vdash \concsstep{(\eta, \Delta, H, \nu,\bot)}{s_1;s_2}{(\eta',\Delta', H', \nu',\mu')}{}
 }
{
    \Theta\vdash \concsstep{(\eta, \Delta, H, \nu,\bot)}{s_1}{(\eta', \Delta', H', \nu',\mu')}{}
    \qquad\mu' \neq \bot
}
\]

\[
\infer[\rn{assignment}]
 {
    \Theta\vdash\concsstep{(\eta,\Delta, H, \nu,\bot)}{x:=e}{(\eta, \update{\Delta}{x}{v},H,\nu,\bot)}{}
}
{   
    \vdashconcestep{\Delta}{e}{v}
}
\]

\[
\infer[\rn{invoke}]
 {
    \Theta\vdash\concsstep{(\eta,\Delta, H, \nu,\bot)}{\invoke{x}{z}{g}{y}}{(\eta', \update{\Delta}{x}{\mu'},H',\nu',\bot)}{}
}
{
 \begin{gathered}
    \vdashconcestep{\Delta}{z}{r} \qquad
    r \in R
    \qquad \overrightarrow{y'}.s'= \lookup{\Theta}{\lookup{\eta}{r},g} 
    \\
    \overrightarrow{v'}=\Delta(\overrightarrow{y}) \qquad
     \Theta\vdash \concsstep{(\eta,\{(\overrightarrow{y'},\overrightarrow{v'})\},H,\nu,\bot)}{s'}{(\eta', \Delta',H',\nu',\mu')}{} \qquad \mu' \neq \bot
 \end{gathered}}
\]

$$\infer[\rn{return}]
 {
 \Theta\vdash\concsstep{(\eta, \Delta, H, \nu,\bot)}{\return{$e$}}{(\eta, \Delta,H,\nu,v)}{}
}
{
 \vdashconcestep{\Delta}{e}{v}
}
$$

$$
\infer[\shortstack{\texttt{\tiny{if-}}
\texttt{\tiny{true}}}]
 {
    \Theta\vdash\concsstep{(\eta, \Delta, H, \nu,\bot)}{\myif{e}{s_1}{s_2};\phiexp{x_1}{e_1}{e_1'}\ldots \phiexp{x_k}{e_k}{e_k'}}{(\eta',\Delta'', H', \nu',\bot)}{}
}
{
\begin{gathered}
    \vdashconcestep{\Delta}{e}{true}
    \qquad  \vdashconcestep{\Delta}{e_1}{v_1} 
    \ldots  \vdashconcestep{\Delta}{e_k}{v_k} \\
    \Theta\vdash\concsstep{(\eta,\Delta, H, \nu,\mu)}{s_1}{(\eta',\Delta', H', \nu',\bot)}{} 
    \qquad \Delta'[x_1\leftarrow v_1,\ldots, x_k\leftarrow v_k]
\end{gathered}
}
$$
$$
\infer[\rn{put-field}]
 {
     \Theta\vdash\concsstep{(\eta, \Delta, H, \nu,\bot)}{\putfield{z}{f}{e}}{(\eta, \Delta,H[(r,f) \leftarrow v],\nu,\bot)}{}
 }
{\begin{gathered}
\vdashconcestep{\Delta}{z}{r} \qquad
    \vdashconcestep{\Delta}{e}{v}
    \qquad r \in R
\end{gathered}
}
$$
\[
\infer[\rn{get-field}]
 {
     \Theta\vdash\concsstep{(\eta, \Delta, H, \nu,\bot)}{\getfield{x}{z}{f}}{(\eta, \Delta[x \leftarrow v],H, \nu,\bot)}{}{}
 }
{\begin{gathered}
\vdashconcestep{\Delta}{z}{r}
    \qquad v = \lookup{H}{r,f}
    \qquad r \in R
\end{gathered}}
\]

\[
\infer[\rn{new}]
 {
     \Theta\vdash\concsstep{(\eta, \Delta, H, \nu,\bot)}{\newobj{x}{c}}{(\eta[r \leftarrow c], \Delta[x\leftarrow r], H, r',\bot)}{}{}
 }
{\begin{gathered}
\nu'=\nu+1 \qquad r=\text{freshRef}(\nu')
\end{gathered}}
\]
}}
\caption{Selected Evaluation Rules of Concrete Semantics}
\label{fig:conc-semantics-short}
\end{figure*}
}
\techreport{\begin{figure*}[t]
    \footnotesize
\fbox{%
\parbox{0.98\textwidth}{%
\[
\infer[\rn{val}]
 {\vdashconcestep{\Delta}{v}{v}}
{}
\qquad
\infer[\rn{var}]
 {\vdashconcestep{\Delta}{x}{v}}
{v=\lookup{\Delta}{x}}
\qquad
\infer[\rn{unary-op}]
 {\vdashconcestep{\Delta}{\unaryop{e}}{v'}}
{
\vdashconcestep{\Delta}{e}{v}
\qquad
\unaryop{v}{} = v'}
\]

\[
\infer[\rn{binary-op}]
 {\vdashconcestep{\Delta}{\binaryop{e_1}{e_2}}{v_3}}
{\vdashconcestep{\Delta}{e_1}{v_1}
\qquad
\vdashconcestep{\Delta}{e_2}{v_2}
\qquad
\binaryop{v_1}{v_2}{} = v_3}
\]
\[
\infer[\rn{\gamma-true}]
 {\vdashconcestep{\Delta}{\gammaexp{e}{e_1}{e_2}}{v}}
{\begin{gathered}
\vdashconcestep{\Delta}{e}{true}
\quad
\vdashconcestep{\Delta}{e_1}{v}
\end{gathered}}
\qquad
\infer[\rn{\gamma-false}]
 {\vdashconcestep{\Delta}{\gammaexp{e}{e_1}{e_2}}{v}}
{\begin{gathered}
\vdashconcestep{\Delta}{e}{false}
\quad
\vdashconcestep{\Delta}{e_2}{v}
\end{gathered}}
\]
\hrule
\vspace{6pt}
\[
\infer[\rn{composition}]
 {
     \Theta\vdash \concsstep{(\eta, \Delta, H, \nu,\bot)}{s_1;s_2}{(\eta'',\Delta'', H'', \nu'',\mu'')}{}
 }
{
\begin{gathered}
    \Theta\vdash \concsstep{(\eta, \Delta, H, \nu,\bot)}{s_1}{(\eta', \Delta', H', \nu',\bot)}{}\\
    \Theta\vdash \concsstep{(\eta', \Delta', H', \nu',\bot)}{s_2}{(\eta'',\Delta'', H'', \nu'',\mu'')}{}
\end{gathered}
}
\]
\[
\infer[\rn{composition-ret}]
 {
     \Theta\vdash \concsstep{(\eta, \Delta, H, \nu,\bot)}{s_1;s_2}{(\eta',\Delta', H', \nu',\mu')}{}
 }
{
    \Theta\vdash \concsstep{(\eta, \Delta, H, \nu,\bot)}{s_1}{(\eta', \Delta', H', \nu',\mu')}{}
    \qquad\mu' \neq \bot
}
\]

\[
\infer[\rn{assignment}]
 {
    \Theta\vdash\concsstep{(\eta,\Delta, H, \nu,\bot)}{x:=e}{(\eta, \update{\Delta}{x}{v},H,\nu,\bot)}{}
}
{   
    \vdashconcestep{\Delta}{e}{v}
}
\]

\[
\infer[\rn{invoke}]
 {
    \Theta\vdash\concsstep{(\eta,\Delta, H, \nu,\bot)}{\invoke{x}{z}{g}{y}}{(\eta', \update{\Delta}{x}{\mu'},H',\nu',\bot)}{}
}
{
 \begin{gathered}
    \vdashconcestep{\Delta}{z}{r} \qquad
    r \in R
    \qquad \overrightarrow{y'}.s'= \lookup{\Theta}{\lookup{\eta}{r},g} 
    \\
    \overrightarrow{v'}=\Delta(\overrightarrow{y}) \qquad
     \Theta\vdash \concsstep{(\eta,\{(\overrightarrow{y'},\overrightarrow{v'})\},H,\nu,\bot)}{s'}{(\eta', \Delta',H',\nu',\mu')}{} \qquad \mu' \neq \bot
 \end{gathered}}
\]

$$\infer[\rn{return}]
 {
 \Theta\vdash\concsstep{(\eta, \Delta, H, \nu,\bot)}{\return{$e$}}{(\eta, \Delta,H,\nu,v)}{}
}
{
 \vdashconcestep{\Delta}{e}{v}
}
$$
$$
\infer[\rn{skip}]
 {
     \Theta\vdash\concsstep{(\eta, \Delta, H,\nu, \bot)}{skip}{(\eta,\Delta, H, \nu, \bot)}{}
 }
{}
$$

$$
\infer[\shortstack{\texttt{\tiny{if-}}
\texttt{\tiny{true}}}]
 {
    \Theta\vdash\concsstep{(\eta, \Delta, H, \nu,\bot)}{\myif{e}{s_1}{s_2};\phiexp{x_1}{e_1}{e_1'}\ldots \phiexp{x_k}{e_k}{e_k'}}{(\eta',\Delta'', H', \nu',\bot)}{}
}
{
\begin{gathered}
    \vdashconcestep{\Delta}{e}{true}
    \qquad  \vdashconcestep{\Delta}{e_1}{v_1} 
    \ldots  \vdashconcestep{\Delta}{e_k}{v_k} \\
    \Theta\vdash\concsstep{(\eta,\Delta, H, \nu,\mu)}{s_1}{(\eta',\Delta', H', \nu',\bot)}{} 
    \qquad \Delta'[x_1\leftarrow v_1,\ldots, x_k\leftarrow v_k]
\end{gathered}
}
$$
$$
\infer[\shortstack{\texttt{\tiny{if-}}
\texttt{\tiny{true-ret}}}]
 {
    \Theta\vdash\concsstep{(\eta, \Delta, H, \nu,\bot)}{\myif{e}{s_1}{s_2};\phiexp{x_1}{e_1}{e_1'}\ldots \phiexp{x_k}{e_k}{e_k'}}{(\eta',\Delta', H', \nu',\mu')}{}
}
{
\begin{gathered}
    \vdashconcestep{\Delta}{e}{true} 
    \qquad \mu' \neq \bot\\
    \Theta\vdash\concsstep{(\eta,\Delta, H, \nu,\mu)}{s_1}{(\eta',\Delta', H', \nu',\mu')}{}
\end{gathered}
}
$$

$$
\infer[\shortstack{\texttt{\tiny{if-}}
\texttt{\tiny{false}}}]
 {
    \Theta\vdash\concsstep{(\eta, \Delta, H, \nu,\bot)}{\myif{e}{s_1}{s_2};\phiexp{x_1}{e_1}{e_1'}\ldots \phiexp{x_k}{e_k}{e_k'}}{(\eta',\Delta'', H', \nu',\bot)}{}
}
{
\begin{gathered}
    \vdashconcestep{\Delta}{e}{false}
    \qquad  \vdashconcestep{\Delta}{e_1'}{v_1'}
    \ldots  \vdashconcestep{\Delta}{e_k'}{v_k'}\\
     \Theta\vdash\concsstep{(\eta,\Delta, H, \nu,\bot)}{s_2}{(\eta',\Delta', H', \nu',\bot)}{} \qquad
     \Delta'[x_1\leftarrow v_1',\ldots, x_k\leftarrow v_k']
\end{gathered}
}
$$
$$
\infer[\shortstack{\texttt{\tiny{if-}}
\texttt{\tiny{false-ret}}}]
 {
    \Theta\vdash\concsstep{(\eta, \Delta, H, \nu,\bot)}{\myif{e}{s_1}{s_2};\phiexp{x_1}{e_1}{e_1'}\ldots \phiexp{x_k}{e_k}{e_k'}}{(\eta',\Delta', H', \nu',\mu')}{}
}
{
\begin{gathered}
    \vdashconcestep{\Delta}{e}{false}
    \qquad \mu' \neq \bot\\
    \qquad \concsstep{(\eta,\Delta, H, \nu,\bot)}{s_2}{(\eta',\Delta', H', \nu',\mu')}{}
\end{gathered}
}
$$

$$
\infer[\rn{put-field}]
 {
     \Theta\vdash\concsstep{(\eta, \Delta, H, \nu,\bot)}{\putfield{z}{f}{e}}{(\eta, \Delta,H[(r,f) \leftarrow v],\nu,\bot)}{}
 }
{\begin{gathered}
\vdashconcestep{\Delta}{z}{r} \qquad
    \vdashconcestep{\Delta}{e}{v}
    \qquad r \in R
\end{gathered}
}
$$
\[
\infer[\rn{get-field}]
 {
     \Theta\vdash\concsstep{(\eta, \Delta, H, \nu,\bot)}{\getfield{x}{z}{f}}{(\eta, \Delta[x \leftarrow v],H, \nu,\bot)}{}{}
 }
{\begin{gathered}
\vdashconcestep{\Delta}{z}{r}
    \qquad v = \lookup{H}{r,f}
    \qquad r \in R
\end{gathered}}
\]

\[
\infer[\rn{new}]
 {
     \Theta\vdash\concsstep{(\eta, \Delta, H, \nu,\bot)}{\newobj{x}{c}}{(\eta[r \leftarrow c], \Delta[x\leftarrow r], H, r',\bot)}{}{}
 }
{\begin{gathered}
\nu'=\nu+1 \qquad r=\text{freshRef}(\nu')
\end{gathered}}
\]
}}
\caption{Evaluation Rules of Concrete Semantics}
\label{fig:conc-semantics}
\end{figure*}}

We present the evaluation rules of the concrete semantics in Appendix~\ref{fig:conc-semantics}. The rules are standard operational semantics rules capturing the execution of statements in \toolshort's IR. In particular, the {\tt composition} rule enforces sequential execution of statements $s_1$ and $s_2$. The corresponding {\tt composition-ret} rule handles the case in which $s_1$ evaluates to a return statement, in which case execution of $s_2$ is ignored.

The {\tt invoke} rule define the semantics of method invocation. In this rule, the reference expression $z$ is first evaluated to a concrete reference $r$, which is then used together with the method signature $g$ to retrieve the corresponding method body from $\Theta$. A fresh environment is constructed by binding the formal parameters of the method to the evaluated argument values ($\{(\overrightarrow{y'},\overrightarrow{v'})\}$), after which the method body is executed. Upon completion of the method body, the local environment $\Delta$ is restored, and the return value is recorded in $\mu$ and assigned to the target variable $x$ ($\Delta[x \leftarrow \mu']$).

Conditional execution is captured by the {\tt if-true}, {\tt if-false}, {\tt if-true-ret}, and {\tt if-false-ret} rules. In the non-returning cases ({\tt if-true} and {\tt if-false}), the condition expression $e$ is evaluated to determine the branch to execute, followed by execution of the selected branch and the corresponding $\phi$-statements. In the presence of a return statement within the $s_1$ branch ({\tt if-true-ret} and {\tt if-false-ret}), execution of the $\phi$-statements is skipped, and the return value is propagated via $\mu'$.

The {\tt getfield} and {\tt putfield} statements follow the standard semantics of heap access. In both cases, the reference expression $z$ is evaluated to a concrete reference $r$, after which the corresponding field $f$ is either read from or updated in the heap $H$.

Finally, object creation is handled by the {\tt new} rule, which guarantees the generation of fresh references. Each new object is assigned a unique reference $r$ produced by the $\mathsf{freshRef}$ function, which consumes the current counter value $\nu$ to ensure global freshness.

All evaluation judgments can be extended to work for vector of expressions, that is $\vdash \rangerexpstep{\omega}{\overrightarrow{e}}{\omega'}{\overrightarrow{e'}}{t}$ is defined as 
\begin{figure}[H]
    \[
    \infer[\rn{vec}]
     {
\vdash \rangerexpstep{\omega}{\overrightarrow{e}}{\omega'}{\overrightarrow{e'}}{t}
     }
    {
     \overrightarrow{e}= e_1 \ldots e_n \qquad 
     {\vdash \rangerexpstep{\omega}{e_1}{\omega'}{e_1'}{t}}\quad \cdots \quad 
      {\vdash \rangerexpstep{\omega}{e_n}{\omega'}{e_n'}{t}}
      \quad 
     \overrightarrow{e'}= e_1'\ldots e_n'
    }
\]
    \caption{Applying evaluation/rewriting over a vector of expressions }
    \label{fig:vectorreule}
\end{figure}

\subsection{Concrete Evaluation Properties}
\label{sec:concreteprop}
We assume the following properties hold over the concrete semantics
\begin{itemize}
    \item Statement must be defined in an SSA form (Property~\ref{prop:ssa}).
    \item Statement $s$ always has at most a single return within an if-statement (Property~\ref{prop:final-at-most-final-if-return}).
    \item Each class has an class object reference which can be used to make {\tt static} invocation (Property~\ref{prop:class-references}).
    \item All formal arguments of every method body in $\Theta$ are used in its statement (Property~\ref{prop:y-must-be-used-in-s}).
\end{itemize}

Next, we show the following lemma which are easy consequence of the rules of the concrete semantics and the SSA property (full details in~\cite{techreport}). 
\begin{lem}
\label{lem:inline-eta-lookup}
If for any reference $r$ and classes $c_1, c_2$, such that
    $$\Theta \vdash \concsstep{(\eta, \Delta, H, \nu,\mu)}{s}{(\eta', \Delta', H', \nu', \mu')}{} $$
    $$(r,c_1) \in \eta \wedge (r,c_2)\in\eta'$$

then 
   $c_1=c_2$.
    \techreport{
    \begin{proof}
        The only statement that changes $\eta$ is the {\tt new}-statement, so after a reference-class pair are constructed $\eta$, it cannot be updated nor removed. 
        On the other hand, adding a new pair to $\eta$ is always guaranteed to have a unique reference, since its creation is parametric by $\nu$, which gets incremented with each invocation of a {\tt new}-statement.
    \end{proof}
    }
\end{lem}

\begin{lem}
\label{lem:delta-ssa}
    $$\func{SSA}(s)$$
    $$\Theta \vdash \concsstep{(\eta, \Delta, H, \nu,\mu)}{s}{(\eta', \Delta', H', \nu', \mu')}{} $$

then
    $$\Delta \subseteq \Delta'$$
    $$\eta \subseteq \eta'$$

    \techreport{
    \begin{proof}
        First observe that because $s$ satisfies the generalized SSA property, it must be the case that variables are never remove nor updated, they can only be added. In particular for $z$, its mapping in both $\Delta$ and $\Delta'$ must be the same, more generally we must have that $\Delta \subseteq \Delta'$.

        Following a similar argument, we can conclude that references in $\eta$ never remove nor updated, they can only be added. In particular for $r$, its mapping in both $\eta$ and $\eta'$ must be the same, more generally we must have that $\eta \subseteq \eta'$
    \end{proof}
    }
\end{lem}

\begin{lem}
\label{lem:inline-eta-delta-lookup}
If for any simple reference $z$ and classes $c_1, c_2$ 
such that
    \begin{gather*}
    \func{SSA}(s)\\
    \Theta \vdash \concsstep{(\eta, \Delta, H, \nu,\mu)}{s}{(\eta', \Delta', H', \nu', \mu')}{} \\
    \vdashconcestep{\Delta}{z}{r_1}\\
    \vdashconcestep{\Delta'}{z}{r_2}\\
    (r_1,c_1) \in \eta \wedge (r_2,c_2)\in\eta' 
    \end{gather*}
    
    then
    
    $$c_1=c_2$$

    \techreport{
    \begin{proof}
        First observe that because $s$ satisfies the generalized SSA property, it must be the case that variables are never remove nor updated, they can only be added. In particular for $z$, its concrete evaluation in both $\Delta$ and $\Delta'$ must be the same, i.e., $r_1=r_2$, more generally we must have that $\Delta \subseteq \Delta'$.
        
        The only statement that changes $\eta$ is the {\tt new}-statement, so after a reference-class pair are constructed $\eta$, it cannot be updated nor removed. 
        On the other hand, adding a new pair to $\eta$ is always guaranteed to have a unique reference, since its creation is parametric by $\nu$, which gets incremented with each invocation of a {\tt new}-statement.
    \end{proof}}
\end{lem}

\section{Dynamic Symbolic Execution's Environment}
\label{sec:DSEEnv}
\begin{figure}
    \centering
\begin{tabular}{ll}
vals: & $V_s = V  \uplus exp$ \\
method-stmt: & $\Theta_s: C \times G  \longrightarrow method(\overrightarrow{var}.stmt)$ \\
ref-type: & $\eta_s: R  \longrightarrow  C$ \\
locals:& $\Delta_s: (Id,n) \longrightarrow V_s$\\
heap: & $H_s: R \times F \longrightarrow  V_s$ \\
method-return: & $\mu_s \in V_s \uplus \{\bot\}$ \\
\colorbox{lightgray}{path-conditions:} & $\Pi=(\pi,$\colorbox{lightgray}{$\pi_r,\pi_s$}$): V_s \times$ \colorbox{lightgray}{$V_s \times V_s$} $\qquad\qquad$ (PC, \colorbox{lightgray}{returnPC, exitPC})\\
\colorbox{lightgray}{method-stmt:} & \colorbox{lightgray}{$\Theta_t: C \times G  \longrightarrow method(\overrightarrow{var}.stmt)$} \\
\colorbox{lightgray}{unique-count:} & \colorbox{lightgray}{$u \in \mathbb{N}$ \quad (initialized once)} \\
\colorbox{lightgray}{region-variables:} & \colorbox{lightgray}{$\chi=(I,O,T) : 2^{Var} \times 2^{Var} \times 2^{Var} \qquad$ (Input, Outputs, Temp)}\\
\colorbox{lightgray}{unique-count:} & \colorbox{lightgray}{$u \in \mathbb{N} \qquad$ (initialized once)}\\
\end{tabular}
\caption{Java Ranger's Environment}
\label{fig:JREnv}
\end{figure}

In this section, we present the formal foundation of Dynamic Symbolic Execution (DSE) upon which Java Ranger's path-merging capabilities are built. We assume a sound symbolic execution that operates over a concrete state $(\Theta,\eta,\Delta,H,\nu,\mu)$ and a symbolic state $(\Theta,\eta_s,\Delta_s,H_s,\pi,\mu_s)$.

The grammar of DSE follows the same structure as the concrete semantics in Fig.~\ref{fig:concgrammar}, with two key differences summarized in the unshaded variables in Fig.~\ref{fig:JREnv}. First, the value domain is extended from concrete values $V$ to symbolic values $V_s$, allowing expressions to be used as values. Second, the environment is lifted accordingly. The method table $\Theta_s$ and reference typing $\eta_s$ are carried over unchanged from their concrete counterparts $\Theta$ and $\eta$ respectively; they appear with symbolic subscripts solely to distinguish the symbolic environment from the concrete one. The local state $\Delta_s$ maps uniquely indexed variables $(Id,n)$ to values in $V_s$, and the heap $H_s$ maps reference-field pairs to values in $V_s$, where references remain concrete while field values may be symbolic. Method return values $\mu_s$ range over $V_s \uplus \{\bot\}$, where $\bot$ denotes the absence of a return value. Finally, the path condition $\pi$ captures the symbolic constraint accumulated along the current execution path.

We make the following assumptions on the symbolic environment $(\eta_s,\Delta_s,H_s,\pi,\mu_s)$:
\begin{itemize}
    \item reference typing mapping is carried over unchanged from the concrete semantics, $\eta_s=\eta$ (Property~\ref{prop:eta-unchanged}).
    \item No symbolic references (Property~\ref{prop:ref-are-conc}).
    \item The Heap and local variable mapping of the concrete state is consistent with the symbolic state (Properties~\ref{prop:heap-consistent} and~\ref{prop:variable-consistent}).
    \item All variables in original statement, that is going to be rewritten, are not indexed (Property~\ref{prop:initial-variables-not-rename}).
    \item Initial local variable mapping $\Delta_s$ cannot have any use for any input variable $i \in I$ in the $Range(\Delta_s)$ (Property~\ref{prop:initial-deltas-no-I}).
\end{itemize}



\section{Java Ranger's Environment}
\label{sec:JREnv}

Java Ranger extends the environment of the DSE with the following environment variables: Path conditions are captured by the tuple $\Pi = (\pi, \pi_r, \pi_s)$, representing the current path condition, the return path condition, and a single-path condition, respectively. Region-specific variables are grouped in $\chi = (I, O, T)$, denoting input, output, and temporary variables.

Finally, the environment includes auxiliary components that are initialized once: a global counter $u \in \mathbb{N}$ used for generating unique identifiers, and a field subscript mapping $P$, which associates reference-field pairs with symbolic terms to track field accesses.

In addition to the DSE assumption, Java Ranger makes the following assumptions about the symbolic state before it starts the summarization:



\subsection{Java Ranger Initial State}

\begin{algorithm}[h!t]
\DontPrintSemicolon
\SetAlgoLined
Input: Concrete State $\delta=(\Theta,\eta,\Delta,H,\nu,\mu)$\\
Input: Symbolic State $(\eta_s,\Delta_s,H_s,\pi)$\\
Assumption Prop~\ref{prop:initial-no-return}: $\mu=\mu_s=\bot$\\
Assumption Prop~\ref{prop:eta-unchanged}:$\eta_s \subseteq \eta$\\
Assumption Prop~\ref{prop:variable-consistent}: $\Delta \models (\Delta_s, \pi)$ \\
Assumption Prop~\ref{prop:heap-consistent}: $(\Delta,H) \models_H (\Delta_s,H_s,\pi)$\\
Assumption Prop~\ref{prop:initial-deltas-no-I}: $\forall i \in \func{getInputs}(s).\forall x \in Dom(\Delta_s).\neg isUse(i,\Delta_s(x))$\\
Output: Path-Merging State $\omega$\\
Data: $\chi=(I,O,T)$\\
Data: $\Pi=(\pi,\pi_r,\pi_s)$\\
 $\Theta_t:=\{((\pi_r',\mu_s'), \overrightarrow{y}.s') \text{ s.t. } (\overrightarrow{y}.s) \in \Theta \wedge ((\pi_r',\mu_s'),s')=transform(s)\}$\\
$\pi_r:=false$\\
$\pi_s:=false$\\
$I := getInput(s)$\\
$O := getOutput(s)$\\
$T := \varnothing $\\
$u := 0$\\
{\bf return } $\omega=(\Theta_t,\eta_s,\Delta_s,H_s, \mu_s, \Pi, \chi, u)$
\caption{Algorithm for {\tt initialize-state} where $getInput(s):=\{x \mid isUse(x,s) \wedge \lnot isDef(x,s)\}$ and $ getOutput(s):=\{x \mid isDef(x,s)\}$}
\label{fig:init-algo}
\end{algorithm}

\begin{lem}
\label{lem:delta-prime-sat}
If
    $$\func{SSA}(s)$$
    $$\Theta \vdash \concsstep{(\eta, \Delta, H, \nu,\mu)}{s}{(\eta', \Delta', H', \nu', \mu')}{} $$
    $$\rangerstepnolookup{((\Theta_t,\eta_s,\Delta_s,H_s, \mu_s, (\pi,\pi_r,\pi_s), \chi, u),s)}{((\Theta_t,\eta_s,\Delta_s',H_s', \mu_s', (\pi',\pi_r',\pi_s'), \chi', u'),s')}{t}$$
    $$\Delta \models (\Delta_s, \pi) $$

then
    $$\Delta' \models (\Delta_s, \pi) $$

\end{lem}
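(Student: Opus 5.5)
The plan is to show that the Java Ranger step plays no real role here. The conclusion refers to the \emph{original} symbolic components $\Delta_s$ and $\pi$, not to $\Delta_s'$ and $\pi'$. So the lemma reduces to a monotonicity fact about the consistency relation $\models$ under extension of the concrete local store. The key ingredient is Lemma~\ref{lem:delta-ssa}. From $\func{SSA}(s)$ and the concrete derivation $\Theta \vdash \concsstep{(\eta, \Delta, H, \nu,\mu)}{s}{(\eta', \Delta', H', \nu', \mu')}{}$ it gives $\Delta \subseteq \Delta'$: every variable bound in $\Delta$ is bound to the same value in $\Delta'$, and $s$ only adds new bindings.

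I would first unfold $\Delta \models (\Delta_s,\pi)$ (Property~\ref{prop:variable-consistent}). Two conditions should follow:
\begin{enumerate}
\item For every $x \in Dom(\Delta_s)$, the symbolic value $\Delta_s(x)$ evaluates under the concrete store to $\lookup{\Delta}{x}$, i.e.\ $\vdashconcestep{\Delta}{\Delta_s(x)}{\lookup{\Delta}{x}}$.
\item The path condition evaluates to true, i.e.\ $\vdashconcestep{\Delta}{\pi}{true}$.
\end{enumerate}
Next I would prove a small auxiliary claim by induction on the expression evaluation rules ({\tt val}, {\tt var}, {\tt unary-op}, {\tt binary-op}, $\gamma$-{\tt true}, $\gamma$-{\tt false}). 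The claim is: if $\vdashconcestep{\Delta}{e}{v}$ and $\Delta \subseteq \Delta'$, then $\vdashconcestep{\Delta'}{e}{v}$. The only non-trivial case is {\tt var}, where $\lookup{\Delta}{x} = v$ together with $\Delta\subseteq\Delta'$ gives $\lookup{\Delta'}{x}=v$. The remaining cases follow directly from the induction hypotheses, since the operators $\unaryOp$, $\binaryOp$ and $\gamma$ are interpreted independently of the store.

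Applying the auxiliary claim to $\pi$ gives $\vdashconcestep{\Delta'}{\pi}{true}$. Applying it to each $\Delta_s(x)$ gives $\vdashconcestep{\Delta'}{\Delta_s(x)}{\lookup{\Delta}{x}}$. Since $x \in Dom(\Delta_s)$ is bound in $\Delta$, we also have $\lookup{\Delta}{x}=\lookup{\Delta'}{x}$, again by $\Delta\subseteq\Delta'$. Together these re-establish both clauses of $\Delta' \models (\Delta_s,\pi)$.

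The step I expect to be the main obstacle is the precise reading of $\models$. If it is stated not directly over $\Delta$ but via a model of the symbolic inputs (for instance, symbols such as $i_a$ interpreted through the concrete values of region inputs), then I must check that $s$ cannot redefine any variable that this model depends on. Here I would combine three facts. First, SSA forbids redefinition of any variable already in $Dom(\Delta)$. Second, Property~\ref{prop:initial-variables-not-rename} ensures the variables of $s$ are unindexed and hence distinct from the indexed variables used by Java Ranger. Third, Property~\ref{prop:initial-deltas-no-I} ensures the range of $\Delta_s$ mentions no input variable of $s$. From these I would conclude that every variable occurring free in $\pi$ or in $Range(\Delta_s)$ keeps the same binding in $\Delta'$, and then rerun the induction above.
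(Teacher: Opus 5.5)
Your argument is correct and is the same reasoning the paper relies on. The paper states this lemma without a dedicated proof, but wherever it needs this kind of preservation (e.g.\ in the substitution-wrapper and inlining proofs) it argues exactly as you do: it obtains $\Delta\subseteq\Delta'$ from Lemma~\ref{lem:delta-ssa}, notes that $\pi$ and every $\Delta_s(x)$ then evaluate identically in the extended store, and treats the Java Ranger step as irrelevant because the conclusion mentions only the original $\Delta_s$ and $\pi$. Your final ``obstacle'' paragraph is unnecessary, since Property~\ref{prop:variable-consistent} defines $\models$ directly by evaluation under $\Delta$. The one real dependency, which you share with the paper, is that Lemma~\ref{lem:delta-ssa} needs the generalized reading of SSA, in which $s$ never reassigns a variable already in $Dom(\Delta)$.
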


\subsection{Transformation Common Details}
~\label{sec:gamma}
Almost each transformation, except for the remove final-returns, has two statement judgments, one that is recursive, and another is a wrapper that initiate the transformation process. Thus, in the proof we follow the same order, for statements, we first show the 
the soundness of the recursive judgment then discuss the soundness of the wrapper. 

In the remainder of this paper, we present the formal semantics and soundness proofs of main JR's transformations. More precisely, we present eight transformations:
$\gamma$ transformation $(TR_1)$,
early-returns elimination $(TR_2)$,
final-return elimination $(TR_3)$,
renaming transformation $(TR_4)$,
substitution transformation $(TR_5)$,
method inlining transformation $(TR_6)$,
field SSA transformation $(TR_7)$ and finally, 
constant reference propagation $(TR_8)$.

\section{$TR_1$: $\gamma$-Creation Transformation}
\journalreport{
\begin{figure}[H]
    \centering
\begin{align*}
stmt \qquad s &::= s_1 ; s_2 \mid x := e \mid skip \mid \myif{e}{s_1}{s_2}\hcancel[black]{;\phi\text{-}stmt} \mid
 \invoke{x}{z}{\gamma}{y} \mid \return{e}  \\ 
 &\qquad \putfield{z}{f}{e} \mid \getfield{x}{z}{f} \mid \newobj{x}{c} \\
\end{align*}
    \caption{Grammar fragment modified by the $\gamma$-creation transformation}
    \label{fig:gamma-grammar}
\end{figure}}
\techreport{
\begin{figure}[H]
    \centering
        \begin{align*}
   &integer \quad n \in \mathbb{Z} \qquad subscript \quad j \in \mathbb{Z^+}\uplus \{\bot\}  \qquad identifier \quad a,b,c \in Id  \\
&class \quad c \in C \qquad signature \quad g \in G \qquad field \quad f \in F \qquad reference \quad r \in R
    \end{align*}
\begin{align*}
literal\quad v  &::= n \mid true \mid false  \mid null  \qquad
 var \quad x,y,i,o,t ::= (a,j)  \\
method \quad m &::=\regiondef{x}{s} \qquad\qquad\qquad\qquad unary Op \qquad \unaryOp ::= - \mid ^^21   \\
binary Op \quad \binaryOp &::= + \mid - \mid * \mid \div \mid \& \mid bitwise-or \mid \oplus \mid \% \mid == \mid ^^21= \mid \leq \mid \geq \mid \&\& \mid logical\text{-}or \mid >\: \mid <\: \mid \ll \mid \gg \\
exp\quad e  &::=  v \mid x \mid \binaryop{e_1}{e_2} \mid \unaryop{e} \mid  \gammaexp{e}{e_1}{e_2} \\
simple\text{-}ref. \qquad z &::= r \mid x\\
stmt \qquad s &::= s_1 ; s_2 \mid x := e \mid skip \mid \myif{e}{s_1}{s_2}\hcancel[black]{;\phi\text{-}stmt} \mid
 \invoke{x}{z}{\gamma}{y} \mid \return{e}  \\ 
 &\qquad \putfield{z}{f}{e} \mid \getfield{x}{z}{f} \mid \newobj{x}{c} \\
\hcancel[black]{\phi\text{-}stmt \quad phis} &::= \hcancel[black]{\epsilon \mid \phi(x,e_1,e_2);\phi\text{-}stmt}\\
\end{align*}
    \caption{$\gamma$-creation to-grammar}
    \label{fig:gamma-grammar}
\end{figure}
}
The $\gamma$-creation transformation is the first transformation applied by JR, rewriting trailing $\phi$-statements at the end of if-statements as assignments over $\gamma$-expressions, thereby eliminating $\phi$-statements from the program representation (Property~\ref{prop:no-phi}). The resulting grammar is shown in Fig.~\ref{fig:gamma-grammar}.




    

\journalreport{

\begin{figure}[h!t]
    \footnotesize
\fbox{%
\parbox{\textwidth}{%

\[
    \infer[\gamma\rn{-wrap}]
{
\rangerstepnolookup{s}{s'}{{\gamma}-wrap}
}
{
\top \vdash\rangerstepnolookup{s}{s'}{\gamma}
}
\qquad
    \infer[\gamma\rn{-comp}]
     {
     \varphi \vdash\rangerstepnolookup{s_1;s_2}{s_1';s_2'}{\gamma}}
    {
     \begin{gathered}
    \varphi\vdash\rangerstepnolookup{s_1}{s_1'}{\gamma}
        \qquad      \varphi\vdash\rangerstepnolookup{s_2}{s_2'}{\gamma}
    \end{gathered}
    }
\]

\[
\infer[\gamma\rn{-if}]
 {
 \begin{gathered}
 \varphi\vdash\rangerstepnolookup{\myif{e}{s_1}{s_2};\phiexp{x_1}{e_1}{e_1'}; \ldots;\phiexp{x_k}{e_k}{e_k'}\\}
     {\myif{e}{s_1'}{s_2'};x_1:=\gammaexp{\varphi \wedge e}{e_1}{e_1'}; \ldots ; x_k:=\gammaexp{\varphi \wedge e}{e_k}{e_k'}}{\gamma}
 \end{gathered}
 }
 { 
 \begin{gathered}
  \varphi \wedge e \vdash\rangerstepnolookup{s_1}{s_1'}{\gamma}
    \qquad
 \varphi \wedge \neg e \vdash\rangerstepnolookup{s_2}{s_2'}{\gamma}
 \end{gathered}
}
\qquad
  \infer[\gamma\rn{-noOp}]
     {
     \varphi \vdash\rangerstepnolookup{s}{s}{\gamma}}
    {\noop{\gamma}{s}}
\]

}}
\caption{Selected $\gamma$-Creation Rewrite Rules}
\label{fig:gamma-rules}
\end{figure}}
\techreport{
 \begin{figure}[H]
    \footnotesize
\fbox{%
\parbox{\textwidth}{%

\[
    \infer[\gamma\rn{-wrap}]
{
\rangerstepnolookup{s}{s'}{{\gamma}-wrap}
}
{
\top \vdash\rangerstepnolookup{s}{s'}{\gamma}
}
\qquad
    \infer[\gamma\rn{-comp}]
     {
     \varphi \vdash\rangerstepnolookup{s_1;s_2}{s_1';s_2'}{\gamma}}
    {
     \begin{gathered}
    \varphi\vdash\rangerstepnolookup{s_1}{s_1'}{\gamma}
        \qquad      \varphi\vdash\rangerstepnolookup{s_2}{s_2'}{\gamma}
    \end{gathered}
    }
    \]
    \[
     \infer[\gamma\rn{-assign}]
     {
     \varphi \vdash\rangerstepnolookup{x:=e}{x:=e}{\gamma}}
    { }
    \qquad
    \infer[\gamma\rn{-noOp}]
     {
     \varphi \vdash\rangerstepnolookup{s}{s}{\gamma}}
    {\noop{\gamma}{s}}
\]

\[
\infer[\gamma\rn{-if}]
 {
 \begin{gathered}
 \varphi\vdash\rangerstepnolookup{\myif{e}{s_1}{s_2};\phiexp{x_1}{e_1}{e_1'}; \ldots;\phiexp{x_k}{e_k}{e_k'}\\}
     {\myif{e}{s_1'}{s_2'};x_1:=\gammaexp{\varphi \wedge e}{e_1}{e_1'}; \ldots ; x_k:=\gammaexp{\varphi \wedge e}{e_k}{e_k'}}{\gamma}
 \end{gathered}
 }
 { 
 \begin{gathered}
  \varphi \wedge e \vdash\rangerstepnolookup{s_1}{s_1'}{\gamma}
    \qquad
 \varphi \wedge \neg e \vdash\rangerstepnolookup{s_2}{s_2'}{\gamma}
 \end{gathered}
}
\]

}}
\caption{Gamma Creation Rules}
\label{fig:gamma-rules}
\end{figure}

}

Fig.~\ref{fig:gamma-rules} shows the rewrite rules for the $\gamma$-transformation.
The rules define two judgments. The judgment 
$\varphi \vdash \rangerstepnolookup{s}{s'}{\gamma}$ rewrites an IR statement 
$s$ into $s'$, where $\varphi$ is a boolean flag encoding the local path 
condition within the statement, used to guide the rewriting of $\phi$-expressions 
into $\gamma$-expressions. The wrapper judgment 
$\rangerstepnolookup{s}{s'}{{\gamma}\text{-}wrap}$ is the triggering judgment 
that initiates the rewriting, invoking the $\gamma$-judgment on $s$ with an 
initial true path condition.
Also, the $\noop{\gamma}{s}$ is defined as all statements except for if and sequential composition statements.

\techreport{
\noindent$\noop{\gamma}{x:=e} = true$\\
$\noop{\gamma}{\invoke{x}{z}{\gamma}{y}} = true$\\
$\noop{\gamma}{\getfield{x}{z}{f}} = true$\\
$\noop{\gamma}{\putfield{z}{f}{e}} = true$\\
$\noop{\gamma}{\return{e}} = true$\\
$\noop{\gamma}{skip} = true$\\
$\noop{\gamma}{s} = false \qquad$ otherwise \\
}
\begin{lem}
\label{lb:lm-gamma-statement}
    Let $s$ be any statement.

If
$$  
\varphi \vdash \rangerstepnolookup{s}{s'}{\gamma}$$
$$\vdashconcestep{\Delta}{\varphi}{true}$$
$$\Theta \vdash \concsstep{(\eta, \Delta, H, \nu, \mu)}{s}{(\eta', \Delta', H', \nu', \mu')}{} $$
then 
$$\Theta \vdash \concsstep{(\eta, \Delta, H, \nu, \mu)}{s'}{(\eta', \Delta', H', \nu', \mu')}{}$$
\end{lem}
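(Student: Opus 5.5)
We argue by induction on the derivation of $\varphi \vdash \rangerstepnolookup{s}{s'}{\gamma}$, inverting the concrete derivation of $s$ at each case. The rules $\gamma$-assign and $\gamma$-noOp give $s'=s$, so the conclusion is just the hypothesis. For $\gamma$-comp, where $s=s_1;s_2$ and $s'=s_1';s_2'$, we split on whether the concrete derivation ends in \texttt{composition} or \texttt{composition-ret}. In the first case, the induction hypothesis applied to $s_1$ (with the same $\varphi$ and $\Delta$) gives the intermediate state $(\eta_1,\Delta_1,H_1,\nu_1,\bot)$. To apply the induction hypothesis to $s_2$ we need $\vdashconcestep{\Delta_1}{\varphi}{true}$. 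We obtain it from Lemma~\ref{lem:delta-ssa}: under the SSA assumption (Property~\ref{prop:ssa}) we have $\Delta\subseteq\Delta_1$, and expression evaluation is monotone under extension of $\Delta$, so $\varphi$ still evaluates to $true$. Reassembling with \texttt{composition} finishes this case. In the \texttt{composition-ret} case only the hypothesis for $s_1$ is needed.

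The main case is $\gamma$-if. Here $s$ is $\myif{e}{s_1}{s_2};\phiexp{x_1}{e_1}{e_1'}\ldots$ and $s'$ is $\myif{e}{s_1'}{s_2'}$ followed by the assignments $x_i:=\gammaexp{\varphi\wedge e}{e_i}{e_i'}$. We invert the concrete derivation into its four possible final rules. Suppose it is \texttt{if-true}, so $e$ evaluates to $true$ in $\Delta$. Then $\varphi\wedge e$ evaluates to $true$ in $\Delta$, and the induction hypothesis on the premise $\varphi\wedge e\vdash\rangerstepnolookup{s_1}{s_1'}{\gamma}$ yields the same run $(\eta',\Delta',H',\nu',\bot)$ for $s_1'$. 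Since the if-statement in $s'$ now has no $\phi$'s (the empty $\phi$-list), \texttt{if-true} gives final local map $\Delta'$. We then run the $k$ assignments with \texttt{composition} and \texttt{assignment}. By $\Delta\subseteq\Delta'$ (Lemma~\ref{lem:delta-ssa}) and monotonicity, $\varphi\wedge e$ still evaluates to $true$ in $\Delta'$, and later in $\Delta'[x_1\leftarrow v_1,\ldots]$. So \texttt{$\gamma$-true} selects $e_i$. The remaining point is that $e_i$ evaluates to the same $v_i$ in this extended map as it did in $\Delta$, as the \texttt{if-true} premise requires. We argue this by monotonicity plus SSA: the $\phi$-arguments are variables defined before the $\phi$ and never redefined. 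If instead the premise is read as evaluating $e_i$ after the branch, this follows directly since $\Delta'$ is the map after $s_1'$. The assignments produce exactly $\Delta'[x_1\leftarrow v_1,\ldots,x_k\leftarrow v_k]$, matching the conclusion. The \texttt{if-false} case is symmetric. Here $\varphi\wedge e$ evaluates to $false$, so \texttt{$\gamma$-false} selects $e_i'$. The induction hypothesis is applied under $\varphi\wedge\neg e$, which evaluates to $true$.

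For \texttt{if-true-ret} and \texttt{if-false-ret}, the induction hypothesis gives a run of $s_1'$ (resp.\ $s_2'$) ending with $\mu'\neq\bot$. The corresponding \texttt{-ret} rule then makes the if-statement return, and \texttt{composition-ret} discards the trailing $\gamma$-assignments. The final state is therefore identical to the original.

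I expect the main obstacle to be the bookkeeping in the \texttt{if-true}/\texttt{if-false} cases. First, the evaluation of $e_i$ must agree across the different local maps; this depends on SSA and on a small monotonicity lemma for expression evaluation, proved by induction on expressions. Second, the $\phi$-free if-statement must be matched against the paper's if-rules, which are stated with $\phi$-lists and the semantics treats the empty $\phi$-list as giving the branch's $\Delta'$ unchanged. I would establish the monotonicity lemma first and then state these two facts explicitly before running the main induction.
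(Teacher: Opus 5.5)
Your proposal is correct and follows essentially the same route as the paper. Both arguments proceed by induction over the $\gamma$-rules (syntax-directed, so equivalent to the paper's structural induction on $s$). The $\gamma$-assign/$\gamma$-noOp cases are trivial, the four concrete subcases of $\gamma$-if have $\varphi\wedge e$ selecting the right $\gamma$-branch and \texttt{composition-ret} discarding the trailing assignments in the return cases, and the $\gamma$-comp case recomposes. You are actually more careful than the paper in two places: it never re-establishes $\vdashconcestep{\Delta_1}{\varphi}{true}$ (via Lemma~\ref{lem:delta-ssa} and monotonicity) before applying the hypothesis to $s_2$, and it omits the \texttt{composition-ret} subcase of $\gamma$-comp.
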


\journalreport{
\begin{proof}
The proof proceeds by structural induction on $s$. Statements satisfying the $\gamma$-{\tt no-op} and $\gamma$-{\tt assign} rules are not rewritten, so the theorem holds trivially. We show the case where the if-condition evaluates to {\tt true}; the remaining cases are analogous and the full proof appears \soha{in~\cite{techreport}}.

Consider the $\gamma$-{\tt if} rule. Applying the rewriting yields:
\begin{align*}
\varphi \vdash \rangerstepnolookup{\myif{e}{s_1}{s_2};\phiexp{x_1}{e_1}{e_1'}; \ldots;\phiexp{x_k}{e_k}{e_k'}}
{\\ \myif{e}{s_1'}{s_2'};x_1:=\gammaexp{\varphi \wedge e}{e_1}{e_1'}; \ldots ; x_k:=\gammaexp{\varphi \wedge e}{e_k}{e_k'}}{\gamma}
\end{align*}
where $\varphi \wedge e \vdash\rangerstepnolookup{s_1}{s_1'}{\gamma}$ and $\varphi \wedge \neg e \vdash\rangerstepnolookup{s_2}{s_2'}{\gamma}$.

Suppose $e$ evaluates to {\tt true} (i.e., $\Theta \vdash\concestep{\Delta}{e}{\mathtt{true}}{}$) and $s_1$ contains no {\tt return} statement. Then that means, the concrete {\tt if-phi-true} rule will evaluate, yielding:
$$(\eta_{s1}',\Delta_1'', H_1', \bot)$$
where $\vdashconcestep{\Delta}{e_i}{v_i}$ for $i=1,\ldots,k$, $\Theta\vdash\concsstep{(\eta_s,\Delta, H, \nu, \mu)}{s_1}{(\eta_{s1}',\Delta_1', H_1', \nu_1', \bot)}{}$ and $\Delta_1''=\Delta_1'[x_1\leftarrow v_1,\ldots, x_k\leftarrow v_k]$. Here, we have that $\eta'=\eta_{s1}', \Delta'=\Delta_1', H'=H_1',\nu'=\nu_1',\mu'=\mu_1'$.

Now, evaluating concretely the rewritten statement which is composed of an if-statement and a series of assignment statement with $\gamma$-expressions.
In this case, observe that $e$ is unchanged from the rewriting and that it evaluates in the same environment $\Delta$, thus, it must evaluate to $true$, and therefore execution must proceeds with $s_1'$. Applying the induction hypothesis on $s_1$ gives $\Theta\vdash\concsstep{(\eta_s,\Delta, H, \nu, \mu)}{s_1'}{(\eta_{s1}',\Delta_1', H_1', \nu_1', \bot)}{}$ and that $SSA(s')$. Next, we evaluate the trailing assignment statements, which are evaluating in the same concrete environment $\Delta$, then their corresponding expression must evaluate to the same values, i.e., $\vdashconcestep{\Delta}{e_i}{v_i}$ for $i=1,\ldots,k$, and therefore we again obtain this environment after executing the assignment statements $(\eta_{s1}',\Delta_1'[x_1\leftarrow v_1,\ldots, x_k\leftarrow v_k], H_1', \nu_1', \bot)=(\eta_{s1}',\Delta_1'', H_1', \bot)$, which is what we want to prove. 

\end{proof}

\begin{cor}
\label{cor:gamma-wrap-sound}

Let $s$ be any statement.

If
$$\rangerstepnolookup{s}{s'}{\gamma-wrap}$$
$$\Theta \vdash \concsstep{(\eta, \Delta, H, \nu, \mu)}{s}{(\eta', \Delta', H', \nu', \mu')}{} $$
then 
$$\Theta \vdash \concsstep{(\eta, \Delta, H, \nu, \mu)}{s'}{(\eta', \Delta', H', \nu', \mu')}{}$$
\end{cor}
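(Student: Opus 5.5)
The corollary follows by inverting the wrapper rule and then applying Lemma~\ref{lb:lm-gamma-statement}. The only rule whose conclusion has the form $\rangerstepnolookup{s}{s'}{\gamma-wrap}$ is $\gamma$-{\tt wrap}. So from the hypothesis $\rangerstepnolookup{s}{s'}{\gamma-wrap}$ we read off its unique premise, $\top \vdash \rangerstepnolookup{s}{s'}{\gamma}$. In other words, the wrapped rewriting is exactly the recursive $\gamma$-judgment instantiated with the local path condition $\varphi = \top$.

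Next we discharge the side condition of Lemma~\ref{lb:lm-gamma-statement}, namely $\vdashconcestep{\Delta}{\varphi}{true}$ for $\varphi=\top$. Here $\top$ is the boolean literal $true$, so the {\tt val} rule of the concrete semantics gives $\vdashconcestep{\Delta}{true}{true}$ in every local environment $\Delta$, and in particular in the $\Delta$ of the given concrete execution. If $\top$ is instead read as a closed tautological expression (e.g.\ obtained by conjunction in nested $\gamma$-{\tt if} applications), the same conclusion follows because its evaluation does not depend on $\Delta$.

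With the three premises in hand — $\top \vdash \rangerstepnolookup{s}{s'}{\gamma}$, $\vdashconcestep{\Delta}{\top}{true}$, and the assumed derivation $\Theta \vdash \concsstep{(\eta, \Delta, H, \nu, \mu)}{s}{(\eta', \Delta', H', \nu', \mu')}{}$ — Lemma~\ref{lb:lm-gamma-statement} applies directly. It yields $\Theta \vdash \concsstep{(\eta, \Delta, H, \nu, \mu)}{s'}{(\eta', \Delta', H', \nu', \mu')}{}$, which is the claim.

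There is no substantive obstacle. The only points to verify are that the wrapper judgment is defined by a single rule, so inversion is unambiguous, and that $\top$ is a well-formed expression of the grammar that evaluates to $true$. All of the real work, the induction over the $\gamma$-{\tt if} case with trailing $\phi$-statements, was already done in Lemma~\ref{lb:lm-gamma-statement}.
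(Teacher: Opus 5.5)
Your proposal is correct and matches the paper's argument, which simply says the corollary follows directly from Lemma~\ref{lb:lm-gamma-statement}. You invert $\gamma$-{\tt wrap} to get the $\varphi=\top$ instance of the recursive judgment, note that $\top$ (the literal $true$) evaluates to $true$ under any $\Delta$, and apply the lemma; you have just made explicit the inversion and side-condition steps that the paper leaves implicit.
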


\begin{proof}
    The proof follows directly from Lemma~\ref{lb:lm-gamma-statement}.
\end{proof}
}
\techreport{
\begin{proof}
We will proof the lemma using structural induction on $s$.
First, consider the $\gamma$-assign rule and all other statements where gamma transformation does not introduce any changes to the statement $s$. In all of these cases, the theorem statement holds trivially as there is no rewriting, i.e., changes introduced on the statement.

Next, let us consider $\gamma${\tt-if} rule. Applying the rewriting we get 
 \begin{align*}  \varphi \vdash \rangerstepnolookup{\myif{e}{s_1}{s_2};\phiexp{x_1}{e_1}{e_1'}; \ldots;\phiexp{x_k}{e_k}{e_k'}}
     {\\ \myif{e}{s_1'}{s_2'};x_1:=\gammaexp{\varphi \wedge e}{e_1}{e_1'}; \ldots ; x_k:=\gammaexp{\varphi \wedge e}{e_k}{e_k'}}{\gamma}
 \end{align*}

 such that $ \varphi \wedge e \vdash\rangerstepnolookup{s_1}{s_1'}{\gamma}$, and $\varphi \wedge \neg e \vdash\rangerstepnolookup{s_2}{s_2'}{\gamma}$.
 There are four cases in which this statement can execute concretely, depending on whether the condition $e$ evaluates to {\tt true} or {\tt false}, and whether $s_1$ or $s_2$ contains a {\tt return} statement.
Let us first consider the case where $e$ evaluates to {\tt true} ($
  \Theta \vdash\concestep{\Delta}{e}{true}{}$) and where the execution of $s_1$ does not encounter a {\tt return} statement. In this case, the result of evaluating $s$ will be  
${(\eta_s',\Delta'[x_1\leftarrow v_1,\ldots, x_k\leftarrow v_k], H', \bot)}{}
$ (applying the concrete rule {\tt if-phi-true}), where $  \vdashconcestep{\Delta}{e_1}{v_1} \ldots  \vdashconcestep{\Delta}{e_k}{v_k} $, and $\Theta\vdash\concsstep{(\eta_s,\Delta, H, \nu, \mu)}{s_1}{(\eta_s',\Delta', H', \nu', \bot)}{} $.
Now, now to execute the rewritten statement, we know that the expression $e$ must evaluate to the same {\tt true} value, since it has not changed and it is executing in the same environment. This means that the execution will proceed by executing $s_1'$. But, here we can apply the induction hypothesis on $s_1$, which then implies that $\Theta\vdash\concsstep{(\eta_s,\Delta, H, \nu, \mu)}{s_1'}{(\eta_s',\Delta', H', \nu', \bot)}{}$. Similarly we know that Then evaluating the rewritten statement we get $\Theta\vdash ((\eta_s, \Delta, H,\nu, \bot), \myif{e}{s_1'}{s_2'} \Rightarrow_{c} (\eta_s',\Delta', H', \bot)$ using if-true rule. 
Next, the remaining part of the rewritten expression will be evaluated by applying both the {\tt composition} rule, and the {\tt assignment} rule. Observe here that that all expression $e$ and $e_1,..e_k$ must evaluate to the same values concretely since the grammar uses an SSA formate and that means variables within expression cannot be redefined. More precisely, if an expression has evaluated to a particular value at one place in the code, then evaluating the same expression at a later point in the code will produce the same value. But evaluating the expression earlier in the code might not be a valid thing to try.
Moreover, recall that from the hypothesis we have that $\vdashconcestep{\Delta}{\varphi}{true}$. We also know that the $\vdashconcestep{\Delta}{\varphi\wedge e}{true}$, implying that all $\gamma$ expressions will be evaluated to the {\tt true} side, i.e., by applying the $\gamma${-true} expression rule.
This implies that the sequence of assignments will have the same effect on the environment as the if-phi-true rule, i.e., 
$  \Theta\vdash ((\eta_s', \Delta'[x_1 \leftarrow v_1,\ldots, x_{k-1} \leftarrow v_{k-1}], H', \bot), x_k:=\gammaexp{e}{e_k}{e_k'} \Rightarrow_{c} (\eta_s',\Delta'[x_1\leftarrow v_1,\ldots, x_{k-1} \leftarrow v_{k-1}, x_k\leftarrow v_k], H', \bot)$.
Similarly, the proof for the case when the condition within the if-statement evaluates to  {\tt false}, and no return-statement is encountered during the execution of $s_2$, will follow the same reasoning, however, on $s_2$ instead of $s_1$. 

The other two remaining cases for the execution of the if-statement are when the condition of the if-statement evaluates to either {\tt true}, or {\tt false}, but a return-statement is encountered during the evaluation of either $s_1$ or $s_2$. These two cases are handled by the concrete rules {\tt if-phi-true-ret}, and {\tt if-phi-false-ret}. In what follows we show the proof for {\tt if-phi-true-ret}, but {\tt if-phi-false-ret} follows the same argument. 
 In the {\tt if-phi-true-ret} case, the evaluation will be different. The main difference is that, if a return occurs during the execution of the "then" side, then the ${\phi}$s and the gamma assignments will both be skipped.
 Thus, in this case, the resulting state will be 
${(\eta_s',\Delta', H',\nu', \mu')}{}$, when evaluating the if-statement concretely, where $\mu' \neq \bot$, 
$\Theta \vdash\concestep{\Delta}{e}{true}{}$. This means that $s_1'$ will be evaluated. However, we can apply induction on $s_1'$ to obtain ${(\eta_s',\Delta', H',\nu', \mu')}{}$. 
$\Theta\vdash\concsstep{(\eta_s,\Delta, H, \nu, \mu)}{s_1}{(\eta_s',\Delta', H', \nu', \mu')}{}$. But since we have assumed that $\mu' \neq \bot$, then the rewritten if-statement then will evaluate to $(\eta_s',\Delta', H', \nu', \mu')$.

Next, consider $\gamma${\tt-comp} rule. Here we want to In this case, the statement is rewritten as: $ 
\varphi \vdash \rangerstepnolookup{s_1;s_2}{s_1';s_2'}{\gamma}$, where $\varphi\vdash\rangerstepnolookup{s_1}{s_1'}{\gamma}$, and $\varphi\vdash\rangerstepnolookup{s_2}{s_2'}{\gamma}$. Evaluating the original statement we obtain $  \Theta\vdash \concsstep{(\eta, \Delta, H, \nu,\bot)}{s_1';s_2'}{(\eta_s'',\Delta'', H'', \nu'',\mu'')}{}$, where $
      \Theta\vdash \concsstep{(\eta, \Delta, H, \nu,\bot)}{s_1}{(\eta_s', \Delta', H', \nu',\bot)}{}$ and $\Theta\vdash \concsstep{(\eta', \Delta', H', \nu',\bot)}{s_2}{(\eta_s'',\Delta'', H'', \nu'',\mu'')}{}$.
Using induction on $s_1$ we obtain
  $ \Theta\vdash \concsstep{(\eta, \Delta, H, \nu, \bot)}{s_1'}{(\eta_s', \Delta', H', \nu', \bot)}{}$. Now, we can also apply induction on $s_2$ to get 
 $\Theta\vdash \concsstep{(\eta', \Delta', H', \nu', \bot)}{s_2'}{(\eta_s'',\Delta'', H'', \nu'', \mu'')}{}$.
Thus, we can apply the concrete ({\tt composition}) rule on the results of the rewriting to obtain $\Theta\vdash \concsstep{(\eta, \Delta, H, \nu, \bot)}{s_1';s_2'}{(\eta_s'',\Delta'', H'', \nu'', \mu'')}{}$
This shows that evaluating the original statement and the rewritten version both produce the same environment.

Finally, observe that Lemma~\ref{lem:delta-prime-sat} naturally hold, thus proving that $\Delta' \models (\Delta_s, \pi)$.
\end{proof}

\begin{cor}
\label{cor:gamma-wrap-sound}

Let $s$ be any statement.

If
$$     
\rangerstepnolookup{s}{s'}{\gamma-wrap}$$
$$\Theta \vdash \concsstep{(\eta, \Delta, H, \nu, \mu)}{s}{(\eta', \Delta', H', \nu', \mu')}{} $$
then 
$$\Theta \vdash \concsstep{(\eta, \Delta, H, \nu, \mu)}{s'}{(\eta', \Delta', H', \nu', \mu')}{}$$
\end{cor}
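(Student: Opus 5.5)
The corollary reduces to Lemma~\ref{lb:lm-gamma-statement}, instantiated with the initial path condition $\varphi = \top$. The only rule that derives the wrapper judgment $\rangerstepnolookup{s}{s'}{\gamma-wrap}$ is $\gamma$\texttt{-wrap}, so I will first invert that derivation. Inversion gives a derivation of $\top \vdash \rangerstepnolookup{s}{s'}{\gamma}$ for the same $s$ and $s'$.

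Next I discharge the side condition of Lemma~\ref{lb:lm-gamma-statement}, namely $\vdashconcestep{\Delta}{\varphi}{true}$ with $\varphi = \top$. Here $\top$ is read as the boolean literal $true$ of the grammar, so the concrete \texttt{val} rule gives $\vdashconcestep{\Delta}{true}{true}$ immediately. This holds for any $\Delta$ and does not use any SSA or consistency assumption.

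With both premises in hand, I combine them with the hypothesis $\Theta \vdash \concsstep{(\eta, \Delta, H, \nu, \mu)}{s}{(\eta', \Delta', H', \nu', \mu')}{}$ and apply Lemma~\ref{lb:lm-gamma-statement} directly. This yields $\Theta \vdash \concsstep{(\eta, \Delta, H, \nu, \mu)}{s'}{(\eta', \Delta', H', \nu', \mu')}{}$, which is exactly the conclusion of the corollary.

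The main obstacle is not in this step but in the lemma it relies on. Inside the $\gamma$-\texttt{if} case the guard $\varphi$ is strengthened to $\varphi\wedge e$ or $\varphi\wedge\neg e$, and the induction hypothesis on $s_1$ and $s_2$ needs these strengthened guards to evaluate to $true$. That holds only in the branch actually taken; for the untaken branch the hypothesis is simply never invoked, because its concrete derivation does not exist. For the corollary itself, the only point needing care is justifying that $\top$ is syntactically the literal $true$, so that the \texttt{val} rule applies.
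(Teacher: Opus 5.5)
Your proposal is correct and follows the same route as the paper. The paper simply states that the corollary follows directly from Lemma~\ref{lb:lm-gamma-statement}; your inversion of the $\gamma$\texttt{-wrap} rule and your use of the \texttt{val} rule to show that $\varphi=\top$ evaluates to $true$ spell out the steps the paper leaves implicit.
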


\begin{proof}
    The proof follows directly from Lemma~\ref{lb:lm-gamma-statement}.
\end{proof}

}
\section{$TR_2$: Early-Returns Elimination}
\label{sec:earlyreturns}
In this transformation, occurrences of multiple return statements are collapsed into one.
\begin{figure}[h!t]
    \footnotesize
\fbox{%
\parbox{\textwidth}{%

\[ 
    \infer[\rn{ret-wrap}]
     {
  \rangerstepwrapper{s}{\pi_r}{s';\myif{\pi_r}{\return{\mu_s}}{skip}}{ret-wrap}
     }
    {
     \begin{gathered}
 \rangersteprhs{s}{(\pi_r,\mu_s)}{s'}{ret}  \qquad
 \mu_s\neq \bot \qquad \pi_r\neq false
    \end{gathered}
    }
\]

\[
    \infer[\rn{ret-wrap-none}]
     {
  \rangerstepwrapper{s}{\pi_r}{s}{ret-wrap}
     }
    {
     \begin{gathered}
 \rangersteprhs{s}{(\pi_r,\mu_s)}{s}{ret}  \qquad
 \mu_s= \bot \qquad \pi_r = false
    \end{gathered}
    }
\]

\[
    \infer[\rn{ret-if}]
     {\rangersteprhs{\myif{e}{s_1}{s_2}}{(\pi_r^*,\mu_s^*)}{\myif{e}{s_1'}{s_2'}}{ret}}
    {
     \begin{gathered}
        \rangersteprhs{s_1}{(\pi_r,\mu_s)}{s_1'}{ret}
        \qquad
       \rangersteprhs{s_2}{(\pi_r',\mu_s')}{s_2'}{ret} \\
        \pi_r^*={\tt simp}(\gammaexp{e}{\pi_r}{\pi_r'}\qquad
        \mu_s^*={\tt simp}(\gammaexp{e}{\mu_s}{\mu_s'}) \\
    \end{gathered}
    }
\]

\[
    \infer[\rn{ret-return}]
     {
     \rangersteprhs{\return{e}}{(\pi_r, \mu_s)}{skip}{ret}}
    {
    \pi_r=true \qquad
    \mu_s=e
    }
\]

\[
    \infer[\rn{ret-comp}]
     {
  \rangersteprhs{s_1;s_2}{(\pi_r^*,\mu_s^*)}{s_1';if(!\pi_r)\, then \, s_2' \, else \, skip}{ret}
     }
    {
     \begin{gathered}
     \rangersteprhs{s_1}{(\pi_r,\mu_s)}{s_1'}{ret}
        \qquad
 \rangersteprhs{s_2}{(\pi_r',\mu_s')}{s_2'}{ret} \\
    \pi_r^*={\tt simp}(\gammaexp{\pi_r}{true}{\pi_r'}) \qquad
        \mu_s^*={\tt simp}(\gammaexp{\pi_r}{\mu_s}{{\tt simp}(\gammaexp{\pi_r'}{\mu_s'}{false}})) \\
    \end{gathered}
    }
\]
\[
    \infer[\rn{ret-no-op}]
     {\rangersteprhs{s}{(\pi_r, \mu_s)}{s}{ret}}
    { isNoOp_{ret}(s) \qquad \pi_r=false \qquad \mu_s=\bot}
\]

\hrule
\[
 {\tt simp}(\gammaexp{e}{\bot}{\bot})=\bot
\qquad
 {\tt simp}(\gammaexp{e}{false}{false})=false
 \qquad
 {\tt simp}(e)=e \quad \text { otherwise}
\]
}}
\caption{Eliminate Early-Return Statements Rules.}
\label{fig:earlyreturn}
\end{figure}
In this transformation, occurrences of multiple return statements are collapsed into one. Fig.~\ref{fig:earlyreturn} shows the early-returns rules of the early-returns transformation.

The rules define the judgment $\rangersteprhs{s}{(\pi_r, \mu_s)}{s'}{ret}$,
where $s$ is the input statement, $\pi_r$ is a path condition indicating whether
a \texttt{return} was reached, $\mu_s$ is the return value ($\bot$ if none), and
$s'$ is the rewritten statement.
The return-elimination transformation aims to eliminate early \texttt{return}
statements by collapsing all early-return statements into a single return at the end of the rewritten statement. To ensure a sound
transformation, two pieces of information must be captured: the path condition
$\pi_r$ under which a \texttt{return} was reached, and the returned expression
$\mu_s$. These are propagated by subsequent transformations to preserve the
semantics of early returns.

The transformation is triggered by the wrapper judgment
$\rangerstepwrapper{s}{\pi_r}{\cdot}{ret\text{-}wrap}$, which has two cases
depending on whether a \texttt{return} was eliminated in $s$ or not.
Rule~{\tt ret-wrap} applies when a return was found ($\mu_s \neq \bot$ and
$\pi_r \neq false$), appending a conditional
\texttt{if}~$\pi_r$~\texttt{return}~$\mu_s$ to the rewritten statement $s'$. Observe at $s'$ have no return-statements, and that the subsequent if-statement in ($s';\myif{\pi_r}{\return{\mu_s}}{skip}$) adds a single return-statement guarded by a summarized condition ($\pi_r$) and a summarized returned expression ($\mu_s$).
Rule~{\tt ret-wrap-none} applies when no return was found, leaving the statement
unchanged.

Rule~{\tt ret-return} is the base case, handling a $\return{e}$ statement
by setting the return path condition $\pi_r = true$ and the return expression
$\mu_s = e$, replacing the statement with $skip$.

For compound control flow, the transformation must collect return information
from all reachable sub-statements. Rule~{\tt ret-if} handles conditionals by
processing both branches independently and merging their return conditions and
expressions using the $\gamma$ expression. Rule~{\tt ret-comp} similarly processes
both $s_1$ and $s_2$ in a sequential composition, collecting return information
from each side.

In sequential composition, a \texttt{return} may occur in either $s_1$ or $s_2$.
The merged return condition $\pi_r^*$ is
${\tt simp}(\gammaexp{\pi_r}{true}{\pi_r'})$: if $s_1$ returned then $\pi_r^*$
is $true$, otherwise it takes the return condition $\pi_r'$ collected from
rewriting $s_2$. The merged return expression $\mu_s^*$ follows the same spirit:
it is conditioned on $\pi_r$, taking $\mu_s$ when $s_1$ returned, and otherwise
deferring to $\mu_s'$ when $s_2$ returned, defaulting to $false$ when neither
did. The residual guards $s_2'$ with $!\pi_r$ to ensure it executes only if
$s_1$ did not return.
Finally, the {\tt ret-no-op} rule covers other types of statements that the transformation does not change/rewrite.
\techreport{
The $\noop{ret}{s}$ is defined as\\

\noindent$\noop{ret}{x:=e} = true$\\
$\noop{ret}{\invoke{x}{z}{\gamma}{y}} = true$\\
$\noop{ret}{\getfield{x}{z}{f}} = true$\\
$\noop{ret}{\putfield{z}{f}{e}} = true$\\
$\noop{ret}{skip} = true$\\
$\noop{ret}{s} = false \qquad$ otherwise \\
}

The transformation also defines the ${\tt simp}$ function which simplifies $\gamma$ expressions by collapsing degenerate
cases: it returns $\bot$ when both branches are $\bot$, and $false$ when both
branches are $false$, avoiding unnecessary symbolic branching in the collected
return environment.

\subsection{Early Returns Elimination is Sound}

\begin{lem}[Early Returns Recursive Judgment is Sound]
\label{lem:inline-recursive-sound}
    If
$$\Delta \models (\Delta_s, \pi)$$
$$\Theta \vdash \concsstep{(\eta, \Delta, H, \nu, \mu)}{s}{(\eta', \Delta', H', \nu', \mu')}{}$$
$$ \rangersteprhs{s}{(\pi_r,\mu_s)}{s'}{ret} $$
$$\Theta \vdash \concsstep{(\eta, \Delta, H, \nu, \mu)}{s'}{(\eta'', \Delta'', H'', \nu'', \mu'')}{}$$
then
\begin{gather*}
\eta''=\eta',\Delta''=\Delta',H''=H',\nu''=\nu' \qquad \tag{1}\\
\mu''=\bot \qquad \tag{2}\\
\mu' \neq \bot \implies 
\vdashconcestep{\Delta''}{\mu_s}{\mu'} \qquad \tag{3}\\
\mu' \neq \bot 
\text{ iff }
\vdashconcestep{\Delta''}{\pi_r}{true} \qquad \tag{4}\\
\end{gather*}
\techreport{
\begin{proof}
    First, let us consider {\tt ret-no-op} rule. In this case, the rewriting does not change the initial statement, i.e.,  $\Theta_t,\eta_s,\Delta_s,H_s, \chi, u \vdash \rangersteprhs{s}{(\pi, \mu_s)}{s}{ret}$, where $\pi_r=false, \mu_s=\bot$. Here, we will consider the case when $s$ is the assignment statement, i.e., $s=x:=e$, but other statements where $\func{noOp_{ret}}$ predicate is valid will follow the same argument.
    In this case, the concrete semantics will evaluate as $\Theta \vdash \concsstep{(\eta, \Delta, H, \nu, \mu)}{x:=e}{(\eta, \Delta[x \leftarrow v], H, \nu, \mu)}{}$, where $\vdashconcestep{\Delta}{e}{v}$. Here, we have that $\eta'=\eta''=\eta, H'=H''=H, \nu'=\nu''=\nu, \mu'=\mu''=\mu=\bot$, and $\Delta'=\Delta[x\leftarrow v]=\Delta''$ (conclusion 1\&2).  
    
    Since we have $\mu'=\mu=\bot$, then conclusion 3 holds trivially. Finally, we know that $\pi_r=false$, which satisfies conclusion 4.

    Now consider the {\tt ret-return} rule. Here we have $  \rangersteprhs{\return{e}}{(\pi_r, \mu_s)}{skip}{ret}$, such that $ \pi_r=true$ and $\mu_s=e$.
     In this case, the concrete semantics of the original statement will evaluate as $\Theta \vdash \concsstep{(\eta, \Delta, H, \nu, \mu)}{\return{e}}{(\eta, \Delta, H, \nu, v)}{}$, where $\vdashconcestep{\Delta}{e}{v}$.
      On the other hand, evaluating concretely the rewritten statement, we obtain $\Theta \vdash \concsstep{(\eta, \Delta, H, \nu, \mu)}{skip}{(\eta, \Delta, H, \nu, \mu)}{}$.
      Here, we have that $\eta'=\eta''=\eta, H'=H''=H, \nu'=\nu''=\nu, \mu'=v, \mu''=\bot$, and $\Delta'=\Delta''=\Delta$ (conclusion 1\&2).  
       Also, we have that $\mu_s=e$. Now observe that, we have $\mu'\neq\bot$ and that $\vdashconcestep{\Delta''}{e}{v}$ , where $v=\mu'$ (conclusion 3). Finally, we have $\pi_r=true$ satisfying conclusion 4.

%

    Now consider {\tt ret-comp} rule,  $\rangersteprhs{s_1;s_2}{(\pi_r',\mu_s')}{s_1';if(!\pi_r'')\, then \, s_2' \, else \, skip}{ret}$, where $ \rangersteprhs{s_1}{(\pi_r'',\mu_s'')}{s_1'}{ret},
 \rangersteprhs{s_2}{(\pi_r''',\mu_s''')}{s_2'}{ret}$.
 
 Such that $\pi_r'={\tt simp}(\gammaexp{\pi_r''}{true}{\pi_r'''}),
        \mu_s'={\tt simp}(\gammaexp{\pi_r''}{\mu_s''}{{\tt simp}(\gammaexp{\pi_r'''}{\mu_s'''}{false}}))$, and that  $s'=s_1';if(!\pi_r'')\, then \, s_2' \, else \, skip$.
        Now, there are two cases to execute $s=s_1;s_2$ concretely depending on whether evaluation of $s_1$ results in executing a return statement or not. That is if $\Theta\vdash \concsstep{(\eta, \Delta, H, \nu,\mu)}{s_1}{(\eta_1, \Delta_1, H_1, \nu_1,\mu_1)}{}$, then it can be that $\mu_1\neq\bot$ or $\mu_1 = \bot$.
        
        If $\mu_1 = \bot$ then the only rule the could apply from the concrete semantics is the {\tt composition} rule,
        where
        $\Theta\vdash \concsstep{(\eta, \Delta, H, \nu,\mu)}{s_1;s_2}{(\eta',\Delta', H', \nu',\mu')}{}$, where $\Theta\vdash \concsstep{(\eta, \Delta, H, \nu,\mu)}{s_1}{(\eta_1, \Delta_1, H_1, \nu_1,\mu_1)}{}, \mu_1=\bot, \Theta\vdash \concsstep{(\eta_1, \Delta_1, H_1, \nu_1,\mu_1)}{s_2}{(\eta_2,\Delta_2, H_2, \nu_2,\mu_2)}{}$.
        Here we have that $\eta_2=\eta', \Delta_2=\Delta', H_2=H', \nu_2=\nu', \mu_2=\mu'$.

        We now show that executing the rewritten statement ($s_1';if(!\pi_r'')\, then \, s_2' \, else \, skip$) must satisfy the conclusions of the lemma. First, we evaluate $s_1'$, i.e.,  we have $\Theta\vdash \concsstep{(\eta, \Delta, H, \nu,\mu)}{s_1'}{(\eta_1', \Delta_1', H_1', \nu_1',\mu_1')}{}$.
        By applying induction on $s_1$, 
        we know that $\eta_1'=\eta_1, \Delta_1'=\Delta_1, H_1'=H_1, \nu_1'=\nu_1, \mu_1'=\bot$. Also, we have that 
        we have $\mu_1 \neq \bot 
        \text{ iff }
        \vdashconcestep{\Delta_1'}{\pi_r''}{true}$, thus we can conclude $\pi_r''$ cannot be evaluated to $true$, and since we assume that the rewritten statement has executed to completion, then we can conclude that $\pi_r''$ must evaluate to $false$,  i.e.,  $\vdashconcestep{\Delta_1'}{\pi_r''}{false}$.
       
%
        Now, to evaluate $if(!\pi_r'')\, then \, s_2' \, else \, skip$ we must use the concrete rule {\tt if-phi-false}, i.e, $s_2'$ must be evaluated. 
        $$\Theta\vdash \concsstep{(\eta_1', \Delta_1', H_1', \nu_1',\mu_1')}{s_2'}{(\eta_2',\Delta_2', H_2', \nu_2',\mu_2')}{}$$
        in this case we have 
        $\eta''=\eta_2', \Delta''=\Delta_2',H''=H_2',\nu''=\nu_2',\mu''=\mu_2'$. 
        Then, by induction we get
        $\eta''=\eta_2'=\eta_2=\eta', \Delta''=\Delta_2'=\Delta_2=\Delta',H''=H_2'=H_2=H',\nu''=\nu_2'=\nu_2=\nu'$ (conclusion 1), and $\mu''=\mu_2'=\bot$ (conclusion 2).  Also we have from the induction that 
        $$\mu_2 \neq \bot \implies \vdashconcestep{\Delta_2'}{\mu_s'''}{\mu_2}$$
        $$\mu_2 \neq \bot 
        \text{ iff } \vdashconcestep{\Delta_2'}{\pi_r'''}{true}$$

        Now, we want to show that this also hold
        $$\mu_2 \neq \bot \implies \vdashconcestep{\Delta_2'}{\mu_s'}{\mu_2}$$
        $$\mu_2 \neq \bot 
        \text{ iff } \vdashconcestep{\Delta_2'}{\pi_r'}{true}$$
        
        Consider the case where $\mu'=\mu_2=\bot$, then conclusion 3 holds trivially, i.e., $\mu_2 \neq \bot \implies \vdashconcestep{\Delta_2'}{\mu_s'}{\mu_2}$. 
        
        On the other hand, since $\pi_r'={\tt simp}(\gammaexp{\pi_r''}{true}{\pi_r'''})=\gammaexp{\pi_r''}{true}{\pi_r'''}$. From Lemma~\ref{lem:delta-ssa} that $\Delta_1'\subseteq\Delta_2'$, and since we have shown that $\vdashconcestep{\Delta_1'}{\pi_r''}{false}$, then we can conclude that evaluating $\gammaexp{\pi_r''}{true}{\pi_r'''}$ in $\Delta_2'$, must results in evaluating $\pi_r'''$, but from induction we just shown that $\mu_2 \neq \bot 
        \text{ iff } \vdashconcestep{\Delta_2'}{\pi_r'''}{true}$ holds, which implies that $\pi_r'''$ must evaluate in $\Delta_2'$, i.e., $\mu_2 \neq \bot 
        \text{ iff } \vdashconcestep{\Delta_2'}{\pi_r'}{true}$ (conclusion 4).
        
        On the other hand, if $\mu'=\mu_2\neq \bot$,
        %
        then, we can conclude that $\vdashconcestep{\Delta_2'}{\pi_r'''}{true}$, and $\vdashconcestep{\Delta_2'}{\mu_s'''}{\mu_2}$. This means that the expression $\mu_s'=\gammaexp{\pi_r''}{\mu_s''}{\gammaexp{\pi_r'''}{\mu_s'''}{false}}$ will reduce, using concrete rules {\tt $\gamma$-true} and {\tt $\gamma$-false}, to evaluating $\mu_s'''$ in $\Delta_2'$, i.e., $\vdashconcestep{\Delta_2'}{\mu_s'''}{\mu'}$ (conclusion 3). Similarly, we know that the expression $\pi_r'''=\gammaexp{\pi_r'}{true}{\pi_r''}$, must reduce to $true$, since by induction, it must be the case that $\pi_r''$ evaluated to $true$ in $\Delta_2'$ (conclusion 4).

        Finally, the case where we have $\mu_1 \neq \bot$, then the concrete rule {\tt composition-ret} will apply. Here we have that $\eta'=\eta_1, \Delta'=\Delta_1, H'=H_1,\nu'=\nu_1, \mu'=\mu_1$. Similarly we have $\eta''=\eta_1', \Delta''=\Delta_1', H''=H_1',\nu''=\nu_1', \mu''=\mu_1'$.
        
        By induction on $s_1$, we know that it must be that 
         $$\mu_1 \neq \bot \implies \vdashconcestep{\Delta_1'}{\mu_s''}{\mu_1}$$
        $$\mu_1 \neq \bot 
        \text{ iff } \vdashconcestep{\Delta_1'}{\pi_r''}{true}$$

        since we have that $\mu_1 \neq \bot$ then $\pi_r''$ must evaluate to $true$ in $\Delta_1'$. Now evaluating the second part of $s'$, i.e., the if-statement $if(!\pi_r'')\, then \, s_2' \, else \, skip$, must result in evaluating the $skip$ statement, yielding the same concrete state, i.e., 
        $$\Theta\vdash \concsstep{(\eta_1', \Delta_1', H_1', \nu_1',\mu_1')}{if(!\pi_r'')\, then \, s_2' \, else \, skip}{(\eta_1',\Delta_1', H_1', \nu_1',\mu_1')}{}$$

        Which from induction satisfies the conclusions of the lemma. 

        Now, consider the {\tt ret-if} rule, we have, ${\rangersteprhs{\myif{e}{s_1}{s_2}}{(\pi_r',\mu_s')}{\myif{e}{s_1'}{s_2'}}{ret}}$, where $\rangersteprhs{s_1}{(\pi_r'',\mu_s'')}{s_1'}{ret}, 
       \rangersteprhs{s_2}{(\pi_r''',\mu_s''')}{s_2'}{ret}$, $
        \pi_r'={\tt simp}(\gammaexp{e}{\pi_r''}{\pi_r'''},
        \mu_s'={\tt simp}(\gammaexp{e}{\mu_s''}{\mu_s'''})$.
        
        There are two cases for evaluating the if-statement depending on whether the expression $e$ evaluates to true or not. We will show the proof when $e$ evaluates to $true$, but the other case follows similar reasoning.
        Thus, here we have $\vdashconcestep{\Delta}{e}{true}$, the concrete semantics must evaluate $s_1$, $\Theta\vdash\concsstep{(\eta,\Delta, H, \nu,\mu)}{s_1}{(\eta_1,\Delta_1, H_1, \nu_1,\mu_1)}{}$.
        In this case, there are two concrete evaluation rules that can apply: {\tt if-phi-true} or {\tt if-phi-true-ret}, depending on whether $\mu_1=\bot$ or not. 
        Since no phi-assignment case exist in the early return grammar, then we know that $\eta=\eta',\Delta_1=\Delta', H_1=H',\nu_1=\nu',\mu_1=\mu'$

        Now, executing the re-written statement $\myif{e}{s_1'}{s_2'}$, must initially start by evaluating $e$, since $e$ has not changed by the rewritten, and is evaluated in the same environment $Delta$, then it must evaluate to the same value, $true$. Consequently, $s_1'$ is be evaluated next, that is  $\Theta\vdash\concsstep{(\eta,\Delta, H, \nu,\mu)}{s_1'}{(\eta_1',\Delta_1', H_1', \nu_1',\mu_1')}{}$. Here we have that $\eta''=\eta_1',\Delta''=\Delta_1',H''=H_1',\nu''=\nu_1',\mu''=\mu_1'$.
        
        By induction, we know that 
        $\eta''=\eta_1'=\eta_1=\eta',\Delta''=\Delta_1'=\Delta_1=\Delta',H''=H_1'=H_1==H',\nu''=\nu_1'=\nu_1=\nu'$ (conclusion 1) and $\mu''=\bot$ (conclusion 2). Also, by the induction we have 
        $$\mu_1 \neq \bot \implies \vdashconcestep{\Delta_1'}{\mu_s''}{\mu_1}$$
        $$\mu_1 \neq \bot 
        \text{ iff }
        \vdashconcestep{\Delta_1'}{\pi_r''}{true}$$.
        
        At this point we want to show that 

         $$\mu_1 \neq \bot \implies \vdashconcestep{\Delta_1'}{\mu_s'}{\mu_1}$$
        $$\mu_1 \neq \bot 
        \text{ iff }
        \vdashconcestep{\Delta_1'}{\pi_r'}{true}$$.

        In the case where $\mu_1=\bot$, $\mu_1 \neq \bot \implies \vdashconcestep{\Delta_1'}{\mu_s'}{\mu_1}$ holds trivially. Also, we know that if $\mu_1=\bot$, then $\pi_r''$ cannot evaluate to $true$, that is $\vdashconcestep{\Delta_1'}{\pi_r''}{true}$ cannot happen. From Lemma~\ref{lem:delta-ssa}, we have that $\Delta \subseteq \Delta_1'$, which implies that the expression $e$ must still evaluate to $true$ when evaluated in $\Delta_1'$.
        Now evaluating $\pi_r'={\tt simp}(\gammaexp{e}{\pi_r''}{\pi_r'''}$ in $\Delta_1'$ must reduce to evaluating since $\pi_r''$ which we have shown that it cannot evaluate to $true$ (conclusion 4).

        On the other hand consider the case where $\mu_1\neq \bot$. 
        Here, evaluating $\mu_s'$ defined as
        $\mu_s'={\tt simp}(\gammaexp{e}{\mu_s''}{\mu_s'''})$, must reduce to evaluating $\mu_s''$ but we have already shown by the induction that $\mu_1 \neq \bot \implies \vdashconcestep{\Delta_1'}{\mu_s''}{\mu'}$ (conclusion 3).
        Similarly, evaluating $\pi_r'$ in $\Delta_1'$, defined as $\pi_r'={\tt simp}(\gammaexp{e}{\pi_r''}{\pi_r'''}$, which reduces to evaluating $\pi_r''$, which we have already shown that it holds by the induction (conclusion 4).

\end{proof}
}
\journalreport{
\begin{proof}[Proof Sketch]
We show the two interesting cases: {\tt ret-comp} and {\tt ret-if}. The 
remaining cases are deferred to the technical report~\cite{techreport}.

\textit{Case} {\tt ret-comp}: The original statement $s_1;s_2$ is rewritten to 
$s_1';\ \texttt{if}(!\pi_r'')\ \texttt{then}\ s_2'\ \texttt{else}\ \texttt{skip}$. 
There are two subcases depending on whether $s_1$ executes a return. If 
$\mu_1 = \bot$, then $s_2$ executes in the original, and since $\pi_r''$ 
evaluates to $\texttt{false}$ by the induction hypothesis on $s_1$, the guard 
$!\pi_r''$ is $\texttt{true}$ and $s_2'$ executes in the rewrite. Conclusions~(1) 
and~(2) follow by induction on $s_1$ and $s_2$. For conclusions~(3) and~(4), 
since $\vdashconcestep{\Delta_1'}{\pi_r''}{\texttt{false}}$, evaluating 
$\pi_r' = \gammaexp{\pi_r''}{true}{\pi_r'''}$ reduces to $\pi_r'''$, and 
similarly $\mu_s'$ reduces to $\gammaexp{\pi_r'''}{\mu_s'''}{false}$, both of 
which hold by induction on $s_2$. If $\mu_1 \neq \bot$, then 
\texttt{composition-ret} applies, $s_2$ is skipped in the original, and since 
$\pi_r''$ evaluates to $\texttt{true}$, the guard $!\pi_r''$ is $\texttt{false}$ 
and \texttt{skip} executes in the rewrite, yielding the same state. All 
conclusions follow directly from the induction hypothesis on $s_1$.

\textit{Case} {\tt ret-if}: The conditional $\myif{e}{s_1}{s_2}$ is rewritten 
to $\myif{e}{s_1'}{s_2'}$. We show the case where $e$ evaluates to 
$\texttt{true}$; the $\texttt{false}$ case follows symmetrically. Since $e$ is 
unchanged by the rewriting, it evaluates to $\texttt{true}$ under both $\Delta$ 
and $\Delta'$ by Lemma~\ref{lem:delta-ssa}, so $s_1$ and $s_1'$ are selected 
in both executions. Conclusions~(1) and~(2) follow by induction on $s_1$. For 
conclusions~(3) and~(4), since $\vdashconcestep{\Delta_1'}{e}{\texttt{true}}$, 
evaluating $\pi_r' = \gammaexp{e}{\pi_r''}{\pi_r'''}$ and 
$\mu_s' = \gammaexp{e}{\mu_s''}{\mu_s'''}$ both reduce to their left branches 
$\pi_r''$ and $\mu_s''$ respectively, which hold by induction on $s_1$.
\end{proof}

}
\end{lem}

\begin{thm}
\label{thm:return-wrap-sound}
    Let $s$ be a statement.

If
$$\Theta \vdash ((\eta,\Delta,H,\mu),s) \Rightarrow_c (\eta',\Delta',H',\mu')$$
$$ \rangerstepwrapper{s}{\pi_r}{s'}{ret-wrap}$$
$$\Theta \vdash ((\eta,\Delta,H,\mu),s') \Rightarrow_c (\eta'',\Delta'',H'',\mu'')$$
then
$$\eta''=\eta',\Delta''=\Delta',H''=H',\nu''=\nu',\mu''=\mu'$$

\end{thm}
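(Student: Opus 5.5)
The plan is to split on which wrapper rule produced $s'$ and to rely on Lemma~\ref{lem:inline-recursive-sound} for the core correspondence. In both cases the premise of the wrapper is a recursive judgment $\rangersteprhs{s}{(\pi_r,\mu_s)}{s^\circ}{ret}$. The initial state has $\mu=\bot$ (Property~\ref{prop:initial-no-return}), which every concrete rule requires for a statement to take a step.

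\emph{Case} {\tt ret-wrap-none}. Here $s'=s$, $\mu_s=\bot$ and $\pi_r=false$. Since the two hypotheses are concrete executions of the same statement from the same state, we need the concrete semantics to be deterministic. This is a routine induction on derivations: expression evaluation is functional, $\gamma$ and if-rules are selected by the value of $e$, and {\tt composition} versus {\tt composition-ret} is selected by $\mu_1$. Determinism immediately gives all five equalities, including $\mu''=\mu'$.

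\emph{Case} {\tt ret-wrap}. Here $s'=s^\circ;\myif{\pi_r}{\return{\mu_s}}{skip}$. First, run $s^\circ$ from $(\eta,\Delta,H,\nu,\bot)$. We need to know this execution exists; the hypothesis gives a derivation for $s'$, and inverting it yields one for $s^\circ$ ending in some state $(\eta_1,\Delta_1,H_1,\nu_1,\mu_1)$. Applying Lemma~\ref{lem:inline-recursive-sound} to the original execution of $s$ and this execution of $s^\circ$ gives four facts:
\begin{itemize}
\item $(\eta_1,\Delta_1,H_1,\nu_1)=(\eta',\Delta',H',\nu')$;
\item $\mu_1=\bot$;
\item $\mu'\neq\bot \implies \vdashconcestep{\Delta_1}{\mu_s}{\mu'}$;
\item $\mu'\neq\bot$ iff $\vdashconcestep{\Delta_1}{\pi_r}{true}$.
\end{itemize}
Because $\mu_1=\bot$, the derivation for $s'$ must use the {\tt composition} rule. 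The trailing if-statement is then evaluated in $\Delta_1=\Delta'$, and we split on $\mu'$.

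If $\mu'\neq\bot$, then $\pi_r$ evaluates to $true$, so {\tt if-true-ret} fires on $\return{\mu_s}$. Since $\mu_s$ evaluates to $\mu'$, this yields $(\eta',\Delta',H',\nu',\mu')$. If $\mu'=\bot$, then $\pi_r$ does not evaluate to $true$. Since the execution of $s'$ completed, $\pi_r$ must evaluate to $false$, the {\tt skip} branch runs, and the result is $(\eta',\Delta',H',\nu',\bot)$. In either subcase, determinism of the trailing if-statement identifies this state with $(\eta'',\Delta'',H'',\nu'',\mu'')$. This is the same ``completed hence defined'' argument used in the {\tt ret-comp} case of the lemma.

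\emph{Main obstacle.} The delicate point is the interaction between the lemma's hypotheses and the wrapper. The lemma also assumes $\Delta\models(\Delta_s,\pi)$; that holds in the initial Java Ranger state (Property~\ref{prop:variable-consistent}), so we discharge it there. The lemma also needs the rewritten statement to have an execution, which is exactly why we extract $s^\circ$'s run from the hypothesis on $s'$ rather than constructing it. A further subtlety is that the theorem's state tuples omit $\nu$; we read them as the full five-tuples used elsewhere. Finally, the if-rules carry no trailing $\phi$-statements here (Property~\ref{prop:no-phi}), so the $\phi$-updates in {\tt if-true} and {\tt if-false} are vacuous.
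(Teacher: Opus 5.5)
Your proof is correct and follows the route the paper evidently intends. The paper states Theorem~\ref{thm:return-wrap-sound} without a separate proof, as the wrapper-level consequence of Lemma~\ref{lem:inline-recursive-sound}; your split into {\tt ret-wrap-none} (determinism, since $s'=s$) and {\tt ret-wrap} (the lemma on $s$ versus $s^\circ$, then the guarded trailing $\return{\mu_s}$ resolved via conclusions (3)--(4) and the existence of the derivation for $s'$) supplies exactly the missing steps.

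One tidy-up: don't appeal to the initial Java Ranger state. Discharge the lemma's side condition $\Delta\models(\Delta_s,\pi)$ by instantiating $\Delta_s:=\varnothing$ and $\pi:=true$, which satisfies it trivially. Likewise, $\mu=\bot$ already follows from the hypothesized derivation, since every concrete statement rule starts from $\mu=\bot$.
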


\subsection{Properties}
The early-return elimination transformation assumes all the guarantees by the $\gamma$-creation transformation, and it provides the following guarantees.

Given $\rangerstepwrapper{s}{\pi_r}{\hat{s}}{ret-wrap}$,
this transformation guarantees the following properties
\begin{itemize}
    \item Rewritten Statement $\hat{s}$ satisfies the SSA (property~\ref{prop:ssa}).
    \techreport{
        \begin{proof}
           This is clear since none of the rules in Fig.~\ref{fig:earlyreturn} modify assignments within generalized assignment statements, update local/heap mappings, or introduce new variables.
        \end{proof}
        }
    \item Rewritten statement always has at most a single return within an if-statement (Property~\ref{prop:early-at-most-final-if-return}).
    \item If the original statement $s$ has a return statement, then there must be a valid return-condition, and the rewritten statement must be of a specific form (Property~\ref{prop:final-s-has-a-specific-form}).
    \item Return condition ($\pi_r$) is true on all return paths (Property~\ref{prop:final-pi-r-iff-mu-s}).
    \techreport{
        \begin{proof}
        This can be directly be shown by applying the recursive theorem of the {\tt ret} judgment.
        \end{proof}
        }    
\end{itemize}
\section{$TR_3$: Remove Final Return}
\label{sec:finalreturn}
\techreport{
\begin{figure}[H]
    \centering
          \begin{align*}
   &integer \quad n \in \mathbb{Z} \qquad subscript \quad j \in \mathbb{Z^+}\uplus \{\bot\}  \qquad identifier \quad a,b,c \in Id  \\
&class \quad c \in C \qquad signature \quad g \in G \qquad field \quad f \in F \qquad reference \quad r \in R
    \end{align*}
\begin{align*}
literal\quad v  &::= n \mid true \mid false  \mid null  \qquad
 var \quad x,y,i,o,t ::= (a,j)  \\
method \quad m &::=\regiondef{x}{s} \qquad\qquad\qquad\qquad unary Op \qquad \unaryOp ::= - \mid ^^21   \\
binary Op \quad \binaryOp &::= + \mid - \mid * \mid \div \mid \& \mid bitwise-or \mid \oplus \mid \% \mid == \mid ^^21= \mid \leq \mid \geq \mid \&\& \mid logical\text{-}or \mid >\: \mid <\: \mid \ll \mid \gg \\
exp\quad e  &::=  v \mid x \mid \binaryop{e_1}{e_2} \mid \unaryop{e} \mid  \gammaexp{e}{e_1}{e_2} \\
simple\text{-}ref. \qquad z &::= r \mid x\\
stmt \qquad s &::= s_1 ; s_2 \mid x := e \mid skip \mid \myif{e}{s_1}{s_2} \mid
\invoke{x}{z}{g}{y} \mid \hcancel[black]{\return{e}}  \\ 
 &\qquad \putfield{z}{f}{e} \mid \getfield{x}{z}{f} \mid \newobj{x}{c} \\
\end{align*}
    \caption{Remove Final Return to-grammar}
    \label{fig:inline-grammar}
\end{figure}

}
\journalreport{
\begin{figure}[H]
    \centering
          \begin{align*}
stmt \qquad s &::= s_1 ; s_2 \mid x := e \mid skip \mid \myif{e}{s_1}{s_2} \mid
\invoke{x}{z}{g}{y} \mid \hcancel[black]{\return{e}}  \\ 
 &\qquad \putfield{z}{f}{e} \mid \getfield{x}{z}{f} \mid \newobj{x}{c} \\
\end{align*}
    \caption{Grammar fragment modified by the final-returns transformation}
    \label{fig:inline-grammar}
\end{figure}

}
 \begin{figure}[h!t]
 \centering
    \footnotesize
\fbox{%
\parbox{\textwidth}{%
\[
    \infer[\rn{final-noop}]
     {
     \pi_r \vdash \rangerstepnolookup{s}{(\bot,s)}{final}}
    {\begin{gathered}
    \pi_r = false 
    \end{gathered}}
    \quad
    \infer[\rn{final-ret}]
     {
     \pi_r \vdash \rangerstepnolookup{s}{(\mu_s,s')}{final}}
    {\begin{gathered}
    \pi_r \neq false \quad s=s';\myif{\pi_r}{\return{e}}{skip} \quad \mu_s=e\\
    \end{gathered}}
\]
}}
\caption{Remove Final Return Rules.}
\label{fig:finalreturn-rules}
\end{figure}
The early-return elimination transformation does three things: it captures the 
return condition $\pi_r$ and the return expression $\mu_s$ into the environment, 
and appends a single \texttt{return} statement guarded by $\pi_r$ of the form 
$\myif{\pi_r}{\return{\mu_s}}{skip}$, replacing all early returns. 
The final-return elimination transformation then removes this guarded 
\texttt{return} statement, as $\pi_r$ and $\mu_s$ in the environment already 
capture sufficient information about the return for subsequent transformations. Fig.~\ref{fig:finalreturn-rules} shows the two rules of this transformation. Rule~{\tt final-noop} handles the case where $\pi_r = false$, indicating that no 
early return was eliminated, and thus no guarded return is present; the statement 
is left unchanged and $\mu_s$ is set to $\bot$. Rule~{\tt final-ret} handles the 
case where $\pi_r \neq false$, using pattern matching to identify the guarded 
$\myif{\pi_r}{\return{e}}{skip}$ appended by the early-return elimination 
transformation, stripping it from the statement and recording $\mu_s = e$.












\subsection{Remove Final Return Transformation is Sound}

Let us define
$evalReturn(\Delta,\pi_r,\mu_s)\triangleq$
$\begin{cases}
    v & \text{if } \concestep{\Delta}{\pi_r}{true}{} \wedge \concestep{\Delta}{\mu_s}{v}{} \\
    \bot & \text{otherwise}
\end{cases}$

\begin{lem}[evalReturn is not changed by extension of the environment]
\label{lem:evalreturn-unchanged-by-extension}

    $$\func{evalReturn}(\Delta,\pi_r,\mu_s)=\func{evalReturn}(\Delta',\pi_r,\mu_s)$$
    
    where $\Delta \subseteq \Delta'$

\techreport{
    \begin{proof}
    \end{proof}
    }
\end{lem}

\begin{cor}[Remove final-return transformation is sound]
\label{cor:final-ret-sound}
    If for any statement $s$ and  a rewriting state 
    $(\Theta_t,\eta_s,\Delta_s, H_s, \mu_s, (\pi,\pi_r,\pi_s), (I,O,T), u)$,

$$\Theta \vdash \concsstep{(\eta, \Delta, H, \nu,\mu)}{s}{(\eta', \Delta', H', \nu', \mu')}{} $$
$$\rangerstepnolookup{s}{(\mu_s,s')}{remove-ret} $$
$$\Theta \vdash \concsstep{(\eta, \Delta, H, \nu,\mu)}{s'}{(\eta'', \Delta'', H'', \nu'',\mu'')}{}$$

then 

$$\eta''=\eta',\Delta''=\Delta',H''=H',\nu''=\nu',\mu''=\bot$$
$$
\mu' = evalReturn(\Delta'',\pi_r,\mu_s) 
$$
\techreport{
\begin{proof}
   Since there are only two rewrite rules that can apply, we will proceed with our proof by considering the two possible cases for $\pi_r$: when it is the formula $false$, and when it is not.
First, consider the case when $\pi_r$ is $false$. According to assumed early-return property(Property~\ref{prop:final-pi-r-iff-mu-s}), $\mu'$ must be $\bot$, which also means that $s$ contains no return statements. 
Otherwise, if a control flow existed with $\mu' \neq \bot$ but $\pi_r = false$, this would violate the early-return (Property~\ref{prop:final-pi-r-iff-mu-s}). 
Thus, $s$ cannot be of the form $s'';\myif{\pi_r}{\return{e}}{skip}$, and the {\tt final-noop} rule applies without modifying $s$. 
Hence, $s' = s$ and $\mu' = \bot$. Consequently, executing $s'$ yields the same state, i.e., $\eta''=\eta'$, $\Delta''=\Delta'$, $H''=H'$, $\nu''=\nu'$, and $\mu''=\mu=\bot$ (conclusion~2). 
Finally, in this case, the second branch of $evalReturn$ applies, giving $evalReturn(\Delta'', \pi_r, \mu_s')=\bot=\mu'$ (conclusion~3).

    The second case is when $\pi_r \neq false$. 
From the assumed property of early-returns transformation (Property~\ref{prop:final-s-has-a-specific-form}), $s$ must be of the form $$s = s'';\myif{\pi_r}{\return{e}}{skip}$$
Executing $s$ concretely first applies {\tt composition}, evaluating $s''$ as 
$\Theta \vdash \concsstep{(\eta, \Delta, H, \nu,\mu)}{s''}{(\eta''', \Delta''', H''', \nu''',\mu''')}{}$, 
where $\mu'''=\bot$, followed by evaluating the conditional using either {\tt if-phi-true} or {\tt if-phi-false}, depending on $\pi_r$. 

If $\pi_r$ evaluates to $true$, we apply {\tt if-phi-true} (since $\neg \func{hasReturn}(s'')$, {\tt if-phi-true-ret} cannot apply), yielding
$$\Theta\vdash\concsstep{(\eta''', \Delta''', H''', \nu''',\mu'''=\bot)}{\myif{\pi_r}{\return{e}}{skip}}{(\eta''',\Delta''', H''', \nu''',\mu^4)}{}$$
where $\eta'=\eta'''$, $\Delta'=\Delta'''$, $H'=H'''$, $\nu'=\nu'''$, and $\mu'=\mu^4=v$, with $\vdashconcestep{\Delta'''}{e}{v}$.

Since $\pi_r \neq false$ and $s = s'';\myif{\pi_r}{\return{e}}{skip}$, the {\tt final-ret} rule applies, giving $s'=s''$ and $\mu_s=e$. 
Executing $s'$ gives $\Theta \vdash \concsstep{(\eta, \Delta, H, \nu,\mu)}{s''}{(\eta''', \Delta''', H''', \nu''',\mu''')}{}$, 
with $\eta''=\eta'''=\eta'$, $\Delta''=\Delta'''=\Delta'$, $H''=H'''=H'$, $\nu''=\nu'''=\nu'$, and $\mu'''=\mu=\bot$ (conclusion~2). 
Because $\pi_r \neq false$, the first condition of $\func{evalReturn}$ applies. 
Since $\Delta''=\Delta'''$, we have $\vdashconcestep{\Delta'''}{e}{v_1}$ with $v_1=v=\mu_4=\mu$ (conclusion~3).

If, instead, $\pi_r$ evaluates to $false$, {\tt if-phi-false} applies, executing $skip$ and yielding
$$\Theta\vdash\concsstep{(\eta''', \Delta''', H''', \nu''',\mu'''=\bot)}{\myif{\pi_r}{\return{e}}{skip}}{(\eta''',\Delta''', H''', \nu''',\mu^4)}{}$$
such that $\eta'=\eta'''$, $\Delta'=\Delta'''$, $H'=H'''$, $\nu'=\nu'''$, and $\mu'''=\mu^4=\bot=\mu'$, with $\vdashconcestep{\Delta'''}{\pi_r}{false}$. 
Executing $s'$ again yields $\eta''=\eta'''=\eta'$, $\Delta''=\Delta'''=\Delta'$, $H''=H'''=H'$, $\nu''=\nu'''=\nu'$, and since $\neg \func{hasReturn}(s'')$, $\mu'''=\mu=\bot$ (conclusion~2). 
Finally, because $\pi_r = false$, the second condition of $\func{evalReturn}$ applies, giving $\func{evalReturn}(\Delta'',\pi_r,\mu_s)=\bot=\mu'$ (conclusion~3).

\end{proof}

}
\journalreport{
\begin{proof}
Due to the simplicity of this transformation we skip the details of its proof, interested reader is referred to our technical report~\cite{techreport} for more details.    
\end{proof}

}
\end{cor}

\subsection{Properties of the Final-Returns Transformation}
The remove final-return transformation assumes all properties guaranteed by the early-return elimination transformation, and it provide the following guarantees.
Given $\pi_r \vdash \rangerstepnolookup{s}{(\mu_s,s')}{final-wrap}$, we have that

\begin{itemize}
    \item Rewritten statement $s'$ satisfies the SSA property (Property~\ref{prop:ssa}).
    \techreport{
    More formally, $SSA(s')$ holds.
    \begin{proof}
    This is straightforward: since $s$ satisfies the SSA property and none of the rules in Fig.~\ref{fig:finalreturn-rules} introduce new variables or update existing ones, it follows that $isSSA(s')$.
    \end{proof}
    }
    \item Statement $s'$ has no return statements (Property~\ref{prop:final-ret-no-return}).
    \techreport{
    More formally, $\neg hasReturn(s')$ is valid.
    \begin{proof}
        We know that the only case where a {\tt return} statement can exist in $s$ is when it is of the from $s=\hat{s};\myif{\pi_r}{\return{e}}{skip}$, and that $\pi_r \neq false$ (Property~\ref{prop:final-s-has-a-specific-form}). This means that $\hat{s}$ cannot contain another {\tt return} statement because 
        $\func{atMostSingleIfReturn(s)}$ would be violated.
        Therefore, when  we can only apply {\tt final-ret} rule, which removes the enclosed {\tt return} statement on the then-side, and rewrites the statement $s$ to $\hat{s}$, where $\hat{s}=s'$, which we have just shown that it cannot contain a {\tt return} statement, i.e., $\neg \func{hasReturn}(s')$.
    \end{proof}
    }    
\end{itemize}

\section{$TR_4$: Renaming Transformation}
\label{sec:renaming}

This transformation assigns a unique index to every local variable in the 
statement $s$. Specifically, each variable of the form $(a, \bot)$ is renamed 
to $(a, j)$, where $j \in \mathbb{Z}^+$, ensuring that 
distinct occurrences of the same variable across multiple summarizations of $s$, i.e., within loops or recursive calls, are uniquely 
distinguished. 

Fig.~\ref{fig:renaming-rules} shows a snippet of the main renaming rules. 
The rules define two judgments. The judgment 
$u \vdash \rangerstepnolookup{s}{s'}{rename}$ rewrites an IR statement $s$ 
into $s'$, where $u$ is the global unique index used to assign fresh version 
numbers to unversioned variables. The wrapper judgment 
$I, O \vdash \rangerstep{(\pi_r, \mu_s, T, u)}{s}{(\mu_s', \pi_r', T', u')}{s'}{rename\text{-}wrap}$ 
is the entry point of the transformation, triggering the rename judgment on 
$s$. Here, $I$ and $O$ are the sets of input and output variables, $\pi_r$ 
and $\mu_s$ are the early-return path condition and symbolic return expression, 
and $T$ is the set of temporary variables. Their primed counterparts 
$\pi_r'$, $\mu_s'$, $T'$, and $u'$ reflect their updated values after the 
rewriting.

For example, rule~{\tt rename-var} renames a variable $x$ to $x'$ by applying 
$\mathtt{unique}_u(x)$, which assigns the current counter $u$ as the version 
index if $x$ has not yet been renamed, i.e., if it is of the form $(a, \bot)$. 
If $x$ already carries a non-$\bot$ index, $\mathtt{unique}_u$ leaves it 
unchanged, as captured by the side condition $j \neq \bot$ in the definition 
of $\mathtt{unique}_u$ at the top of Fig.~\ref{fig:renaming-rules}. This 
behavior is deliberate: it allows $\rangerexpstepnostate{e}{e'}{rename}$ to be 
applied over partially renamed statements, skipping variables that have already 
been versioned and acting only on those that remain unversioned, i.e., those 
carrying $\bot$.

Rule~{\tt rename-unaryOp} propagates the renaming into the operand $e$, deriving 
$u \vdash \rangerexpstepnostate{e}{e'}{rename}$ in the premise, and reconstructs 
$\unaryop{e'}$ around the renamed result. The rule does not itself rename anything 
directly but serves as a structural rule that drives the renaming relation into 
sub-expressions. Rules for other expression forms, including binary operators and 
$\gamma$-expressions, are omitted from Fig.~\ref{fig:renaming-rules} for 
brevity, but follow the same pattern of propagating 
renaming onto their constituent sub-expressions.

The renaming transformation is propagated structurally over statements. 
Rule~{\tt rename-assign} handles assignment by renaming both the right-hand side 
expression $e$ and the left-hand side variable $x$ independently, producing the 
renamed assignment $x' := e'$. Rule~{\tt rename-comp} handles sequential 
composition by propagating the renaming relation into each sub-statement 
independently under the same counter $u$, yielding $s_1'; s_2'$.

Other statement forms follow the same pattern. Rules for conditionals, 
{\tt rename-putfield}, and {\tt rename-invoke}, etc., all propagate renaming into their sub-statements and invoke renaming on their composing expressions. Notably, {\tt rename-invoke} renames the variables and expressions appearing in the call site without inlining the body of the invoked method; the inlining of method bodies is deferred to a dedicated later transformation.

Finally, the rule~{\tt rename-wrap} extends the renaming transformation to environment 
variables that live outside the statement but must be consistently renamed to 
match their new names within it ( $\pi_r$ and $\mu_s$ are renamed 
to $\pi_r'$ and $\mu_s'$). Using the same counter $u' = u+1$, it renames 
the input/output sets $I$, $O$, and the temporary set $T$ via 
$\mathtt{unique}_{u'}$, and applies $u' \vdash \rangerstepnolookup{s}{s'}{rename}$ 
to the statement, ensuring all unversioned variables across both the environment 
and the statement receive the same index. To preserve the external interface of 
$s$, the rewritten statement $s''$ wraps $s'$ with assignments that bind the 
old input names to their renamed counterparts before $s'$, and bind the renamed 
output variables back to the old output names after, ensuring $s''$ exposes the 
same input/output interface as $s$. 

\techreport{
\begin{figure}[h!t]
    \footnotesize
\fbox{%
\parbox{\textwidth}{%
\[
\begin{gathered}
    \mathtt{unique}_u((I,O,T)) = (\mathtt{unique}_u(I), \mathtt{unique}_u(O), \mathtt{unique}_u(T))\\
    \mathtt{unique}_u(X) = \{\mathtt{unique}_u(x) \mid x \in X\} \qquad \mathtt{unique}_u((a,\bot))=(a,u)
    \qquad \mathtt{unique}_u((a,j))=(a,j) \text{\; when\; } j\ne\bot\\
\end{gathered}
\]
\hrule
\hrule
\hrule
\[
  \infer[\rn{rename-wrap}]
     {
     I,O\vdash\rangerstep{(\pi_r,\mu_s,T,u)}{s}{ \mu_s',\pi_r',T'', u')}{s''}{rename-wrap}
     }
    { 
    \begin{gathered}
        u'=u+1 \qquad 
        \chi=(I,O,T) \qquad \chi'=(I',O',T') \qquad
        \chi'=\mathtt{unique}_{u'}(\chi)
        \\
         I=\{i_1..i_n\} \qquad O=\{o_1..o_m\} \qquad u' \vdash \rangerstepnolookup{s}{s'}{rename} 
        \\
        s''= (i_1,u'):=i_1;..;(i_n, u'):=i_n;s';o_1:=(o_1, u');..;o_m:=(o_m,u');\\
        u' \vdash \rangerexpstepnostate{\mu_s}{\mu_s'}{rename}
        \qquad u' \vdash \rangerexpstepnostate{\pi_r}{\pi_r'}{rename} \qquad
        T''=(I'\cup O' \cup T')
    \end{gathered}
    }
\]
\hrule
\hrule
\hrule
\[
    \infer[\rn{rename-bot}]
     {
         u \vdash \rangerexpstepnostate{\bot}{\bot}{rename}
     }
    { }
\qquad
    \infer[\rn{rename-val}]
     {
         u \vdash \rangerexpstepnostate{v}{v}{rename}
     }
    { 
    v:V_s
    }
\qquad
        \infer[\rn{rename-unaryOp}]
     {
        u \vdash \rangerexpstepnostate{\unaryop{e}}{\unaryop{e}}{rename}
     }
    { 
    \begin{gathered}
    u \vdash \rangerexpstepnostate{e}{e'}{rename}
    \end{gathered}
    }
\]
\[
\infer[\rn{rename-var}]
     {
         u \vdash \rangerexpstepnostate{x}{x'}{rename}
     }
    { 
    \begin{gathered}
     x'=\mathtt{unique}_u(x)
    \end{gathered}
    }
\qquad
    \infer[\rn{rename-binaryOp}]
     {
        u \vdash \rangerexpstepnostate{\binaryop{e_1}{e_2}}{\binaryop{e_1'}{e_2'}}{rename}
     }
    { 
    \begin{gathered}
    u \vdash\rangerexpstepnostate{e_1}{e_1'}{rename}\qquad
    u \vdash\rangerexpstepnostate{e_2}{e_2'}{rename}
    \end{gathered}
    }
\]

\[
    \infer[\rn{rename-\gamma}]
     {
        u \vdash \rangerexpstepnostate{\gammaexp{e}{e_1}{e_2}}{\gammaexp{e'}{e_1'}{e_2'}}{rename}
     }
    { 
    \begin{gathered}
    u \vdash\rangerexpstepnostate{e}{e'}{rename}\qquad
    u \vdash\rangerexpstepnostate{e_1}{e_1'}{rename}\qquad
    u \vdash\rangerexpstepnostate{e_2}{e_2'}{rename}
    \end{gathered}
    }
\]
\hrule
\hrule

\[
    \infer[\rn{rename-assign}]
     {
         u \vdash \rangerstepnolookup{x:=e}{x':=e'}{rename}
     }
    { 
    \begin{gathered}
       u \vdash \rangerexpstepnostate{e}{e'}{rename} \qquad
       u \vdash \rangerstepnolookup{x}{x'}{rename}
    \end{gathered}
    }
\]

\[
    \infer[\rn{rename-invoke}]
     {u \vdash \rangerstepnolookup{\invoke{x}{z}{g}{y}}{\invoke{z'}{x'}{g}{y'}}{rename}}
    {
     \begin{gathered}
                  u \vdash \rangerexpstepnostate{x}{x'}{rename}
                 \qquad
                  u \vdash \rangerexpstepnostate{z}{z'}{rename} \qquad
        u \vdash \rangerexpstepnostate{\overrightarrow{y}}{\overrightarrow{y'}}{rename} \\        
    \end{gathered}
    }
\]

\[
    \infer[\rn{rename-comp}]
     {
        u \vdash \rangerstepnolookup{s_1;s_2}{s_1';s_2'}{rename}
     }
    {
     \begin{gathered}
  u \vdash \rangerstepnolookup{s_1}{s_1'}{rename}
        \qquad
         u \vdash \rangerstepnolookup{s_2}{s_2'}{rename}
    \end{gathered}
    }
\]

\[
    \infer[\rn{rename-if}]
     {
     \begin{gathered}
        u \vdash \myif{e}{s_1}{s_2}  \longrightarrow_{rename}  {\myif{e'}{s_1'}{s_2'}})
     \end{gathered}
     }
    {
     \begin{gathered}
  u \vdash \rangerexpstepnostate{e}{e'}{rename}\qquad
    u \vdash \rangerstepnolookup{s_1}{s_1'}{rename}
        \qquad
         u \vdash \rangerstepnolookup{s_2}{s_2'}{rename}
    \end{gathered}
    }
\]

\[
    \infer[\rn{rename-putfield}]
     {
        u \vdash \rangerstepnolookup{\putfield{z}{f}{e}}{\putfield{z'}{f}{e'}}{rename}
     }
    {
     \begin{gathered}
    u \vdash \rangerexpstepnostate{e}{e'}{rename}
    \qquad
    u \vdash \rangerexpstepnostate{z}{z'}{rename}
    \end{gathered}
    }
\]

\[
    \infer[\rn{rename-getfield}]
     {
        u \vdash \rangerstepnolookup{\getfield{x}{z}{f}}{\getfield{x'}{z'}{f}}{rename}
     }
    {
     \begin{gathered}
    u \vdash \rangerexpstepnostate{x}{x'}{rename}
    \qquad
    u \vdash \rangerexpstepnostate{z}{z'}{rename}
    \end{gathered}
    }
\]
\[
    \infer[\rn{rename-skip}]
     {
        u \vdash \rangerstepnolookup{skip}{skip}{rename}
     }
    {  }
\qquad
    \infer[\rn{rename-new}]
     {
        u \vdash \rangerstepnolookup{\newobj{x}{c}}{\newobj{x'}{c}}{rename}
     }
    {
      u \vdash \rangerexpstepnostate{x}{x'}{rename}
    }
\]

}}
\caption{Renaming Transformation Rules}
\label{fig:renaming-rules}
\end{figure}
}
\journalreport{
\begin{figure}[h!t]
    \footnotesize
\fbox{%
\parbox{\textwidth}{%
\[
\begin{gathered}
    \mathtt{unique}_u((I,O,T)) = (\mathtt{unique}_u(I), \mathtt{unique}_u(O), \mathtt{unique}_u(T))\\
    \mathtt{unique}_u(X) = \{\mathtt{unique}_u(x) \mid x \in X\} \qquad \mathtt{unique}_u((a,\bot))=(a,u)
    \qquad \mathtt{unique}_u((a,j))=(a,j) \text{\; when\; } j\ne\bot\\
\end{gathered}
\]
\hrule
\hrule
\hrule
\[
  \infer[\rn{rename-wrap}]
     {
     I,O\vdash\rangerstep{(\pi_r,\mu_s,T,u)}{s}{ \mu_s',\pi_r',T'', u')}{s''}{rename-wrap}
     }
    { 
    \begin{gathered}
        u'=u+1 \qquad 
        \chi=(I,O,T) \qquad \chi'=(I',O',T') \qquad
        \chi'=\mathtt{unique}_{u'}(\chi) \\
        I=\{i_1..i_n\} \qquad O=\{o_1..o_m\} \qquad u' \vdash \rangerstepnolookup{s}{s'}{rename} 
        \\
        s''= (i_1,u'):=i_1;..;(i_n, u'):=i_n;s';o_1:=(o_1, u');..;o_m:=(o_m,u');\\
        u' \vdash \rangerexpstepnostate{\mu_s}{\mu_s'}{rename}
        \qquad u' \vdash \rangerexpstepnostate{\pi_r}{\pi_r'}{rename} \qquad
        T''=(I'\cup O' \cup T')
    \end{gathered}
    }
\]
\hrule
\hrule
\hrule
\[
    \infer[\rn{rename-bot}]
     {
         u \vdash \rangerexpstepnostate{\bot}{\bot}{rename}
     }
    { }
    \qquad
\infer[\rn{rename-var}]
     {
         u \vdash \rangerexpstepnostate{x}{x'}{rename}
     }
    { 
    \begin{gathered}
     x'=\mathtt{unique}_u(x)
    \end{gathered}
    }
\qquad
    \infer[\rn{rename-unaryOp}]
     {
        u \vdash \rangerexpstepnostate{\unaryop{e}}{\unaryop{e}}{rename}
     }
    { 
    \begin{gathered}
    u \vdash \rangerexpstepnostate{e}{e'}{rename}
    \end{gathered}
    }
\]
\hrule
\hrule

\[
    \infer[\rn{rename-assign}]
     {
         u \vdash \rangerstepnolookup{x:=e}{x':=e'}{rename}
     }
    { 
    \begin{gathered}
       u \vdash \rangerexpstepnostate{e}{e'}{rename} \qquad
       u \vdash \rangerstepnolookup{x}{x'}{rename}
    \end{gathered}
    }
\]

\[
    \infer[\rn{rename-comp}]
     {
        u \vdash \rangerstepnolookup{s_1;s_2}{s_1';s_2'}{rename}
     }
    {
     \begin{gathered}
  u \vdash \rangerstepnolookup{s_1}{s_1'}{rename}
        \qquad
         u \vdash \rangerstepnolookup{s_2}{s_2'}{rename}
    \end{gathered}
    }
\]
}}
\caption{Selected Renaming Transformation Rules}
\label{fig:renaming-rules}
\end{figure}
}

\subsection{Renaming Transformation is Sound}
To show the soundness of this transformation, we first show the soundness of the expression judgment, and follow that up with our usual discussion about the soundness of the statement recursive judgment and then the wrapper judgment.
\begin{lem}[Expression Renaming is Sound]
\label{lem:renaming-expr-sound}

    Let
$\mathtt{unique}_u(\Delta)= \{(\mathtt{unique}_u(x),\Delta(x)) \mid x \in Dom(\Delta)\}$, then for any concrete expression $e$  and for any integer $u$, if

$$ \vdashconcestep{\Delta}{e}{v} \quad \wedge \quad  u \vdash \rangerexpstepnostate{e}{e'}{rename}$$

then, we have
$$\vdashconcestep{\mathtt{unique}_u(\Delta)}{e'}{v}$$

\end{lem}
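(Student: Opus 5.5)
The plan is to prove the statement by structural induction on the expression $e$, or equivalently on the derivation of $u \vdash \rangerexpstepnostate{e}{e'}{rename}$. The rename rules are syntax-directed, so the two coincide. In each case I invert both the concrete evaluation judgment $\vdashconcestep{\Delta}{e}{v}$ and the renaming judgment. I then rebuild a concrete derivation for $e'$ under $\mathtt{unique}_u(\Delta)$ using the same concrete rule.

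The base cases come first. For rule {\tt rename-val}, $e = e' = v$, and rule {\tt val} gives $\vdashconcestep{\mathtt{unique}_u(\Delta)}{v}{v}$ independently of the environment. For rule {\tt rename-var}, $e = x$ and $e' = \mathtt{unique}_u(x)$, and inverting rule {\tt var} gives $v = \Delta(x)$. By the definition of $\mathtt{unique}_u(\Delta)$, the pair $(\mathtt{unique}_u(x), \Delta(x))$ lies in $\mathtt{unique}_u(\Delta)$. So $\lookup{\mathtt{unique}_u(\Delta)}{x'} = v$, provided $\mathtt{unique}_u(\Delta)$ is a function, and rule {\tt var} concludes the case.

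The inductive cases follow the same pattern. For {\tt rename-binaryOp}, the induction hypothesis on $e_1$ and $e_2$ gives $v_1$ and $v_2$ under $\mathtt{unique}_u(\Delta)$. The side condition $\binaryop{v_1}{v_2} = v_3$ is environment-free, so rule {\tt binary-op} applies. For {\tt rename-$\gamma$}, I split on which of {\tt $\gamma$-true} or {\tt $\gamma$-false} produced $v$. The induction hypothesis on the guard $e$ shows that $e'$ evaluates to the same boolean, and the induction hypothesis on the selected branch gives $v$. The same rule then fires for $\gammaexp{e'}{e_1'}{e_2'}$. For {\tt rename-unaryOp}, I read the conclusion as $\unaryop{e'}$, which is evidently intended since the premise computes $e'$, and the case is then identical to the binary one. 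I would flag the literal rule, which leaves $\unaryop{e}$ unrenamed, as a typo. Under the literal rule the case only goes through when $e$ contains no $\bot$-indexed variables.

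The main obstacle is the variable case, specifically the claim that $\mathtt{unique}_u(\Delta)$ is well defined as a map. The map $\mathtt{unique}_u$ on variables is not injective in general, because $(a,\bot)$ and $(a,u)$ both map to $(a,u)$. If both lie in $Dom(\Delta)$ with different values, $\mathtt{unique}_u(\Delta)$ is a relation rather than a function, and the lookup becomes ambiguous. I would first try to discharge this with a freshness side lemma: at the point renaming is applied, no variable in $Dom(\Delta)$ carries index $u$ while its $\bot$-indexed twin is also present. I would derive this from Property~\ref{prop:initial-variables-not-rename}, which says original variables are unindexed, together with the monotonic use of the counter $u$ ($u' = u+1$ in {\tt rename-wrap}, so indices issued earlier are strictly smaller). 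If that invariant cannot be established in full generality, I would add it as an explicit hypothesis of the lemma. The hypothesis is harmless because {\tt rename-wrap} always calls the judgment with a fresh index. Once injectivity on $Dom(\Delta)$ is secured, every case is a routine replay of the same concrete rule.
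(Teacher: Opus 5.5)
Your proposal is correct and is essentially the paper's proof: induction on the renaming derivation, with {\tt rename-val} trivial, {\tt rename-var} read off from the definition of $\mathtt{unique}_u(\Delta)$, and the binary, unary and $\gamma$ cases obtained by re-applying the same concrete rule to the induction hypotheses (for $\gamma$, on the guard and the selected branch). The paper silently treats $\mathtt{unique}_u(\Delta)$ as a function and reads {\tt rename-unaryOp} as producing $\unaryop{e'}$, so both of your caveats are legitimate refinements of its argument; for the variable case it actually suffices that $\mathtt{unique}_u$ never merges two variables of $Dom(\Delta)$ carrying different values, which is weaker than injectivity.
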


That is, evaluating the rewritten expression will result to the same value $v$ that is obtained when evaluating the original expression, if and only if the evaluation used the local variable mapping from $\mathtt{unique}_u(\Delta)$.

\techreport{
\begin{proof}
    We will proceed by induction on the judgment $ u \vdash \rangerexpstepnostate{e}{e'}{rename}$.

    First, observe that the lemma proves trivially on {\tt rename-val} rule, since there are no variables contained within values, and thus no rewriting is done.
    Now, consider the {\tt rename-var} rule, we have, $ u \vdash \rangerexpstepnostate{x}{x'}{rename}$, where $e'=x'=\mathtt{unique}_u(x)=(x,u)$.
    Applying the {\tt var} rule in the concrete semantics, we obtain $\vdashconcestep{\Delta}{x}{v}$, such that $\Delta(x)=v$.  
    Now evaluating the rewritten expression $(\mathtt{unique}_u(\Delta))(x,u)=v$, directly from the definition of the $\mathtt{unique}$ function, thus, we have $\vdashconcestep{\mathtt{unique}_u(\Delta)}{(x,u)}{v}$.
    
    Consider the {\tt rename-binaryOp} rule. the rewriting will produce $ u \vdash \rangerexpstepnostate{\binaryop{e_1}{e_2}}{\binaryop{e_1'}{e_2'}}{rename}$, where $
    u \vdash\rangerexpstepnostate{e_1}{e_1'}{rename}$, $
    u \vdash\rangerexpstepnostate{e_2}{e_2'}{rename}$. Evaluating concretely the binary operation expression, we get $\vdashconcestep{\Delta}{\binaryop{e_1}{e_2}}{v_3}$, where $\vdashconcestep{\Delta}{e_1}{v_1}$, $ \vdashconcestep{\Delta}{e_2}{v_2}$, $\binaryop{v_1}{v_2}{} = v_3$. We know by induction on $e_1$, and $e_2$ that  $ \vdashconcestep{\mathtt{unique}_u(\Delta_1)}{e_1'}{v_1} $,
    $ \vdashconcestep{\mathtt{unique}_u(\Delta_2)}{e_2'}{v_2}$.
 Observe that we have $\mathtt{unique}_u(\Delta_1) = \mathtt{unique}_u(\Delta_2) = \mathtt{unique}_u(\Delta) = \{(unique_{u}(x), \Delta(x)) \mid x \in Dom(\Delta) \}$. 
The reason that they are the same is that the expression judgment does not change in the state $\Delta$ when evaluating expressions, as expressions as per the definition do not have side effects on the environment. Now,  applying the {\tt binary-Op} on the reduced values, we obtain  $\vdashconcestep{\mathtt{unique}_u(\Delta)}{\binaryop{e_1'}{e_2'}}{v}$. The same argument also follows for the 
rewriting of {\tt rename-unaryOp} rule. 

Now, consider the {\tt rename-$\gamma$} rule, the rewriting will produce $ u \vdash \rangerexpstepnostate{\gammaexp{e_1}{e_2}{e_3}}{\gammaexp{e_1'}{e_2'}{e_3'}}{rename}$, such that  $
    u \vdash\rangerexpstepnostate{e_1}{e_1'}{rename}$, $
    u \vdash\rangerexpstepnostate{e_2}{e_2'}{rename}$, $
    u \vdash\rangerexpstepnostate{e_3}{e_3'}{rename}$. There are two rules to evaluate a $\gamma$-expression concretely, depending on whether its condition evaluates to {\tt true} or {\tt false}. We show the proof when the condition evaluates to {\tt true}, but the other case follows a very similar argument. Here, evaluating that $\gamma$-expression, we get $\vdashconcestep{\Delta}{\gammaexp{e_1}{e_2}{e_3}}{v}$, such that $\vdashconcestep{\Delta}{e_1}{true}$, $\vdashconcestep{\Delta}{e_2}{v}$, and where  $e=\gammaexp{e_1}{e_2}{e_3}$. 
    By using induction on $e_1$ and $e_2$, we get $ \vdashconcestep{\mathtt{unique}_u(\Delta)}{e_1'}{v_1}$,
    $\vdashconcestep{\mathtt{unique}_u(\Delta_1)}{e_2'}{v_2} $. Therefore, we can conclude that evaluating $\gammaexp{e_1'}{e_2'}{e_3'}$ must evaluate to $v_1$, that is, $\vdashconcestep{\mathtt{unique}_u(\Delta)}{\gammaexp{e_1'}{e_2'}{e_3'}}{v_1}$.

\end{proof}
}
\journalreport{
\begin{proof}[Proof Sketch]
    
For brevity, we only give the general idea of the proof for the variable case. The remaining cases are either trivial as in {\tt rename-val} or use straight forward structural induction. Interested reader is referred to our technical report~\cite{techreport} for the complete proof.
Now, consider the {\tt rename-var} rule, we have, $ u \vdash \rangerexpstepnostate{x}{x'}{rename}$, where $e'=x'=\mathtt{unique}_u(x)=(x,u)$.
    Applying the {\tt var} rule in the concrete semantics, we obtain $\vdashconcestep{\Delta}{x}{v}$, such that $\Delta(x)=v$.  
    Now evaluating the rewritten expression $(\mathtt{unique}_u(\Delta))(x,u)=v$, directly from the definition of the $\mathtt{unique}$ function, thus, we have $\vdashconcestep{\mathtt{unique}_u(\Delta)}{(x,u)}{v}$.

\end{proof}
}

Now, we define the lemma for statements, and then discuss its proof.

\begin{lem}[Recursive statement renaming is sound]
\label{lem:rename-rec-stmt-sound}
Let Let $s$ be any statement and  a rewriting state $(\Theta_t,\eta_s,\Delta_s, H_s, \mu_s, (\pi,\pi_r,\pi_s), (I,O,T), u)$, and $\mathtt{unique}_u(\Delta)$ be defined as $\mathtt{unique}_u(\Delta)=\{(\mathtt{unique}_u(x), \Delta(x)) \mid x \in Dom(\Delta) \}
$. Then, if 

$$\Theta \vdash \concsstep{(\eta, \Delta, H, \nu, \mu)}{s}{(\eta', \Delta', H', \nu', \mu')}{} $$
$$u\vdash \rangerstepnolookup{s}{s'}{rename}$$
Then, we have $$
  \Theta \vdash \concsstep{(\eta, \mathtt{unique}_u(\Delta), H, \nu, \mu)}{s'}{(\eta', \mathtt{unique}_u(\Delta'), H', \nu', \mu')}{}$$    

  \techreport{
  
\begin{proof}

    First, observe that the {\tt rename-skip} rule holds trivially as no change to the statement actually happens.
    
    Now, consider the {\tt rename-assign} rule. 
    $ u\vdash \rangerstepnolookup{x:=e_1}{x':=e_1'}{rename} $, for some $u$, where $x'=\mathtt{unique}_u(x)$ and $ u \vdash \rangerexpstepnostate{e}{e_1'}{rename}$. Evaluating, the original statement concretely we get $\Theta \vdash \concsstep{(\eta,\Delta, H, \nu, \mu)}{x:=e_1}{(\eta, \Delta', H, \nu, \mu)}{}$, where $\Delta'=\Delta[x\leftarrow v]$, $ \vdashconcestep{\Delta}{e_1}{v}$.
     Observe that using the expression lemma~\ref{lem:renaming-expr-sound}, we have $e_1'$ evaluating to the same value $v$ when evaluated on the unique local mapping, i.e., $\vdashconcestep{\mathtt{unique}_u(\Delta)}{e_1'}{v}$.
    Now, evaluating the rewritten statement on the unique local mapping, we have
    $\Theta \vdash \concsstep{(\eta, \mathtt{unique}_u(\Delta), H, \nu, \mu)}{x':=e'}{(\eta, \Delta'', H, \nu, \mu)}{}$, where $\Delta''=\mathtt{unique}_u(\Delta)[x' \leftarrow v]$, and $x'=\mathtt{unique}_u(x)=(x,u)$. Now observe that, $\Delta''=\mathtt{unique}_u(\Delta)[x' \leftarrow v]= \mathtt{unique}_u(\Delta')$. This observation follows directly from applying the $\mathtt{unique}_u$ function on $\Delta'$, and from observing that $\Delta'(x) = \mathtt{unique}_u(\Delta)(x')$. 

    Next, consider the {\tt rename-comp} rule.   
    Applying the rewriting we get $ u\vdash \rangerstepnolookup{s_1;s_2}{s_1';s_2'}{rename} $, for some $u$, where $u \vdash \rangerstepnolookup{s_1}{s_1'}{rename},
         u \vdash \rangerstepnolookup{s_2}{s_2'}{rename}$. There are two cases for executing a composition statement in the concrete semantics, either $s_1$ has a return control flow ({\tt composition-ret} rule), or not ({\tt composition} rule). But, recall by from the previous transformation (remove final-returns), it is guaranteed that  {\tt return} statements are guaranteed to not exist in $s$~\ref{prop:final-ret-no-return} that also enforced by the grammar of the renaming transformation. This assumption is safe to assume because the renaming transformation always happens after the remove final-return transformation, which ensures that all {\tt return} statements are removed from $s$, while capturing the semantics and the conditions of the returns within the rewriting environment variables $\mu_s$, and $\pi_r$. Therefore, since $s_1$ cannot contain a {\tt return} statement, we can apply the {\tt composition} rule from the concrete semantics to get 
    $ \Theta \vdash \concsstep{(\eta, \Delta, H, \nu, \mu)}{s_1}{(\eta_1, \Delta_1, H_1, \nu_1, \mu)}{}$, and $
       \Theta \vdash \concsstep{(\eta_1, \Delta_1, H_1, \nu_1, \mu)}{s_2}{(\eta_2, \Delta_2, H_2, \nu_2, \mu)}{}$. Observe here that the concrete execution of $s_2$ also does not change the return expression $\mu$, again because $s_2$ cannot have a {\tt return} statement, also by the same guaranteed Property~\ref{prop:final-ret-no-return}.
       Using induction on $s_1$ we get 
     $\Theta \vdash \concsstep{(\eta, \mathtt{unique}_u(\Delta), H, \nu, \mu)}{s_1'}{(\eta_1, \mathtt{unique}_u(\Delta_1), H_1, \nu_1, \mu)}{} $. Using induction on $s_2$ we get 
     $\Theta \vdash \concsstep{\eta_1, \mathtt{unique}_u(\Delta_1), H_1, \nu_1, \mu)}{s_2'}{(\eta_2, \mathtt{unique}_u(\Delta_2), H_2, \nu_2, \mu)}{}$. Applying the {\tt composition} rule, we get $\Theta \vdash \concsstep{\eta', \mathtt{unique}_u(\Delta), H, \nu, \mu)}{s_1';s_2'}{(\eta_2, \mathtt{unique}_u(\Delta_2), H_2, \nu_2, \mu)}{}$. Thus, we arrive to the same state reached when executing the original statement $s$.


Consider the {\tt rename-getfield} rule, $u\vdash \rangerstepnolookup{\getfield{x}{z}{f}}{\getfield{x'}{z'}{f}}{rename} $, where $u \vdash \rangerexpstepnostate{x}{x'}{rename}
    $, and $  u \vdash \rangerexpstepnostate{z}{z'}{rename}$.
 Also, we can evaluate the original statement concretely by applying concrete rule {\tt get-field} to get $\Theta\vdash\concsstep{(\eta, \Delta, H, \nu,\mu)}{\getfield{x}{z}{f}}{(\eta, \Delta',H, \nu,\mu)}{}{}$, where $\Delta'=\Delta[x \leftarrow v]$, $\vdashconcestep{\Delta}{z}{r}$, $ v = \lookup{H}{r,f}$, and $r \in R$.
 Now, we can apply the expression lemma to get $\vdashconcestep{\mathtt{unique}_u(\Delta)}{z'}{r}$. This means that the same value will be looked up from the Heap, i.e., $v=H(r,f)$. Now we can apply the {\tt get-field} rule to get $\Theta \vdash \concsstep{(\eta, \mathtt{unique}_u(\Delta), H, \mu)}{\getfield{x'}{z'}{f}}{(\eta, \Delta'', H, \mu)}{}$, where $\Delta''=\mathtt{unique}_u(\Delta) [x' \leftarrow v]$. 
 Now,  observe that $\Delta(z)=\mathtt{unique}_u(\Delta)(z')$ following from the definition of $\mathtt{unique_u}(\Delta)$. This implies that $\mathtt{unique}_u(\Delta [x \leftarrow v])= \mathtt{unique}_u(\Delta) [x' \leftarrow v]=\mathtt{unique}_u(\Delta) [\mathtt{unique_u}(x) \leftarrow v]$.
 That is, updating the mapping of the variable $x$, then renaming it using the $unique_u$ function, is the same as, renaming the variable then updating its mapping. To see how this is possible, let $(x,v')$ be the mapping of $x$ in $\Delta$. Then, $\Delta [x \leftarrow v]$, will result in updating the $(x,v')$ mapping with $(x,v)$ mapping. Now, applying $\mathtt{unique}_u$, i.e., $\mathtt{unique}_u(\Delta [x \leftarrow v])$, will replace the pervious $x$ mapping with $((x,u),v)$. On the other hand, $\mathtt{unique}_u(\Delta)$, will result in replacing $(x,v')$ definition with $((x,u),v')$ in the mapping. Then, $\mathtt{unique}_u(\Delta) [x' \leftarrow v]$, will update the $((x,u),v')$ mapping to $((x,u),v)$.
 Thus, we we have that $\Theta \vdash \concsstep{(\eta, \mathtt{unique}_u(\Delta), H, \mu)}{\getfield{x'}{z'}{f}}{(\eta, \mathtt{unique}_u(\Delta) [x' \leftarrow v], H, \mu)}{}$.

Consider the {\tt rename-putfield} rule, $ u\vdash \rangerstepnolookup{\putfield{z}{f}{e}}{\putfield{z'}{f}{e'}}{rename} $, where 
    $u \vdash \rangerexpstepnostate{e}{e'}{rename}
    $, and $
    u \vdash \rangerexpstepnostate{z}{z'}{rename}$. Also, applying {\tt put-field}, we get $\Theta \vdash \concsstep{(\eta, \Delta, H, \mu)}{\putfield{z}{f}{e}}{(\eta, \Delta, H', \mu)}{}$, where $H'=H[(r,f) \leftarrow v]$,$\vdashconcestep{\Delta}{z}{r}$, $r \in R$, and $\vdashconcestep{\Delta}{e}{v}$.
We know from the expression lemma that $\vdashconcestep{\mathtt{unique}_u(\Delta)}{e'}{v}$, and similarly that, $\vdashconcestep{\mathtt{unique}_u(\Delta)}{z'}{r}$.
Now, if we evaluate the rewritten statement we get 
$
\Theta \vdash \concsstep{(\eta, \mathtt{unique}_u(\Delta), H, \mu)}{\putfield{z'}{f}{e'}}{(\eta, \mathtt{unique}_u(\Delta), H'', \mu)}{}    
$, where $H''=H[(r,f) \leftarrow v]=H'$.

Consider the {\tt rename-invoke} rule, we get $ u \vdash \rangerstepnolookup{(\invoke{x}{z}{g}{y})}{(\invoke{z'}{x'}{g}{y'})}{rename}$, $
              u \vdash \rangerexpstepnostate{x}{x'}{rename}
                 $, $
                  u \vdash \rangerexpstepnostate{z}{z'}{rename} $, and $
        u \vdash \rangerexpstepnostate{\overrightarrow{y}}{\overrightarrow{y'}}{rename}$
        (expanding $unique_u$ notation to work for a vector of expressions). 
We also have by applying {\tt invoke} concrete rule that, 
$ \Theta\vdash\concsstep{(\eta,\Delta, H, \nu,\mu)}{\invoke{x}{z}{g}{y}}{(\eta'', \Delta',H'',\nu'',\mu')}{}$, 
where $\Delta'=\update{\Delta}{x}{\mu''}$, $\mu'=\bot$, $\Delta(z) \in R
    $, $method(\overrightarrow{y'}.s')=\lookup{\Theta}{\lookup{\eta}{\Delta(z)},g}
    $, $\overrightarrow{v'}=\Delta(\overrightarrow{y})$ (here we are overloading lookup to work over a vector of variables), and $
     \Theta\vdash \concsstep{(\eta,\{(\overrightarrow{y'},\overrightarrow{v'})\},H,\nu,\bot)}{s'}{(\eta'', \Delta'',H'',\nu'',\mu'')}{}$. In this case, we have that $\eta'=\eta'',\Delta'=\update{\Delta}{x}{\mu''},H''=H',\nu'=\nu'',\mu'=\mu=\bot$.
     Observe here that although we initially assumed $s$ to not contain a {\tt return} statement, this condition is not valid for $s'$, because $s'$ exists in $\Theta$ which by the concrete property~\ref{prop:theta-stmt-has-return} must have a {\tt return} statement. This means, we cannot use the induction hypothesis on $s'$, instead we show that by executing this statement $s'$ in either case will result in the same state.
From the expression lemma we get that $\vdashconcestep{\mathtt{unique}_u(\Delta)}{\overrightarrow{y'}}{\overrightarrow{v'}}$, and similarly, we have that  $\vdashconcestep{\mathtt{unique}_u(\Delta)}{z}{r}$. 
We also know from the definition of $\mathtt{unique}_u(\Delta)$ that if $z$ is a variable, then it must be that $\Delta(z)=\mathtt{unique}_u(\Delta)(z')$. This implies that the same method, is looked up and pulled from $\Theta$, that is $\lookup{\Theta}{\lookup{\eta}{\mathtt{unique}_u(\Delta(z))},g} = method(\overrightarrow{y'}.s')$. 
Observe that in this case, we have that $\Theta\vdash \concsstep{(\eta,\{(\overrightarrow{y'},\overrightarrow{v'})\},H,\mu)}{s'}{(\eta', \Delta',H',\mu')}{}$. Finally, observe that, $(\eta', \mathtt{unique}_u(\update{\Delta}{x}{\mu'}),H',\mu)=(\eta', \update{\mathtt{unique}_u(\Delta)}{x'}{\mu'},H',\mu)$, again using the definition $\mathtt{unique}_u(\Delta)$.

Finally, consider the {\tt rename-if} rule. Observe that here by the guarantee from the $\gamma$-transformation, we have no $\phi$-expressions can exist in $s$ (Property~\ref{prop:no-phi}). This means that if-statements that reaches the this transformation, must have an empty list of $\phi$-statements. Thus, in the coming proof we discuss the correctness of the if-statements that do not any $\phi-$expressions. 
Applying the rewriting we get 
 $
  u \vdash (\myif{e}{s_1}{s_2})  \longrightarrow_{rename}  ({\myif{e'}{s_1'}{s_2'}})
 $, where $
  u \vdash \rangerexpstepnostate{e}{e'}{rename}$, $
    u \vdash \rangerstepnolookup{s_1}{s_1'}{rename}
        $, and $
         u \vdash \rangerstepnolookup{s_2}{s_2'}{rename}$.
This statement can execute concretely in four cases: when the if-condition evaluates to either {\tt true} or {\tt false}, and each case may or may not involve encountering a {\tt return} statement in the corresponding branch, the return case is excluded by remove final-return transformation which guarantee (Property~\ref{prop:final-ret-no-return}). We will show the proof when the condition evaluates to {\tt true}, the other case is similar.

First, let us consider the case where the if-condition evaluates to {\tt true} and we know that no return statement is encountered in $s_1$. Then, we can apply {\tt if-phi-true} to get 
$\Theta\vdash\concsstep{(\eta, \Delta, H, \nu,\mu)}{\myif{e}{s_1}{s_2}}{(\eta',\Delta', H', \nu',\mu)}{}$, where $\vdashconcestep{\Delta}{e}{true}
    $, $\Theta\vdash\concsstep{(\eta,\Delta, H, \nu,\mu)}{s_1}{(\eta',\Delta', H', \nu',\mu')}{}$, again here $\mu=\mu'=\bot$ from Property~\ref{prop:final-ret-no-return}. Now, using the expression lemma we get $\Theta \vdash\concestep{\mathtt{unique}_u(\Delta)}{e'}{true}{}$. Also, using induction on $s_1$, we get $\vdash\concsstep{(\eta,\mathtt{unique}_u(\Delta), H, \nu, \mu)}{s_1'}{(\eta',\mathtt{unique}_u(\Delta'), H', \nu', \mu)}{} $.
Thus, evaluating the rewritten statement, we get 
$\Theta\vdash ((\eta, \mathtt{unique}_u(\Delta), H, \nu, \mu), {\myif{e'}{s_1'}{s_2'}} \Rightarrow_{c} (\eta',\mathtt{unique}_u(\Delta'), H', \nu', \mu')$, which is what we wanted to show.
\end{proof}

  }

  \journalreport{
  
\begin{proof}[Proof Sketch]
Again, for brevity, we only give the general idea of the proof for the assignment-statement. The remaining cases are either straightforward as in {\tt rename-getfield} or use straight forward structural induction. Interested reader is referred to our technical report~\cite{techreport} for the complete proof.

    Now, consider the {\tt rename-assign} rule. 
    $ u\vdash \rangerstepnolookup{x:=e_1}{x':=e_1'}{rename} $, for some $u$, where $x'=\mathtt{unique}_u(x)$ and $ u \vdash \rangerexpstepnostate{e}{e_1'}{rename}$. Evaluating, the original statement concretely we get $\Theta \vdash \concsstep{(\eta,\Delta, H, \nu, \mu)}{x:=e_1}{(\eta, \Delta', H, \nu, \mu)}{}$, where $\Delta'=\Delta[x\leftarrow v]$, $ \vdashconcestep{\Delta}{e_1}{v}$.
     Observe that using the expression lemma~\ref{lem:renaming-expr-sound}, we have $e_1'$ evaluating to the same value $v$ when evaluated on the unique local mapping, i.e., $\vdashconcestep{\mathtt{unique}_u(\Delta)}{e_1'}{v}$.
    Now, evaluating the rewritten statement on the unique local mapping, we have
    $\Theta \vdash \concsstep{(\eta, \mathtt{unique}_u(\Delta), H, \nu, \mu)}{x':=e'}{(\eta, \Delta'', H, \nu, \mu)}{}$, where $\Delta''=\mathtt{unique}_u(\Delta)[x' \leftarrow v]$, and $x'=\mathtt{unique}_u(x)=(x,u)$. Now observe that, $\Delta''=\mathtt{unique}_u(\Delta)[x' \leftarrow v]= \mathtt{unique}_u(\Delta')$. This observation follows directly from applying the $\mathtt{unique}_u$ function on $\Delta'$, and from observing that $\Delta'(x) = \mathtt{unique}_u(\Delta)(x')$. 

\end{proof}

  }
\end{lem}

Finally, we prove the rename theorem for the {\tt rename-wrap} rule. \\
\begin{thm}[Renaming Transformation is Sound]
\label{thm:renaming-wrap-sound}
For any statement $s$. If
$$\Theta \vdash \concsstep{(\eta, \Delta, H, \nu, \mu)}{s}{(\eta', \Delta', H', \nu', \mu')}{} $$
$$I,O\vdash\rangerstep{(\pi_r,\mu_s,T,u)}{s}{ \mu_s',\pi_r',T'', u')}{s''}{rename-wrap}$$
then 
$$\Theta \vdash \concsstep{(\eta, \Delta, H, \nu, \mu)}{s''}{(\eta', \Delta'', H', \nu', \mu')}{} $$

$$ \forall o \in O, \Delta'(o)=\Delta''(o) $$
$$evalReturn(\Delta',\pi_r,\mu_s)=evalReturn(\Delta'',\pi_r',\mu_s')$$
$$ \forall i \in I, \Delta(i)=\Delta'(i)=\Delta''(i)$$
\end{thm}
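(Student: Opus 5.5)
The plan is to decompose the concrete execution of $s''=(i_1,u'):=i_1;\ldots;(i_n,u'):=i_n;s';o_1:=(o_1,u');\ldots;o_m:=(o_m,u')$ into three phases with the \texttt{composition} rule. No \texttt{composition-ret} case arises: by Property~\ref{prop:final-ret-no-return}, neither $s$ nor $s'$ contains a \texttt{return}, so $\mu=\mu'=\bot$ throughout.

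\emph{Phase one: the prefix.} The input-binding assignments are executed with the \texttt{assignment} rule. They take $\Delta$ to $\Delta_1=\Delta[(i_1,u')\leftarrow\Delta(i_1),\ldots,(i_n,u')\leftarrow\Delta(i_n)]$ and leave $\eta,H,\nu,\mu$ unchanged. By Property~\ref{prop:initial-variables-not-rename}, variables in $s$ carry index $\bot$, so every $(a,u')$ is fresh in $\Delta$ and SSA is preserved.

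\emph{Phase two: the body.} Here I invoke Lemma~\ref{lem:rename-rec-stmt-sound} on $s$ with counter $u'$. This gives the execution of $s'$ from $\mathtt{unique}_{u'}(\Delta)$ to $\mathtt{unique}_{u'}(\Delta')$, with $\eta',H',\nu',\mu'$ unchanged. This is the main obstacle, because the actual starting state is $\Delta_1$, not $\mathtt{unique}_{u'}(\Delta)$. $\Delta_1$ agrees with $\mathtt{unique}_{u'}(\Delta)$ on every renamed input, and it additionally contains the old unindexed variables. To bridge the gap, I would prove a small frame lemma: concrete execution of a statement depends only on the variables it uses and it defines only fresh variables, so executing in a larger environment yields the same result extended by the extra bindings. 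Since every variable of $s'$ has index $u'$, $s'$ only reads renamed inputs, which $\Delta_1$ binds correctly. It only writes renamed outputs, which are not in $Dom(\Delta_1)$ by SSA. Hence $s'$ runs from $\Delta_1$ to $\Delta_2=\Delta_1\cup(\mathtt{unique}_{u'}(\Delta')\setminus\mathtt{unique}_{u'}(\Delta))$. If the bare restriction is too coarse, I would instead take $\mathtt{unique}_{u'}(\Delta)$ restricted to the renamed inputs $I'$; this suffices because, by the definition of $getInput$, only inputs are used before being defined.

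\emph{Phase three: the suffix.} The output-binding assignments $o_k:=(o_k,u')$ set $\Delta''(o_k)=\Delta_2((o_k,u'))=\mathtt{unique}_{u'}(\Delta')((o_k,u'))=\Delta'(o_k)$, which gives $\forall o\in O,\ \Delta'(o)=\Delta''(o)$.

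The remaining conclusions follow from these phases. The input conclusion $\Delta(i)=\Delta'(i)=\Delta''(i)$ follows from Lemma~\ref{lem:delta-ssa}, since $\Delta\subseteq\Delta'$ and $\Delta\subseteq\Delta''$, together with $i\notin O$. For the $evalReturn$ equation, I apply Lemma~\ref{lem:renaming-expr-sound} to $\pi_r$ and $\mu_s$ under $\Delta'$. This shows that $\pi_r'$ and $\mu_s'$ evaluate under $\mathtt{unique}_{u'}(\Delta')$ exactly as $\pi_r$ and $\mu_s$ do under $\Delta'$. Then $\mathtt{unique}_{u'}(\Delta')$ restricted to the variables of $\pi_r'$ and $\mu_s'$ is contained in $\Delta''$. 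Extension invariance (Lemma~\ref{lem:evalreturn-unchanged-by-extension}) therefore equates the two $evalReturn$ values.
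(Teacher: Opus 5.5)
Your proposal follows the paper's proof step for step: the prefix/body/suffix decomposition, Lemma~\ref{lem:rename-rec-stmt-sound} for the body, Lemma~\ref{lem:renaming-expr-sound} plus extension for $evalReturn$, and SSA for the inputs. It is actually more careful at the body step: the paper simply asserts that $s'$ run after the prefix ends in $\mathtt{unique}_{u'}(\Delta')$, and later that $\mathtt{unique}_{u'}(\Delta')\subseteq\Delta''$, ignoring that $\Delta_1$ and $\mathtt{unique}_{u'}(\Delta)$ differ, which your frame lemma and restricted containment handle correctly. Both arguments tacitly assume two things: that every $o\in O$ is assigned on the executed path (otherwise the suffix $o:=(o,u')$ is stuck, e.g.\ for an output defined only in the untaken branch), and that only inputs are read before being written, which is an SSA well-formedness assumption (uses dominated by definitions, or defined variables absent from $\Delta$) rather than a consequence of the definition of $getInput$.
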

\techreport{

\begin{proof}
    
Here, the {\tt rename-wrap} rule we get $I,O\vdash\rangerstep{(\pi_r,\mu_s,T,u)}{s}{ \mu_s',\pi_r',T'', u')}{s''}{rename-wrap}$, where $u'=u+1$, $s''= (i_1,u'):=i_1;\ldots;(i_n, u'):=i_n;s';o_1:=(o_1, u');\ldots;o_m:=(o_m,u');$, $u' \vdash \rangerexpstepnostate{\mu_s}{\mu_s'}{rename}, u' \vdash \rangerexpstepnostate{\pi_r}{\pi_r'}{rename}$, $ T''=(I'\cup O' \cup T')$ and where $s'$, is obtained from $ u' \vdash \rangerstepnolookup{s}{s'}{rename} $. Also, we know that executing $s$ concretely we get, $\Theta \vdash\concsstep{(\eta,\Delta, H, \nu, \mu)}{s}{(\eta',\Delta', H', \nu', \mu')}{}$.
Then, we want to show that $\Theta \vdash\concsstep{(\eta,\Delta, H, \nu, \mu)}{s''}{(\eta',\Delta'', H', \nu', \mu')}{}$, such that $ \forall o \in O, \Delta'(o)=\Delta''(o)$.

There are three parts to prove this theorem. First, observe that using the statement lemma (Lemma~\ref{lem:rename-rec-stmt-sound}) on $s'$ we know that

$$
  \Theta \vdash \concsstep{(\eta, \mathtt{unique}_{u'}(\Delta), H, \nu, \mu)}{s'}{(\eta', \mathtt{unique}_{u'}(\Delta'), H', \nu', \mu')}{}$$

such that
$
 \mathtt{unique}_{u'}(\Delta)=\{(\mathtt{unique}_{u'}(x), \Delta(x)) \mid x \in Dom(\Delta) \}
$

This implies that for the subset of output variables $O$, we can find a matching mapping values of their unique new name, that is 
$\forall o \in O, \Delta'(o) = ({\mathtt{unique}_{u'}}(\Delta'))(o,u')$, and similarly for input variables $i\in I$, i.e., $\forall i \in I, \Delta'(i) = ({\mathtt{unique}_{u'}}(\Delta'))(i,u')$,

Second, observe that input assignments within $s''$ does not change the value of input values for $s$, i.e., there are no def-statements that are re-defining variables in $I$. Also, observe that executing the first assignment, for example, would result in $\Theta \vdash \concsstep{(\eta, \Delta, H, \nu, \mu)}{(i_1,u'):=i_1}{(\eta,\hat{\Delta}, H, \nu, \mu)}{}$, where $\hat{\Delta}=\Delta[(i_1,u') \leftarrow \Delta(i_1)]=\Delta[unique_{u'}(i_1) \leftarrow \Delta(i_1)]$. More generally, the local variable mapping we will obtain after executing all assignment statement but before executing $s'$ will be of the form $\Delta[unique_{u'}(i_1) \leftarrow \Delta(i_1)]\ldots[unique_{u'}(i_n) \leftarrow \Delta(i_n)]$. 
Since we know that $unique_{u'}(i)$ must have been encountered for any $i$, and thus is already part of $Dom(\Delta')$. Also, it is clear that $\hat{\Delta}(unique_{u'}(i))=\hat{\Delta}(i)$.
That implies that it must be that $\Theta \vdash \concsstep{(\eta, \Delta, H, \nu, \mu)}{(i_1,u'):=i_1;\ldots;(i_n, u'):=i_n;s'}{(\eta', ({\mathtt{unique}_{u'}}(\Delta')), H', \nu', \mu')}{}$. 
%
Then, evaluating the remaining parts of $s''$,  $\Theta \vdash \concsstep{(\eta', ({\mathtt{unique}_{u'}}(\Delta')), H', \nu', \mu')}{(o_1:=(o_1, u');\ldots;o_m:=(o_m,u')}{(\eta', ({\mathtt{unique}_{u'}}(\Delta')) \cup \{(o_1, v_1), \ldots, (o_m, v_m)\}, H', \nu', \mu')}{}$, where we have that 
$v_1=({\mathtt{unique}_{u'}}(\Delta'))(o_1,u'), \ldots,  v_n=({\mathtt{unique}_{u'}}(\Delta'))(o_n,u')$. 
Let $\Delta''=({\mathtt{unique}_{u'}}(\Delta')) \cup \{(o_1, v_1), \ldots, (o_m, v_m)\}$, then using the above two components we know that it must be the case that $\forall o \in O, \Delta'(o) = \Delta''(o)$.

To show the third conclusion of the theorem, we first start by showing that $\pi_r$, and $\pi_r'$ are semantically the same, and similarly for $\mu_s$, and $\mu_s'$.
From the soundness of the expression lemma~\ref{lem:renaming-expr-sound}, we know that if $\vdashconcestep{\Delta'}{\mu_s}{v} \quad \wedge \quad  u' \vdash \rangerexpstepnostate{\mu_s}{\mu_s'}{rename}$, then we have $\vdashconcestep{\mathtt{unique}_{u'}(\Delta')}{\mu_s'}{v}$. Observe here that we have just shown that $\mathtt{unique}_{u'}(\Delta') \subseteq \Delta''$, thus we can conclude that $\vdashconcestep{\Delta'}{\mu_s'}{v}$. Same argument applies for $\pi_r$ and $\pi_r'$.
Then, we have two cases $\vdashconcestep{\Delta'}{\pi_r}{true}$ or not. First, consider the case where $\pi_r$ does not evaluate to $true$. Then, $evalReturn(\Delta',\pi_r,\mu_s)=\bot$. And we have just shown that $\pi_r'$ and has a similar evaluation in $\Delta''$, thus we get $evalReturn(\Delta'',\pi_r',\mu_s')=\bot$, so they are equal.
Now, consider the second case where $\pi_r$ evaluates to true 
$\vdashconcestep{\Delta'}{\pi_r}{true}$. Let $v$ be the evaluation of $\mu_s$, $\vdashconcestep{\Delta'}{\mu_s}{v}$. Then, $evalReturn(\Delta',\pi_r,\mu_s)=v$. But again, we have shown that both $\pi_r'$ and $\mu_s'$ will evaluate to the same values in $\Delta''$, that is $\vdashconcestep{\Delta''}{\pi_r'}{true}$, and $\vdashconcestep{\Delta''}{\mu_s'}{v}$. Then, $evalReturn$ must evaluate to the same value $v$, i.e., $evalReturn(\Delta'',\pi_r',\mu_s')=v=evalReturn(\Delta',\pi_r,\mu_s)$, which is what we want to prove.

Finally, we show that $\forall i \in I.,\Delta(i)=\Delta'(i)=\Delta''(i)$. Since $i \in I$, it follows that $i \in Dom(\Delta)$, and because $s$ is in SSA form (i.e., $isSSA(s)$), no variable in $\Delta$ can be modified during the execution of $s$. Therefore, $\forall i \in I.,\Delta(i)=\Delta'(i)$.

Furthermore, by Guarantee~\ref{prop:ssa}, the rewritten statement $s''$ is also in generalized SSA form, ensuring that no variable in $Dom(\Delta)$ can be updated during the execution of $s''$. Hence, $\forall i \in I.,\Delta(i)=\Delta'(i)=\Delta''(i)$, as required.

\end{proof}

\techreport{
\begin{cor}

Let $s$ be any statement such that

If
$$\Theta \vdash \concsstep{(\eta, \Delta, H, \nu, \mu)}{s}{(\eta', \Delta', H', \nu', \mu')}{} $$
$$I,O\vdash\rangerstep{(\pi_r,\mu_s,T,u)}{s}{ \mu_s',\pi_r',T'', u')}{s''}{rename-wrap}$$
$$s''=i_1':=i_1;..;i_n':=i_n;s';o_1:=o_1';..;o_m:=o_m';$$
then 
$$\Delta''=\Delta[i_1' \leftarrow \Delta(i_1)\ldots i_n'\leftarrow \Delta(i_n)] $$
$$\Theta \vdash \concsstep{(\eta, \Delta'', H, \nu, \mu)}{s'}{(\eta', \Delta''', H', \nu', \mu')}{}$$ 
$$
\Delta'''(o_1')=\Delta'(o_1) \wedge \cdots \wedge
\Delta'''(o_m')=\Delta'(o_m)
$$
\end{cor}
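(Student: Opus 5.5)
The corollary unfolds the execution of the wrapped statement $s''$ produced by {\tt rename-wrap} into three phases. I would prove each phase using the composition rule and Lemma~\ref{lem:rename-rec-stmt-sound}.

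\textbf{Step 1: the input prefix.} By Property~\ref{prop:final-ret-no-return}, neither $s$ nor $s''$ contains a {\tt return}, so every sequential step uses the {\tt composition} rule (never {\tt composition-ret}) and $\mu$ stays $\bot$. I would do a short induction on $n$ over the assignments $i_k':=i_k$, applying the {\tt assignment} and {\tt var} rules. Each step extends the local map by $i_k'\leftarrow\Delta(i_k)$. The value read is unchanged by earlier steps: each $i_k'=(i_k,u')$ carries the fresh index $u'$, so it differs from every unindexed $i_j$, and by Property~\ref{prop:initial-variables-not-rename} all original variables are unindexed. Hence the state before $s'$ is exactly $\Delta''=\Delta[i_1'\leftarrow\Delta(i_1)\ldots i_n'\leftarrow\Delta(i_n)]$, which gives the first conclusion.

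\textbf{Step 2: the body $s'$.} Here I would relate $\Delta''$ to $\mathtt{unique}_{u'}(\Delta)$. Lemma~\ref{lem:rename-rec-stmt-sound} already gives
\[\Theta\vdash\concsstep{(\eta,\mathtt{unique}_{u'}(\Delta),H,\nu,\mu)}{s'}{(\eta',\mathtt{unique}_{u'}(\Delta'),H',\nu',\mu')}{}.\]
The two maps agree on every variable in $s'$. Those variables are either already indexed, and so untouched by $\mathtt{unique}_{u'}$ and present in $\Delta$ (hence in $\Delta''$), or they are renamings $(a,u')$. The only renamings that are read before being defined are inputs $i\in I$, by the definition of $getInput$. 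For each such input, $\Delta''(i')=\Delta(i)=\mathtt{unique}_{u'}(\Delta)(i')$. The extra unindexed entries of $\Delta''$ never occur in $s'$. From this I would state a frame lemma: executing a statement depends only on, and changes only, the locals it mentions. The proof is a routine induction on the concrete derivation, using that expression evaluation looks up only free variables and that {\tt invoke} executes the callee in a fresh local map. The frame lemma then transports the execution to start from $\Delta''$ and end in some $\Delta'''$ that agrees with $\mathtt{unique}_{u'}(\Delta')$ on all variables of $s'$. The components $\eta',H',\nu',\mu'$ are the same as before, which gives the second conclusion.

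\textbf{Step 3: the outputs.} Each $o_k\in O$ is defined in $s$, so $o_k'=\mathtt{unique}_{u'}(o_k)$ is defined in $s'$. Therefore $\Delta'''(o_k')=\mathtt{unique}_{u'}(\Delta')(o_k')=\Delta'(o_k)$, which gives the last conclusion. The output suffix $o_k:=o_k'$ is not needed for this corollary, and its behaviour is consistent with Theorem~\ref{thm:renaming-wrap-sound}.

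The main obstacle is Step 2. The recursive lemma is stated for the exact map $\mathtt{unique}_{u'}(\Delta)$, not for $\Delta''$, which keeps the unindexed names and holds only the input renamings. Closing this gap requires the frame lemma together with a careful argument that $s'$ reads no renamed non-input variable before defining it; this uses SSA (Property~\ref{prop:ssa}) and the definition of $getInput$. If the frame lemma proves too heavy for {\tt invoke}, I would fall back to proving a generalized form of Lemma~\ref{lem:rename-rec-stmt-sound} directly, for any initial map that agrees with $\mathtt{unique}_{u'}(\Delta)$ on the free variables of $s'$.
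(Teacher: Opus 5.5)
Your argument is correct, but it takes a different route from the paper. The paper starts from Theorem~\ref{thm:renaming-wrap-sound}. That theorem already supplies an execution of the whole wrapped statement $s''$ from $\Delta$, with the same $\eta',H',\nu',\mu'$ and a final map that agrees with $\Delta'$ on every $o\in O$. The paper then inverts the composition rule on that derivation: it peels off the input prefix to expose the intermediate map $\Delta[i_1'\leftarrow\Delta(i_1)\ldots i_n'\leftarrow\Delta(i_n)]$, and transfers the output equalities back through the suffix copies $o_k:=o_k'$.

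You bypass the wrapper theorem and go back to Lemma~\ref{lem:rename-rec-stmt-sound}. You bridge the mismatch between $\mathtt{unique}_{u'}(\Delta)$ and $\Delta[i'\leftarrow\Delta(i)]$ with a frame lemma, and read the outputs directly off $\mathtt{unique}_{u'}(\Delta')$, so the suffix plays no role.

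The paper's route takes a few lines, but it inherits the one step the wrapper theorem's own proof leaves unjustified. There, running the prefix and then $s'$ from $\Delta$ is simply asserted to land in $\mathtt{unique}_{u'}(\Delta')$, which silently identifies the two initial maps. Your Step~2 is exactly that missing argument. So your version is self-contained and would also repair the theorem's proof. The cost is proving the frame lemma, which is routine; the \texttt{invoke} case goes through because the callee runs in a fresh local map.

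One small tightening. The plain frame lemma ("depends only on the locals it mentions") needs the two maps to agree on all variables of $s'$, but your justification only covers renamed variables that are read before being written. You can get full agreement directly. With $I=getInput(s)$, every non-input variable of $s$ is defined in $s$. Under the generalized SSA reading used in Lemma~\ref{lem:delta-ssa}, such variables are absent from $Dom(\Delta)$. Since $u'$ is fresh, their renamings are then absent from both $\mathtt{unique}_{u'}(\Delta)$ and $\Delta''$, so no liveness-refined frame lemma is needed.
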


\begin{proof}
    From the renaming theorem, we know that $\Theta \vdash \concsstep{(\eta, \Delta, H, \nu, \mu)}{s''}{(\eta', \Delta'', H', \nu', \mu')}{} \wedge 
 \forall o \in O, \Delta'(o)=\Delta''(o)$. Now, observe that the state in which $s'$ executes must be of the form $\Delta[i_1' \leftarrow \Delta(i_1)\ldots i_n'\leftarrow \Delta(i_n)]$. This is clear by partially evaluating $s''$, i.e., $\Theta \vdash \concsstep{(\eta, \Delta, H, \nu, \mu)}{i_1':=i_1;..;i_n':=i_n}{(\eta, \Delta[i_1' \leftarrow \Delta(i_1)\ldots i_n'\leftarrow \Delta(i_n)], H, \nu', \mu)}{}$. Thus, concretely executing $s'$ in this local mapping must result in the same state, i.e., $\Theta \vdash \concsstep{(\eta, \Delta[i_1' \leftarrow \Delta(i_1)\ldots i_n'\leftarrow \Delta(i_n)], H, \nu, \mu)}{s';o_1:=o_1';..;o_m:=o_m'}{(\eta', \Delta''', H', \nu', \mu')}{} $, and from the renaming theorem we know that out variable mappings must be the same, i.e., $\Delta'''(o_1')=\Delta'(o_1) \wedge \cdots \wedge \Delta'''(o_m')=\Delta'(o_m)$.
\end{proof}
}
}

\journalreport{

\begin{proof}[Proof Sketch]
By {\tt rename-wrap}, the rewritten statement decomposes as $s'' = 
(i_1,u'):=i_1;\ldots;(i_n,u'):=i_n;s';o_1:=(o_1,u');\ldots;o_m:=(o_m,u')$, 
where $u'=u+1$ and $s'$ is obtained by applying the renaming transformation 
to $s$ under $u'$.

By Lemma~\ref{lem:rename-rec-stmt-sound}, executing $s'$ under 
$\mathtt{unique}_{u'}(\Delta)$ produces $\mathtt{unique}_{u'}(\Delta')$, 
preserving $\eta$, $H$, $\nu$, and $\mu$. The input assignments prepended to 
$s'$ populate the renamed input variables $(i,u')$ with their original values 
from $\Delta$, yielding a state consistent with $\mathtt{unique}_{u'}(\Delta)$ 
before $s'$ executes. The output assignments appended after $s'$ copy the 
renamed output values $(o,u')$ back to the original output names $o$, producing 
$\Delta''$ such that $\forall o \in O, \Delta'(o) = \Delta''(o)$.

For the soundness of $\pi_r'$ and $\mu_s'$, by 
Lemma~\ref{lem:renaming-expr-sound}, both $\pi_r'$ and $\mu_s'$ evaluate to 
the same values as $\pi_r$ and $\mu_s$ under $\mathtt{unique}_{u'}(\Delta') 
\subseteq \Delta''$, so $\mathtt{evalReturn}(\Delta', \pi_r, \mu_s) = 
\mathtt{evalReturn}(\Delta'', \pi_r', \mu_s')$.

Finally, since $s$ and $s''$ are both in SSA form the guaranteed property~\ref{prop:ssa}, no variable in $Dom(\Delta)$ is redefined 
during execution of either, so $\forall i \in I,\ \Delta(i) = \Delta'(i) = 
\Delta''(i)$.
\end{proof}

}

\subsection{Properties of the Renaming Transformation}

The renaming transformation assumes all properties guaranteed by earlier transformations ($\gamma$ creation, early-returns elimination and remove final-returns), and it provides the following guaranteed properties.

For any statement $s$ with $I,O,T$ representing the statement input, output, and temporary variables, and $u$ representing unique-count respectively,
if $(x,n) \in (Id, n)$ is a variable within $s'$, 
such that 
$$I,O\vdash\rangerstep{(\pi_r,\mu_s,T,u)}{s}{ \mu_s',\pi_r',T'', u')}{s''}{rename-wrap}
     $$
where 
$  u'=u+1, 
        \chi=(I,O,T) , \chi'=(I',O',T') ,
        \chi'=\mathtt{unique}_{u'}(\chi) ,
        I=\{i_1..i_n\} , O=\{o_1..o_m\} \\
        , u' \vdash \rangerstepnolookup{s}{s'}{rename} 
        ,
        s''= (i_1,u'):=i_1;..;(i_n, u'):=i_n;s';o_1:=(o_1, u');..;o_m:=(o_m,u');\\
        u' \vdash \rangerexpstepnostate{\mu_s}{\mu_s'}{rename}
        , u' \vdash \rangerexpstepnostate{\pi_r}{\pi_r'}{rename} ,
        T''=(I'\cup O' \cup T')$\\

then we have
\begin{itemize}
    \item Rewritten Statement $s''$ satisfies the SSA property (Property~\ref{prop:ssa}).
        \techreport{
        More formally, $SSN(s'')$.
        \begin{proof}
        Let $s'' = (i_1,u'):=i_1;\cdots;(i_n,u'):=i_n;\;s';\;o_1:=(o_1,u');\cdots;o_m:=(o_m,u')$. To show that $SSA(s'')$ holds, note first that by the SSA assumed property from the previous transformation (guaranteed property from remove final-returns transformation, Property~\ref{prop:ssa}) we have $SSA(s)$, and since $s'$ is obtained from $s$ by a systematic renaming, $SSA(s')$ follows. Next, the prefix assignments $(i_k,u'):=i_k$ introduce only fresh variables $(i_k,u')$, and since $i_k \in I$ implies $\neg isDef(i_k,s)\wedge isUse(i_k,s)$, it follows that $\neg isDef((i_k,u'),s'') \wedge isUse((i_k,u'),s'')$, hence no clashes arise with $s'$. Finally, the suffix assignments $o_j:=(o_j,u')$ do not introduce new definitions, because all occurrences of $o_j$ have already been renamed to $(o_j,u')$ in $s'$, and thus they preserve SSA. Therefore, $SSA(s'')$ holds. For the detailed proof, please refer to our technical report~\cite{techreport}.
        \end{proof}
        }
    \item Fresh variables have $u+1$ suffices (Property~\ref{prop:fresh-vars-increments-u}).
    \techreport{
    More formally, $(x,\bot)$ is in $s$, then all its occurrences with $s'$ must have been renamed to $(x,u+1)$. 
    \begin{proof}
        This is easy to show as by JR assumptions, we know that all variables in $s$ must have a $\bot$ index (Property~\ref{prop:initial-variables-not-rename}), for example, for an identifier $x$, its corresponding variable form that appear in $s$ must be $(x,\bot)$. Next, observe that renaming of variables are done {\tt rename-var}, which renames all variables to have an index $u+1$, i.e., $(x,u+1)$ from our continued example above. Observe that happens over all expressions in $s$, following the rules in Fig.~\ref{fig:renaming-rules}. Finally, observe that all environments that are carrying variables, i.e., $\Delta_s,\pi_r,\chi$ are also renamed.
    \end{proof}
    }
    \item Old input not in $s'$ (Property~\ref{prop:renaming-no-old-input}).
    More formally, Old input does not appear, i.e., is not used in $s'$. More formally if $(x,j) \in I$, then $\neg isUse((x,j),s')$
    \techreport{
    \begin{proof}
    This follows directly by structural induction on $s$. The renaming transformation replaces every variable $x$ in $s$ with a fresh name $x'$, yielding a statement $s'$ containing only renamed variables. Consequently, none of the original variable names, including those in $I$, appear in $s'$.
    \end{proof}
    }
    \techreport{
    \item $s''$ and $s'$ are the same if the state for $\Delta$ is set properly.
    \label{prop:same-concrete-state}
    If
    \begin{align*}
    &I,O\vdash\rangerstep{(\pi_r,\mu_s,T,u)}{s}{ \mu_s',\pi_r',T'', u')}{s''}{rename-wrap}\\
    &\Theta \vdash (\eta, \Delta[(i_1,u')\leftarrow\Delta(i_1)\ldots(i_n,u')\leftarrow\Delta(i_n)], H, \mu),s';o_1:=(o_1, u');..;o_m:=(o_m,u')\\
    &\Rightarrow_c(\eta', \Delta', H', \mu')\\
    &\Theta \vdash \concsstep{(\eta, \Delta, H, \mu)}{s''}{(\eta'', \Delta'', H'', \mu'')}{}\\
    &\mu=\bot 
    \end{align*}
    where 
    \begin{align*}
     &u'=u+1, I=\{i_1..i_n\}, O=\{o_1..o_m\}, \\
     &s''=(i_1,u'):=i_1;..;(i_n, u'):=i_n;s';o_1:=(o_1, u');..;o_m:=(o_m,u')\\
     &u' \vdash \rangerexpstep{(I,O,T)}{\mu_s}{(I',O',T')}{\mu_s'}{rename}
    \end{align*}
    then
    $$\eta'=\eta'',\Delta'=\Delta'',H'=H'',\mu'=\mu''$$
    \begin{proof}
    It is clear that to concretely evaluate $s''$, one has to first evaluate the initial sequence of assignment statements, i.e., $(i_1,u'):=i_1;..;(i_n, u'):=i_n$, which will result in, since these assignment statements have no side effects on $\eta,H$ or $\mu$, then the intermediate state just before evaluating $s';o_1:=(o_1, u');..;o_m:=(o_m,u')$ would be $(\eta, \Delta[(i_1,u')\leftarrow\Delta(i_1)\ldots(i_n,u')\leftarrow\Delta(i_n)], H, \mu)$. This then implies that indeed the evaluation of the remaining statements,i.e., $s';o_1:=(o_1, u');..;o_m:=(o_m,u')$ must result in the same state as $\eta',\Delta',H',\mu'$.  
    \end{proof}   
    }
\end{itemize}

\section{$TR_5$: Input Substitution Transformation}
\label{sec:substitution}
 The substitution transformation substitutes values of the region's inputs that are defined in $I$. 
Fig.~\ref{fig:susbstitution-rules}, shows a snippet of the rules for the substitution transformation. 
The rules define two judgments. The judgment $\Delta_s, I \vdash 
\rangerstepnolookup{s}{s'}{sub}$ rewrites an IR statement $s$ into $s'$ by 
substituting variables with their corresponding values in the symbolic local 
variable mapping $\Delta_s$, where $I$ is the set of input variables whose 
occurrences drive the substitution. The wrapper judgment $\Delta_s \vdash 
\rangerstepnolookup{((\pi_r, \mu_s, I), s)}{((\pi_r', \mu_s', I'), s')}{sub\text{-}wrap}$ 
is the entry point of the transformation, triggering the sub judgment on $s$ 
and additionally applying the substitution to the early-return path condition 
$\pi_r$ and symbolic return expression $\mu_s$, producing their updated 
counterparts $\pi_r'$ and $\mu_s'$, and updating the input set from $I$ to 
$I'$.

The substitution transformation replaces variables with their corresponding 
values in $\Delta_s$. For expressions, Rule~{\tt sub-expr} directly substitutes 
each variable occurrence with its mapped value in $\Delta_s$. For statements, 
the transformation propagates substitution over all \emph{uses} of variables, 
leaving definitions unchanged. Rule~{\tt sub-wrap} extends the substitution to 
environment variables that live outside the statement, applying it to the path 
condition $\pi$, the return condition $\pi_r$, and the return expression $\mu_s$. 
Additionally, the wrapper updates the input set $I$ to reflect the free variables 
in all rewritten expressions, as the previous input variables in $I$ have been 
replaced by their mapped values in $\Delta_s$ and are no longer needed as inputs.

\techreport{
\begin{figure}[h!t]
    \footnotesize
\fbox{%
\parbox{\textwidth}{%
\[
    \infer[\rn{sub-val}]
     {
         \Delta_s,I \vdash \rangerexpstepnostate{v}{v}{sub}
     }
    { 
    }
\qquad
    \infer[\rn{sub-var}]
     {
          \Delta_s,I \vdash \rangerexpstepnostate{x}{\Delta_s(x)}{sub}
     }
    { 
        x \in I 
    }
\qquad
    \infer[\rn{sub-var-triv}]
     {
          \Delta_s,I \vdash \rangerexpstepnostate{x}{x}{sub}
     }
    { 
     x \notin I 
    }
\]
\[
 \infer[\rn{sub-unaryOp}]
     {
        \Delta_s,I \vdash \rangerexpstepnostate{\unaryop{e}{}}{\unaryop{e}{}}{sub}
     }
    { 
    \begin{gathered}
       \Delta_s,I \vdash \rangerexpstepnostate{e}{e'}{sub}
    \end{gathered}
    }
    \qquad
    \infer[\rn{sub-binaryOp}]
     {
          \Delta_s,I \vdash \rangerexpstepnostate{\binaryop{e_1}{e_2}{}}{\binaryop{e_1'}{e_2'}{}}{sub}
     }
    { 
    \begin{gathered}
         \Delta_s,I \vdash \rangerexpstepnostate{e_1}{e_1'}{sub}\qquad
          \Delta_s,I \vdash \rangerexpstepnostate{e_2}{e_2'}{sub}
    \end{gathered}
    }
\]

\[
    \infer[\rn{sub-\gamma}]
     {
       \Delta_s,I \vdash \rangerexpstepnostate{\gammaexp{e}{e_1}{e_2}}{\gammaexp{e}{e_1'}{e_2'}}{sub}
     }
    { 
    \begin{gathered}
    \Delta_s,I \vdash\rangerexpstepnostate{e}{e'}{sub}\qquad
    \Delta_s,I \vdash\rangerexpstepnostate{e_1}{e_1'}{sub}\qquad
    \Delta_s,I \vdash\rangerexpstepnostate{e_2}{e_2'}{sub}
    \end{gathered}
    }
\]
\hrule
\hrule
\[
\infer[\rn{sub-wrap}]
     {
         \Delta_s \vdash \rangerstepnolookup{((\pi_r,\mu_s,I),s)}{(( \pi_r',\mu_s',I'),s')}{sub-wrap}
     }
     {
     \begin{gathered}
       \Delta_s,I \vdash \rangerstepnolookup{s}{s'}{sub}\qquad
         \Delta_s,I \vdash \rangerexpstepnostate{\pi_r}{\pi_r'}{sub} \\
        \Delta_s,I \vdash \rangerexpstepnostate{\mu_s}{\mu_s'}{sub} \qquad
        I'=getInput(s')\cup getUse(\mu_s') \cup getUse(\pi_r')
     \end{gathered}
     }
\]
\hrule
\hrule
\[
\infer[\rn{sub-assign}]
     {
         \Delta_s,I \vdash \rangerstepnolookup{x:=e}{x:=e'}{sub}
     }
     {
          \Delta_s,I \vdash \rangerexpstepnostate{e}{e'}{sub}
     }
\]

\[
    \infer[\rn{sub-comp}]
    {
         \Delta_s,I \vdash \rangerstepnolookup{s_1;s_2}{s_1';s_2'}{sub}
    }
    {
         \Delta_s,I \vdash \rangerstepnolookup{s_1}{s_1'}{sub}
        \quad
         \Delta_s,I \vdash \rangerstepnolookup{s_2}{s_2'}{sub}
    }
\]

\[
    \infer[\rn{sub-invoke}]
     { \Delta_s,I \vdash \rangerstepnolookup{\invoke{x}{z}{\gamma}{y}}{\invoke{x}{z'}{\gamma}{y'}}{sub}}
    {
     \begin{gathered}
       \Delta_s,I \vdash \rangerexpstepnostate{\overrightarrow{y}}{\overrightarrow{y'}}{sub}
\qquad  \Delta_s,I \vdash \rangerexpstepnostate{z}{z'}{sub}
    \end{gathered}
    }
\]

\[
    \infer[\rn{sub-ret}]
     {
        \Delta_s,I  \vdash \rangerstepnolookup{\return{e}}{\return{e'}}{sub}
     }
    {
     \begin{gathered}
     \Delta_s,I  \vdash \rangerexpstepnostate{e}{e'}{sub}
    \end{gathered}
    }
\qquad
    \infer[\rn{sub-skip}]
     {
        \Delta_s,I  \vdash \rangerstepnolookup{skip}{skip}{sub}
     }
    {    }
\]

\[
    \infer[\rn{sub-if}]
     {
     \begin{gathered}
      \Delta_s,I  \vdash \myif{e}{s_1}{s_2}\longrightarrow_{sub} {\myif{\hat{e}}{\hat{s_1}}{\hat{s_2}}}
     \end{gathered}
     }
    {
     \begin{gathered}
     \Delta_s,I  \vdash \rangerexpstepnostate{e}{\hat{e}}{sub} \qquad
    \Delta_s,I  \vdash \rangerstepnolookup{s_1}{\hat{s_1}}{sub}
        \qquad
         \Delta_s,I  \vdash \rangerstepnolookup{s_2}{\hat{s_2}}{sub}
    \end{gathered}
    }
\]

\[
    \infer[\rn{sub-putfield}]
     {
        \Delta_s,I  \vdash \rangerstepnolookup{\putfield{z}{f}{e}}{\putfield{z'}{f}{e'}}{sub}
     }
    {
     \begin{gathered}
    \Delta_s,I  \vdash \rangerexpstepnostate{e}{e'}{sub}
    \qquad \Delta_s,I  \vdash \rangerexpstepnostate{z}{z'}{sub}
    \end{gathered}
    }
\]

\[
    \infer[\rn{sub-getfield}]
     {
        \Delta_s,I  \vdash \rangerstepnolookup{\getfield{x}{z}{f}}{\getfield{x}{z'}{f}}{sub}
     }
    {
     \begin{gathered}
    \Delta_s,I  \vdash \rangerexpstepnostate{z}{z'}{sub}
    \end{gathered}
    }
\]
\[
    \infer[\rn{sub-new}]
     {
        \Delta_s,I  \vdash \rangerstepnolookup{\newobj{x}{c}}{\newobj{x}{c}}{sub}
     }
    {   }
\]
}}
\caption{Input Substitution Rules}
\label{fig:susbstitution-rules}
\end{figure}

}
\journalreport{
\begin{figure}[h!t]
    \footnotesize
\fbox{%
\parbox{\textwidth}{%
\[
    \infer[\rn{sub-var}]
     {
          \Delta_s,I \vdash \rangerexpstepnostate{x}{\Delta_s(x)}{sub}
     }
    { 
        x \in I 
    }
\qquad
    \infer[\rn{sub-var-triv}]
     {
          \Delta_s,I \vdash \rangerexpstepnostate{x}{x}{sub}
     }
    { 
     x \notin I 
    }
\]
\hrule
\hrule
\[
\infer[\rn{sub-wrap}]
     {
         \Delta_s \vdash \rangerstepnolookup{((\pi_r,\mu_s,I),s)}{(( \pi_r',\mu_s',I'),s')}{sub-wrap}
     }
     {
     \begin{gathered}
       \Delta_s,I \vdash \rangerstepnolookup{s}{s'}{sub}\qquad
         \Delta_s,I \vdash \rangerexpstepnostate{\pi_r}{\pi_r'}{sub} \\
        \Delta_s,I \vdash \rangerexpstepnostate{\mu_s}{\mu_s'}{sub} \qquad
        I'=getInput(s')\cup getUse(\mu_s') \cup getUse(\pi_r')
     \end{gathered}
     }
\]
\hrule
\hrule
\[
\infer[\rn{sub-assign}]
     {
         \Delta_s,I \vdash \rangerstepnolookup{x:=e}{x:=e'}{sub}
     }
     {
          \Delta_s,I \vdash \rangerexpstepnostate{e}{e'}{sub}
     }
\]

\[
    \infer[\rn{sub-comp}]
    {
         \Delta_s,I \vdash \rangerstepnolookup{s_1;s_2}{s_1';s_2'}{sub}
    }
    {
         \Delta_s,I \vdash \rangerstepnolookup{s_1}{s_1'}{sub}
        \quad
         \Delta_s,I \vdash \rangerstepnolookup{s_2}{s_2'}{sub}
    }
\]
}}
\caption{Selected Rule for Input Substitution Transformations}
\label{fig:susbstitution-rules}
\end{figure}

}

\begin{lem}[Expression Substitution is Sound]
\label{lem:sub-expr-sound}
For any concrete expression $e$, if
    
$$ \vdashconcestep{\Delta}{e}{v} $$ 
$$ \Delta_s,I \vdash \rangerexpstepnostate{e}{e'}{sub}$$

then for any $\Delta'$ such that 

$$\Delta' \models (\Delta_s,\pi)$$
$$\Delta \subseteq \Delta'$$

we have
$$\vdashconcestep{\Delta'}{e'}{v}$$
$$\forall i \in I.\neg isUse(i,e')$$
\end{lem}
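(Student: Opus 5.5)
The plan is structural induction on the derivation of $\Delta_s,I \vdash \rangerexpstepnostate{e}{e'}{sub}$, proving both conclusions together for every $\Delta'$ with $\Delta' \models (\Delta_s,\pi)$ and $\Delta \subseteq \Delta'$. Before the induction I will fix the meaning of $\Delta' \models (\Delta_s,\pi)$ (Property~\ref{prop:variable-consistent}). The reading I will use is that, for every $x \in Dom(\Delta_s)$, evaluating the symbolic value $\Delta_s(x)$ under the concrete assignment $\Delta'$ gives $\Delta'(x)$. This is the only fact about $\Delta_s$ the argument needs. It is also where the statement's strength comes from, because the substituted expression must be evaluated in $\Delta'$ rather than in $\Delta$.

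\textbf{Base cases.} The {\tt sub-val} case is immediate: $e'=v$ contains no variables, so it evaluates to $v$ in any environment and uses no $i \in I$.

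For {\tt sub-var-triv} we have $x \notin I$ and $e'=x$. From $\vdashconcestep{\Delta}{x}{v}$ we get $\Delta(x)=v$, and $\Delta \subseteq \Delta'$ gives $\Delta'(x)=v$. Since $x \notin I$, no input is used.

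The {\tt sub-var} case is the key one: $x \in I$ and $e'=\Delta_s(x)$. Here I apply the consistency relation $\Delta' \models (\Delta_s,\pi)$. It says $\Delta_s(x)$ evaluates under $\Delta'$ to $\Delta'(x)$, and $\Delta'(x)=\Delta(x)=v$ by $\Delta\subseteq\Delta'$. For the second conclusion I invoke the initial-state assumption Property~\ref{prop:initial-deltas-no-I}, which states $\forall i \in I.\forall y \in Dom(\Delta_s).\neg isUse(i,\Delta_s(y))$. In particular $\Delta_s(x)$ uses no input variable.

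\textbf{Inductive cases.} The {\tt sub-unaryOp}, {\tt sub-binaryOp} and {\tt sub-}$\gamma$ cases are congruences. I invert the concrete evaluation derivation and apply the induction hypothesis to each subexpression, instantiated with the same $\Delta'$; this is legitimate because expressions have no side effects. I then reassemble the result with the same concrete rule ({\tt unary-op}, {\tt binary-op}, $\gamma${\tt -true} or $\gamma${\tt -false}). For $\gamma$, the guard $e'$ evaluates in $\Delta'$ to the same boolean as $e$ does in $\Delta$, so the same branch rule fires. The only premise needed is the one for the chosen branch, plus the guard. For $isUse$, the uses of a compound expression are the union of the uses of its parts, and each part is input-free by the induction hypothesis.

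I expect the main obstacle to be the {\tt sub-var} case, specifically pinning down the consistency relation so that $\Delta_s(x)$ really evaluates to $\Delta'(x)$. This needs $x \in Dom(\Delta_s)$ whenever $x\in I$, which follows because inputs must be defined before the region. It also relies on the renaming invariants: Property~\ref{prop:initial-deltas-no-I} must still hold for the renamed $I$ and $\Delta_s$ from {\tt rename-wrap}. A smaller wrinkle is that {\tt sub-unaryOp} and {\tt sub-}$\gamma$ as printed rebuild the expression with the unsubstituted operand or guard. I will read these rules as producing the substituted subexpression, since otherwise the $\neg isUse$ conclusion could fail.
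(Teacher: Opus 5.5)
Your proposal is correct and follows essentially the same route as the paper: induction on the substitution judgment. The {\tt sub-var} case is discharged by the consistency relation $\Delta' \models (\Delta_s,\pi)$ together with $\Delta \subseteq \Delta'$ for the value, and by the no-input-in-range assumption (Property~\ref{prop:initial-deltas-no-I}) for the $\neg isUse$ conclusion, while the unary, binary and $\gamma$ cases are congruences; your reading of the misprinted {\tt sub-unaryOp} and {\tt sub-}$\gamma$ rules as substituting into every subexpression is also what the paper's proof implicitly assumes.
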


\techreport{
\begin{proof}
    We will proceed by induction on the judgment $ \Delta_s,I \vdash \rangerexpstepnostate{e}{e'}{sub}$. For this proof, let $\Delta, H$ represents the concrete variable mapping and the concrete heap, while $\Delta_s, H_s$ and $\pi$ represents the symbolic variable mapping, the symbolic heap and the path condition, respectively. Finally, $I$ represents the input variables. Also, we have that $(\Delta',H) \models (\Delta_s,H_s,\pi)$. 

First, consider {\tt sub-val}. Since the rewrite does not modify the value $v$, and $v$ always evaluates to itself, we trivially have $\vdashconcestep{\Delta'}{v}{v}$. Moreover, $v$ does not perform any lookup in $\Delta$ during evaluation, and $\forall i \in I.\neg isUse(i,v)$ holds immediately, as $v$ is a val.

    Next, consider {\tt sub-var-triv} where $e = x$ and $x \notin I$. In this case, no rewriting occurs, i.e., $e = x = e'$.
    Now, observe that since $\Delta\subseteq\Delta'$ and the fact that $(\Delta') \models (\Delta_s,\pi)$, then evaluating $x$ in $\Delta'$ must yield the same value as the value resulting from evaluating it in $\Delta$, i.e., $\vdashconcestep{\Delta}{e'}{v}$ and $\vdashconcestep{\Delta'}{e'}{v}$ (conclusion 1).
    The second conclusion holds trivially, since $x$ is not an input.

  Now, consider {\tt sub-var}, where $e = x$ and $x \in I$, yielding $\Delta_s,I \vdash \rangerexpstepnostate{x}{e'}{sub}$ with $e' = \Delta_s(x)$. The first conclusion holds trivially from $\Delta \subseteq \Delta'$ and the fact that $(\Delta') \models (\Delta_s,\pi)$. The second conclusion holds since we have substituted the value $\Delta_s(x)$ for $x$ and we have that $\forall i \in I.\forall x \in Dom(\Delta_s).\neg isUse(i,\Delta_s(x))$, then we can conclude that $e'=\Delta_s(x)$ does not use any input variable.

  Now, consider the {\tt sub-binaryOp} rule. Let $e=\binaryop{e_1}{e_2}{c}$. We have that $\Delta_s,I \vdash \rangerexpstepnostate{\binaryop{e_1}{e_2}{c}}{\binaryop{e_1'}{e_2'}{c}}{sub}$. On the other hand, concretely evaluating $e$, we get $\vdashconcestep{\Delta}{\binaryop{e_1}{e_2}{c}}{v}$ (using {\tt binaryOp}), such that $\vdashconcestep{\Delta}{e_1}{v_1}$ and $\vdashconcestep{\Delta}{e_2}{v_2}$, and thus $v=\binaryop{v_1}{v_2}{c}$. By induction on $e_1$ and $e_2$, we get $\vdashconcestep{\Delta'}{e_1'}{v_1}$ and $\vdashconcestep{\Delta'}{e_2'}{v_2}$, where $\forall i \in I.\neg isUse(i,e_1')$ and $\forall i \in I.\neg isUse(i,e_2')$. Thus, applying the {\tt binaryOp} rule over the evaluation of $e_1'$ and $e_2'$ yields $\vdashconcestep{\Delta'}{\binaryop{e_1'}{e_2'}{c}}{v}$ (first conclusion). Also, not that  $\forall i \in I.\neg isUse(i,\binaryop{e_1'}{e_2'}{c})$, this is because $isUse(i,\binaryop{e_1'}{e_2'}{c}) = isUse(i,e_1') \vee isUse(i,e_2')$, since the binary operation does not introduce new variables.

The soundness of {\tt sub-unaryOp} and {\tt sub-$\gamma$} rewriting follows the same recursive argument as {\tt sub-binaryOp}, differing only in the number of inductive steps: one for unaryOp and three for {\tt sub-$\gamma$}.
\end{proof}
}
\journalreport{
        

}

\begin{thm}[Statement Recursive Substitution is Sound]
\label{thm:subs-rec-stmt-sound}
Let $s$ be any statement.

If
$$\Theta \vdash \concsstep{(\eta, \Delta, H, \nu,\mu)}{s}{(\eta_s, \Delta', H', \nu', \mu')}{} $$
$$  \Delta_s,I  \vdash \rangerstepnolookup{s}{s'}{sub}$$
$$\forall i \in I.\forall x \in Dom(\Delta_s).\neg isUse(i,\Delta_s(x))$$

then for any $\Delta''$ such that 

$$(\Delta'') \models (\Delta_s,\pi)$$
$$\Delta\subseteq\Delta''$$
$$\Theta \vdash \concsstep{(\eta, \Delta'', H, \nu,\mu)}{s'}{(\eta''', \Delta''', H''', \nu''',\mu''')}{}$$

Then

$$\eta'''=\eta', H'''=H', \nu'''=\nu',\mu'''=\mu'$$
$$\Delta'\subseteq\Delta'''$$
$$\forall i \in I.\neg isUse(i,s')$$

\end{thm}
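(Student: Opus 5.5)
The plan is structural induction on the derivation of $\Delta_s,I \vdash \rangerstepnolookup{s}{s'}{sub}$. In each case I compare the concrete execution of $s$ from $\Delta$ with that of $s'$ from the extended map $\Delta''$. Expressions are handled by Lemma~\ref{lem:sub-expr-sound}. Whenever $\vdashconcestep{\Delta}{e}{v}$ and $\Delta_s,I\vdash\rangerexpstepnostate{e}{e'}{sub}$, that lemma applied with $\Delta''$ (legitimate since $\Delta\subseteq\Delta''$ and $\Delta''\models(\Delta_s,\pi)$) gives $\vdashconcestep{\Delta''}{e'}{v}$ and $\forall i\in I.\neg isUse(i,e')$. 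The third conclusion, $\forall i\in I.\neg isUse(i,s')$, then follows in every case from the expression lemma together with the induction hypotheses. The one extra fact needed is that definitions are not inputs: the assigned variable $x$ is not in $I$, since $getInput$ excludes defined variables, so leaving definitions unrewritten never reintroduces a use of some $i\in I$.

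The base cases go as follows. For {\tt sub-skip} and {\tt sub-new} the statement is unchanged and $\nu$ and $\eta$ evolve identically. For {\tt sub-new}, note that freshRef depends only on $\nu$, not on $\Delta$. The resulting map is $\Delta''[x\leftarrow r]$, which contains $\Delta[x\leftarrow r]=\Delta'$.

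For {\tt sub-assign}, {\tt sub-getfield} and {\tt sub-putfield}, the rewritten expression or reference evaluates to the same $v$ or $r$ in $\Delta''$. Hence the heap read or write is identical, so $H'''=H'$. The locals become $\Delta''[x\leftarrow v]\supseteq\Delta[x\leftarrow v]$. For {\tt sub-ret}, the same $v$ is stored in $\mu$.

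For {\tt sub-invoke}, $z'$ and $\overrightarrow{y'}$ evaluate to the same $r$ and $\overrightarrow{v'}$. So the same method body is fetched from $\Theta$ (through $\eta(r)$) and executed in the identical fresh frame $\{(\overrightarrow{y'},\overrightarrow{v'})\}$. Determinism of the concrete semantics then yields the same $\eta,H,\nu$ and return value, which is assigned to $x$ in both executions. No induction on the callee is needed, because the callee body is not rewritten.

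For {\tt sub-comp} I would split on {\tt composition} versus {\tt composition-ret}. In the non-returning case, the induction hypothesis on $s_1$ gives an intermediate $\Delta_1'''\supseteq\Delta_1$ with identical $\eta,H,\nu,\mu=\bot$. The key observation is that the hypotheses of the induction hypothesis must be re-established for $s_2$. We need $\Delta_1\subseteq\Delta_1'''$, which the induction hypothesis provides. We also need $\Delta_1'''\models(\Delta_s,\pi)$; for this I would use Lemma~\ref{lem:delta-ssa} ($\Delta''\subseteq\Delta_1'''$) and argue that satisfaction of $(\Delta_s,\pi)$ is preserved under extension, since it only constrains variables already in $Dom(\Delta'')$. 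This is the step I expect to be the main obstacle. It relies on SSA, to rule out redefinition of variables mentioned by $\Delta_s$ or $\pi$, and possibly on Lemma~\ref{lem:delta-prime-sat}. If monotonicity of $\models$ is not directly available, I would fall back on invoking Lemma~\ref{lem:delta-prime-sat} on the prefix execution. In the returning case, the induction hypothesis on $s_1$ already gives $\mu'''=\mu'\neq\bot$, so {\tt composition-ret} applies to $s_1';s_2'$.

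The last case is {\tt sub-if}; by Property~\ref{prop:no-phi} the $\phi$-list is empty. The guard $\hat e$ evaluates in $\Delta''$ to the same boolean that $e$ takes in $\Delta$, so the same branch is selected. The returning and non-returning variants then follow from the induction hypothesis on that branch, with equality of $\mu$ choosing between {\tt if-true} and {\tt if-true-ret} (and symmetrically for false).
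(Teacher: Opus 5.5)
Your proposal is correct and follows the paper's proof essentially step for step. Both use structural induction on the substitution derivation, Lemma~\ref{lem:sub-expr-sound} for every rewritten expression, determinism of the callee's execution for {\tt sub-invoke}, and Lemma~\ref{lem:delta-ssa} to re-establish $\Delta_1'''\models(\Delta_s,\pi)$ before applying the induction hypothesis to $s_2$ in {\tt sub-comp}. The only difference is that you handle the returning variants ({\tt composition-ret}, {\tt if-true-ret}, {\tt sub-ret}) explicitly, whereas the paper discards them via Property~\ref{prop:final-ret-no-return}.

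One small gap, which the paper also has, concerns $\forall i\in I.\neg isUse(i,s')$ for an untaken branch or a skipped suffix $s_2$. There neither the induction hypothesis nor Lemma~\ref{lem:sub-expr-sound} is available, since both presuppose an execution. That conclusion is therefore best proved by a separate, purely syntactic induction on the derivation of $\Delta_s,I\vdash\rangerstepnolookup{s}{s'}{sub}$.
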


\techreport{
\begin{proof}
Let $\eta_s, \Delta, H, \nu, \mu, \Delta_s, H_s, \pi, I, O, \Delta, \Delta''$ and $s$ satisfy the premises of the theorem. We will prove this theorem by using structural induction.

First, observe that the {\tt sub-skip} rule does not introduce any change to the rewritten statement. Also, observe that in {\tt skip}, no state is used/accessed during its evaluation. Therefore, we will have that  $\Delta''' = \Delta''$ and $\Delta' = \Delta$, and thus, the first and second conclusion hold, trivially. The third conclusion also holds because a $skip$ by definition cannot contain an input variable.

In the {\tt sub-new} case, the statement $s$ is unchanged ($s'=s$). Thus, we have that $(\eta', \Delta', H', \nu', \mu')$ be the state from evaluating the original statement, $\Theta \vdash \concsstep{(\eta,\Delta,H,\nu,\mu)}{\newobj{x}{c}}{(\eta',\Delta',H',\nu',\mu')}{}$, such that $\nu'=\nu+1, \Delta'=\Delta[x\leftarrow r], \eta'=\eta[r \leftarrow c], r=\text{freshRef}(\nu')$. 

Now, let us assume we have $\Delta''$ such that it satisfies the symbolic state, i.e., $(\Delta'') \models (\Delta_s,\pi)$, and $\Delta\subseteq\Delta''$. Since, obtaining a new reference is deterministically defined over $\nu'$, i.e., $r=\text{freshRef}(\nu')$. thus, on evaluating the non-changing rewritten statement $s'$ in $\Delta''$, we get $\Theta \vdash \concsstep{(\eta,\Delta'',H,\nu,\mu)}{\newobj{x}{c}}{(\eta''',\Delta''',H''',\nu''',\mu''')}{}$. Here, we also have that $r=\text{freshRef}(\nu')$, thus we know that $\nu'''=\nu+1=\nu', \eta'''=\eta[r \leftarrow c]=\eta', H'''=H=H', \mu'''=\mu=\mu'$ (conclusion 1).

Also, we have that the execution of $s$ and $s'$ introduces exactly one new mapping, the binding $x$, such that $\Delta'''=\Delta''[x\leftarrow r]$, and $\Delta'=\Delta[x\leftarrow r]$, and since we have that $\Delta\subseteq\Delta''$, then by transitivity we can conclude that $\Delta'\subseteq\Delta'''$ (conclusion 2).

By the guarantee of previous transformations, we know that $SSA(s)$ most hold (Property~\ref{prop:ssa}), $z$ cannot appear in either $\Delta$ or $\Delta''$, since otherwise there would be multiple definitions for $z$, violating the generalized SSA property; moreover, $x \notin I$ because $isDef(z,\newobj{x}{c})$ holds, while $I$ only contains variables $i$ such that $isUse(i,s)\wedge \lnot isDef(i,s)$. 
Thus, we can conclude that $\forall i \in I.\ \neg isUse(i,\newobj{x}{c})$ (conclusion 3).

Consider the {\tt sub-assign} rule. From the concrete semantics, we have $\Theta \vdash \concsstep{(\eta,\Delta,H,\nu,\mu)}{x:=e}{(\eta',\Delta',H',\nu',\mu')}{}$ by {\tt assign}, where $\vdashconcestep{\Delta}{e}{v}$ and $\Delta'=\Delta[x\leftarrow v]$, while $H'=H$, $\eta'=\eta_s$, $\nu'=\nu$, and $\mu'=\mu$. Applying {\tt sub-assign}, we obtain $\Delta_s,I \vdash \rangerstepnolookup{x:=e}{x:=e_1}{sub}$ with $\Delta_s,I \vdash \rangerexpstepnostate{e}{e_1}{sub}$. Now, let us assume that we have $\Delta''$ such that it satisfies the symbolic state, i.e., $(\Delta'') \models (\Delta_s,\pi)$, and $\Delta\subseteq\Delta''$. By the lemma on expression substitution (Lemma~\ref{lem:sub-expr-sound}), $\vdashconcestep{\Delta''}{e_1}{v}$.
Now, concretely executing the rewritten statement in $\Delta''$, we get $\Theta \vdash \concsstep{(\eta,\Delta'',H,\nu,\mu)}{x:=e_1}{(\eta''',\Delta''',H''',\nu''',\mu''')}{}$, which must be of the form $\eta'''=\eta=\eta'$, $H'''=H=H', \nu'''=\nu=\nu'$, and $\mu'''=\mu=\mu'$ (conclusion 1). 
Note that, we also have that we assumed $\Delta\subseteq\Delta''$, and that the only change introduced in the mapping for $x$, i.e., $\Delta'=\Delta[x\leftarrow v]$, $\Delta'''=\Delta''[x\leftarrow v]$. By transitivity, we can conclude that $\Delta'\subseteq\Delta'''$ (conclusion 2).
Finally, since $\forall i\in I.\neg isUse(i,e_1)$ by the expression lemma and $x$ cannot be not an input, since $x \notin getInput(x:=e)$ while $I$ only contains variables $i$ such that $isUse(i,s)\wedge \lnot isDef(i,s)$, we conclude $\forall i\in I.\neg isUse(i,x:=e_1)$ (conclusion 3).

 Next, consider the {\tt sub-comp} rule.   
 There are two cases that can occur when evaluating a compositional statement in the concrete semantics: either $s_1$ has a return control flow (thus the {\tt composition-ret} rule will apply), or not (thus the {\tt composition} rule will apply). Since, the substitution transformation happens after the remove final return transformation (Property~\ref{prop:final-ret-no-return}), thus we will skip the case where a return can happen in $s_1$. Thus, to show that the lemma holds for the {\tt composition} rule, we have
  $\Theta \vdash \concsstep{(\eta, \Delta, H, \nu, \mu)}{s_1}{(\eta_1, \Delta_1, H_1, \nu_1, \mu)}{}$ and
       $\Theta \vdash \concsstep{(\eta_1, \Delta_1, H_1, \nu_1, \mu)}{s_2}{(\eta_2, \Delta_2, H_2, \nu_2, \mu)}{}$.
Here we have that $\eta'=\eta_2, \Delta'=\Delta_2, H'=H_2, \nu'=\nu_2, \mu'=\mu$.
Now, applying the rewriting, we get $\Delta_s,I \vdash \rangerstepnolookup{s_1;s_2}{s_1';s_2'}{sub}$.
Now, let us assume we have $\Delta''$ such that it satisfies the symbolic state, i.e., $(\Delta'') \models (\Delta_s,\pi)$, and where $\Delta \subseteq \Delta''$.
 Using induction hypothesis on $s_1$, we have that
     $  \Theta \vdash \concsstep{(\eta, \Delta'', H, \nu, \mu)}{s_1'}{(\eta_1, \Delta_3, H_1, \nu_1, \mu)}{}$, such that $\Delta_1\subseteq\Delta_3$. 
     From Lemma~\ref{lem:delta-ssa}, we know that $\Delta'' \subseteq\Delta_1$, since we have established that $(\Delta'') \models (\Delta_s,\pi)$, then by transitivity, it must be that $(\Delta_3) \models (\Delta_s,\pi)$, and similarly, $(\Delta'') \models (\Delta_s,\pi)$
     Now, we can use this state and apply induction on $s_2$ to get
     $\Theta \vdash \concsstep{(\eta_1, \Delta_3, H_
     1, \nu_1, \mu)}{s_2'}{(\eta_2, \Delta_4, H_2, \nu_2, \mu)}{}$, such that $\Delta_2\subseteq\Delta_4$. Here, we have that $\eta'''=\eta_2=\eta', H'''=H_2=H', \nu'''=\nu_2=\nu',\mu'''=\mu_2=\bot=\mu'$ (conclusion 1). 
     
     Also, we have that $\Delta'''=\Delta_4$, and $\Delta'=\Delta_2$, and we have shown that $\Delta_2(x)\subseteq\Delta_4(x)$ (conclusion 2). 
     Finally, to prove $\forall i \in I.\,\neg isUse(i,s_1';s_2')$, 
we apply induction on both $s_1$ and $s_2$. From the induction hypothesis, 
$\forall i \in I.\,\neg isUse(i,s_1')$ and $\forall i \in I.\,\neg isUse(i,s_2')$. 
By the definition of $isUse$, a variable is used in a composition only if it is 
used in one of the components. Hence, 
$\forall i \in I.\,\neg isUse(i,s_1';s_2')$ (conclusion 3).

   Consider the {\tt sub-getfield} rule, we have that $\Theta \vdash \concsstep{(\eta, \Delta, H, \nu, \bot)}{\getfield{x}{z}{f}}{(\eta, \Delta [x \leftarrow v], H, \nu, \bot)}{}$, where $r \in R$, $r=\Delta(z)$, and $v=H(r,f)$, by applying the {\tt get-field} rule.   Here we have that $\eta'=\eta_s, \Delta'=\Delta [x \leftarrow v],H'=H,\nu'=\nu,\mu'=\bot$. On the other hand, we can apply the {\tt sub-getfield} rule to get $\Delta_s, I  \vdash \rangerstepnolookup{\getfield{x}{z}{f}}{\getfield{x}{z'}{f}}{sub}$, for some reference variable $z$ and a field $f$. 
   Now, let us assume we have $\Delta''$ such that it satisfies the symbolic state, i.e., $(\Delta'') \models (\Delta_s,\pi)$, where $\Delta\subseteq\Delta''$.
   Using the substitution lemma for expression soundness (Lemma~\ref{lem:sub-expr-sound}), we have $\vdashconcestep{\Delta''}{z'}{r}$. 
   Thus, evaluating concretely the rewritten statement we get, $\Theta \vdash \concsstep{(\eta, \Delta'', H, \mu)}{\getfield{x}{z'}{f}}{(\eta, \Delta'' [x \leftarrow v'], H, \bot)}{}$.
   Here, we have that $\eta'''=\eta=\eta', H'''=H=H', \nu'''=\nu=\nu',\mu'''=\mu=\bot=\mu'$ (conclusion 1). 
   Next, we have that $\Delta'''=\Delta'' [x \leftarrow v']$, $v'=H(r,f)=v$. 
   From the SSA guarantees from previous transformation, we know that $SS(s)$ most hold (Property~\ref{prop:ssa}), $x$ cannot appear in $\Delta$ or $\Delta''$, nor can $x\in I$ since $isDef(x,\getfield{x}{z'}{f})$ holds while $I$ only contains variables $i$ such that $isUse(i,s)\wedge \lnot isDef(i,s)$. As $[x\leftarrow v]$ is the only update in both $\Delta'$ and $\Delta'''$, pairing that with the assumed fact that $\Delta \subseteq \Delta''$, then we can conclude that $\Delta'\subseteq\Delta'''$(conclusion 2). 
   
   Finally, to show that $\forall i\in I.\neg isUse(i,s')$, observe that we have 
   $\forall i\in I.\neg isUse(i,z')$ from the expression lemma, then it must be the case that $\forall i\in I.\neg isUse(i,\getfield{x}{z'}{f})$ (conclusion 3). 

Consider the {\tt sub-putfield} rule, 
we apply the {\tt put-field} rule $\Theta \vdash \concsstep{(\eta, \Delta, H, \nu, \bot)}{\putfield{z}{f}{e}}{(\eta_s, \Delta, H[(r,f) \leftarrow v], \nu, \bot)}{}$, such that $r=\Delta(z), r \in R$, and $\vdashconcestep{\Delta}{e}{v}$.
Here we have that $\eta'=\eta_s,\Delta'=\Delta,H'=H[(r,f) \leftarrow v],\nu'=\nu,\mu'=\bot$.
On the other hand, we can apply the {\tt sub-putfield} rule to get $\Delta_s,I  \vdash \rangerstepnolookup{\putfield{z}{f}{e}}{\putfield{z'}{f}{e'}}{sub}$ , such that $
    \Delta_s,I  \vdash \rangerexpstepnostate{e}{e'}{sub}
    $, $\Delta_s,I  \vdash \rangerexpstepnostate{z}{z'}{sub}$. 
     Now, let us assume we have $\Delta''$ such that it satisfies the symbolic state, i.e., $(\Delta'') \models (\Delta_s,\pi)$, where $\Delta\subseteq\Delta''$.
    First, observe that both $z'$ and $e'$ will evaluate to the same value as their non-rewritten counter parts, due to Lemma~\ref{lem:sub-expr-sound}, i.e, $\vdashconcestep{\Delta''}{e'}{v}$, and $\vdashconcestep{\Delta''}{z'}{r}$
Now, evaluating the rewritten statement we get, 
$
\Theta \vdash \concsstep{(\eta, \Delta'', H, \nu, \bot)}{\putfield{z'}{f}{e'}}{(\eta''',\Delta''', H'', \nu''', \mu''')}{}$. Note here we get the same heap, that is, $H''=H[(r,f) \leftarrow v]=H'$. Also, observe that no other environment variables are changed, thus we can arrive to conclusion 1. Next, observe that no change has occurred on the local variable mapping, i.e., $\Delta'''=\Delta''$, and that $\Delta'=\Delta$  pairing that with the assumed fact that $\Delta \subseteq \Delta''$, then we can conclude that $\Delta'\subseteq\Delta'''$(conclusion 2).

Finally, one of the conclusions of applying the expression lemma on both $z$, and $e$, is that that $\forall i\in I.\neg isUse(i,z')$, and $\forall i\in I.\neg isUse(i,e')$, thus we can conclude that  $\forall i\in I.\neg isUse(i,\putfield{z'}{f}{e'})$, since the {\tt putfield} construct does not introduce additional variables (other than those in $z'$ and in $e'$) to the rewritten statement (conclusion 3).

Now, consider the {\tt sub-invoke} rule, we have that $\Theta\vdash\concsstep{(\eta,\Delta, H, \nu,\mu)}{\invoke{x}{z}{g}{y}}{(\eta', \Delta',H',\nu',\mu')}{}$, such that $\mu=\bot$, $\vdashconcestep{\Delta}{z}{r}$, $r\in R$, $\lookup{\Theta}{\lookup{\eta}{r},g} = method(\overrightarrow{y_1}.s_1)$, $
     \vdashconcestep{\Delta}{\overrightarrow{y}}{\overrightarrow{v_1}}
    $, and $
     \Theta\vdash \concsstep{(\eta,\{(\overrightarrow{y_1},\overrightarrow{v_1})\},H,\nu,\bot)}{s_1}{(\eta^5, \Delta^5,H^5,\nu^5,\mu^5)}{}$. Here, we have that $\eta'=\eta^5$, $\Delta'=\update{\Delta}{x}{\mu^5}$, $H'=H^5$, $\nu'=\nu^5=$, and $\mu^5=v', v'\neq\bot$.
     Next, we can apply the {\tt sub-invoke} to get
     $\Delta_s,I \vdash \rangerstepnolookup{\invoke{x}{z}{g}{y}}{\invoke{x}{z'}{g}{y'}}{sub}$ and we have $
      \Delta_s,I \vdash \rangerexpstepnostate{\overrightarrow{y}}{\overrightarrow{y'}}{sub}$, and $\Delta_s,I \vdash \rangerexpstepnostate{z}{z'}{sub}$.

    Now, let us assume we have $\Delta''$ such that it satisfies the symbolic state, i.e., $(\Delta'') \models (\Delta_s,\pi)$, where $\Delta\subseteq\Delta''$.
  First, observe that both $z'$ and $y'$ will evaluate to the same value as their non-rewritten counter parts, due to Lemma~\ref{lem:sub-expr-sound}, i.e, $\vdashconcestep{\Delta''}{\overrightarrow{y'}}{\overrightarrow{v_1}}$, and $\vdashconcestep{\Delta''}{z'}{r}$.  This implies that the fetched method for invocation must also match the one that is fetched when evaluating the original statement, that is $\lookup{\Theta}{\lookup{\eta}{r},g} = method(\overrightarrow{y_1}.s_1)$. Now, we can evaluate $s_1$ in exactly the same way in the original statement, to get 
 $\Theta\vdash \concsstep{(\eta,\{(\overrightarrow{y_1},\overrightarrow{v_1})\},H,\nu,\bot)}{s_1}{(\eta^6, \Delta^6,H^6,\nu^6,\mu^6)}{}$. But since we are evaluating $s_1$ in exactly the same environment then it must be the case that $\eta^6=\eta^5$, $\Delta^6=\Delta^5$, $H^6=H^5$, $\nu^6=\nu^5$, $\mu^6=\mu^5$. Moreover, we can also conclude that the resulting concrete state $(\eta''', \Delta''',H''',\nu''',\mu''')$ must be of the form $\eta'''=\eta^6=\eta^5=\eta'$, $H'''=H^6=H^5=H'$, $\nu'''=\nu^6=\nu^5=\nu'$, and finally, $\mu'''=\mu^6=\mu^5=\bot$ (conclusion 1). 

 Next, observe that $\Delta'''=\Delta''[x\leftarrow\mu^5]$, and similarly we have that $\Delta'=\Delta[x\leftarrow\mu^5]$, and since we have established that $\Delta\subseteq\Delta''$, thus adding the single mapping for $x$, must still maintain the subset relationship, i.e., $\Delta'\subseteq\Delta'''$ (conclusion 2).

Finally, it is easy to see that $\forall i \in I.\neg isUse(i,\invoke{z'}{x}{g}{y'})$ by using the expression lemma on both $z$ and $\overrightarrow{y}$, we obtain that $\forall i\in I.\neg isUse(i,z')$, and $\forall i\in I.\neg isUse(i,\overrightarrow{y'})$, thus we can conclude that  $\forall i\in I.\neg isUse(i,\invoke{z'}{x}{g}{y'})$, since the {\tt invoke} construct does not introduce additional variables (other than those in $z'$ and in $\overrightarrow{y'}$) to the rewritten statement (conclusion 3).

Finally, consider the {\tt sub-if} rule. The proof splits into several cases depending on whether the if-condition evaluates to {\tt true} or {\tt false}, and whether the executed branch contains a {\tt return}. We present the proof for two representative cases: when the condition evaluates to {\tt true} and the taken branch $s_1$ has no {\tt return}, and when the condition still evaluates to {\tt true} but $s_1$ contains a {\tt return}, noting that the latter is skipped due to SSA guarantees from pervious transformations (Property~\ref{prop:ssa}). Recall that this transformation is applied after the $\gamma$-transformation, which eliminates $\phi$-statements from if-statements; therefore, we assume the sequence of $\phi$-statements is empty.

First, let us consider the first case. Here we have that $\Theta\vdash\concsstep{(\eta, \Delta, H, \nu,\bot)}{\myif{e}{s_1}{s_2}}{(\eta',\Delta', H', \nu',\mu')}{}$, where $\vdashconcestep{\Delta}{e}{true}$, $\Theta\vdash\concsstep{(\eta,\Delta, H, \nu,\mu)}{s_1}{(\eta',\Delta', H', \nu',\mu')}{}$, and $\mu'=\bot=\mu$.
     Applying the {\tt sub-if} rule we get $ \Delta_s,I  \vdash \myif{e}{s_1}{s_2}\longrightarrow_{sub} {\myif{e'}{s_1'}{s_2'}}$, such that we have $\Delta_s,I  \vdash \rangerexpstepnostate{e}{e'}{sub}$, and also $
    \Delta_s,I  \vdash \rangerstepnolookup{s_1}{s_1'}{sub}
        $, and finally, $
         \Delta_s,I  \vdash \rangerstepnolookup{s_2}{s_2'}{sub}$.       
          First, observe that and $e'$ must evaluate to the same value as its non-rewritten counter parts, due to Lemma~\ref{lem:sub-expr-sound}, i.e, $\vdashconcestep{\Delta''}{e'}{true}$. Thus, the concrete evaluation will evaluate the then-side ($s_1$). Here, we use induction over $s_1$. That is, if we have that $\Theta \vdash \concsstep{(\eta, \Delta'', H, \nu,\bot)}{s_1'}{(\eta_1, \Delta_1, H_1, \nu_1, \mu_1)}{}$, then by induction, we know that  $\eta_1=\eta', H_1=H', \nu_1=\nu', \mu_1=\bot$, and $\Delta_1=\Delta'''$ (conclusion 1). 
         Similarly by induction we have that $\Delta'\subseteq\Delta'''$ (conclusion 2).
   Finally, to show that $\forall i\in I.\neg isUse(i,s')$, observe that we have 
   $\forall i\in I.\neg isUse(i,e')$ holds by using the expression lemma, also, by using induction over $s_1$, and $s_2$, we have that $\forall i\in I.\neg isUse(i,s_1') \wedge \neg isUse(i,s_2')$.
   then it must be the case that $\forall i\in I.\neg isUse(i,\myif{e'}{s_1'}{s_2'})$ (conclusion 3).
\end{proof}
}

\begin{cor}[The substitution-wrapper is sound] 
\label{cor:subs-wrap-stmt-sound}

If
$$\Theta \vdash \concsstep{(\eta, \Delta, H, \nu,\mu)}{s}{(\eta', \Delta', H', \nu', \mu')}{} $$
$$\Delta_s \vdash \rangerstepnolookup{((\pi_r,\mu_s, I),s)}{((\pi_r',\mu_s',I'),s')}{sub-wrap}$$
$$\forall i \in I.\forall x \in Dom(\Delta_s).\neg isUse(i,\Delta_s(x))$$
$$\mu^*=\func{evalReturn}(\Delta',\pi_r,\mu_s)$$

then for any $\Delta''$ such that 

$$(\Delta'') \models (\Delta_s,\pi)$$
$$\Delta\subseteq\Delta''$$
$$\Theta \vdash \concsstep{(\eta, \Delta'', H, \nu,\mu)}{s'}{(\eta''', \Delta''', H''', \nu''',\mu''')}{}$$
we have
$$\eta'''=\eta', H'''=H', \nu'''=\nu',\mu'''=\mu'$$
$$\Delta'\subseteq\Delta'''$$
$$\forall i \in I.\neg isUse(i,s')$$
$$(\Delta''') \models (\Delta_s,\pi') $$
$$\mu^*=\func{evalReturn}(\Delta''',\pi_r',\mu_s')$$

\techreport{

\begin{proof}
  
  Applying {\tt sub-wrap rule} on a statement $s$, we get $\Delta_s \vdash \rangerstepnolookup{((\pi_r,\mu_s,I),s)}{((\pi_r',\mu_s',I'),s')}{sub-wrap}$, such that $ \Delta_s,I \vdash \rangerstepnolookup{s}{s'}{sub}, 
         \Delta_s,I \vdash \rangerexpstepnostate{\pi_r}{\pi_r'}{sub},
        \Delta_s,I \vdash \rangerexpstepnostate{\mu_s}{\mu_s'}{sub} $, and $
        I'=getInput(s')\cup getUse(\mu_s') \cup getUse(\pi_r')$.

        Given the premise of the corollary we can apply the  of Theorem~\ref{thm:subs-rec-stmt-sound} on $s'$ to show conclusion 1-4:
        $$\Theta \vdash \concsstep{(\eta, \Delta'', H, \nu,\mu)}{s'}{(\eta''', \Delta''', H''', \nu''',\mu''')}{}$$
$$\eta'''=\eta', H'''=H', \nu'''=\nu',\mu'''=\mu'$$
$$\Delta'\subseteq\Delta'''$$
$$\forall i \in I.\neg isUse(i,s')$$
        
       Next, we want to show that $(\Delta'') \models (\Delta_s,\pi)$. Since the evaluation of this relation depends on its arguments. From Lemma~\ref{lem:delta-ssa}, we know that $\Delta''\subseteq \Delta'''$, then $\pi$ will be evaluated exactly the same in $\Delta''$, which in turn guarantees that $(\Delta''') \models (\Delta_s,\pi)$ must also hold.

       Finally, from the premise of the corollary, we have $\mu^* = \func{evalReturn}(\Delta',\pi_r,\mu_s)$. 
       The evaluation of $\func{evalReturn}$ depends on the evaluation of the expressions $\pi_r$ and $\mu_s$ under $\Delta$. 
Since we have shown that $\Delta' \subseteq \Delta'''$ and $\pi_r', \mu_s'$ are the rewritten counterparts of $\pi_r, \mu_s$, respectively, Lemma~\ref{lem:sub-expr-sound} applies. 
Therefore,
\[
\func{evalReturn}(\Delta''',\pi_r',\mu_s') = \func{evalReturn}(\Delta',\pi_r,\mu_s) = \mu^*.
\]

\end{proof}
}
\journalreport{

\begin{proof}
  
  Applying {\tt sub-wrap rule} on a statement $s$, we get $\Delta_s \vdash \rangerstepnolookup{((\pi_r,\mu_s,I),s)}{((\pi_r',\mu_s',I'),s')}{sub-wrap}$, such that $ \Delta_s,I \vdash \rangerstepnolookup{s}{s'}{sub}, 
         \Delta_s,I \vdash \rangerexpstepnostate{\pi_r}{\pi_r'}{sub},
        \Delta_s,I \vdash \rangerexpstepnostate{\mu_s}{\mu_s'}{sub} $, and $
        I'=getInput(s')\cup getUse(\mu_s') \cup getUse(\pi_r')$.

        Given the premise of the corollary we can apply the  of Theorem~\ref{thm:subs-rec-stmt-sound} on $s'$ to show conclusion 1-4:
        $$\Theta \vdash \concsstep{(\eta, \Delta'', H, \nu,\mu)}{s'}{(\eta''', \Delta''', H''', \nu''',\mu''')}{}$$
$$\eta'''=\eta', H'''=H', \nu'''=\nu',\mu'''=\mu'$$
$$\Delta'\subseteq\Delta'''$$
$$\forall i \in I.\neg isUse(i,s')$$
        
       Next, we want to show that $(\Delta'') \models (\Delta_s,\pi)$. Since the evaluation of this relation depends on its arguments. From Lemma~\ref{lem:delta-ssa}, we know that $\Delta''\subseteq \Delta'''$, then $\pi$ will be evaluated exactly the same in $\Delta''$, which in turn guarantees that $(\Delta''') \models (\Delta_s,\pi)$ must also hold.

       Finally, from the premise of the corollary, we have $\mu^* = \func{evalReturn}(\Delta',\pi_r,\mu_s)$. 
       The evaluation of $\func{evalReturn}$ depends on the evaluation of the expressions $\pi_r$ and $\mu_s$ under $\Delta$. 
Since we have shown that $\Delta' \subseteq \Delta'''$ and $\pi_r', \mu_s'$ are the rewritten counterparts of $\pi_r, \mu_s$, respectively, Lemma~\ref{lem:sub-expr-sound} applies. 
Therefore,
\[
\func{evalReturn}(\Delta''',\pi_r',\mu_s') = \func{evalReturn}(\Delta',\pi_r,\mu_s) = \mu^*.
\]

\end{proof}
}
\end{cor}

Let $\Delta^*$ be the restriction of $\Delta''$ to variables not in $I$, i.e.,
\[
Dom(\Delta^*) = Dom(\Delta'') \setminus I
\]

\begin{cor}[Substitution-wrapper is sound even if we remove input variables]
\label{cor:sub-wrap-remove-I}
Assume the premises of Corollary~\ref{cor:subs-wrap-stmt-sound}.
Let $\Delta^*$ be defined as above, we observe that $\Delta^* \subset \Delta''$.
Then, we know that if 
\[
\Theta \vdash \concsstep{(\eta, \Delta^*, H, \nu,\mu)}{s'}
{(\eta^4, \Delta^{**}, H^4, \nu^4,\mu^4)}{}
\]
then the following hold:
\[
\eta^4=\eta',\quad H^4=H',\quad \nu^4=\nu',\quad \mu^4=\mu'
\]
\[
\forall x.\in Dom(\Delta_s).\concestep{\Delta^{*}}{\Delta_s(x)}{\Delta''(x)}{} 
\]
\[
\Delta^{**} \subset \Delta'''
\]
\[
\mu^*=\func{evalReturn}(\Delta^{**},\pi_r',\mu_s')
\]
$$(\Delta^{**}) \models (\Delta_s,\pi)$$

\begin{proof}

By Corollary~\ref{cor:subs-wrap-stmt-sound}, we have $\forall i \in I.\neg isUse(i,s')$.
Since $\Delta^*$ differs from $\Delta''$ only on variables in $I$, the execution of $s'$
under $\Delta^*$ yields $\eta^4=\eta'''=\eta',\quad H^4=H'''=H',\quad \nu^4=\nu'''=\nu',\quad \mu^4=\mu'''=\mu'$ (conclusion 1).

Also, observe that since we have established that $(\Delta''') \models (\Delta_s,\pi)$, and that we also know from the premise of Corollary~\ref{cor:subs-wrap-stmt-sound} that $\forall i \in I.\forall x \in Dom(\Delta_s).\neg isUse(i,\Delta_s(x))$. Then this implies that the evaluation of any variable $x$ in $Dom(\Delta_s)$ is not dependent on the input variable mapping, and since the only difference between $\Delta^*$ and $\Delta''$ is the input variable mappings, then we can conclude that $\forall x.\in Dom(\Delta_s).\concestep{\Delta^{*}}{\Delta_s(x)}{\Delta''(x)}{}$ (conclusion 2).

We want to know that the subset relation will be preserved with the execution of $s'$, and this follows because the execution of $s'$ initially starts with the assumption that both $\Delta^*$ and $\Delta''$ match up on all variable excluding those in $I$. Thus, expressions, as they are not using any variables from $I$, must evaluate to the same values in both environments. Moreover, all state updates must also make the same updates, thus preserving the subset relation between resulting states, i.e., $\Delta^{**} \subset \Delta'''$.

Also, by Lemma~\ref{lem:sub-expr-sound}, input variables do not occur in
$\pi_r'$ or $\mu_s'$, and since we have from Corollary~\ref{cor:subs-wrap-stmt-sound} that, $\mu^*=\func{evalReturn}(\Delta''',\pi_r',\mu_s')$, then we can conclude that evaluating $\pi_r'$ and $\mu_s'$ under $\Delta^{**}$, and is unchanged.
Therefore,
\[
\mu^*=\func{evalReturn}(\Delta^{**},\pi_r',\mu_s').
\]

Finally, by Lemma~\ref{lem:sub-expr-sound}, input variables do not occur in $\pi_r'$ and that it evaluates to the same value in $\Delta'''$ according to the same lemma. We also have established that $\forall x.\in Dom(\Delta_s).\concestep{\Delta^{*}}{\Delta_s(x)}{\Delta''(x)}{}$, then that means that it must evaluate to the same value in $\Delta**$. Same reasoning applies for all expressions mapped to by any variable $x$ such that $s \in Dom(\Delta_s)$, this implies that the evaluation for $(\Delta^{**}) \models (\Delta_s,\pi)$ must hold.
\end{proof}

\end{cor}

\subsection{Substitution Properties}

The substitution assumes all guaranteed properties from $\gamma$-creation, remove final-returns and renaming transformations. And it guarantees the following properties.

Given $ \Delta_s \vdash \rangerstepnolookup{((\pi_r,\mu_s,I),s)}{((\pi_r',\mu_s',I'),s')}{sub-wrap}$, such that $ \Delta_s,I \vdash \rangerstepnolookup{s}{s'}{sub}$ and $
         \Delta_s,I \vdash \rangerexpstepnostate{\pi_r}{\pi_r'}{sub}$, $\Delta_s,I \vdash \rangerexpstepnostate{\pi}{\pi'}{sub}$ where $
        \Delta_s,I \vdash \rangerexpstepnostate{\mu_s}{\mu_s'}{sub}$ and $
        I'=getInput(s')\cup getUse(\mu_s') \cup getUse(\pi_r')$, then the substitution transformation has the following guarantees. 

\begin{itemize}
    \item Rewritten Statement $s'$ satisfies the generalized SSA property. More formally, $SSA(s')$ holds.
    \techreport{
    \begin{proof}
        This follows directly, from renaming transformation guarantee (Property~\ref{prop:ssa}) that the $SSA(s)$ holds and that none of the expression rules ($\mapsto_{sub}$) alter the local variable mapping or the heap mappings. Likewise, none of the statement rules ($\rightarrow_{sub}$) modify the variables assigned within generalized assignment statements, nor do they introduce new variables. Consequently, because neither variable mappings nor assignments are changed, the generalized SSA property is preserved for $s'$, i.e., $SSA(s')$ holds.
    \end{proof}
    }
    \item No input variables $i \in I$ exist in the rewritten statement $s'$. 
    \label{gur:sub-no-inpu-in-s-prime}
    More formally, $\forall x\in I, \neg isUse(x,s')$
    \techreport{
    \begin{proof}
        This follows directly from applying the soundness theorem~\ref{cor:sub-wrap-remove-I} of the substitution transformation.
    \end{proof}
    }
\end{itemize}

\section{$TR_6$: Method Inlining Transformation}
\label{sec:inlining}
This transformation inlines the IR of called methods one level at a time. 
Inlining is performed only when the object reference on which the method is 
invoked is a concrete reference. 

Fig.~\ref{fig:inlinerules} shows a snippet of the main rules for this 
transformation. 
The rules define two judgments. The judgment $\Theta_t, \eta_s, \Delta_s 
\vdash \rangerstep{(T, u)}{s}{(T', u')}{s'}{inline}$ rewrites an IR statement 
$s$ into $s'$ by inlining method calls, where $\Theta_t$ is the mapping of 
transformed method bodies, $\eta_s$ is the reference-to-type mapping, $\Delta_s$ 
is the symbolic local variable mapping, $T$ is the set of temporary variables, 
and $u$ is the global unique index. Their primed counterparts $T'$ and $u'$ 
reflect their updated values after rewriting. The wrapper judgment 
$\Theta_t, \eta_s, \Delta_s \vdash \rangerstep{(T, u)}{s}{(T', u')}{s'}{inline\text{-}wrap}$ 
is the entry point of the transformation, triggering the inline judgment on $s$.

Rule~{\tt inline-invoke} inlines a method call by looking up 
its transformed body in $\Theta_t$, which contains the IR of methods on which 
the $\gamma$-transformation, early-return elimination, and final-return removal 
have already been applied (see Appendix~\ref{sec:statictransform}). The rule 
retrieves the transformed body, applies the renaming transformation to ensure 
uniqueness of variable names, and applies substitution to propagate the passed 
arguments into the body. The variable $x$ is then assigned the modified return expression $\mu_s^3$ of the inlined method. Rules~{\tt inline-comp} propagates the transformation into their sub-statements, threading the temporary 
variable set $T$ and counter $u$ through each; the rule for conditionals, omitted from 
Fig.~\ref{fig:inlinerules} for brevity, follows the same pattern. All remaining statement forms, 
including assignments, getfield, and putfield, etc., are handled by 
Rule~{\tt inline-no-op}, which leaves them unchanged.

\techreport{
 \begin{figure}[H]
    \footnotesize
\fbox{%
\parbox{\textwidth}{%
\[
    \infer[\rn{inline-wrap}]
     {\Theta_t,\eta_s,\Delta_s \vdash \rangerstep{(T,u)}{s}{(T',u')}{s'}{inline-wrap}}
    {
     \begin{gathered}
    \Theta_t,\eta_s,\Delta_s\vdash \rangerstep{(T,u)}{s}{(T',u')}{s'}{inline}
    \end{gathered}
    }
\]

\[
    \infer[\rn{inline-if}]
     {\Theta_t,\eta_s,\Delta_s\vdash \rangerstep{(T,u)}{\myif{e}{s_1}{s_2}}{(T'', u'')} {\myif{e}{s_1'}{s_2'}}{inline}}
    {
     \begin{gathered}
        \Theta_t,\eta_s,\Delta_s\vdash \rangerstep{(T,u)}{s_1}{(T',u')}{s_1'}{inline}
        \\
       \Theta_t,\eta_s,\Delta_s\vdash \rangerstep{(T',u')}{s_2}{(T'',u'')}{s_2'}{inline}
    \end{gathered}
    }
\]

\[
    \infer[\rn{inline-invoke}]
     {\Theta_t,\eta_s,\Delta_s\vdash \rangerstep{(T,u)}{\invoke{x}{r}{g}{y}}{(T^3,u')}{s^3; x:=\mu_s^3}{inline}}
    {
     \begin{gathered}
     r \in R \qquad
     ((\mu_s',\pi_r'), \overrightarrow{y'}.s')=\lookup{\Theta_t}{\lookup{\eta_s}{r},g}\qquad
    O'=\varnothing \qquad I'=\overrightarrow{y'}
    \qquad T'=getOutput(s')\\
      I',O'\vdash \rangerstepnolookup{((\mu_s',\pi_r',T',u),s')}{((\mu_s'',\pi_r'', T'', u'),s'')}{rename-wrap}\\
         \sigma = (I', \overrightarrow{y'}) \qquad 
          \sigma \vdash \rangerstepnolookup{((true,\pi_r'',\mu_s'',I'),s'')}{(( true,\pi_r^3,\mu_s^3,I''),s^3)}{sub-wrap}
    \qquad T^3=T''\cup T
    \end{gathered}
    }
\]


\[
 \infer[\rn{inline-comp}]
     {\Theta_t,\eta_s,\Delta_s\vdash \rangerstep{(T,u)}{s_1;s_2}{(T'',u'')}{s_1';s_2'}{inline}}
    { 
    \begin{gathered}
    \Theta_t,\eta_s,\Delta_s,\vdash \rangerstep{(T,u)}{s_1}{(T',u')}{s_1'}{inline}
    \\
    \Theta_t,\eta_s,\Delta_s\vdash \rangerstep{(T',u')}{s_2}{(T'',u'')}{s_2'}{inline}        
    \end{gathered}
    }
\]

\[
 \infer[\rn{inline-no-op}]
     {\Theta_t,\eta_s,\Delta_s \vdash \rangerstep{(T,u)}{s}{(T,u)}{s}{inline}}
    {  \func{noOp}_{inline}(s) }
\]
}}
\caption{Method-Inlining Rules}
\label{fig:inlinerules}
\end{figure}
 }

 \journalreport{
 \begin{figure}[h!t]
    \footnotesize
\fbox{%
\parbox{\textwidth}{%
\[
    \infer[\rn{inline-wrap}]
     {\Theta_t,\eta_s,\Delta_s \vdash \rangerstep{(T,u)}{s}{(T',u')}{s'}{inline-wrap}}
    {
     \begin{gathered}
    \Theta_t,\eta_s,\Delta_s\vdash \rangerstep{(T,u)}{s}{(T',u')}{s'}{inline}
    \end{gathered}
    }
\]

\[
    \infer[\rn{inline-invoke}]
     {\Theta_t,\eta_s,\Delta_s\vdash \rangerstep{(T,u)}{\invoke{x}{r}{g}{y}}{(T^3,u')}{s^3; x:=\mu_s^3}{inline}}
    {
     \begin{gathered}
     r \in R \qquad
     ((\mu_s',\pi_r'), \overrightarrow{y'}.s')=\lookup{\Theta_t}{\lookup{\eta_s}{r},g}\qquad
    O'=\varnothing \qquad I'=\overrightarrow{y'}
    \qquad T'=getOutput(s')\\
      I',O'\vdash \rangerstepnolookup{((\mu_s',\pi_r',T',u),s')}{((\mu_s'',\pi_r'', T'', u'),s'')}{rename-wrap} \qquad \sigma = (I', \overrightarrow{y'})\\
          \sigma \vdash \rangerstepnolookup{((\pi_r'',\mu_s'',I'),s'')}{((\pi_r^3,\mu_s^3,I''),s^3)}{sub-wrap}
    \qquad T^3=T''\cup T
    \end{gathered}
    }
\]
\soha{\[
    \infer[\rn{inline-if}]
     {\Theta_t,\eta_s,\Delta_s\vdash \rangerstep{(T,u)}{\myif{e}{s_1}{s_2}}{(T'', u'')} {\myif{e}{s_1'}{s_2'}}{inline}}
    {
     \begin{gathered}
        \Theta_t,\eta_s,\Delta_s\vdash \rangerstep{(T,u)}{s_1}{(T',u')}{s_1'}{inline}
        \\
       \Theta_t,\eta_s,\Delta_s\vdash \rangerstep{(T',u')}{s_2}{(T'',u'')}{s_2'}{inline}
    \end{gathered}
    }
\]}
\[
 \infer[\rn{inline-comp}]
     {\Theta_t,\eta_s,\Delta_s\vdash \rangerstep{(T,u)}{s_1;s_2}{(T'',u'')}{s_1';s_2'}{inline}}
    { 
    \begin{gathered}
    \Theta_t,\eta_s,\Delta_s,\vdash \rangerstep{(T,u)}{s_1}{(T',u')}{s_1'}{inline}
    \\
    \Theta_t,\eta_s,\Delta_s\vdash \rangerstep{(T',u')}{s_2}{(T'',u'')}{s_2'}{inline}        
    \end{gathered}
    }
\]
\[
 \infer[\rn{inline-no-op}]
     {\Theta_t,\eta_s,\Delta_s \vdash \rangerstep{(T,u)}{s}{(T,u)}{s}{inline}}
    {  \func{noOp}_{inline}(s) }
\]
}}
\caption{Snippet of the Method-Inlining Rules}
\label{fig:inlinerules}
\end{figure}
 }
\techreport{
Here the $\noop{inline}{s}$ is defined as\\

\noindent$\noop{inline}{x:=e} = true$\\
$\noop{inline}{\getfield{x}{z}{f}} = true$\\
$\noop{inline}{\putfield{z}{f}{e}} = true$\\
$\noop{inline}{skip} = true$\\
$\noop{inline}{\newobj{x}{c}} = true$\\
$\noop{inline}{s} = false \qquad$ otherwise \\
}

%
\subsection{Method-Inlining Transformation is Sound}
\begin{lem}[transform $+$ rename $+$ substitution]
\label{lem:trans-rename-subs-sound}
Given a parametric statement $\overrightarrow{y}.s$, we have 
    $$((\mu_s',\pi_r'),s')=transform(s)$$
    $$\Theta \vdash \concsstep{(\eta, \Delta, H, \nu,\mu)}{s}{(\eta', \Delta', H', \nu', \mu')}{} $$
$$O'=\varnothing, \quad T=getOutput(s'), \quad I'=\overrightarrow{y}, \quad           \Delta_s'=(\overrightarrow{y}, \overrightarrow{x})
$$
$$    I',O'\vdash \rangerstepnolookup{((\pi_r',\mu_s',T,u),s')}{((\mu_s'',\pi_r'', T', u'),s'')}{rename-wrap}$$
$$I' \cap \overrightarrow{x} = \varnothing$$

then for any $\Delta^*$ such that 

$$
(\Delta^*) \models (\Delta_s',\pi)$$
$$\Delta\subseteq\Delta^*$$

         $$ %
 \Delta_s' \vdash \rangerstepnolookup{((\pi_r'',\mu_s'',I'),s'')}{((\pi_r^3,\mu_s^3,I''),s^3)}{sub-wrap}$$
 $$\Theta \vdash \concsstep{(\eta, \Delta^*, H, \nu,\mu)}{s^3}{(\eta'', \Delta'', H'', \nu'', \mu'')}{} $$

          %
then
$$\eta'=\eta'', H'=H'', \nu'=\nu'',\mu''=\mu'$$
$$ \mu'=evalReturn(\Delta'',\pi_r^3,\mu_s^3)$$
$$(\Delta'') \models (\Delta_s,\pi)$$

 \techreport{
    
    \begin{proof}
We structure the proof in three stages. First, we characterize the intermediate state produced by applying the {\tt transform} function. Next, we analyze the state resulting from the {\tt rename-wrap} judgment. Finally, we apply the {\tt sub-wrap} judgment, yielding a state that matches the lemma’s conclusion.

    First, we apply Lemma~\ref{thm:transform-sound}, such that
    
$$\Theta \vdash \concsstep{(\eta, \Delta, H, \nu, \mu)}{s}{(\eta', \Delta', H', \nu, \mu')}{} $$
$$((\mu_s',\pi_r'),s') =transform(s)$$
then, we get
$$\Theta \vdash \concsstep{(\eta, \Delta, H, \nu, \mu)}{s'}{(\eta^4, \Delta^4, H^4, \nu^4, \mu^4)}{} $$
$$\eta'=\eta^4,\Delta'=\Delta^4,H'=H^4,\nu'=\nu^4$$
$$
\mu' = evalReturn(\Delta^4,\pi_r',\mu_s'),  \mu^4=\bot
$$

Next, since we have that 

$$\Theta \vdash \concsstep{(\eta, \Delta, H, \nu, \mu)}{s'}{(\eta^4, \Delta^4, H^4, \nu^4, \mu^4)}{} $$
$$\mu^4=\bot$$
$$I',O'\vdash\rangerstep{(\pi_r',\mu_s',T,u)}{s'}{ \mu_s'',\pi_r'',T', u')}{s''}{rename-wrap}$$
then, we can apply renaming Theorem (Theorem~\ref{thm:renaming-wrap-sound}) to get
$$\Theta \vdash \concsstep{(\eta, \Delta, H, \nu, \mu)}{s''}{(\eta^5, \Delta^5, H^5, \nu^5, \mu^5)}{} $$
$$evalReturn(\Delta^4,\pi_r',\mu_s')=evalReturn(\Delta^5,\pi_r'',\mu_s'')$$

where $\eta^5=\eta^4=\eta', H^5=H^4=H', \nu^5=\nu^4=\nu'=\bot, \mu^5=\mu^4=\bot$.

Also, observe that we have established from before that $\mu' = evalReturn(\Delta^4,\pi_r',\mu_s')$, and we also have shown that $evalReturn(\Delta^4,\pi_r',\mu_s')=evalReturn(\Delta^5,\pi_r'',\mu_s'')$, thus by transitivity, we have $\mu'=evalReturn(\Delta^5,\pi_r'',\mu_s'')$

Next, consider the {\tt sub-wrap}. From the premise of the theorem we have that $\Delta^*$ be a concrete state such that $(\Delta^*) \models (\Delta_s',\pi)$ where $\Delta \subseteq \Delta^*$, and $\mu^5=\bot$.
Also, observe that from the statement theorem we know that $\Delta_s'$ is constructed as $\Delta_s'=(\overrightarrow{y},\overrightarrow{x})$, such that $I' \cap \overrightarrow{x}=\varnothing$,
and since $I'=\overrightarrow{y}$, thus if follows that  $\forall i \in I'.\forall x \in Dom(\Delta_s).\neg isUse(i,\Delta_s(x))$.

With all the above observations, we can now use the substitution (Corollary~\ref{cor:subs-wrap-stmt-sound}), such that we have 

$$ \Delta_s' \vdash \rangerstepnolookup{((\pi_r'',\mu_s'',I'),s'')}{((\pi_r^3,\mu_s^3,I''),s^3)}{sub-wrap}$$

we get

$$\Theta \vdash \concsstep{(\eta, \Delta^*, H, \nu,\mu)}{s^3}{(\eta^6, \Delta^6, H^6, \nu^6,\mu^6)}{}$$
$$\eta^6=\eta''=\eta^5, H^6=H''=H^5, \nu^6=\nu''=\nu^5,\mu^6=\mu''=\mu^5$$
$$\Delta^5\subseteq\Delta^6$$
$$\forall i \in I'.\neg isUse(i,s^3)$$
$$(\Delta^6) \models (\Delta_s,\pi)$$
$$\func{evalReturn}(\Delta^5,\pi_r'',\mu_s'')=\func{evalReturn}(\Delta^6,\pi_r^3,\mu_s^3)$$

Observe here that by transitivity we get, $\eta''=\eta^6=\eta^5=\eta^4=\eta',H''=H^6=H^5=H^4=H',\nu''=\nu^6=\nu^5=\nu^4=\nu',\mu''=\mu^6=\mu^5=\mu^4=\bot$ (conclusion 1). And we already established 
$(\Delta^6) \models (\Delta_s,\pi)$ (conclusion 2)

Finally, it follows directly from applying Corollary~\ref{cor:subs-wrap-stmt-sound}, i.e., $\func{evalReturn}(\Delta^5,\pi_r'',\mu_s'')=\func{evalReturn}(\Delta^6,\pi_r^3,\mu_s^3)$, since we have that $\Delta''=\Delta^6$, then by transitivity we have $\func{evalReturn}(\Delta^6,\pi_r^3,\mu_s^3)=\func{evalReturn}(\Delta^5,\pi_r'',\mu_s'')=evalReturn(\Delta^4,\pi_r',\mu_s')=\mu'$ (conclusion 3).

    \end{proof}
 }
\end{lem}

\begin{lem}[transform $+$ rename $+$ substitution - no input]
\label{lem:trans-rename-subs-sound-no-input}
Assume the premises of Corollary~\ref{lem:trans-rename-subs-sound},
where an additional local variable mapping $\Delta^{**}$ is defined as

\[
Dom(\Delta^{**}) = Dom(\Delta^*) \setminus I
\quad\text{and}\quad
\Delta^{**} \subset \Delta^*
\]
$$\Theta \vdash \concsstep{(\eta, \Delta^{**}, H, \nu,\mu)}{s^3}{(\eta''', \Delta''', H''', \nu''', \mu''')}{} $$

then
$$\eta'''=\eta''=
\eta', H'''=H''=H', \nu'''=\nu''=\nu',\mu'''=\mu''=\mu'$$
$$\Delta'''\subset \Delta''$$
$$ \mu'=evalReturn(\Delta''',\pi_r^3,\mu_s^3)$$
$$(\Delta''') \models (\Delta_s,\pi)$$
    \techreport{
    \begin{proof}
    The proof directly follow from applying Corollary~\ref{cor:sub-wrap-remove-I}.
    \end{proof}
    }
\end{lem}

%

\begin{thm}[Inlining recursive judgment is sound]
\label{thm:inline-recursive-sound}
    If
$$\Theta \vdash \concsstep{(\eta, \Delta, H, \nu,\mu)}{s}{(\eta', \Delta', H', \nu', \mu')}{} $$
$$\Theta_t,\eta_s,\Delta_s\vdash \rangerstep{(T,u)}{s}{(T',u')}{s'}{inline}$$
$$\eta_s\subseteq \eta$$
$$(\Delta) \models (\Delta_s,\pi)$$
$$\Theta \vdash \concsstep{(\eta, \Delta, H, \nu,\mu)}{s'}{(\eta'', \Delta'', H'', \nu'', \mu'')}{} $$

then

\begin{gather*}
\Delta'\subseteq \Delta'' \tag{1}\\
\eta'=\eta'',H''=H',\nu''=\nu',\mu''=\mu' \tag{2}\\
\eta_s \subseteq \eta'' \tag{3}\\
(\Delta'') \models (\Delta_s,\pi) \tag{4}
\end{gather*}

\techreport{
\begin{proof}
    We proceed by induction on the inlining rules.

    First, consider the case {\tt inline-no-op}, and {\tt inline-invoke-stop} rules. Since inlining introduces no changes to the statements or environments in either case, then, we have
\[
\Theta_t,\eta_s,\Delta_s \vdash \rangerstep{(T,u)}{s}{(T,u)}{s}{inline}
\quad \text{where } \func{noOp}_{inline}(s) \text{ holds,}
\]
and therefore $s' = s$. Executing $s$ gives the concrete step
\[
\Theta \vdash \concsstep{(\eta,\Delta,H,\nu,\mu)}{s}{(\eta',\Delta',H',\nu',\mu')}{},
\]
and since $s = s'$, we obtain $\eta' = \eta''$, $H' = H''$, and $\nu' = \nu''$. Furthermore, because $\func{noOp}_{inline}(s)$ holds, $s$ contains no {\tt return} statement (plus the guarantee from the remove final-returns (Property~\ref{prop:final-ret-no-return}), and thus $\mu' = \bot = \mu$ (conclusion 2). Since we have that $\Delta'' = \Delta'$, it follows that $\Delta' \subseteq \Delta''$ (conclusion 1). 

Also, we have established that $\eta\subseteq\eta''$, and since $\eta_s\subseteq \eta$ (from the hypothesis), then by transitivity, we conclude that the $\eta_s \subseteq \eta''$ (conclusion 3).

Finally, we have $(\Delta) \models (\Delta_s,\pi)$ from the premise, and we know from Lemma~\ref{lem:delta-prime-sat} that $\Delta \subseteq \Delta'$. Also, we have just shown that $\Delta' \subseteq \Delta''$, then that means that evaluating $\pi$ and all expressions mapped to within $\Delta_s$ results in the same values, thus we can conclude that $(\Delta'') \models (\Delta_s, \pi)$.

Now, consider the {\tt inline-if} rule. Here we have that 

$$\Theta_t,\eta_s,\Delta_s\vdash \rangerstep{(T,u)}{\myif{e}{s_1}{s_2}}{(T_2, u_2)} {\myif{e}{s_1'}{s_2'}}{inline}$$

such that $\Theta_t,\eta_s,\Delta_s\vdash \rangerstep{(T,u)}{s_1}{(T_1,u_1)}{s_1'}{inline}$, and $
       \Theta_t,\eta_s,\Delta_s\vdash \rangerstep{(T_1,u_1)}{s_2}{(T_2,u_2)}{s_2'}{inline}$.

On the other hand, using the concrete semantics either the {\tt if-phi-true} or {\tt if-phi-false} can only apply depending on the evaluation of $e$. We will show the proof for the {\tt if-phi-true} case, but the proof for the {\tt if-phi-false} case works similarly. Note that, no return statement can exist in $s$  due to the guarantee from the remove final-returns transformation (Property~\ref{prop:final-ret-no-return}) thus other rules for concretely executing an if-statement can be ignored. 

First, consider the \tw{if-phi-true} case. Then we have that 
$\vdashconcestep{\Delta}{e}{true}$ and  $\concsstep{(\eta,\Delta, H, \nu,\mu)}{s_1}{(\eta_1,\Delta_1, H_1, \nu_1,\mu_1)}{}$. Observe here that, $\eta_1=\eta', \Delta_1=\eta', H_1=H', \nu_1=\nu',\mu_1=\mu'$.
Applying the induction hypothesis to $s_1$, we have that 
$$\concsstep{(\eta,\Delta, H, \nu,\mu)}{s_1'}{(\eta_1',\Delta_1', H_1', \nu_1',\mu_1')}{}$$ where $\eta_1'=\eta_1,H_1'=H_1, \nu_1'=\nu_1, \mu_1'=\bot$ (conclusion 2), $\Delta_1\subseteq \Delta_1'$ (conclusion 1) and $\eta_s\subseteq \eta_1'$ (conclusion 3). Also, observe here that $\eta_1'=\eta'', \Delta_1'=\Delta'', H_1'=H'', \nu_1'=\nu'',\mu_1'=\mu''$. Thus, by transitivity we have our conclusions satisfied.
Finally, we have $(\Delta) \models (\Delta_s,\pi)$ from the premise, and we know from Lemma~\ref{lem:delta-prime-sat} that $\Delta \subseteq \Delta_1$. Also, we have just shown that $\Delta_1 \subseteq \Delta_1'$, then that means that evaluating $\pi$ and all expressions mapped to within $\Delta_s$ results in the same values using $\Delta_1'$, thus we can conclude that $(\Delta_1') \models (\Delta_s, \pi)$ (conclusion 4).

Next, consider the {\tt inline-comp} rule. In this case we have 
$$\Theta_t,\eta_s,\Delta_s\vdash \rangerstep{(T,u)}{s_1;s_2}{(T'',u'')}{s_1';s_2'}{inline}$$
    
 where $\Theta_t,\eta_s,\Delta_s,\vdash \rangerstep{(T,u)}{s_1}{(T',u')}{s_1'}{inline}$, and $\Theta_t,\eta_s,\Delta_s\vdash \rangerstep{(T',u')}{s_2}{(T'',u'')}{s_2'}{inline}$. Again, due to the guarantee from the remove final-returns transformation (Property~\ref{prop:final-ret-no-return}), we know that $s_1;s_2$ cannot contain return-statements, and none of the inlining rules generate a return-statement, so a return-statement cannot be in $s_1'$, nor $s_2'$. Then, then the only concrete rule that applies is the {\tt composition} rule. Thus, we have that $ \Theta\vdash \concsstep{(\eta, \Delta, H, \nu,\mu)}{s_1;s_2}{(\eta',\Delta', H', \nu',\mu')}{}$ such that $ \Theta\vdash \concsstep{(\eta, \Delta, H, \nu,\mu)}{s_1}{(\eta_1, \Delta_1, H_1, \nu_1,\mu)}{}$ and $\Theta\vdash \concsstep{(\eta_1, \Delta_1, H_1, \nu_1,\mu)}{s_2}{(\eta',\Delta', H', \nu',\mu')}{}$. By applying the induction hypothesis on $s_1$ we get $ \Theta\vdash \concsstep{(\eta, \Delta, H, \nu,\mu)}{s_1'}{(\eta_1', \Delta_1', H_1', \nu_1',\mu_1')}{}$, such that $\eta_1'=\eta_1,H_1'=H_1,\nu_1'=\nu_1,\mu_1'=\bot$, $\Delta_1 \subseteq \Delta_1'$ and $\eta_s\subseteq\eta_1'$. Now, executing $s_2'$ in the resulting state we get $ \Theta\vdash \concsstep{(\eta_1', \Delta_1', H_1', \nu_1',\mu_1')}{s_2'}{(\eta_2, \Delta_2, H_2, \nu_2,\mu_2)}{}$. By applying the induction hypothesis on $s_2$, we get $\eta_2=\eta',H_2=H',\nu_2=\nu',\mu_2=\bot$, $\Delta' \subseteq \Delta_2$ and $\eta_s\subseteq\eta_2$. Here we have that $\Delta_2=\Delta'',H_2=H'',\nu_2=\nu'',\mu_2=\mu''$, we get by transitivity $\eta''=\eta_2=\eta',H''=H_2=H',\nu''=\nu_2=\nu',\mu''=\mu_2=\bot$ (conclusion 3), $\Delta' \subseteq \Delta''$ (conclusion 1) and $\eta_s\subseteq \eta_2=\eta''$ (conclusion 3).
Finally, we have $(\Delta) \models (\Delta_s,\pi)$ from the premise, and we know from Lemma~\ref{lem:delta-prime-sat} that $\Delta \subseteq \Delta_1' \subseteq \Delta'$. Also, we have just shown that $\Delta' \subseteq \Delta''$, then that means that evaluating $\pi$ and all expressions mapped to within $\Delta_s$ using $\Delta'$, must reduce to the same values, thus we can conclude that $(\Delta'') \models (\Delta_s, \pi)$ (conclusion 4).

Finally, consider the {\tt inline-invoke} rule. In this case we have that 
$$\Theta_t,\eta_s,\Delta_s\vdash \rangerstep{(T,u)}{\invoke{x_1}{r}{g}{x_2}}{(T_3,u_1)}{s^3; x_1:=\mu_s^3}{inline}$$

such that 

$$r \in R$$
     $$((\mu_s',\pi_r'), \overrightarrow{y}.s_1)=\lookup{\Theta_t}{\lookup{\eta_s}{r},g}$$
    $$O'=\varnothing \qquad I'=\overrightarrow{y} \qquad T_1=getOutput(s_1)$$
      $$I',O'\vdash \rangerstepnolookup{((\mu_s',\pi_r',T_1,u),s_1)}{((\mu_s'',\pi_r'', T_1', u'),s_2)}{rename-wrap}$$
         $$\Delta_s'=(\overrightarrow{y}, \overrightarrow{x_2})$$ 
          $$\Delta_s' \vdash \rangerstepnolookup{((\mu_s'', \pi_r'',I'),s_2)}{(( \mu_s^3,\pi_r^3,I''),s_3)}{sub-wrap}$$
    $$ T_1''=T_1' \cup T$$

In this case, we have that $s'=s_3; x_1:=\mu_s^3, T'=T_1''$.
Executing the original statement concretely, we get

$$\Theta\vdash\concsstep{(\eta,\Delta, H, \nu,\mu)}{\invoke{x_1}{r}{g}{x_2}}{(\eta', \Delta',H',\nu',\mu')}{}$$

such that 
$$ r \in R$$
    $$\lookup{\Theta}{\lookup{\eta}{r},g} = \overrightarrow{y'}.s_1'$$
    $$\vdashconcestep{\Delta}{\overrightarrow{e}}{\overrightarrow{v'}}$$
    $$\Delta_t=\{(\overrightarrow{y'},\overrightarrow{v'})\}$$
    $$\Theta\vdash \concsstep{(\eta,\Delta_t,H,\nu,\mu)}{s_1'}{(\eta', \Delta_t',H',\nu',\mu')}{}$$
    $$\Delta'=\update{\Delta}{x_1}{\mu'}$$
    $$\mu' \neq \bot$$

    Since we have that $\eta_s\subseteq\eta$, then we know that $\eta(r)=\eta_s(r)=c$, for a class $c$. 
    
    Now, recall that $\lookup{\Theta}{c,g} = \overrightarrow{y'}.s_1'$, while the $\Theta_t$, according to its definition, is the transformed version of each class-method pair, if 
    $$((\mu_s',\pi_r'),s_1)=transform(s_1')$$
    then $((\mu_s',\pi_r'),\overrightarrow{y}.s_1) \in \Theta_t$
    according to these definitions, we shown that  $\overrightarrow{y'}=\overrightarrow{y}$
    
    Next, observe that the premise of the {\tt inline-invoke} rule is already matching up with the premise of Lemma~\ref{lem:trans-rename-subs-sound}, more importantly that we have, $O'=\varnothing, I'=\overrightarrow{y}, T_1=getOutput(s_1)$ and that $\Delta_s'=(\overrightarrow{y}, \overrightarrow{x_2})$.

    It remains to show that $I' \cap \overrightarrow{x_2} = \varnothing$. Once again, referring to the remove final-returns transformation guaranteed property (Property~\ref{prop:final-ret-no-return}), we know that all variables in $\overrightarrow{x_2}$ must have been renamed as part of renaming of the statement containing $s$. On the other hand, we know that $\overrightarrow{y}$, must not have been renamed before. More precisely, for all variables $y$ in $\overrightarrow{y}$, $y$ must be of the form $(a,\bot)$, whereas all variables $x$ in $\overrightarrow{x_2}$, must be of the form $(a,n)$, where $n$ is a constant int. Because of mismatched suffixes within the variable names, we can conclude that $I' \cap \overrightarrow{x_2} = \varnothing$.

    Also, observe that $(\Delta) \models (\Delta_s,\pi)$ by the premise of the Theorem. 
    Now let us choose $\Delta^{*}=\Delta\cup\Delta_t$. Here, we can also see that $(\Delta^*) \models (\Delta_s',\pi)$, this is because, we know that $\Delta_s'$ is constructed as $\Delta_s'=(\overrightarrow{y}, \overrightarrow{x_2})$, and since we have that $(\Delta) \models (\Delta_s,\pi)$, which means that evaluating expressions in $\overrightarrow{x_2}$ must evaluate in $\Delta$ must yield $\concestep{\Delta}{\overrightarrow{x_2}}{\overrightarrow{v}}{}$, such that $\Delta^{*}(\overrightarrow{y})=\overrightarrow{v}$, thus we can conclude that $(\Delta^{*}) \models (\Delta_s',\pi)$. And obviously we have $\Delta\subset \Delta^{*}$ by construction. 
    Now, observe that $\Delta$ satisfies the condition of $\Delta^{**}$ in Lemma~\ref{lem:trans-rename-subs-sound-no-input}. Thus, we can apply Lemma~\ref{lem:trans-rename-subs-sound-no-input} to conclude that 

    $$\Theta \vdash \concsstep{(\eta, \Delta, H, \nu,\mu)}{s_3}{(\eta''', \Delta''', H''', \nu''', \mu''')}{} $$
    $$\eta'''=\eta', H'''=H', \nu'''=\nu',\mu'''=\bot$$
    $$ \mu'=evalReturn(\Delta''',\pi_r^3,\mu_s^3)$$

    Recall that the rewritten statement was of the form $s'=s^3;x_1:=\mu_s^3$. And so far we have evaluated the first part of $s'$, i.e., the concrete execution of $s^3$. Now, we apply the concrete composition rule to evaluate the remaining part of $s'$; i.e., the assignment statement ($x:=\mu_s^3$).

    $$\Theta \vdash \concsstep{(\eta''', \Delta''', H''', \nu''', \mu''')}{x:=\mu_s^3}{(\eta''', \Delta^4, H''', \nu''', \mu''')}{} $$
    
    observe that we have $\eta''==\eta'''=\eta'$, $H''=H'''=H'$, $\nu''=\nu'''=\nu'$, $\mu''=\bot$ (conclusion 2).

    Now observe the following
    
    \begin{itemize}
        \item $\pi_r'$, must be the value $true$, by the construction of the the $\Theta_t$ (see Fig.~\ref{fig:transform}).
        \item $\pi_r'',\pi_r^3=\pi_r'=true$, this is because both the rename judgment for expression not the substitution judgment for expression change concrete values (see rule {\tt rename-val}, and {\tt sub-val}).
        \item since $\pi_r^3=true$, then according to the definition of $\func{evalReturn}$, we have that 
        $\func{evalReturn}(\Delta''',\pi_r^3,\mu_s^3)=\mu'$, this then implies that $\mu_s^3$ must evaluate to $\mu'$, i.e., $\concestep{\Delta'''}{\mu_s^3}{\mu'}{}
        $
        \item finally, observe that $\Delta^4=\Delta'''[x_1 \leftarrow \mu']$, while $\Delta'=\Delta[x_1\leftarrow \mu']$, and since we have that $\Delta\subseteq\Delta''' $ (using Lemma~\ref{lem:delta-ssa}), and that $\Delta'''\subseteq \Delta^4$, then updating $\Delta$ and $\Delta^4$ in the same way would maintain the same subset relation, i.e., $\Delta'\subset\Delta^4=\Delta''$ (conclusion 1)
    \end{itemize}
    
Now, we want to show that $\eta_s \subseteq \eta'''$. We have established from before that $\eta'''=\eta'$, and from the statement of the lemma we have $\eta_s \subseteq \eta$, and from Lemma~\ref{lem:delta-ssa} $\eta \subseteq \eta'''$, then by transitivity we can conclude that $\eta_s \subseteq \eta'''$ (conclusion 3).

Finally, we have $(\Delta) \models (\Delta_s,\pi)$ from the premise, and we know from Lemma~\ref{lem:delta-prime-sat} that $\Delta \subseteq \Delta'$. Also, we have just shown that $\Delta' \subseteq \Delta''=\Delta^4$, then that means that evaluating $\pi$ and all expressions mapped to within $\Delta_s$ using $\Delta''$, must reduce to the same values, thus we can conclude that $(\Delta'') \models (\Delta_s, \pi)$ (conclusion 4).
\end{proof}
}
\end{thm}

\begin{lem}[Inlining Transformation is Sound]
\label{lem:inline-wrapper-sound}
If
$$\Theta \vdash \concsstep{(\eta, \Delta, H, \nu,\mu)}{s}{(\eta', \Delta', H', \nu', \mu')}{} $$
$${\Theta_t,\eta_s,\Delta_s \vdash \rangerstep{(T,u)}{s}{(T',u')}{s'}{inline-wrap}}$$
$$\eta_s\subseteq \eta$$
$$\mu^*=\func{evalReturn}(\Delta',\pi_r,\mu_s)$$
$$(\Delta) \models (\Delta_s,\pi)$$
$$\Theta \vdash \concsstep{(\eta, \Delta, H, \nu,\mu)}{s'}{(\eta'', \Delta'', H'', \nu'', \mu'')}{} $$

then
$$\Delta'\subseteq \Delta'' $$
$$\eta'=\eta'',H''=H',\nu''=\nu',\mu''=\mu'$$
$$\eta_s \subseteq \eta''$$
$$\mu^*=\func{evalReturn}(\Delta'',\pi_r,\mu_s)$$
$$(\Delta'') \models (\Delta_s,\pi)$$

\techreport{
\begin{proof}
The proof for the first four conclusions directly fall from applying Theorem~\ref{thm:inline-recursive-sound}.

To show that $\mu^*=\func{evalReturn}(\Delta',\pi_r,\mu_s)$, recall that $\mu_s$, $\pi_r$ has not been changed by the inlining transformation, moreover, we have proved that $\Delta'\subseteq\Delta''$, thus evaluating $\mu_s$ and $\pi_r$ under $\Delta''$ must reduce to the same value $\mu^*$, same reasoning applies for showing that $(\Delta'') \models(\Delta_s,\pi)$ holds.
\end{proof}
    
}
\end{lem}

\subsection{Method Inlining Properties}

This transformation assumes all the guaranteed properties from pervious transformations, $\gamma$-creation, remove final-returns, renaming and substitution transformations. 
In addition, it also assumes this property which is enforced by ensuring that the method transformation must have proceeded the method-inlining transformation (Property~\ref{prop:inline-theta-is-transformed}). The method-transform is a static transformation that is done on all methods to populate $\Theta_t$ with the corresponding method statements, after applying $\gamma$-creation, early-returns elimination and remove final-returns transformations. Refer to section~\ref{sec:statictransform} for more detail about this static transformation.


Now assuming we have that 
$$\Theta_t,\eta_s,\Delta_s \vdash \rangerstep{(T,u)}{s}{(T'',u')}{s'}{inline-wrap}$$

such that 
$$\Delta_s'=\varnothing \qquad
    \Theta_t,\eta_s,\Delta_s'\vdash \rangerstep{(T,u)}{s}{(T',u')}{s'}{inline}
    \qquad T''=T\cup T'$$

Then we have the following guarantees

\begin{itemize}
    \item Rewritten Statement $s'$ satisfies the SSA property (Property~\ref{prop:ssa}).
    More formally, $SSA(s')$ holds.
        \techreport{
        \begin{proof}
            We proceed by induction over the method-inlining transformation rules.
        Consider the rule \tw{inline-if}.
        Then, from the induction hypothesis, 
        we get that $\Gamma \vdash (\mu_s,s_1) \to_{inline} (\mu_s',s_1')$ implies that the SSA property holds for $s_1'$, likewise, we get that
        $\Gamma \vdash (\mu_s,s_2) \to_{inline} (\mu_s',s_2')$ implies that the SSA property holds for $s_2'$.
        This immediately gives that $SSA(if \,\, e \,\, then \,\, s_1' \,\, else \,\, s_2')$ by definition of $SSA$.

        Consider the rule \tw{inline-invoke}.
        By the assumption that all invoked methods obey the SSA property, $s$ obeys the SSA property. 
        We then have the result that \emph{rename-wrap} preserves SSA, and generates variable names that are tuples of an Id and a numeral (and as such are disjoint from $x$), so $s''$ obeys the SSA property.
        Further, by definition of SSA, we have that any sub-statement of $s''$ obeys SSA, so $s'$ obeys SSA.
        By the assumption that the input substitution transformation preserves the SSA property, $s^3$ obeys the SSA property. 
        By the inductive hypothesis, $s^4$ obeys the SSA property.
    
        Consider \tw{inline-invoke-stop}.
        In this case, the statement is not transformed, so the result holds trivially.
        
        Consider \tw{inline-return}. 
        $SSA(skip)$ and $SSA(return \,e)$  hold by definition of $SSA$, so SSA is preserved.
        \end{proof}
        }
\end{itemize}

\section{$TR_7$: Field SSA Transformation}
\label{sec:field}
In this transformation, field statements ({\tt getfield}, {\tt putfield}) are eliminated and replaced by a sequence of assignment statements that satisfies the SSA property.
Fig.~\ref{fig:fieldSSA} shows the rules for the field transformation. 
The rules define two judgments. The judgment $\Delta_s, H_s \vdash 
\rangerstep{(T, P, \bot)}{s}{(T', P', p)}{s'}{field}$ rewrites an IR 
statement $s$ into $s'$, where $H_s$ is the symbolic heap, $P$ is the 
path-subscript map tracking the latest update to each reference-field pair 
along the current local path in $s$, and $p$ is the local path condition 
guarding reference-field updates. Initially, $P$ agrees with $H_s$ and $p = 
\bot$. The updated map $P'$ and flag $p'$ reflect the state of field accesses 
after rewriting. The wrapper judgment $\Delta_s, H_s \vdash 
\rangerstep{(I, O, T)}{s}{(I, O, T'')}{(\mathtt{addGets}(P);\mathit{skip};s';\mathit{skip};\mathtt{addPuts}(P'))}{field\text{-}wrap}$ 
is the entry point of the transformation, initializing $P$ from $H_s$ and 
triggering the field judgment on $s$, then wrapping the result with 
$\mathtt{addGets}(P)$ and $\mathtt{addPuts}(P')$ to load and store field 
values at the boundaries of the rewritten region.

More precisely, the rule~{\tt field-wrap} is the entry point for the field transformation. It 
initializes the field map $P$ via $\mathtt{initialP}(H_s)$, which assigns a 
fresh temporary variable $t$ to each field-reference pair $(r, f)$ in the 
domain of $H_s$, representing the latest SSA variable carrying the value of 
that field along the current path. These fresh temporaries are added to $T$ to 
produce $T'$, and the field transformation is then applied to $s$ under $T'$ 
and $P$, yielding a rewritten statement $s'$, an updated field map $P'$, and 
a path condition $p$. The rule wraps $s'$ with $\mathtt{addGets}(P)$ before 
and $\mathtt{addPuts}(P')$ after, which emit the necessary \texttt{getfield} 
and \texttt{putfield} instructions to load field values into their corresponding 
temporaries before execution and write them back after, respectively. The output 
temporary set $T''$ reflects all temporaries introduced during the transformation.

The wrapping of $s'$ with $\mathtt{addGets}(P)$ and $\mathtt{addPuts}(P')$ 
serves an important purpose beyond a single pass. Since field references are 
not always resolved to concrete references in one iteration, the transformation 
may require multiple fixpoint iterations until all reference variables are 
concretized. The wrapping captures the progress made in the current iteration: 
$\mathtt{addGets}(P)$ records the getfield eliminations already performed, and 
$\mathtt{addPuts}(P')$ records the putfield eliminations, so that if a later 
iteration resolves a previously unresolved reference to a concrete one, the 
transformation can resume from where it left off rather than restarting from 
scratch. The \texttt{skip} statements surrounding $s'$ in the residual serve 
as delimiters for pattern matching: in subsequent iterations, the field 
transformation is applied only to $s'$ and not to the entire residual statement. 
This is essential to avoid an infinite loop, as re-applying the transformation 
to $\mathtt{addGets}(P)$ or $\mathtt{addPuts}(P')$ would cause their 
instructions to be treated as new field accesses requiring further elimination.

The $\bot$ flag in the field judgment $\rangerstep{(T',P,\bot)}{s}{(T'',P',p)}{s'}{field}$ 
acts as a control flag that governs whether field elimination should continue. 
As long as the flag is $\bot$, the transformation proceeds, eliminating 
\texttt{getfield} and \texttt{putfield} statements and updating $P$ accordingly. 
However, upon encountering a method invocation, Rule~{\tt field-invoke} 
immediately flips the flag to $\top$ and halts further field elimination, 
flushing the current field map by prepending $\mathtt{addPuts}(P)$ before the 
invoke statement. This is necessary because inlining the invoked method may 
introduce new concrete references that affect the SSA assignments of subsequent 
statements, and proceeding with elimination before those references are resolved 
would violate the sequentialization of SSA. Once the flag is set to $\top$, it 
propagates through compound statements: Rule~{\tt field-if-2, field-if-3, field-if-4} and the 
corresponding sequential composition rule (\soha{{\tt field-comp}}) propagate $\top$ to 
prevent any further elimination in the remainder of the statement, ensuring 
that field elimination resumes only in the next fixpoint iteration after 
inlining has been applied.

The core of the field transformation lies in the elimination of 
\texttt{getfield} and \texttt{putfield} statements, replacing them with 
simple variable assignments using the field map $P$.

Rule~{\tt field-put} eliminates a \texttt{putfield} instruction by introducing 
a fresh temporary variable $t$ and updating the field map to $P' = P[(r,f) 
\leftarrow t]$, recording that $t$ now carries the latest value of the 
field-reference pair $(r,f)$. The \texttt{putfield} is replaced by the 
assignment $t := e$. Rule~{\tt field-get} eliminates a \texttt{getfield} 
instruction by looking up the latest temporary $t$ associated with $(r,f)$ 
in $P$ and replacing the \texttt{getfield} with the assignment $x := t$, 
leaving $P$ unchanged. Both rules require that $r$ is a concrete reference, 
i.e., $r \in R$, if references has been resolved to their concrete values, otherwise, no rewriting is performed, \soha{i.e, {\tt field-get-var} (also {\tt field-put-var} see our technical report~\cite{techreport} for complete set of rules), instead, further field statements are prevented using the $\top$ flag.}

Finally, the rule~{\tt field-if-1} handles the case where field elimination succeeds 
completely in both branches of a conditional, i.e., neither branch encounters 
a method invocation. Each branch is processed independently under the same 
initial field map $P$, yielding rewritten statements $s_1'$ and $s_2'$ and 
updated field maps $P'$ and $P''$ respectively. Since the two branches may 
assign different temporaries to the same field-reference pair $(r,f)$, the 
maps $P'$ and $P''$ must be reconciled after the conditional. This is handled 
by the $\mathtt{merge}$ function, which compares $P'$ and $P''$ and, for each 
field-reference pair $(r,f)$ where the two branches disagree, introduces a 
fresh temporary $t$ assigned to $\gammaexp{e}{t_1}{t_2}$, where $t_1 = P'(r,f)$ 
and $t_2 = P''(r,f)$ are the temporaries from the left and right branches 
respectively. The merge proceeds recursively until all disagreements are 
resolved, producing a unified field map $\hat{P}$ and a sequence of $\gamma$-assignments 
$\hat{s}$ appended after the conditional. If both branches produce identical 
field maps, $\mathtt{merge}$ returns $\mathtt{skip}$ and leaves $P$ unchanged.


\techreport{
\begin{figure}[h!t]
    \footnotesize
\fbox{%
\parbox{\textwidth}{%
$$
\begin{gathered}
merge(e,P,P) = (skip, P) \\
merge(e,P_1,P_2) \text{ when } P_1 \neq P_2 \\
\text{Let } (r,f) \in Dom(P_1), \text{ and } (r,f) \in Dom(P_2) \text{ s.t. } P_1(r,f) \neq P_2(r,f), \\
\indent \text{ Let } t_1=P_1(r,f)\qquad t_2=P_2(r,f)\\
P_1'=P_1[(r,f) \leftarrow t] 
\qquad P_2'=P_2[(r,f) \leftarrow t]
\\(s_r', P_r')=merge(e, P_1', P_2'), \text{fresh } t.\\
(t:=\gammaexp{e}{t_1}{t_2};s_r', P_r')
\end{gathered}
\vspace{5pt}
$$
\hrule
\hrule
\vspace{5pt}
\[
\begin{gathered}
    initialP(H_s)= \{((r,f), t) \mid (r,f) \in Dom(H_s),  \text{ fresh } t\}\\
\end{gathered}
\]
\[
\begin{gathered}
    addGets(P)=\getfield{t_1}{r_1}{f_1};\ldots;\getfield{t_n}{r_n}{f_n}\quad  \forall ((r_i,f_i),t_i) \in P, 1 \le i \le n,  n=|P| \neq 0\\
    addGets(\varnothing)=skip 
\end{gathered}
\]
\[
\begin{gathered}
    addPuts(P)=\putfield{r_1}{f_1}{t_1};\ldots;\putfield{r_n}{f_n}{t_n} \quad 
    \forall ((r_i,f_i),t_i) \in P, 1 \le i \le n,  n=|P| \neq 0\\
    addPuts(\varnothing)=skip 
\end{gathered}
\]
\vspace{5pt}
\hrule
\hrule
\vspace{10pt}
\[
\infer[\rn{field-wrap}]
 {\Delta_s, H_s \vdash \rangerstep{(I,O,T)}{s}{(I,O,T'')}{(addGets(P); skip; s'; skip; addPuts(P')}{field-wrap}}
{ 
\begin{gathered}
P=initialP(H_s) \qquad
T'=T\cup Range(P) \qquad
\Delta_s, H_s\vdash\rangerstep{(T',P,\bot)}{s}{(T'',P',p)}{s'}{field}\\
\end{gathered}
}
\]
\hrule
\hrule
\vspace{8pt}
$$
\infer[\rn{field-no-op}]
 {\Delta_s, H_s \vdash \rangerstep{(T,\varnothing,\top)}{s}{(T,\varnothing,\top)}{s}{field}}
{}
$$

$$
\infer[\rn{field-assign}]
 {\Delta_s, H_s \vdash \rangerstep{(T,P,p)}{x:=e}{(T,P,p)}{x:=e}{field}}
{}
$$

$$
\infer[\rn{field-put}]
 {\Delta_s, H_s\vdash\rangerstep{(T,P,\bot)}{\putfield{r}{f}{e}}{(T\cup \{t\}, P',\bot)}{t := e}{field}}
{\begin{gathered}
r \in R
\qquad
P'=P[(r,f)\leftarrow t] \qquad {\text{fresh\;} t}
\end{gathered}}
$$

$$
\infer[\rn{field-put-var}]
 {\Delta_s, H_s\vdash\rangerstep{(T,P,\bot)}{\putfield{z}{f}{e}}{(T, \varnothing,\top)}{addPuts(P);\putfield{z}{f}{e}}{field}}
{\begin{gathered}
z \notin R 
\end{gathered}}
$$

\[
\quad\infer[\rn{field-get}]
{\Delta_s, H_s \vdash \rangerstep{(T,P,\bot)}{\getfield{x}{r}{f}}{(T,P,\bot)}{x:=t}{field}}
{\begin{gathered}
r \in R \qquad t = \lookup{P}{r,f}
\end{gathered}}
\]

\[
\quad\infer[\rn{field-get-var}]
{\Delta_s, H_s \vdash \rangerstep{(T,P,\bot)}{\getfield{x}{z}{f}}{(T,\varnothing,\top)}{addPuts(P);\getfield{x}{z}{f}}{field}}
{\begin{gathered}
z \notin R 
\end{gathered}}
\]

\[
\quad\infer[\rn{field-invoke}]
{\Delta_s, H_s \vdash \rangerstep{(T,P,\bot)}{\invoke{x}{z}{\gamma}{y}}{(T,\varnothing,\top)}{addPuts(P);\invoke{x}{z}{\gamma}{y}}{field} } 
{}
\]

$$
\infer[\rn{field-comp}]
 {\Delta_s, H_s \vdash \rangerstep{(T, P,\bot)}{s_1;s_2}{(T'',P'',p')}{s_1';s_2'}{field}}
{ \Delta_s, H_s\vdash\rangerstep{(T,P,\bot)}{s_1}{(T',P',p)}{s_1'}{field}
\\
\Delta_s, H_s\vdash\rangerstep{(T',P',p)}{s_2}{(T'',P'',p')}{s_2'}{field}
}
$$
$$
\infer[\rn{field-if-1}] 
 {\Delta_s, H_s \vdash \rangerstep{(T,P,\bot)}{\myif{e}{s_1}{s_2}}{(T'',\hat{P},\bot)}{\myif{e}{s_1'}{s_2'};\hat{s}}{field}}
{ 
\begin{gathered}
(\hat{s},\hat{P}) = merge(e,P',P'')
\\
\Delta_s, H_s\vdash\rangerstep{(T,P,\bot)}{s_1}{(T',P',\bot)}{s_1'}{field}
\qquad
\Delta_s, H_s\vdash\rangerstep{(T',P,\bot)}{s_2}{(T'',P'',\bot)}{s_2'}{field}
\end{gathered}
}
$$
$$
\infer[\rn{field-if-2}] 
 {\Delta_s, H_s\vdash \rangerstep{(T,P,\bot)}{\myif{e}{s_1}{s_2}}{(T'',\varnothing,\top)}{\myif{e}{s_1'}{s_2'}}{field}}
{ 
\begin{gathered}
\Delta_s, H_s\vdash\rangerstep{(T,P,\bot)}{s_1}{(T',\varnothing,\top)}{s_1'}{field}
\qquad
\Delta_s, H_s \vdash\rangerstep{(T',P,\bot)}{s_2}{(T'',\varnothing,\top)}{s_2'}{field}
\end{gathered}
}
$$

$$
\infer[\rn{field-if-3}] 
 {\Delta_s, H_s\vdash \rangerstep{(T,P,\bot)}{\myif{e}{s_1}{s_2}}{(T'',\varnothing,\top)}{\myif{e}{s_1'}{(s_2';addPuts(P''))}}{field}}
{ 
\begin{gathered}
\Delta_s, H_s\vdash\rangerstep{(T,P,\bot)}{s_1}{(T',\varnothing,\top)}{s_1'}{field}
\qquad
\Delta_s, H_s \vdash\rangerstep{(T',P,\bot)}{s_2}{(T'',P'',\bot)}{s_2'}{field}
\end{gathered}
}
$$

$$
\infer[\rn{field-if-4}] 
 {\Delta_s, H_s \vdash \rangerstep{(T,P,\bot)}{\myif{e}{s_1}{s_2}}{(T'',\varnothing,\top)}{\myif{e}{(s_1';addPuts(P'))}{s_2'}}{field}}
{ 
\begin{gathered}
\Delta_s, H_s\vdash\rangerstep{(T,P,\bot)}{s_1}{(T',P',\bot)}{s_1'}{field}
\qquad
\Delta_s, H_s \vdash\rangerstep{(T',P,\bot)}{s_2}{(T'',\varnothing,\top)}{s_2'}{field}
\end{gathered}
}
$$

}}
\caption{Field SSA Transformation Rules}
\label{fig:fieldSSA}
\end{figure}

}

\journalreport{
\begin{figure}[h!t]
    \footnotesize
\fbox{%
\parbox{\textwidth}{%
$$
\begin{gathered}
merge(e,P,P) = (skip, P) \\
merge(e,P_1,P_2) \text{ when } P_1 \neq P_2 \\
\text{Let } (r,f) \in Dom(P_1), \text{ and } (r,f) \in Dom(P_2) \text{ s.t. } P_1(r,f) \neq P_2(r,f), \\
\indent \text{ Let } t_1=P_1(r,f)\qquad t_2=P_2(r,f)\\
P_1'=P_1[(r,f) \leftarrow t] 
\qquad P_2'=P_2[(r,f) \leftarrow t]
\\(s_r', P_r')=merge(e, P_1', P_2'), \text{fresh } t.\\
(t:=\gammaexp{e}{t_1}{t_2};s_r', P_r')
\end{gathered}
$$
\hrule
\hrule
\vspace{5pt}
\begin{align*}
    initialP(H_s)&= \{((r,f), t) \mid (r,f) \in Dom(H_s),  \text{ fresh } t\}\\
    addGets(P)&=\getfield{t_1}{r_1}{f_1};\ldots;\getfield{t_n}{r_n}{f_n}\quad  \forall ((r_i,f_i),t_i) \in P, 1 \le i \le n,  n=|P| \neq 0\\
    addGets(\varnothing)&=skip \\
    addPuts(P)&=\putfield{r_1}{f_1}{t_1};\ldots;\putfield{r_n}{f_n}{t_n} \quad 
    \forall ((r_i,f_i),t_i) \in P, 1 \le i \le n,  n=|P| \neq 0\\
    addPuts(\varnothing)&=skip 
\end{align*}

\hrule
\hrule
\[
\infer[\rn{field-wrap}]
 {\Delta_s, H_s \vdash \rangerstep{(I,O,T)}{s}{(I,O,T'')}{(addGets(P); skip; s'; skip; addPuts(P')}{field-wrap}}
{ 
\begin{gathered}
P=initialP(H_s) \qquad
T'=T\cup Range(P) \qquad
\Delta_s, H_s\vdash\rangerstep{(T',P,\bot)}{s}{(T'',P',p)}{s'}{field}\\
\end{gathered}
}
\]
\hrule
\hrule
\vspace{8pt}

$$
\infer[\rn{field-put}]
 {\Delta_s, H_s\vdash\rangerstep{(T,P,\bot)}{\putfield{r}{f}{e}}{(T\cup \{t\}, P',\bot)}{t := e}{field}}
{\begin{gathered}
r \in R
\qquad
P'=P[(r,f)\leftarrow t] \qquad {\text{fresh\;} t}
\end{gathered}}
$$

\[
\quad\infer[\rn{field-get}]
{\Delta_s, H_s \vdash \rangerstep{(T,P,\bot)}{\getfield{x}{r}{f}}{(T,P,\bot)}{x:=t}{field}}
{\begin{gathered}
r \in R \qquad t = \lookup{P}{r,f}
\end{gathered}}
\]

$$
\quad\infer[\rn{field-invoke}]
{\Delta_s, H_s \vdash \rangerstep{(T,P,\bot)}{\invoke{x}{z}{\gamma}{y}}{(T,\varnothing,\top)}{addPuts(P);\invoke{x}{z}{\gamma}{y}}{field} } 
{}
$$
$$
\quad\infer[\soha{\rn{field-get-var}}]
{\Delta_s, H_s \vdash \rangerstep{(T,P,\bot)}{\getfield{x}{z}{f}}{(T,\varnothing,\top)}{addPuts(P);\getfield{x}{z}{f}}{field}}
{\begin{gathered}
z \notin R 
\end{gathered}}
$$
%
$$
\infer[\rn{field-if-1}] 
 {\Delta_s, H_s \vdash \rangerstep{(T,P,\bot)}{\myif{e}{s_1}{s_2}}{(T'',\hat{P},\bot)}{\myif{e}{s_1'}{s_2'};\hat{s}}{field}}
{ 
\begin{gathered}
\\
\Delta_s, H_s\vdash\rangerstep{(T,P,\bot)}{s_1}{(T',P',\bot)}{s_1'}{field}
\\
\Delta_s, H_s\vdash\rangerstep{(T',P,\bot)}{s_2}{(T'',P'',\bot)}{s_2'}{field}
\qquad
(\hat{s},\hat{P}) = merge(e,P',P'')
\end{gathered}
}
$$
$$
\infer[\rn{field-if-2}] 
 {\Delta_s, H_s\vdash \rangerstep{(T,P,\bot)}{\myif{e}{s_1}{s_2}}{(T'',\varnothing,\top)}{\myif{e}{s_1'}{s_2'}}{field}}
{ 
\begin{gathered}
\Delta_s, H_s\vdash\rangerstep{(T,P,\bot)}{s_1}{(T',\varnothing,\top)}{s_1'}{field}
\\
\Delta_s, H_s \vdash\rangerstep{(T',P,\bot)}{s_2}{(T'',\varnothing,\top)}{s_2'}{field}
\end{gathered}
}
$$

$$
\infer[\shortstack{\texttt{\tiny{field-}}\\
\texttt{\tiny{if-3}}}] 
 {\Delta_s, H_s\vdash \rangerstep{(T,P,\bot)}{\myif{e}{s_1}{s_2}}{(T'',\varnothing,\top)}{\myif{e}{s_1'}{(s_2';addPuts(P''))}}{field}}
{ 
\begin{gathered}
\Delta_s, H_s\vdash\rangerstep{(T,P,\bot)}{s_1}{(T',\varnothing,\top)}{s_1'}{field}
\\
\Delta_s, H_s \vdash\rangerstep{(T',P,\bot)}{s_2}{(T'',P'',\bot)}{s_2'}{field}
\end{gathered}
}
$$

$$
\infer[\shortstack{\texttt{\tiny{field-}}\\
\texttt{\tiny{if-4}}}] 
 {\Delta_s, H_s \vdash \rangerstep{(T,P,\bot)}{\myif{e}{s_1}{s_2}}{(T'',\varnothing,\top)}{\myif{e}{(s_1';addPuts(P'))}{s_2'}}{field}}
{ 
\begin{gathered}
\Delta_s, H_s\vdash\rangerstep{(T,P,\bot)}{s_1}{(T',P',\bot)}{s_1'}{field}
\\
\Delta_s, H_s \vdash\rangerstep{(T',P,\bot)}{s_2}{(T'',\varnothing,\top)}{s_2'}{field}
\end{gathered}
}
$$

}}
\caption{Snippet of Field SSA Transformation Rules}
\label{fig:fieldSSA}
\end{figure}

}

\subsection{Field Transformation is Sound}
In this section, we discuss the soundness of the Field Transformation. We first present a consistency properties over the path-subscript map, then we show the soundness of the recursive judgment followed by the soundness of the wrapper judgment.

For any path-subscript map $P$, local variable mapping $\Delta$, and a $Heap$, we define a $P$-Heap consistency relation $(P,H',\Delta') \rtimes H'$ for all reference and field pairs ($r,f$), as
\[
(P, H', \Delta') \rtimes H \quad \text{iff} \quad \forall (r,f)\in Dom(H). 
H(r, f) = 
\begin{cases}
\Delta'(P(r,f)) & \text{if } (r,f) \in \text{Dom}(P) \\
H'(r,f) & \text{otherwise}
\end{cases}
\]
Intuitively, the relation $(P, H', \Delta') \rtimes H$ defines when a concrete heap $H$ is 
consistent with a symbolic field map $P$, a base heap $H'$, and a local 
variable mapping $\Delta'$. Specifically, for each field-reference pair 
$(r,f)$, $H$ must agree with the value of the corresponding temporary variable 
$P(r,f)$ in $\Delta'$ if $(r,f)$ is tracked in $P$, and fall back to the base 
heap $H'$ otherwise.

A key property of the $\mathtt{merge}$ function is that it preserves the 
P-Heap consistency relation across both branches of a conditional. Intuitively, 
if both branches independently maintain P-Heap consistency with their respective 
field maps $P_1$ and $P_2$, then the $\gamma$-assignments produced by 
$\mathtt{merge}$ yield a unified field map $\hat{P}$ that remains consistent 
with the concrete heap $H$, regardless of which branch was taken. This is 
formalized in Lemma~\ref{lem:field-merge-lemma}.

\begin{lem}
\label{lem:field-merge-lemma}    
$$Dom(P_1)=Dom(P_2)$$
$$ (P_1,H_1',\Delta_1) \rtimes H_1 $$
$$ (P_2,H_2',\Delta_2) \rtimes H_2 $$
$$(\hat{s},\hat{P})=merge(e,P_1,P_2)$$
$$ (\Theta \vdash \concsstep{(\eta,\Delta,H',\nu, \mu)}{\hat{s}}{(\eta_s,\hat{\Delta},H',\nu, \mu)}{}$$
$$\forall x .isUse(x,e) \implies \Delta_1(x)=\Delta_2(x)$$
$$\vdashconcestep{\Delta}{e}{b} \qquad b \in\{true,false\}$$
$$(\Delta,H',H)=
\begin{cases}
(\Delta_1,H_1', H_1) & \text{if b=true}  \\
(\Delta_2,H_2', H_2) & \text{otherwise}\\
\end{cases}$$

then
$$ (\hat{P},H',\hat{\Delta}) \rtimes H $$

\end{lem}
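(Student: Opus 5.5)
The plan is to argue by induction on the recursion of $\mathtt{merge}$, measured by the number of reference-field pairs $(r,f)$ on which the two maps disagree. Each recursive call of $\mathtt{merge}$ sets one disagreeing pair to the same fresh $t$ in both maps. It keeps $Dom(P_1)=Dom(P_2)$ and strictly decreases the number of disagreements, so the induction is well-founded.

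I will strengthen the claim so that it is stable under the recursion. The strengthened form reads: for every prefix of the $\gamma$-assignments executed from $\Delta$, the intermediate environment $\Delta^\dagger \supseteq \Delta$ satisfies $(P_b^\dagger,H',\Delta^\dagger)\rtimes H$. Here $P_b^\dagger$ is the current version of the map of the taken branch ($P_1$ if $b=true$, $P_2$ otherwise).

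In the base case $merge(e,P,P)=(skip,P)$, so $\hat{s}=skip$ and $\hat{\Delta}=\Delta$. The chosen triple $(\Delta,H',H)$ equals $(\Delta_b,H_b',H_b)$, and $\hat{P}=P_1=P_2=P_b$. The conclusion is then exactly the corresponding hypothesis $(P_b,H_b',\Delta_b)\rtimes H_b$.

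In the inductive case $\hat{s}=t:=\gammaexp{e}{t_1}{t_2};s_r'$ with $t$ fresh. By the \texttt{composition} and \texttt{assignment} rules, executing the first assignment yields $\Delta^\dagger=\Delta[t\leftarrow v]$. Because $e$ evaluates to $b$ in $\Delta$, the $\gamma$-true or $\gamma$-false rule gives $v=\Delta(t_1)$ when $b=true$ and $v=\Delta(t_2)$ otherwise. In either case $v=\Delta(P_b(r,f))$. I then show $(P_b[(r,f)\leftarrow t],H',\Delta^\dagger)\rtimes H$ pointwise. For the updated pair, the new value $\Delta^\dagger(t)=\Delta(P_b(r,f))$ equals $H(r,f)$ by the hypothesis. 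For every other pair $(r',f')$, freshness of $t$ gives $\Delta^\dagger(P_b(r',f'))=\Delta(P_b(r',f'))$, since $t\notin Range(P_b)$. Untracked pairs fall back to $H'$ unchanged. The hypothesis that $e$'s variables agree in $\Delta_1$ and $\Delta_2$ is used to ensure that $e$ still evaluates to the same $b$ in $\Delta^\dagger$ for later recursive steps. Lemma~\ref{lem:delta-ssa} (SSA: $\Delta\subseteq\Delta^\dagger$) gives the same conclusion, so the choice of branch is stable through the whole of $\hat{s}$. Applying the induction hypothesis to $s_r'$ with the updated maps $P_1',P_2'$ and environment $\Delta^\dagger$ yields the conclusion for $\hat{P}=P_r'$ and $\hat{\Delta}$.

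I expect the main obstacle to be the bookkeeping of the case split on $b$. The hypothesis pins $(\Delta,H',H)$ to a single branch, but the recursive call is stated symmetrically in $P_1',P_2'$. The induction therefore has to carry the untaken branch's map along without ever needing its consistency; only the taken branch's map is tracked by the strengthened invariant. I will also need freshness of $t$ relative to $Dom(\Delta)$. SSA of the generated code, together with $t$ being fresh, guarantees that the assignment extends rather than overwrites $\Delta$, and this is what makes the pointwise argument for the other pairs valid.
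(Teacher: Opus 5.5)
Your proposal is correct and follows the paper's proof. Both argue by induction on the recursion of $\mathtt{merge}$ (the number of disagreeing pairs strictly decreases), take the trivial base case $merge(e,P,P)=(skip,P)$, and in the inductive step observe that the first $\gamma$-assignment stores the taken branch's temporary value in the fresh $t$, so the taken-branch map updated at $(r,f)$ stays $\rtimes$-consistent before the hypothesis is applied to the rest of $\hat{s}$.

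Carrying only the taken branch's consistency through the induction is slightly cleaner than the paper, which re-establishes only the taken-side relation after the first assignment and leaves implicit how the untaken side of its two-sided hypothesis is met. One small correction: $e$ keeps its value in $\Delta[t\leftarrow v]$ because $t$ is fresh and does not occur in $e$, not because of the agreement hypothesis on $\Delta_1,\Delta_2$.
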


\techreport{
\begin{proof}
    
    We will prove this lemma using induction over the recursive invocation of the $merge(e,P_1,P_2)$ function. Observe that the merge function is terminating because on each recursive call the number of differences between the two merged path-mappings, i.e., $P_1, P_2$ decreases. 
    
    First, consider the case when $P_1=P_2=P$. Let $H_1, H_1', H_2, H_2'$ and $\Delta_1, \Delta_2$ be instances of a concrete heap and local variable mapping such that $(P,H_1',\Delta_1) \rtimes H_1$ and $(P,H_2',\Delta_2) \rtimes H_2$. 
     At this point, we know that the merging will not introduce new changes, i.e., $(skip,P)=merge(e,P,P)$, where $\hat{s}=skip$, and $\hat{P}=P$. We want to show that $(\hat{P},H',\hat{\Delta}) \rtimes H$. There are two possibilities depending on the evaluation of $e$. The first case is when $\vdashconcestep{\Delta_1}{e}{true}$, and $b=true$. In this case, we have that $\concsstep{(\eta,\Delta_1,H_1,\nu, \mu)}{skip}{(\eta,\Delta_1,H_1,\nu, \mu)}{}$. But we have shown that when $(P,H_1',\Delta_1) \rtimes H_1$. Similarly, for the second case where  $b=false$, we have that $\concsstep{(\eta,\Delta_2,H_2,\nu, \mu)}{skip}{(\eta,\Delta_2,H_2,\nu,\mu)}{}$, but also we have $(P,H_2',\Delta_2) \rtimes H_2$, from the premise of the theorem. Thus, we have shown that inn either case  $(\hat{P},H',\hat{\Delta}) \rtimes H$.

     Next, consider the case where $P_1 \ne P_2$. Then, this means that there exist some $(r,f)$ pair whose mapped values are different in both maps, i.e., $P_1(r,f)\ne P_2(r,f)$. Let $t_1=P_1(r,f)$, and $t_2=P_2(r,f)$, where we know that $t_1\ne t_2$. In this case, according to the merge definition, a new fresh variable $t$ is created and both maps are updated to create two new maps $P_1'$ and $P_2'$ with one less difference. More formally, we have that $P_1'=P_1[(r,f)\leftarrow t]$, and $P_2'=P_2[(r,f)\leftarrow t]$. Then, the $merge$ function is recursively called over the remaining differences in $P_1'$ and $P_2'$ such that $(s_r',P_r')=merge(e,P_1',P_2')$. And finally, the function returns the pair $(t:=\gammaexp{e}{t_1}{t_2};s_r',P_r')$. Here $\hat{s}=t:=\gammaexp{e}{t_1}{t_2};s_r'$, and $\hat{P}=P_r'$. Let $\Delta_1,\Delta_2$ be two concrete local mapping, and $H_1, H_2$ be two concrete heap mappings, such that $(P_1,H_1',\Delta_1) \rtimes H_1$, and $(P_2,H_2',\Delta_2) \rtimes H_2$. Then, we want to show that if $(\Theta \vdash \concsstep{(\eta,\Delta_1,H',\nu, \mu)}{\hat{s}}{(\eta,\hat{\Delta},H',\nu, \mu)}{}$, then $(\hat{P},H_1',\Delta_1) \rtimes H_1$, if $\vdashconcestep{\Delta_1}{e}{b}$, and where $b=true$. Similarly, we want to show that if $\Theta \vdash \concsstep{(\eta,\Delta_2,H',\nu, \mu)}{\hat{s}}{(\eta,\hat{\Delta},H',\nu, \mu)}{}$, then $(\hat{P},H_2',\Delta_2) \rtimes H_2$, if $\vdashconcestep{\Delta_2}{e}{b}$, and where $b=false$. 

     To show this proof, will first show that the induction induction hypothesis holds, before we can apply it.
     First, observe that $Dom(P_1')=Dom(P_2')$ since $Dom(P_1)=Dom(P_2)$ from the premise, also $P_1'$ and $P_2'$ are created from $P_1$ and $P_2$, respectively, by updating an exiting mapping and thus not changing their domains.
     
     Next, let $\Delta_1$ be defined by $\Theta \vdash \concsstep{(\eta,\Delta_1,H_1',\nu, \mu)}{t=\gammaexp{e}{t_1}{t_2}}{(\eta,\Delta_1',H_1',\nu, \mu)}{}$. 
     We also know that $\forall x.isUse(x,e) \implies \Delta_1(x)=\Delta_2(x)$ and that 
    $\vdashconcestep{\Delta}{e}{b}$ ,where $b \in\{true,false\}$. Now assume that $b=true$. In this case we know that $\Delta_1'=\Delta_1[t\leftarrow \Delta_1(t_1)]$, using the $\gamma$-expression rule from the concrete semantics. Now observe that $(P_1',H_1',\Delta_1') \rtimes H_1$ holds, where $H_1'=H_1$, since no change on the heap can occur with the assignment statement. Moreover, observe that we have that $(P_1,H_1,\Delta_1) \rtimes H_1$ by the statement of the theorem. In addition, we know that the only change $P_1'$ has over $P_1$ is the update of the single reference-field pair $P_1'=P_1[(r,f)\leftarrow t]$, which will be mapped to the value of $t_1$ in $\Delta_1'$. But since we have $(P_1,H_1,\Delta_1) \rtimes H_1$, then we know that it must be $H_1(r,f)=\Delta_1(P_1(t_1))=\Delta_1'(P_1'(t))=H_1'(r,f)$. Because of the same reasoning, in the case where $b=false$, we will have that $(P_2',H_2',\Delta_2') \rtimes H_2$, where $H_2'=H_2$, again because no change on the heap can occur with the assignment statement.
    
    Next observe that the evaluation of $e$ in both $\Delta_1$ and $\Delta_2$ are the same, i.e, $\vdashconcestep{\Delta_1}{e}{b}$, and $\vdashconcestep{\Delta_2}{e}{b}$. This is because we have that $\forall x.isUse(x,e) \implies \Delta_1(x)=\Delta_2(x)$, by the statement of the theorem, and we also know the newly assigned $t$ is fresh, and thus previous variable mappings remain the same. 
    
    Finally, observe that $(\Theta \vdash \concsstep{(\eta,\Delta',H',\nu, \mu)}{s_r'}{(\eta,\Delta'',H',\nu, \mu)}{}$, then it must be that $\Delta''=\hat{\Delta}$, since $\hat{s}=t:=\gammaexp{e}{t_1}{t_2};s_r'$, and recall that $\Delta'$ must contain the mapping $t$, i.e., $\Delta_1'=\Delta_1[t\leftarrow t_1]$, or $\Delta_2'=\Delta_2[t\leftarrow t_2]$.

    At this point, we have shown that the induction hypothesis hold, and thus we can conclude that $(P_r',H',\hat{\Delta})\rtimes H$ hold. But observe here that $P_r'=\hat{P}$ since $P_r'$ will include the mapping all the mappings from the induction hypothesis in addition to the $(r,f)\leftarrow t$, which we have shown that satisfies the Bigrtimes relation. Thus, we can conclude that $(\hat{P},H',\hat{\Delta})\rtimes H$.
\end{proof}
}

Before establishing the soundness of the field transformation, we state two 
preliminary results. Corollary~\ref{rem:domainP} establishes a domain invariant 
on the field map $P$: after any application of the field judgment, either the 
flag is set to $\top$ and $P'$ is empty, reflecting that field elimination was 
halted by a method invocation, or the flag remains $\bot$ and the domain of 
$P'$ is preserved and equals that of $H_s$. This ensures that $P$ always 
tracks exactly the set of field-reference pairs in $H_s$ throughout the 
transformation. Lemma~\ref{lem:recursive-fieldSSA} establishes that the field 
judgment preserves the generalized SSA property: if $s$ satisfies $SSA(s)$ 
before the transformation, then the rewritten statement $s'$ satisfies 
$SSA(s')$ after. These two results together provide the necessary foundations 
for the soundness argument of the field transformation.

\begin{cor}
\label{rem:domainP}    
[Either $P$ is empty, or its domain is the same to the domain of $H_s$]

If 
$$
\Delta_s, H_s \vdash\rangerstep{(T,P,p)}{s}{(T',P',p')}{s'}{field}
$$
then
$$
p'=\top \wedge P'=\varnothing
$$
or
$$
p'=\bot \wedge Dom(P)=Dom(H_s)=Dom(P')
$$
\techreport{
\begin{proof}
    
It is easy to see why this theorem is true. Initially, the {\tt field-wrap} rule triggers the $\leftarrow_{field}$ judgment, which sets up $P$ to have the same domain as $H_s$.
Next, {\tt field-no-op}, {\tt field-get-var}, {\tt field-put-var}, {\tt field-invoke}, {\tt field-if-2}, {\tt field-if-3}, and {\tt field-if-4} rules, all end up with $p'=\bot$, but we can also see that in these cases $P'=\varnothing$. 

For {\tt field-get}, we can clearly see that $P'$ is the same as $P$, while $p'=\bot$, so the remark holds here too.

The rules that introduce a change on $P'$ that is different from $P$ includes {\tt put-field}, {\tt field-comp} and {\tt field-if-1}. For {\tt put-field}, we $P'$ is created from $P$ as $P'=P[(r,f) \leftarrow t]$, where $r,f$ are a reference, and field pair, while $t$ is a fresh temporary variable. But since $r,f$ already exists in $P$, then $P'$ is actually updating the mapping of an existing $r,f$ pair, rather than introducing a new entry. Thus in this case we have that $Dom(P)=Dom(H_s)=Dom(P')$.

In the {\tt field-comp} rule, the theorem hold straightforward by using induction over the judgments used to transform $s_1$, and $s_2$.

Finally, for {\tt field-if-1} rule, we have that $\Delta_s, H_s \vdash \rangerstep{(T,P,p)}{\myif{e}{s_1}{s_2}}{(T_2,\hat{P},\hat{p})}{\myif{e}{s_1'}{s_2'};\hat{s}}{field}$, such that $(\hat{s},\hat{P}) = merge(e,P_1,P_2), 
\Delta_s, H_s\vdash\rangerstep{(T,P,\bot)}{s_1}{(T_1,P_1,p_1)}{s_1'}{field}, 
\Delta_s, H_s\vdash\rangerstep{(T_1,P,\bot)}{s_2}{(T_2,P_2,p_2)}{s_2'}{field}, p_1=p_2=p=\bot$. Here we have $P'=\hat{P}, p'=\hat{p}$.
By induction, we know that $Dom(P_1)=Dom(P_2)=Dom(P)=Dom(H_s)$. In this case, from the rule we directly have that $p'=\bot$. Then, we want to show that $Dom(P')=Dom(\hat{P})=Dom(H_s)$. Here, we know that $\hat{P}$ is the result of $merge(e,P_1,P_2)$. But, from the definition of the merge function, the function either returns the same mapping if $P_1=P_2$, and it is clear then that $Dom(\hat{P})=Dom(P)$, or it create a new temporary variable $t$ such that $\hat{P}(r,f)=t$, but we know that $r,f$ must have existed in the domain of either $P_1$ or $P_2$. Then, we can conclude that $Dom(\hat{P})=Dom(P_1)=Dom(P_2)=Dom(P)=Dom(H_s)$.
\end{proof}
}
\end{cor}

Lemma~\ref{lem:fieldRecSound} establishes the soundness of the field 
transformation. It states that executing the original statement $s$ on a 
concrete state $(\Delta, H)$ and executing the rewritten statement $s'$ on an 
enriched state $(\Delta', H')$ that is P-Heap consistent with $H$ produce 
the same observable behavior, i.e., the same control state $\eta$, return 
value $\mu$, and path condition $\nu$. Furthermore, the local variable mapping 
is preserved in the sense that $\Delta'' \subseteq \Delta'''$, and the P-Heap 
consistency relation is maintained after execution, ensuring that the rewritten 
heap $H'''$ remains consistent with the updated field map $P'$.

\begin{lem}
\label{lem:recursive-fieldSSA}    
[The judgment $\rightarrow_{field}$ maintains the SSA property]

If 
$$
\Delta_s, H_s \vdash\rangerstep{(T,P,p)}{s}{(T',P',p')}{s'}{field}
$$
$$isSSA(s)$$
then
$$
isSSA(s')
$$
\end{lem}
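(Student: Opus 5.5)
We argue by structural induction on the derivation of $\Delta_s, H_s \vdash\rangerstep{(T,P,p)}{s}{(T',P',p')}{s'}{field}$, carrying as an auxiliary invariant that every variable in $Range(P)$ and every temporary in $T$ is either defined in the prefix $\mathtt{addGets}(P)$ or freshly created by a rule, so that no variable of $s$ occurs in $T$ or $Range(P)$. We treat $isSSA$ in the generalized sense used throughout the paper: each variable has at most one definition along any control-flow path, and definitions in the two branches of an if-statement may share a name only through the later $\gamma$-assignment.

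The base cases are direct. For {\tt field-no-op} and {\tt field-assign}, $s'=s$. For {\tt field-put}, the putfield becomes $t:=e$ with $t$ fresh, so $t$ cannot clash with any definition in $s$ or with earlier temporaries. For {\tt field-get}, $\getfield{x}{r}{f}$ becomes $x:=t$; the only definition is still $x$, so the definition set is unchanged. For {\tt field-get-var}, {\tt field-put-var} and {\tt field-invoke}, the original statement is kept and prefixed by $\mathtt{addPuts}(P)$. That prefix contains only putfield statements, which define no local variables, so SSA is preserved.

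For {\tt field-comp}, we apply the induction hypothesis to $s_1'$ and to $s_2'$. We then argue that the definitions of $s_1'$ and $s_2'$ are disjoint: the original definitions are disjoint because $isSSA(s_1;s_2)$ holds, and any temporaries introduced are fresh with respect to the threaded set $T'$. The {\tt field-if-2}, {\tt field-if-3} and {\tt field-if-4} cases follow the same way, since the appended $\mathtt{addPuts}$ introduces no definitions.

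The main obstacle is {\tt field-if-1}, where $\mathtt{merge}(e,P',P'')$ appends $\gamma$-assignments after the conditional. We would prove a small auxiliary claim by induction on the number of disagreements between the two maps, as in the termination argument of Lemma~\ref{lem:field-merge-lemma}. The claim is that $\hat{s}$ is a sequence of assignments $t:=\gammaexp{e}{t_1}{t_2}$ in which each $t$ is fresh and pairwise distinct, and in which $t_1,t_2$ are only used, never redefined. Freshness here must mean fresh with respect to both branches' temporary sets. Since $T''$ extends $T'$, which extends $T$, and the merge draws its new names beyond $T''$, the new $t$ avoids every definition in $s_1'$ and $s_2'$. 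This gives $isSSA(\myif{e}{s_1'}{s_2'};\hat{s})$. If the paper's notion of freshness is only implicit, we would make it explicit as a global-counter assumption, in the same way $u$ is used in the renaming transformation.
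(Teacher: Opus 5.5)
Your proof is correct and follows essentially the paper's argument: induction over the field rules, where {\tt field-put} and {\tt merge} rely on freshness of the introduced temporaries and the {\tt addPuts} prefixes/suffixes contribute no definitions. You are in fact more careful than the paper in {\tt field-comp} and {\tt field-if-1}, where you check cross-disjointness of definitions and pairwise distinctness of the merge temporaries; note, though, that freshness relative to $T$ would not suffice (merge temporaries are never added to the threaded $T''$, and freshness with respect to $T$ does not exclude variables of $s$), so it is your fallback global-freshness assumption --- exactly the paper's reading, ``cannot clash with any of the existing variables'' --- that actually carries the argument.
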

\techreport{
\begin{proof}
    Rules {\tt field-no-op} and {\tt field-assign} hold trivially since, from the hypothesis, we have $isSSA(s)$, and $s$ does not change with the rewriting.

    Now, consider the rules {\tt field-comp, field-if-2}. For these rules, we can use straightforward induction over the premises of the rules to conclude that the rewritten statement maintains the SSA property.

    Next, consider the {\tt field-put} rule, here $s=\putfield{r}{f}{e}$, and $s'= t:=e$. But note here that the newly added variable $t$, is already fresh (it cannot clash with existing variables), thus it cannot clash with any of the defined/assigned variables.

    Next, consider the {\tt field-get} and {\tt field-invoke}, in both of these rule no new assignments are introduced, thus, the SSA property is maintained there too.

    Now, consider the more interesting case of {\tt field-if-1}. In this case, new statements are added in $\hat{s}$ in the form of a sequential composition of assignment statements of the form $t:=\gammaexp{e}{t_1}{t_2}$. According to the definition of the merge, $t$ must be fresh, i.e., cannot clash with any of the existing variables.
    
    Finally, consider the {\tt field-if-3} and {\tt field-if-4}. In both of these rules, the if-statement is rewritten, where one of its sides is sequentially composed with a sequence of {\tt get-field} statements; i.e., ($s_2';addPuts(P'')$), and ($s_1';addPuts(P')$). But, observe here that $s_2'$ and $s_1$ are both in SSA form by using induction. Moreover, the {\tt put-field} statements that results from $addPuts$ function, do not define/introduce new assignment statements. Thus, in both cases the rewritten statement also maintains the SSA property.
\end{proof}
}

\begin{lem}
\label{lem:fieldRecSound}    
[Recursive Field Rewriting is Sound]

If 
$$
\Delta \subseteq \Delta'
$$
$$ (P,H',\Delta') \rtimes H $$
$$
\Delta_s, H_s\vdash\rangerstep{(T,P,p)}{s}{(T',P',p')}{s'}{field}
$$
$$ (\Theta \vdash \concsstep{(\eta,\Delta,H,\nu,\mu)}{s}{(\eta'',\Delta'',H'',\nu'',\mu'')}{}$$
$$ (\Theta \vdash \concsstep{(\eta,\Delta',H',\nu, \mu)}{s'}{(\eta''',\Delta''',H''',\nu''',\mu''')}{}$$
then 
\begin{gather}
 \eta''=\eta''' \quad \wedge \quad \nu''=\nu''' \quad \wedge \quad \mu''=\mu'''\\
 \Delta'' \subseteq \Delta'''\\
 (P',H''',\Delta''') \rtimes H''
\end{gather}
\end{lem}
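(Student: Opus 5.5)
We prove Lemma~\ref{lem:fieldRecSound} by induction on the derivation of $\Delta_s, H_s\vdash\rangerstep{(T,P,p)}{s}{(T',P',p')}{s'}{field}$, maintaining the invariant that the original statement runs on the ``real'' heap $H$, while the rewritten one runs on a base heap $H'$ together with temporaries in $\Delta'$. The relation $(P,H',\Delta')\rtimes H$ ties the two together. Throughout, we use that $s$ contains no returns (Property~\ref{prop:final-ret-no-return}), so only the non-returning concrete rules apply and $\mu''=\mu'''=\bot$. We also use SSA (Lemma~\ref{lem:recursive-fieldSSA} together with Lemma~\ref{lem:delta-ssa}) in two ways: fresh temporaries $t$ never occur in $\Delta$, and extending $\Delta'$ preserves the evaluation of any expression already evaluable in $\Delta$.

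\textbf{Base cases.} For {\tt field-assign} and {\tt field-no-op}, $s'=s$. Since $\Delta\subseteq\Delta'$, the expression $e$ evaluates identically, so $\Delta''\subseteq\Delta'''$. Neither heap changes and neither does $P$, so consistency carries over. For {\tt field-put} with $r\in R$, the original execution sets $H''=H[(r,f)\leftarrow v]$, while the rewrite executes $t:=e$, giving $\Delta'''=\Delta'[t\leftarrow v]$ with $H'''=H'$. With $P'=P[(r,f)\leftarrow t]$, the pair $(r,f)$ now reads $\Delta'''(t)=v=H''(r,f)$, and every other pair is unchanged because $t$ is fresh. For {\tt field-get}, the original reads $H(r,f)$ and the rewrite reads $\Delta'(P(r,f))$; these agree by $\rtimes$. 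This case needs $(r,f)\in Dom(P)$, which is guaranteed by Corollary~\ref{rem:domainP} whenever $p=\bot$. For the flushing rules ({\tt field-invoke}, {\tt field-get-var}, {\tt field-put-var}), we first prove an auxiliary fact: executing $addPuts(P)$ from $(\Delta',H')$ yields a heap $H^\dagger$ with $H^\dagger=H$ on $Dom(H)$. Since $P'=\varnothing$, the resulting relation $(\varnothing,H^\dagger,\cdot)\rtimes H$ reduces to heap equality. From that point, the remaining statement is unchanged and runs on equal heaps and on $\Delta'\supseteq\Delta$. For {\tt invoke}, the callee's environment is built only from argument values, so it runs identically and yields identical $\eta,\nu,H$.

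\textbf{Compound cases.} For {\tt field-comp}, we apply the induction hypothesis to $s_1$, obtaining $\Delta_1\subseteq\Delta_1'$ and $(P_1,H_1',\Delta_1')\rtimes H_1$, and then to $s_2$. When $p=\top$ after $s_1$, the $s_2$ derivation is {\tt field-no-op} with $P=\varnothing$, which is consistent with the heap-equality reading above. For the if-rules, $e$ evaluates to the same boolean in $\Delta$ and $\Delta'$, so both executions take the same branch and the induction hypothesis applies to that branch. In {\tt field-if-2} the result follows directly. In {\tt field-if-3} and {\tt field-if-4}, the branch ending with $\bot$ is followed by $addPuts$, and the flush lemma turns consistency into heap equality, matching $P'=\varnothing$.

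\textbf{Main obstacle.} The hardest case is {\tt field-if-1}, where we must invoke Lemma~\ref{lem:field-merge-lemma} with $(P_1,P_2)=(P',P'')$. Its side conditions have to be discharged explicitly. First, $Dom(P')=Dom(P'')$ follows from Corollary~\ref{rem:domainP}. Second, the uses of $e$ must agree across the two branch-final environments; this holds because $e$ only uses variables defined before the if, which by SSA are untouched by either branch. Third, and most delicate, the untaken branch's $\Delta_2$ and $H_2$ are never actually produced by the concrete execution. We plan to instantiate the untaken side with the taken-branch data, choosing the pair according to $b$, since the merge lemma only inspects the component selected by $b$. The one remaining check is that the untaken branch's temporaries $t_2$ appear only inside the $\gamma$-expressions generated by $merge$. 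Evaluating such a $\gamma$ never touches $t_2$ when $e$ selects the other side, so no undefined-variable lookup occurs.
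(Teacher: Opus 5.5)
Your proof follows the paper's own argument essentially rule by rule: induction over the field rules, $\rtimes$ for \texttt{field-put}/\texttt{field-get}, an $addPuts$ flush for the $\top$-producing rules, same-branch reasoning for the conditionals, and Lemma~\ref{lem:field-merge-lemma} for \texttt{field-if-1}. Your attention to the merge lemma's side conditions and to the laziness of $\gamma$ is more careful than the paper's one-line appeal. The one step that fails as written is how you close \texttt{field-if-1}. Substituting the taken branch's data $(\Delta_1,H_1',H_1)$ for the untaken side turns the lemma's hypothesis $(P_2,H_2',\Delta_2)\rtimes H_2$ into $(P'',H_1',\Delta_1)\rtimes H_1$. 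That is false for any $(r,f)\in Dom(H_1)$ written only in the untaken branch, since $P''(r,f)$ is then a fresh temporary not in $Dom(\Delta_1)$. That the lemma ``only inspects'' the selected component is a fact about its proof, not its statement, so you cannot discharge a false hypothesis that way. Instead, state and use the one-sided version: if $e$ evaluates to $b$ in $\Delta_b$, $(P_b,H_b',\Delta_b)\rtimes H_b$, $Dom(P_1)=Dom(P_2)$, and $\hat s$ runs from $\Delta_b$ to $\hat\Delta$, then $(\hat P,H_b',\hat\Delta)\rtimes H_b$. This is immediate. $merge$ emits one assignment $t:=\gamma(e,P_1(r,f),P_2(r,f))$ per differing pair with $t$ fresh, $e$ keeps its value because the $t$'s are fresh, and lazy $\gamma$ gives $\hat\Delta(\hat P(r,f))=\Delta_b(P_b(r,f))$ for every pair. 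Your cross-branch agreement check for $e$ is then replaced by the simpler fact that $e$ still evaluates to $b$ after the taken branch, by $\Delta\subseteq\Delta'$ and SSA. This one-sided form is what the paper's proof of Lemma~\ref{lem:field-merge-lemma} actually establishes, and what the paper's own \texttt{field-if-1} case tacitly needs, since the untaken branch's state never exists there either.

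A smaller repair concerns \texttt{field-no-op}. It is not limited to assignments: it rewrites any statement (an invoke, a field write, a conditional, \ldots) once $P=\varnothing$ and $p=\top$, so ``neither heap changes'' is false there. What you need is the frame fact you already use after a flush. Running the same statement from $\Delta\subseteq\Delta'$ and from heaps that agree on $Dom(H)$ yields equal $\eta,\nu,\mu$, $\Delta''\subseteq\Delta'''$, and heaps that agree on $Dom(H'')$. Prove this once by induction on the concrete derivation; an invoke's callee starts from the same local map and agreeing, not necessarily identical, heaps. Since $(\varnothing,H''',\Delta''')\rtimes H''$ is exactly this agreement, the single fact covers \texttt{field-no-op} and the post-flush part of \texttt{field-invoke}, \texttt{field-get-var}, and \texttt{field-put-var}. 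The paper likewise just asserts $H''=H'''$ in this case.
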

\techreport{

\begin{proof}
We proof the lemma using structural induction over the statement $s$. 
    
     Consider the {\tt field-no-op} rule. In this case the lemma statement holds trivially as $s=s'$, i.e., the statement $s$ has not been rewritten, and thus we will have that $\eta_s''=\eta_s''', \mu''=\mu''', \Delta'' \subseteq \Delta'''$
     and $H''=H'''$. We also have that $P=P'=\varnothing$, which means that the second case of the $(P',H''',\Delta''') \rtimes H''$ applies, i.e., $(\varnothing,H'',\Delta''') \rtimes H''$.

\begin{description}
    \item[field-put] We have $\Delta_s, H_s\vdash\rangerstep{(T,P,\bot)}{\putfield{r}{f}{e}}{(T\cup \{t\}, P',\bot)}{t := e}{field}$, where $ r \in R$ and $P'=P[(r,f)\leftarrow t]$, where $t$ is a fresh SSA variable. Executing concretely the {\tt put-field} statement we get $\Theta\vdash\concsstep{(\eta, \Delta, H, \bot)}{\putfield{r}{f}{e}}{(\eta, \Delta,H[(r,f) \leftarrow v_2],\bot)}{}$, such that $\vdashconcestep{\Delta}{e}{v_2}$, $\Delta''=\Delta$, and $H''=H[(r,f)\leftarrow v_2]$. 
From the hypothesis we have $\Delta'$, and $H'$, such that $(P,H',\Delta') \rtimes H$.
Executing the rewritten statement concretely, we get $  \Theta\vdash\concsstep{(\eta, \Delta', H', \bot)}{t := e}{(\eta, \Delta''',H''',\bot)}{}$, where $\Delta'''=\Delta'[t\leftarrow v_1]$, $\vdashconcestep{\Delta'}{e}{v_1}$, and $H'''=H'$. Here, we know that , $\eta'''=\eta=\eta'', \mu'''=\mu=\mu''$ (since $H,\eta$, and $\mu$ do not change). This proves the third conclusion (1) in our theorem.

To show conclusion (2), observe here that the rewriting has not changed the expression $e$. Moreover, here we are evaluating $e$ in $\Delta'$, where we know that $\Delta \subseteq \Delta'$, by the hypothesis of the lemma. Moreover, we know the SSA property is maintained (assume~\ref{lem:recursive-fieldSSA}). Thus, we can conclude that $e$ will evaluate to the same value, more precisely, $v_2=v_1$, and since $t$ is fresh, i.e., $t \notin Dom(\Delta')$, and that the only change between $\Delta'''$ and $\Delta'$ is the new mapping for $t$, then we must have that have $\Delta''\subseteq \Delta'''$.

Finally, to show conclusion (3), observe that from remark~\ref{rem:domainP}, we know that the domain of $P'$, is the same as the domain of $P$, and we have from the hypothesis that $(P,H',\Delta') \rtimes H$. Moreover it must be that an existing reference-field pair in $P'$ was updated rather than a new one was added. Thus, we can observe here that $\Delta'''(t)=v_1=v_2=H''(r,f)$. Thus, we can also conclude that $(P',H''',\Delta''') \rtimes H''$.

\item[field-get] We have $\Delta_s,H_s \vdash \rangerstep{(T,P,\bot)}{\getfield{x}{z}{f}}{(T,P,\bot)}{x:=t}{field}$, where $P'=P, T'=T, r_1=\Delta_s(z), r_1 \in R$, and $t = \lookup{P}{r_1,f}$.
Executing the original statement concretely we get $\Theta\vdash\concsstep{(\eta, \Delta, H, \bot)}{\getfield{x}{z}{f}}{(\eta, \Delta[x \leftarrow v_2],H, \bot)}{}{}$, such that $\vdashconcestep{\Delta}{z}{r_2}$, and $v_2 = \lookup{H}{r_2,f}$. Since we have that $(\Delta,H) \models (\Delta_s,H_s,\pi)$ (assume~\ref{prop:variable-consistent}), and since we assume that all references must be concrete values (assume~\ref{prop:ref-are-conc}), then it must be the case that $r_1=r_2$, which we will refer to as just $r$. 
Observe that $\eta_s''=\eta_s, H''=H, \Delta''=\Delta[x \leftarrow v_2], \nu''=\nu$ and $\mu''=\mu$. Let us assume that we have $\Delta', H'$, such that $(P,H',\Delta') \rtimes H$, and $\Delta \subseteq \Delta'$.

Now, executing the rewritten statement, we get $\Theta\vdash\concsstep{(\eta, \Delta', H', \bot)}{x:=t}{(\eta, \Delta'[x \leftarrow v_2],H', \bot)}{}{}$, such that $v_2 = \lookup{\Delta'}{t}$, $\Delta'''=\Delta'[x \leftarrow v_2]$. and $H'''=H'$. 
Here, we know that , $\eta'''=\eta=\eta'', \mu'''=\mu=\mu''$ (since $H,\eta_s$, and $\mu$ do not change). This proves the third conclusion (1) in our theorem.
 
Now, bserve that since we have that $\Delta \subseteq \Delta'$, and $\Delta''=\Delta[x \leftarrow v_1]$, and $\Delta'''=\Delta'[x \leftarrow v_2]$. Moreover, we can conclude that $v_1=v_2$. This is because, we have that $v_1 = \lookup{H}{r,f} = \Delta'(P(r,f)) = v_2$, where the second equality comes from $(P, H', \Delta') \rtimes H$, and the fact that $(r,f) \in Dom(P)$. In other words, $P$ must contain a mapping for $(r,f)$ pair, which when looked up in $\Delta'$, it must be equal to the  value of the pair in $H$, i.e., $H(r,f)$, consequently, it must be the case that $(P',H''',\Delta''') \rtimes H''$ (conclusion 3 holds).
Finally, due to the SSA property we know that $x \notin Dom(\Delta)$ and similarly $x \notin Dom(\Delta')$. Moreover, we have shown that $v_1=v_2$, thus, we can conclude that $\Delta''\subseteq\Delta'''$, which proves conclusion 2.

\item [field-invoke, field-put-var, field-get-var]
We will show the proof for field-invoke, but all three rules {\tt field-invoke, field-put-var}, and {\tt field-get-var} are follow the same proof structure.

In general, the proof of this rule follows from the observation that the {\tt field-puts} statement, generated by {\tt addPuts(P)}, copies the contents of $P$ into the heap without altering any other environment variable. By the induction hypothesis, we have $(P,H',\Delta') \rtimes H$, which implies that if $P$ is non-empty, then for every reference-field pair, the corresponding temporary variable name holds a concrete value equal to that of the reference-field pair in the heap. Consequently, writing back the values of $P$ into $H$ reproduces $H$ itself. Therefore, the {\tt invoke} statement is executed in exactly the same environment as the original statement, without introducing any changes. As a result, the final state remains identical, and thus all conclusions of the lemma continue to hold.

\item [field-comp]
Consider the {\tt field-comp} rule. Here we will use similarity between the conclusions of the lemma and the two similar premises to show how the property is maintained between multiple statements.  Here, we have
 $s=s_1;s_2$, such that $\Delta_s, H_s\vdash \rangerstep{(T, P,\bot)}{s_1;s_2}{(T',P_2,p_2)}{s_1';s_2'}{field}$, where $\Delta_s,H_s\vdash\rangerstep{(T,P,\bot)}{s_1}{(T_1,P_1,p_1)}{s_1'}{field}$, and $\Delta_s,H_s\vdash\rangerstep{(T_1,P_1,p_1)}{s_2}{(T_2,P_2,p_2)}{s_2'}{field}$. Note that here we have $P'=P_2, p'=p_2$. Observe here that {\tt composition-ret} cannot apply after running the remove final return transformation (assume~\ref{prop:final-ret-no-return}). Thus, the only concrete compositional rule that can apply at this point is {\tt composition}. Here, we get $\Theta\vdash \concsstep{(\eta, \Delta, H, \nu, \bot)}{s_1;s_2}{(\eta^5,\Delta^5, H^5, \nu^5, \mu^5)}{}$, where $\Theta\vdash \concsstep{(\eta, \Delta, H, \bot)}{s_1}{(\eta^4, \Delta^4, H^4, \nu^4, \bot)}{}$, and $\Theta\vdash \concsstep{(\eta^4, \Delta^4, H^4, \nu^4, \bot)}{s_2}{(\eta^5,\Delta^5, H^5,\nu^5,\mu^5)}{}$.
 Let us assume we have $\Delta',H'$ such that, $\Delta\subseteq \Delta'$, and $(P, H',\Delta')\rtimes H$. Now executing the $s_1'$ of the rewritten statement we get $\Theta\vdash \concsstep{(\eta, \Delta', H', \bot)}{s_1'}{(\eta^6, \Delta^6, H^6, \nu^6, \mu^6)}{}$.  Using the induction hypothesis at $s_1$, we know that it must be that $\eta''=\eta^5, \Delta''=\Delta^5, H''=H^5, \nu''=\nu^5$ and $\mu''=\mu^5$, and that $(P_1,H^6,\Delta^6) \rtimes H^4$, $\Delta^4\subseteq \Delta^6$, and that $\eta^6=\eta^4, \nu^6=\nu^4$, and $\mu^6=\mu^4=\bot$.
 Thus, executing $s_2'$ in this concrete state we get $\Theta\vdash \concsstep{(\eta^6, \Delta^6, H^6, \nu^6, \nu^6)}{s_2'}{(\eta^7,\Delta^7, H^7, \nu^7, \mu^7)}{}$. Again using the induction hypothesis at $s_2$ with the premises $(P_1,H^6,\Delta^6) \rtimes H^4$, and $\Delta^4\subseteq \Delta^6$, then we know that it must be the case that $(P_2,H^7,\Delta^7) \rtimes H^5$, and $\Delta^5\subseteq \Delta^7$, $\eta^7=\eta^5,\nu^7=\nu^5$ and $\mu^7=\mu^5$. But, this then mean that 
 $\Theta\vdash \concsstep{(\eta, \Delta, H, \bot)}{s_1';s_2'}{(\eta^7,\Delta^7, H^7, \mu^7)}{}$, where $\Delta'''=\Delta^7, H'''=H^7$. Now, we have just shown that $(P',H''',\Delta''') \rtimes H''$ (conclusion 3), and that $\Delta''=\Delta^5 \subseteq \Delta^7=\Delta'''$ (conclusion 2). And finally that, $\eta'''=\eta^7=\eta^5=\eta'',\nu'''=\nu^7=\nu^5=\nu''$ and $\mu'''=\mu^7=\mu^5=\mu''$ (conclusion 1).

 \item [field-if-1]
 Now, consider the {\tt field-if-1} rule. In this case, we have
 $s=\myif{e}{s_1}{s_2}$, such that $\Delta_s,H_s \vdash \rangerstep{(T,P,\bot)}{\myif{e}{s_1}{s_2}}{(T'',\hat{P},\bot)}{\myif{e}{s_1'}{s_2'};\hat{s}}{field}$, where $(\hat{s},\hat{P}) = merge(e,P_1,P_2)$, $\Delta_s,H_s\vdash\rangerstep{(T,P,\bot)}{s_1}{(T',P_1,\bot)}{s_1'}{field}$, and $\Delta_s,H_s \vdash\rangerstep{(T',P,\bot)}{s_2}{(T'',P_2,\bot)}{s_2'}{field}$. Note that here we have $P'=\hat{P}, p'=\bot$. There are two concrete execution rules that can apply depending on whether the condition of the if-statement evaluates to {\tt true}  or {\tt false}. We will show the proof argument for the former, but the latter follows a very similar argument.
 Now, consider executing the original statement using the both {\tt composition} and {\tt if-phi-true} rule, we get for the first statement $\Theta\vdash\concsstep{(\eta, \Delta, H, \bot)}{\myif{e}{s_1}{s_2}}{(\eta^4,\Delta^4, H^4, \nu^4, \mu^4)}{}$, where $\mu^4=\bot$, $\vdashconcestep{\Delta}{e}{true}$, and $\Theta\vdash\concsstep{(\eta,\Delta, H, \mu)}{s_1}{(\eta^4,\Delta^4, H^4, \nu^4, \mu^4)}{}$. Here we have that $\eta_s''=\eta_s^4, \Delta''=\Delta^4, H''=H^4, \nu''=\nu^4, \mu''=\mu^4=\bot$. Now consider executing the rewritten statement concretely. Let us assume we have $\Delta',H'$ such that, $\Delta\subseteq \Delta'$, and $(P, H',\Delta')\rtimes H$. Due to the subset relationship between $\Delta \subseteq \Delta'$, we know that the expression $e$ must evaluate to the same value {\tt true}, i.e., $\vdashconcestep{\Delta'}{e}{true}$, and therefore, rule {\tt if-phi-true} will still apply. Here we have $\Theta\vdash\concsstep{(\eta,\Delta', H', \mu)}{s_1'}{(\eta^5,\Delta^5, H^5, \nu^5, \mu^5)}{}$, where $\mu^5=\bot$. Using the induction hypothesis at $s_1$, we know that $(P_1,H^5,\Delta^5) \rtimes H^4$, and that $\eta^4=\eta^5, \nu^4=\nu^5, \mu^4=\mu^5$. Thus, we can evaluate $\hat{s}$ using the remaining part of the sequential composition, i.e.; $\Theta\vdash\concsstep{(\eta^5,\Delta^5, H^5, \nu^5, \mu^5)}{\hat{s}}{(\eta^6,\Delta^6, H^6, \nu^6, \mu^6)}{}$. 
 Note that, here we have that $\Delta'''=\Delta^6$, and $H'''=H^6$. 
 However, since by definition $\hat{s}$ is composed of a sequence of assignment statements, then no heap changes can ever occur. Thus we know that $\Delta''=\Delta^4 \subseteq \Delta^6=\Delta'''$. Also, we can conclude that $\eta'''=\eta^6=\eta_s^5=\eta_s^4=\eta_s'', \nu'''=\nu^6=\nu^5=\nu^4=\nu'', \mu'''=\mu^6=\mu^5=\mu^4=\mu''$ (conclusion 1). Observe that $H'''=H^6=H^5$, Now, we want to show that $(\hat{P},H^5,\Delta^6) \rtimes H^4$, but this falls directly from applying lemma~\ref{lem:field-merge-lemma}, where $\vdashconcestep{\Delta}{e}{b}, b=true$ (conclusion 3).

 \item[field-if-2] 
 Here,
 $s=\Delta_s,H_s \vdash \rangerstep{(T,P,\bot)}{\myif{e}{s_1}{s_2}}{(T'',\varnothing,\top)}{\myif{e}{s_1'}{s_2'}}{field}$, where $\Delta_s,H_s\vdash\rangerstep{(T,P,\bot)}{s_1}{(T',\varnothing,\top)}{s_1'}{field}$, and $\Delta_s,H_s \vdash\rangerstep{(T',P,\bot)}{s_2}{(T'',\varnothing,\top)}{s_2'}{field}$. Note that here we have $P'=\varnothing, p'=\top$. There are two concrete execution rules that can apply depending on whether the condition of the if-statement evaluates to {\tt true}  or {\tt false}. We will show the proof argument for the former, the proof of the latter follows a very similar argument.
 Now, consider executing the original statement using the {\tt if-phi-true} rule, we get for the first statement $\Theta\vdash\concsstep{(\eta, \Delta, H, \bot)}{\myif{e}{s_1}{s_2}}{(\eta^4,\Delta^4, H^4, \nu^4, \mu^4)}{}$, where $\mu^4=\bot, \vdashconcestep{\Delta}{e}{true}$, and $\Theta\vdash\concsstep{(\eta,\Delta, H, \mu)}{s_1}{(\eta,\Delta^4, H^4, \bot)}{}$. Here we have that $\eta''=\eta^4, \nu''=\nu^4, \mu''=\mu^4$, and $\Delta''=\Delta^4, H''=H^4$. Now, consider executing the rewritten statement concretely. By hypothesis, we have $\Delta',H'$ such that, $\Delta\subseteq \Delta'$, and $(P, H',\Delta')\rtimes H$. Due to the subset relationship between $\Delta \subseteq \Delta'$, we know that the expression $e$ must evaluate to the same value {\tt true}, i.e., $\vdashconcestep{\Delta'}{e}{true}$, and therefore, rule {\tt if-phi-true} will still apply. Here we have $\Theta\vdash\concsstep{(\eta,\Delta', H', \mu)}{s_1'}{(\eta^5,\Delta^5, H^5, \nu^5, \mu^5)}{}$, where $\mu^5=\bot$. Note, that  we have that $\eta'''=\eta^5, \nu'''=\nu^5, \mu'''=\mu^5, \Delta'''=\Delta^5$, and $H'''=H^5$. Using the induction hypothesis at $s_1$, we know that $\eta'''=\eta^5=\eta_s^4=\eta_s''$, $\nu'''=\nu^5=\nu^4=\nu''$ and $\mu'''=\mu^5=\mu^4=\mu''$ (conclusion 1). $(P'=\varnothing,H^5,\Delta^5) \rtimes H^4$ (conclusion 3) and $\Delta^4\subseteq\Delta^5$ (conclusion 2), which are what we want to prove.

 \item [field-if-3]
 Here, we have that
 $$\Delta_s,H_s \vdash \rangerstep{(T,P,\bot)}{\myif{e}{s_1}{s_2}}{(T_2,\varnothing,\top)}{\myif{e}{(s_1';addPuts(P_1))}{s_2'}}{field}$$
 where $\Delta_s,H_s\vdash\rangerstep{(T,P,\bot)}{s_1}{(T_1,P_1,\bot)}{s_1'}{field}$, and $\Delta_s,H_s \vdash\rangerstep{(T_1,P,\bot)}{s_2}{(T_2,\varnothing,\top)}{s_2'}{field}$. Note that here we have $P'=\varnothing, p'=\top$. There are two concrete execution rules that can apply depending on whether the condition of the if-statement evaluates to {\tt true}  or {\tt false}. We will show the proof argument for the latter, but the former follow a very similar argument to that of the {\tt field-if-2} rule. Now consider executing the original statement using the {\tt if-phi-true} rule, we get for the first statement $\Theta\vdash\concsstep{(\eta, \Delta, H, \bot)}{\myif{e}{s_1}{s_2}}{(\eta^4,\Delta^4, H^4, \nu^4, \mu^4)}{}$, where $\mu^4=\bot, \vdashconcestep{\Delta}{e}{true}$, and $\Theta\vdash\concsstep{(\eta,\Delta, H, \mu)}{s_1}{(\eta^4,\Delta^4, H^4, \nu^4, \mu^4)}{}$. Here we have that $\eta''=\eta^4, \Delta''=\Delta^4, H''=H^4, \nu''=\nu^4, \mu''=\mu^4$. Now, consider executing the rewritten statement concretely. By hypothesis, we have $\Delta',H'$ such that, $\Delta\subseteq \Delta'$, and $(P, H',\Delta')\rtimes H$. Due to the subset relationship between $\Delta \subseteq \Delta'$, we know that the expression $e$ must evaluate to the same value {\tt true}, i.e., $\vdashconcestep{\Delta'}{e}{true}$, and therefore, rule {\tt if-phi-true} will still apply. Now, we have to execute the then-side of the rewritten statement; i.e. $s_1';addPuts(P_1)$. Here, we have $\Theta\vdash\concsstep{(\eta,\Delta', H', \mu)}{s_1'}{(\eta^5,\Delta^5, H^5, \nu^5, \mu^5)}{}$, where $\mu^5=\bot$. Using the induction hypothesis at $s_1$, we know that $(P_1,H^5,\Delta^5) \rtimes H^4$. Now, evaluating the $addPuts(P_1)$, we get $\Theta\vdash\concsstep{(\eta^5, \Delta^5, H^5, \nu^5, \mu^5)}{addPuts(P_1)}{(\eta^6, \Delta^6,H^6, \nu^6, \mu^6)}{}{}$. However, according to the definition of the $addPuts(P_1)$(see Fig.~\ref{fig:fieldSSA}), $H_6$ will be supplemented with the same reference-field mappings as $H_4$, for all reference-field pairs that exists in $P_1$, since we have that $(P_1,H^5,\Delta^5) \rtimes H^4$ initially assumed. Thus, $H_6=H_4$. Also observe that the $addPuts(P_1)$ does not introduce any side effects on the local mapping, since the $addPuts(P_1)$ is a sequence of $put\_field$ statements, and $\eta_s'''=\eta_s^6=\eta_s^5=\eta_s'', \nu'''=\nu^6=\nu^5=\nu'', \mu'''=\mu^6=\mu^5=\mu''$ (conclusion 1), and $\Delta'''=\Delta^6=\Delta^5=\Delta''$ (conclusion 2). Thus, formally we can conclude that $\Theta\vdash\concsstep{(\eta, \Delta^5, H^5, \bot)}{addPuts(P_1)}{(\eta, \Delta^5,H^4, \bot)}{}{}$, which implies that $(\varnothing,H^4,\Delta^5) \rtimes H^4$ holds (conclusion 3).
 
 The proof for the {\tt field-if-4} follows the same argument as this one (the {\tt field-if-3} rule).

\end{description}
\end{proof}
}
\journalreport{

\begin{proof}[Proof Sketch]
We provide proof sketches for the {\tt field-put, field-get} and {\tt field-if-1} cases, 
as they illustrate the key role of the P-Heap consistency relation and the 
merge lemma respectively. The remaining cases and full proof details are 
deferred to the technical report~\cite{techreport}.

More precisely, the  \texttt{putfield} statement is rewritten to $t := e$ where 
$t$ is fresh and $P' = P[(r,f) \leftarrow t]$. Concretely executing the original 
statement updates $H'' = H[(r,f) \leftarrow v]$ where $v$ is the value of $e$ 
under $\Delta$. Executing the rewrite updates $\Delta''' = \Delta'[t \leftarrow v]$ 
and leaves $H'$ unchanged. Conclusion~(1) holds since neither execution affects 
$\eta$ or $\mu$. Conclusion~(2) follows since $e$ is unchanged by the 
transformation and evaluates to the same value under $\Delta \subseteq \Delta'$, 
and $t$ is fresh so $\Delta'' \subseteq \Delta'''$. Conclusion~(3) follows since 
$\Delta'''(t) = v = H''(r,f)$, and the domain of $P'$ is preserved by 
Corollary~\ref{rem:domainP}, so $(P', H''', \Delta''') \rtimes H''$.

The \texttt{getfield} statement is rewritten to $x := t$ where 
$t = P(r,f)$ and $r = \Delta_s(z) \in R$. Concretely executing the original 
statement yields $\Delta'' = \Delta[x \leftarrow v]$ where $v = H(r,f)$, leaving 
$H$ unchanged. Conclusion~(1) holds since neither execution affects $\eta$ or 
$\mu$. For conclusion~(3), observe that by P-Heap consistency $(P, H', \Delta') 
\rtimes H$, we have $\Delta'(t) = \Delta'(P(r,f)) = H(r,f) = v$, so executing 
$x := t$ under $\Delta'$ produces the same value $v$, and since $P' = P$ and 
$H''' = H'$, we immediately get $(P', H''', \Delta''') \rtimes H''$. 
Conclusion~(2) follows since $x \notin Dom(\Delta)$ and $x \notin Dom(\Delta')$ 
by the SSA property, and $v_1 = v_2$, so $\Delta'' \subseteq \Delta'''$.

Finally, in the {\tt field-if-1} rule, the conditional is rewritten to $\myif{e}{s_1'}{s_2'};\hat{s}$ 
where both branches are processed independently and $(\hat{s}, \hat{P}) = 
merge(e, P_1, P_2)$. We show the case where $e$ evaluates to \texttt{true}; 
the \texttt{false} case follows symmetrically. Since $\Delta \subseteq \Delta'$, 
$e$ evaluates to \texttt{true} under both $\Delta$ and $\Delta'$, so the same 
branch $s_1$ (resp. $s_1'$) is taken in both executions. By the induction 
hypothesis applied to $s_1$, conclusions~(1) and~(2) hold after executing 
$s_1'$, and we obtain $(P_1, H^5, \Delta^5) \rtimes H^4$. Executing $\hat{s}$ 
does not modify the heap since it consists solely of variable assignments, so 
$H''' = H^5$ and conclusions~(1) and~(2) are preserved. Conclusion~(3), i.e., 
$(\hat{P}, H''', \Delta''') \rtimes H''$, then follows directly from 
Lemma~\ref{lem:field-merge-lemma} with $b = \texttt{true}$.
\end{proof}
}

\AtBeginEnvironment{thm}{\setcounter{equation}{0}}

We now lift the soundness result from the recursive field judgment to the 
top-level wrapper. Theorem~\ref{lb:field-wrap-theorem} establishes that 
executing the original statement $s$ and its field-transformed counterpart 
$s''$ from the same concrete state produce the same observable behavior: 
the same control state $\eta$, path condition $\nu$, and return value $\mu$. 
Furthermore, the local variable mapping is extended consistently, i.e., 
$\Delta' \subseteq \Delta''$, and the heap is preserved, i.e., $H' = H''$, 
reflecting that all heap accesses have been internalized as variable assignments 
by the transformation.

 \begin{thm}[Field Transformation is Sound]
 \label{lb:field-wrap-theorem}
   $$(\Delta,H) \models_H (\Delta_s, H_s,\pi)$$
  $$\Delta_s, H_s \vdash \rangerstep{(I,O,T)}{s}{(I,O,T')}{s''}{field-wrap}$$
  $$\Theta \vdash \concsstep{(\eta, \Delta, H, \nu, \mu)}{s}{(\eta', \Delta', H', \nu', \mu')}{}$$
  $$\Theta \vdash \concsstep{(\eta, \Delta, H, \nu, \mu)}{s''}{(\eta'', \Delta'', H'', \nu'', \mu'')}{}$$

  then 
  \begin{gather}
 \eta'=\eta'' \quad \wedge \quad \nu'=\nu'' \quad \wedge \quad \mu'=\mu''\\
 \Delta' \subseteq \Delta''\\
 H'=H''
\end{gather}
 \end{thm}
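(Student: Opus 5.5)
The plan is to decompose the execution of $s'' = addGets(P);skip;s';skip;addPuts(P')$ into its three phases and invoke Lemma~\ref{lem:fieldRecSound} on the middle one. By {\tt field-wrap}, $P=initialP(H_s)$, $T'=T\cup Range(P)$, and $\Delta_s,H_s\vdash\rangerstep{(T',P,\bot)}{s}{(T'',P',p)}{s'}{field}$. Since no statement of $s''$ can return (Property~\ref{prop:final-ret-no-return}, and the generated get/put/skip statements never return), only the {\tt composition} rule applies. This splits the derivation for $s''$ into an execution of $addGets(P)$ from $(\eta,\Delta,H,\nu,\mu)$ to some $(\eta,\Delta_0,H,\nu,\mu)$, then $s'$ from there to $(\eta_1,\Delta_1,H_1,\nu_1,\mu_1)$, and finally $addPuts(P')$ to the final state.

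\emph{Phase 1.} The statement $addGets(P)$ is a sequence of {\tt get-field} statements $\getfield{t_i}{r_i}{f_i}$, one for each $(r_i,f_i)\in Dom(H_s)$. Each such statement leaves $\eta,H,\nu,\mu$ unchanged, and $\Delta_0=\Delta[t_i\leftarrow H(r_i,f_i)]_i$. The $t_i$ are fresh, so $\Delta\subseteq\Delta_0$. By construction $\Delta_0(P(r,f))=H(r,f)$ for every $(r,f)\in Dom(P)$. Hence $(P,H,\Delta_0)\rtimes H$ holds: on $Dom(P)$ by the previous fact, and off $Dom(P)$ trivially through the fallback branch. The hypothesis $(\Delta,H)\models_H(\Delta_s,H_s,\pi)$ is used here to know that every $r_i$ is a concrete allocated reference, so each get-field step is defined (Properties~\ref{prop:ref-are-conc} and~\ref{prop:heap-consistent}).

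\emph{Phase 2.} Apply Lemma~\ref{lem:fieldRecSound} with $\Delta\subseteq\Delta_0$, $(P,H,\Delta_0)\rtimes H$, the original execution of $s$ from $(\eta,\Delta,H,\nu,\mu)$, and the execution of $s'$ from $(\eta,\Delta_0,H,\nu,\mu)$. This yields $\eta_1=\eta'$, $\nu_1=\nu'$, $\mu_1=\mu'$, $\Delta'\subseteq\Delta_1$, and $(P',H_1,\Delta_1)\rtimes H'$.

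\emph{Phase 3.} The statement $addPuts(P')$ consists of {\tt put-field} steps, so $\eta,\Delta,\nu,\mu$ are unchanged. This already gives conclusions (1) and (2), since $\Delta''=\Delta_1\supseteq\Delta'$. For conclusion (3), split on Corollary~\ref{rem:domainP}. If $P'=\varnothing$, then $addPuts$ is $skip$ and the consistency relation collapses to $H_1=H'$ on $Dom(H')$. If $Dom(P')=Dom(H_s)$, then $H''=H_1[(r,f)\leftarrow\Delta_1(P'(r,f))]$, and consistency says exactly that $H'(r,f)=\Delta_1(P'(r,f))$ on $Dom(P')$ and $H'(r,f)=H_1(r,f)$ elsewhere, so $H''=H'$.

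The main obstacle is that $\rtimes$ only quantifies over $Dom(H')$. Proving $H'=H''$ as functions therefore also requires $Dom(H'')\subseteq Dom(H')$. I would obtain this by noting that every written pair lies in $Dom(H_s)\subseteq Dom(H)\subseteq Dom(H')$. The first inclusion comes from heap consistency; the second holds because heaps are only updated, never shrunk. Assuming this, each put-field rewrites an existing cell. A secondary subtlety is whether {\tt new} can occur while the flag is $\bot$; I would rely on the same domain argument, since the wrapper's puts only touch $Dom(H_s)$.
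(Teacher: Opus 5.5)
Your proposal is correct and follows the paper's proof essentially step for step. It uses the same split of $s''$ into $addGets(P);skip$, the rewritten body, and $skip;addPuts(P')$, the same consistency fact $(P,H,\Delta_0)\rtimes H$ after the gets, the same appeal to Lemma~\ref{lem:fieldRecSound} yielding $(P',H_1,\Delta_1)\rtimes H'$, and the same observation that $addPuts(P')$ writes back exactly the values that relation prescribes.

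Your attention to heap domains goes beyond the paper, which just asserts $Dom(H)=Dom(H_s)=Dom(P)$ and never checks $Dom(H'')\subseteq Dom(H')$. Two small corrections there: $Dom(H_s)\subseteq Dom(H)$ actually comes from the given derivation of $addGets(P)$, because $\models_H$ ranges over $Dom(H)$ and so yields the opposite inclusion. Also, your write-set argument does not yet cover cells written by the rewritten body itself, e.g.\ when $P'=\varnothing$ and $H''=H_1$.
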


\techreport{
\begin{proof}
    
    First, we have by applying the {\tt field-wrap} that $p=\bot, T'=T\cup Range(P)$ such that $H_s\vdash\rangerstep{(T',P,\bot)}{s}{(T'',P',p)}{s_2}{field}$.
    
     Next, observe that the obtained $s''$ can be understood in three parts, i.e., $s''= s_1;s_2;s_3$, such that $s_1=addGets(P); skip;$, and $s_3=skip; addPuts(P')$. This break down is unique since $addGets$ and $addPuts$ cannot contain a $skip$-statement.
    
    Now, executing concretely the first part of $s''$, we get
    $$\Theta \vdash \concsstep{(\eta,\Delta,H,\nu,\mu)}{s_1}{(\eta_1,\Delta_1,H_1,\nu_1,\mu_1)}{}$$

    Observe that here we have $\eta'=\eta_1, \Delta'=\Delta_1, H'=H_1, \nu'=\nu_1, \mu'=\mu_1$.

    Since $addGets(P)$ is defined as 
    $$addGets(P)=\getfield{t_1}{r_1}{f_1};\ldots;\getfield{t_n}{r_n}{f_n}$$

    $  \forall ((r_i,f_i),t_i) \in P, 1 \le i \le n,  n=|P|$. 
    Then, we know that the concrete execution of $s_1$ only affects the local variable mapping. Therefore, we know that $\eta_1=\eta$, $H_1=H$, $\nu_1=\nu$ and $\mu_1=\mu$. Because $P=initialP(H_s)$ each $t_i$ is a fresh variable, i.e., distinct from any variable in $\Delta$. From Corollary~\ref{rem:domainP}, we know that $Dom(H)=Dom(H_s)=Dom(P)$ and from the premise of the theorem we have  $(\Delta,H) \models (\Delta_s, H_s,\pi)$. Therefore, it must be $\Delta_1=\Delta[t_1 \leftarrow v_1,\ldots,t_n\leftarrow v_n]$, for each $(r_i,f_i) \in Dom(H)$ such that $t_i=P(r_i,f_i)$ and $v_i=H(r_i,f_i)$. 
    
    This implies that we must have $\Delta \subseteq \Delta_1$. 
    Also, recall that $(P,H^*,\Delta^*) \rtimes H$ is defined as 

    \[
        (P, H^*, \Delta^*) \rtimes H \quad \text{iff} \quad \forall (r, f)\in Dom(H).
        H(r, f) = 
        \begin{cases}
        \Delta^*(P(r,f)) & \text{if } (r,f) \in \text{Dom}(P) \\
        H^*(r,f) & \text{otherwise}
        \end{cases}
    \]

    Let us choose $H^*=H$, and $\Delta^*=\Delta_1$. 

    Let $(r_i,f_i)$ be any pair in the domain of $H$. Because the $Dom(H)=Dom(P)$ the first case applies. 
    Let $t_i=P(r_i,f_i)$, then $\Delta_1(t_i)=v=H(r_i,f_i)$, since $\Delta_1$ is known to be equal to $\Delta[t_1 \leftarrow v_1,\ldots,t_n\leftarrow v_n]$. Thus the equality in the definition of $(P,H,\Delta_1) \rtimes H$ holds.
    
    Now, when concretely executing $s_2$ we can apply Lemma~\ref{lem:fieldRecSound} for the soundness of the recursive judgment for the field transformation, such that we have 
    $$p=\bot$$
    $$
    \Delta_1 \subseteq \Delta_1
    $$
    $$ (P,H,\Delta_1) \rtimes H $$
    $$
    \Delta_s, H_s\vdash\rangerstep{(T,P,p)}{s_2}{(T',P',p')}{s_2'}{field}
    $$
    $$ (\Theta \vdash \concsstep{(\eta,\Delta_1,H,\nu,\mu)}{s_1}{(\eta_1,\Delta_1,H_1,\nu_1,\mu_1)}{}$$
    $$ (\Theta \vdash \concsstep{(\eta,\Delta_1,H,\nu, \mu)}{s_2'}{(\eta_2,\Delta_2,H_2,\nu_2,\mu_2)}{}$$
    then we must have that 

    \begin{gather*}
     \eta_1=\eta_2 \quad \wedge \quad \nu_1=\nu_2 \quad \wedge \quad \mu_1=\mu_2\\
     \Delta_1 \subseteq \Delta_2\\
     (P',H_2,\Delta_2) \rtimes H_1\\
    \end{gather*}

    Since we have $\eta'=\eta_1, \Delta'=\Delta_1, H'=H_1, \nu'=\nu_1, \mu'=\mu_1, \Delta \subseteq \Delta_1$ from before, then we can further simplify to obtain

     \begin{gather*}
     \eta'=\eta_1=\eta_2 \quad \wedge \quad \nu'=\nu_1=\nu_2 \quad \wedge \quad \mu'=\mu_1=\mu_2\\
     \Delta \subseteq \Delta_1 \subseteq \Delta_2\\
     (P',H_2,\Delta_2) \rtimes H_1\\
    \end{gather*}

   Finally, executing concretely the third part of $s''$, we get 
    
    $$\Theta \vdash \concsstep{(\eta_2,\Delta_2,H_2,\nu_2,\mu_2)}{s_3}{(\eta_3,\Delta_3,H_3,\nu_3,\mu_3)}{}$$

     where $s_3=addPuts(P')$, which is defined as 
     \[
     \begin{gathered}    addPuts(P')=\putfield{r_1}{f_1}{t_1};\ldots;\putfield{r_n}{f_n}{t_n} \quad 
    \forall ((r_i,f_i),t_i) \in P', 1 \le i \le n,  n=|P'|
    \end{gathered}
    \]

     Here we have that $\eta''=\eta_3,\Delta''=\Delta_3,H''=H_3,\nu''=\nu_3,\mu''=\mu_3$.
     
     Notice that the $addPuts(P')$ operation does not alter any of the environment variables except for the heap. Thus, we obtain $\eta''=\eta_3=\eta_2=\eta_1=\eta', \nu''=\nu_3=\nu_2=\nu_1=\nu', \mu''=\mu_3=\mu_2=\mu_1=\mu'$ (conclusion 1). 
     
     Furthermore, we have $\Delta_3=\Delta_2$, where $\Delta''=\Delta_3, \Delta'=\Delta_1$, and that 
     $\Delta \subseteq \Delta_1 \subseteq \Delta_2$, 
     thus, we can conclude $\Delta'=\Delta_2 \subseteq \Delta_3=\Delta''$ (conclusion 2).

     Finally, we have already established that $(P',H_2,\Delta_2) \rtimes H_1$. By definition, this means that for any reference-field pair $(r_i,f_i) \in \text{Dom}(H_1)$: if $(r_i,f_i) \in \text{Dom}(P')$, then its value is given by $\Delta_2(P'(r_i,f_i))$, such that $\Delta_2(P'(r_i,f_i))=H_1(r_i,f_i)$; otherwise, we have $H_1(r_i,f_i)=H_2(r_i,f_i)$.
/
    Therefore, executing $addPuts(P')$ in state $s_3$ restores the values of all reference-field pairs in $P'$ to the heap $H_3$. This must yield that $H_3$ have all the mappings that were perviously in $P'$, thereby, we must have $H^1=H^3$, establishing conclusion (3).

\end{proof}

}

\journalreport{\begin{proof}[Proof Sketch]
By the {\tt field-wrap} rule, the rewritten statement decomposes as 
$s'' = s_1; s_2; s_3$, where $s_1 = addGets(P);\mathit{skip}$, 
$s_2$ is the recursively rewritten statement, and 
$s_3 = \mathit{skip};addPuts(P')$.

Executing $s_1$ concretely introduces fresh temporaries $t_i$ for each 
$(r_i, f_i) \in Dom(P)$, updating $\Delta$ to $\Delta_1 = \Delta[t_i \leftarrow 
H(r_i,f_i)]$. Since all $t_i$ are fresh, $\Delta \subseteq \Delta_1$, and by 
construction $\Delta_1(P(r,f)) = H(r,f)$ for all $(r,f) \in Dom(H)$, 
establishing $(P, H, \Delta_1) \rtimes H$.

Applying Lemma~\ref{lem:fieldRecSound} to $s_2$ with $(P, H, \Delta_1) \rtimes H$ 
yields $\eta' = \eta_2$, $\nu' = \nu_2$, $\mu' = \mu_2$, $\Delta_1 \subseteq 
\Delta_2$, and $(P', H_2, \Delta_2) \rtimes H_1$.

Finally, executing $s_3 = addPuts(P')$ only modifies the heap, 
leaving $\eta$, $\nu$, $\mu$, and $\Delta$ unchanged, establishing 
conclusion~(1) and conclusion~(2). Since $(P', H_2, \Delta_2) \rtimes H_1$, 
the $addPuts(P')$ restores all field values tracked in $P'$ back to 
the heap, yielding $H'' = H'$ and establishing conclusion~(3).
\end{proof}}

Finally, we can conclude that the following corollary holds:
\begin{cor}
\label{cor:field-out-dont-change}
 input and output are the same before and after rewriting
 $$ \forall o \in O, \Delta''(o)=\Delta'''(o) $$
 \techreport{
 \begin{proof}
     This follows directly from the application of the subset relation from Theorem~\ref{lem:recursive-fieldSSA}, $\Delta''\subseteq \Delta'''$. And since the output variable names has not changed, then we must have that $ \forall o \in O, \Delta''(o)=\Delta'''(o) $.
 \end{proof}
 }
\end{cor}

\subsection{Properties of the Field Transformation}

The field transformation assumes all the guaranteed properties from pervious transformations: $\gamma$-creation, remove final-returns, renaming, method-inlining and substitution transformations.

Given,
  $$\Delta_s, H_s \vdash \rangerstep{(I,O,T)}{s}{(I,O,T')}{s'}{field-wrap}$$

The field transformation assumes the following property:
\begin{itemize}
        \item Field-SSA Transformation maintains the SSA property (Property~\ref{prop:ssa}). More formally, if
        $$
        \Delta_s, H_s \vdash\rangerstep{(I,O,T)}{s}{(I,O,T')}{s'}{field-wrap}
        $$
        $$s'=addGets(P);skip;s'';skip;addPuts(P')$$
        $$SSA(s)$$
        then
        $$SSA(s')$$
        \techreport{
        \begin{proof}
            We know that the $get-fields$ introduced by the $addGets(P)$, creates new assignment statements of the form $\getfield{P(r_1,f_1)}{r_1}{f_1};\ldots;\getfield{P(r_n,f_n)}{r_n}{f_n}, \forall (r_i,f_i) \in Dom(P), 1 \le i \le n,  n=|P|$, where $P=initialP(H_s)$. Observe here that all reference-field pairs are mapped to fresh variable(s) $t_i$, which by definition cannot clash with any of the existing variables.
            From lemma~\ref{lem:recursive-fieldSSA}, we know that $SSA(s'')$. Finally, the last component of the rewritten statement, the $addPuts(P')$, do not introduce new assignment statements. Thus, we can conclude that $SSA(s')$ holds.
        \end{proof}
        }

\end{itemize}

\soha{
\subsection{Array Transformation}}

\soha{While we do not present the formal semantics of the array transformation, we provide in this subsection an intuitive discussion of how it works in the real system. The core goal of memory transformations in our formalism is to eliminate heap operations by replacing them entirely with fresh, uniquely indexed local SSA variables, yielding a intermediate program containing only local variable assignments.}

\soha{
Array elements can be conceptualized directly as fields within an array object instance, where each array index $i$ acts as a distinct field name. Under this view, an {\tt arrayload} or {\tt arraystore} at a concrete index $i$ maps directly to a {\tt getfield} or {\tt putfield} operation for that specific field $i$. }


\soha{Unlike standard field accesses, array transformations must support loads and stores at symbolic indices. When an array store occurs at a symbolic index, the exact element being modified might be unknown statically. To account for this uncertainty, the transformation models array loads and stores as conditional expressions using the $\gamma$-expression.}

\soha{Also, array lengths introduce another key difference from fields, as symbolically sized arrays can theoretically assume any length. In practice, to handle symbolic lengths while guaranteeing termination during symbolic execution, Java Ranger concretizes array sizes to a finite set of candidate lengths and explores each valid size as an independent execution choice.}
\section{$TR_8$: Constant Reference Propagation}
\label{sec:constpropagation}
The constant reference propagation transformation aims to resolve reference 
variables by propagating concrete references to subsequent variables in the IR. Fig.~\ref{fig:constant-prop-rules} shows the rules for the transformation. 
The rules define two judgments. The judgment $T \vdash 
\rangerstepnolookup{(\Delta_r, s)}{\Delta_r'}{const}$ traverses an IR 
statement $s$, accumulating mappings from variables to their resolved concrete 
references in $\Delta_r$, where $T$ is the set of temporary variables whose 
assignments to concrete references are eligible for collection. The wrapper 
judgment $\Delta_s, T \vdash \rangerstep{(\mu_s, \pi_r)}{s}{(\pi_r', \mu_s')}{s'}{const\text{-}wrap}$ 
is the entry point of the transformation, triggering the const judgment on $s$ 
to collect $\Delta_r$, then applying substitution via ${\tt sub-wrap}$ to replace 
all occurrences of the collected reference variables throughout $s$, $\pi_r$, 
and $\mu_s$, producing the rewritten statement $s'$ and updated $\pi_r'$ and 
$\mu_s'$.

resolved concrete references. 
Rule~{\tt const-assign-change} handles the core 
case: if a variable $x \in T$ is assigned a concrete reference $r \in R$, then 
$\Delta_r$ is updated to record this mapping. Rule~{\tt const-assign-no-change} 
handles the remaining assignment cases, where either $x \notin T$ or the 
right-hand side is not a concrete reference, leaving $\Delta_r$ unchanged. 
Rules~{\tt const-if} and {\tt const-comp} propagate the transformation 
structurally, threading $\Delta_r$ through sequential composition while 
processing both branches of a conditional independently. All other statements 
are handled by {\tt const-no-op}, which leaves $\Delta_r$ unchanged.

Rule~{\tt const-wrapper} ties the transformation together. It first applies the 
constant propagation judgment to collect all concrete reference assignments into 
$\Delta_r$, effectively replacing those assignments with $\texttt{skip}$ while 
recording their mappings. Once $\Delta_r$ is fully populated, the wrapper 
immediately applies {\tt sub-wrap} to substitute all occurrences of the collected 
reference variables throughout the IR with their corresponding concrete 
references in $\Delta_r$. This two-phase design ensures that all concrete 
references are first identified and collected in a single pass, and then 
propagated uniformly to all their use sites via substitution, resolving 
previously unresolved reference variables for subsequent transformations such 
as field elimination and method inlining.

Next, rule~{\tt const-assign-change} handles the core 
case: if a variable $x \in T$ is assigned a concrete reference $r \in R$, then 
$\Delta_r$ is updated to record this mapping. Rule~{\tt const-assign-no-change} 
handles the remaining assignment cases, where either $x \notin T$ or the 
right-hand side is not a concrete reference, leaving $\Delta_r$ unchanged. 
Rules~{\tt const-if} and {\tt const-comp} propagate the transformation 
structurally, threading $\Delta_r$ through sequential composition while 
processing both branches of a conditional independently. All other statements 
are handled by {\tt const-no-op}, which leaves $\Delta_r$ unchanged. The 
resulting map $\Delta_r$ is then used by the substitution transformation to 
replace unresolved reference variables with their concrete counterparts in 
subsequent transformations.

 \begin{figure}[h!t]
    \footnotesize
\fbox{%
\parbox{\textwidth}{%

\[
   \infer[\rn{const-wrapper}]
   {\Delta_s,T \vdash \rangerstep{(\mu_s,\pi_r)}{s}{(\pi_r',\mu_s',)}{s'}{const-wrap}}
    { 
    \begin{gathered}
        T \vdash \rangerstepnolookup{(\varnothing,s)}{\Delta_r}{const}
    \qquad
    I'=Dom(\Delta_r) \\ 
          \Delta_r\cup\Delta_s \vdash \rangerstepnolookup{((\pi_r,\mu_s,I'),s)}{((\pi_r',\mu_s',I''),s')}{sub-wrap}
    \end{gathered}
    }
\]

\hrule
\hrule

\[
\infer[\rn{const-assign-change}]
      { T \vdash \rangersteplhsprime{\Delta_r}{x:=r}{\Delta_r'}{const}}
    { 
    x \in T \qquad r \in R
    \qquad
    \Delta_r'=\Delta_r[x\leftarrow r]
    }
\]

\[
\infer[\rn{const-assign-no-change}]
      { T \vdash \rangersteplhsprime{\Delta_r}{x:=e}{\Delta_r}{const}}
    { 
    ( x \notin T) \vee (e \notin R)
    }
\]

\[
    \infer[\rn{const-if}]
     { T \vdash \rangersteplhsprime{\Delta_r}{\myif{e}{s_1}{s_2}}{\Delta_r} {const}}
    {
     \begin{gathered}
        T \vdash \rangersteplhsprime{\Delta_r}{s_1}{\Delta_r'}{const}
        \qquad
       T \vdash \rangersteplhsprime{\Delta_r}{s_2}{\Delta_r''}{const}
    \end{gathered}
    }
\]

\[
 \infer[\rn{const-comp}]
     {T \vdash \rangersteplhsprime{\Delta_r}{s_1;s_2}{\Delta_r''}{const}}
    { 
    \begin{gathered}
    T \vdash \rangersteplhsprime{\Delta_r}{s_1}{\Delta_r'}{const}
    \qquad
    T \vdash \rangersteplhsprime{\Delta_r'}{s_2}{\Delta_r''}{const} 
    \end{gathered}
    }
\]

\[
 \infer[\rn{const-no-op}]
     { T \vdash \rangersteplhsprime{\Delta_r}{s}{\Delta_r}{const}}
    {  \func{noOp}_{const}(s) }
\]
    }}
\caption{Constant Propagation (for References)}
\label{fig:constant-prop-rules}
\end{figure}

\techreport{
 Here the $\noop{const}{s}$ is defined as

\noindent$\noop{const}{\return{e}} = true$\\
$\noop{const}{skip} = true$\\
$\noop{const}{\newobj{z}{c}} = true$\\
$\noop{const}{\putfield{z}{f}{e}} = true$\\
$\noop{const}{\getfield{x}{z}{f}} = true$\\
$\noop{const}{\invoke{x}{z}{g}{y}} = true$\\
$\noop{const}{s} = false \qquad$ otherwise \\

\noindent$\noop{const}{\binaryop{e_1}{e_2}{}} = true$\\
$\noop{const}{\unaryOp{e}{}} = true$\\
$\noop{const}{\gammaexp{e_1}{e_2}{e_3}} = true$\\
$\noop{const}{e} = false \qquad$ otherwise \\
}

\subsection{Constant Reference Propagation is Sound}

We now establish a key invariant of the constant propagation judgment: the 
reference map $\Delta_r$ remains a subset of the concrete local variable 
mapping $\Delta$ throughout the transformation. Intuitively, this ensures 
that every mapping collected in $\Delta_r$ corresponds to a valid concrete 
reference assignment witnessed during execution, and that $\Delta_r$ never 
introduces mappings that conflict with the symbolic state $\Delta_s$. 
Lemma~\ref{lem:const-stmt-sound} formalizes this: given that $\Delta_r 
\subseteq \Delta$ holds before rewriting $s$, it is preserved after rewriting, 
i.e., $\Delta_r' \subseteq \Delta'$, and $\Delta_r$ remains disjoint from 
$\Delta_s$.

\begin{lem}
\label{lem:const-stmt-sound}
    $$T \vdash \rangersteplhsprime{\Delta_r}{s}{\Delta_r'}{const}$$
    $$\Theta \vdash \concsstep{(\eta, \Delta, H, \nu, \mu)}{s}{(\eta', \Delta', H', \nu', \mu')}{} $$
    $$\Delta_r \subseteq \Delta$$
then
$$\Delta_r'\subseteq \Delta'$$
\techreport{
\begin{proof}
    For assignment statements, two rewriting cases are distinguished: const-assign-change and const-assign-no-change. The former applies when the assigned expression evaluates to a concrete reference value and the target variable is a temporary variable; the latter applies otherwise.

    Let us consider the first case, i.e., {\tt const-assign-change}, in this case the rewrite writes to $ T \vdash \rangersteplhsprime{\Delta_r}{x:=r}{\Delta_r'}{const}$, such that $x \in T,
    \Delta_r'=\Delta_r[x\leftarrow r]$. 
    Now, executing the original statement concretely we get $\Theta\vdash\concsstep{(\eta,\Delta, H, \nu,\bot)}{x:=r}{(\eta, \Delta',H,\nu,\bot)}{}$, such that $\Delta'=\update{\Delta}{x}{r}, \vdashconcestep{\Delta}{r}{r}$.  
    Since we have $\Delta_r \subseteq \Delta$, and the only added mapping to both $\Delta'$ and $\Delta_r'$ is the mapping for $x$, i.e., $\update{\Delta}{x}{r}$, then we can conclude that $\Delta_r' \subseteq\Delta'$

    Next consider {\tt const-no-change} rule. In this case, no change occurs on $\Delta_r$; i.e., $\Delta_r'=\Delta_r$. Also, we have from the premise of the lemma that $\Delta_r \subseteq \Delta$. Since, $\Delta'$ is exactly as $\Delta$ with an additional mapping for $x$, then it must be that $\Delta_r=\Delta_r' \subseteq \Delta'$. 
    Also, a similar argument is used for the proof of {\tt const-no-op} rule. 

    Next, consider {\tt const-if} rule. In this case, the rewrite writes to $T \vdash \rangersteplhsprime{\Delta_r}{\myif{e}{s_1}{s_2}}{\Delta_r} {const}$, such that $T \vdash \rangersteplhsprime{\Delta_r}{s_1}{\Delta_r'}{const}$ and $
       T \vdash \rangersteplhsprime{\Delta_r}{s_2}{\Delta_r''}{const}$. 
       There are two cases when executing this statement concretely depending on whether the evaluation for $e$ evaluates to $true$ or $false$. We will show the proof for the first case, but the second case is very similar. 
       Here we have $\vdashconcestep{\Delta}{e}{true}$, which implies that {\tt if-phi-true} must apply. Thus, we will have $\concsstep{(\eta,\Delta, H, \nu,\mu)}{s_1}{(\eta_1,\Delta_1, H_1, \nu_1,\mu_1)}{}$, here we have that $\Delta'=\Delta_1$. Recall from Lemma~\ref{lem:delta-ssa}, it must be that $\Delta \subseteq \Delta_1$. Since we have from the statement of the lemma that $\Delta_r \subseteq \Delta$, then by transitivity, we have that $\Delta_r\subseteq\Delta_1=\Delta'$, which is what we want to prove.

       Finally, consider {\tt const-comp} rule. In this case, we have that $T \vdash \rangersteplhsprime{\Delta_r}{s_1;s_2}{\Delta_{r2}}{const}$, such that $ T \vdash \rangersteplhsprime{\Delta_r}{s_1}{\Delta_{r1}'}{const}$, and $T \vdash \rangersteplhsprime{\Delta_{r1}'}{s_2}{\Delta_{r2}}{const} $. Here we have that $\Delta_r'=\Delta_{r2}$ . Observe here that {\tt const-comp} only applies after removing final return transformation (Property~\ref{prop:final-ret-no-return}). Thus, the only concrete compositional rule that can apply at this point is {\tt composition}. Here, we get $\Theta\vdash \concsstep{(\eta, \Delta, H, \nu, \bot)}{s_1;s_2}{(\eta_2,\Delta_2, H_2, \nu_2, \mu_2)}{}$, where $\Theta\vdash \concsstep{(\eta, \Delta, H, \bot)}{s_1}{(\eta_1, \Delta_1, H_1, \nu_1, \bot)}{}$, and $\Theta\vdash \concsstep{(\eta_1, \Delta_1, H_1, \nu_1, \bot)}{s_2}{(\eta_2,\Delta_2, H_2,\nu_2,\mu_2)}{}$. Here we have $\Delta_2=\Delta'$. By induction on $s_1$, we get $\Delta_{r1} \subseteq \Delta_1$, which allows us to use the lemma over $s_2$ to obtain $\Delta_r'=\Delta_{r2}\subseteq \Delta_2=\Delta'$, which is what we want to prove.
\end{proof}
}

\journalreport{
\begin{proof}[Proof Sketch]
We proceed by induction on the structure of $s$.

\textit{Case} {\tt const-assign-change}: The assignment $x := r$ where $x \in T$ 
and $r \in R$ updates $\Delta_r' = \Delta_r[x \leftarrow r]$ and $\Delta' = 
\Delta[x \leftarrow r]$. Since $\Delta_r \subseteq \Delta$ by hypothesis, and 
both $\Delta_r'$ and $\Delta'$ extend their respective maps with the same 
mapping for $x$, we conclude $\Delta_r' \subseteq \Delta'$.

\textit{Case} {\tt const-assign-no-change} and {\tt const-no-op}: $\Delta_r$ is 
unchanged, and $\Delta'$ extends $\Delta$ with at most one new mapping for $x$. 
Since $\Delta_r \subseteq \Delta \subseteq \Delta'$, the conclusion holds 
immediately.

\textit{Case} {\tt const-if}: Both branches are processed under the same 
$\Delta_r$. We show the case where $e$ evaluates to $\texttt{true}$; the 
$\texttt{false}$ case is symmetric. The concrete semantics selects $s_1$, 
yielding $\Delta' = \Delta_1$. By Lemma~\ref{lem:delta-ssa}, $\Delta \subseteq 
\Delta_1$, and since $\Delta_r \subseteq \Delta$ by hypothesis, transitivity 
gives $\Delta_r \subseteq \Delta' $.

\textit{Case} {\tt const-comp}: since no-return statement Property~\ref{prop:final-ret-no-return} is propagated from the remove final-returns transformation, then only the \texttt{composition} rule can apply concretely. By induction on $s_1$, we 
obtain $\Delta_{r1}' \subseteq \Delta_1$, which satisfies the premise for 
applying the induction hypothesis to $s_2$, yielding $\Delta_r' = \Delta_{r2} 
\subseteq \Delta_2 = \Delta'$.
\end{proof}
}
\end{lem}
Building on Lemma~\ref{lem:const-stmt-sound}, we now lift the soundness result 
to the wrapper judgment. Lemma~\ref{lem:const-wrap-sound} establishes that 
executing the original statement $s$ and its rewritten counterpart $s'$ from 
the same concrete state produce the same observable behavior: the same control 
state $\eta$, heap $H$, path condition $\nu$, and return value $\mu$. 
Furthermore, the local variable mapping is extended consistently, i.e., 
$\Delta' \subseteq \Delta''$, the concrete state remains a model of the symbolic 
state after rewriting, and the return semantics captured by $\pi_r'$ and $\mu_s'$ 
faithfully reflect the original return behavior, i.e., 
$\func{evalReturn}(\Delta', \pi_r, \mu_s) = \func{evalReturn}(\Delta'', \pi_r', 
\mu_s')$.
\begin{lem}
\label{lem:const-wrap-sound}
    $$(\Delta) \models (\Delta_s,\pi)$$
    $$\Delta_s,T \vdash \rangerstep{(\mu_s,\pi_r)}{s}{(\mu_s',\pi_r')}{s'}{const-wrap}$$
    $$\Theta \vdash \concsstep{(\eta, \Delta, H, \nu, \mu)}{s}{(\eta', \Delta', H', \nu', \mu')}{} $$
    $$\mu^*=\func{evalReturn}(\Delta', \pi_r,\mu_s)$$
    $$\Theta \vdash \concsstep{(\eta, \Delta, H, \nu,\mu)}{s'}{(\eta'', \Delta'', H'', \nu'',\mu'')}{}$$
then
$$\Delta'\subseteq \Delta''$$
$$\eta''=\eta', H''=H', \nu''=\nu',\mu''=\mu'$$
$$\mu^*=\func{evalReturn}(\Delta'',\pi_r',\mu_s')$$

\techreport{
\begin{proof}
    Here, we have that ${ T \vdash s \longrightarrow_{const-wrap} \Delta_r}$, such that $  T \vdash \rangerstepnolookup{(\varnothing,s)}{\Delta_r}{const}$, $I'=Dom(\Delta_r)$ and $
          \Delta_s \cup \Delta_r \vdash \rangerstepnolookup{((\pi_r,\mu_s,I'),s)}{(( \pi_r',\mu_s',I''),s')}{sub-wrap}$. 
    In this case, we obviously have $\varnothing \subseteq\Delta$. Thus, we can apply the const-statement lemma~\ref{lem:const-stmt-sound} to conclude that $\Delta_r \subseteq \Delta'$.

    Now, we want to apply Corollary~\ref{cor:subs-wrap-stmt-sound}, to show that apply substitution on the collected references is sound. Let $\Delta_s'=\Delta_s \cup \Delta_r$. First, we show that the premises of Corollary~\ref{cor:subs-wrap-stmt-sound} hold, then we apply it.
    Observe that
    \begin{itemize}
        \item We want to show that $\forall i \in I.\forall x \in Dom(\Delta_s').\neg isUse(i,\Delta_s'(x))$. Since $\Delta_s'=\Delta_s \cup \Delta_r$, then we want to show that this condition holds for both $\Delta_r$ and $\Delta_s$. To see how it holds for $\Delta_r$, observe that $I'=Dom(\Delta_r)$ and that $\Delta_r$ only collects assignments from constants, then it must be that the $Range(\Delta_r) \in V$, i.e., all mapped to expressions, must be concrete values, which means there cannot be any usage of any variable in the $Range(\Delta_r)$, which makes this condition holds for $\Delta_r$. To see why it holds for $\Delta_s$, since $SSA(s)$ (guaranteed property from field transformation, i.e., Property~\ref{prop:ssa}) and the definitions of $I'$ comes from within $s$, then that means there cannot be any uses of it within $Range(\Delta_s)$, thus here too, the condition holds.
        \item since we have that $(\Delta) \models (\Delta_s,\pi)$ holds from the premise of the current lemma. And again because of the SSA property for $s$ we know that $\Delta_r \cap \Delta_s=\varnothing$, thus evaluating expression in $\Delta_s\cup\Delta_r$ must yield the same values, thus we can conclude that $(\Delta) \models (\Delta_s\cup \Delta_r,\pi)$; i.e., $(\Delta) \models (\Delta_s',\pi)$.
        \item finally, we have that $\Delta\subseteq\Delta$.
    \end{itemize}
    The above allows us to apply Corollary~\ref{cor:subs-wrap-stmt-sound}, to conclude that 
    $\Delta'\subseteq \Delta''$, $\eta''=\eta', H''=H', \nu''=\nu',\mu''=\mu'$, $(\Delta'') \models (\Delta_s,\pi)$, $\mu^*=\func{evalReturn}(\Delta'',\pi_r',\mu_s')$, and that $\forall i \in I'.\neg isUse(i,s')$.
\end{proof}

\subsection{Properties of Constant Reference Propagation}
The transformation assumes all properties guaranteed by previous transformations, i.e., $\gamma$-creation, remove final-returns, renaming, method-inlining and field-SSA transformations.

In addition the constant reference propagation transformation guarantees the following property
\begin{itemize}
    \item Rewritten Statement $s'$ satisfies the generalized SSA property (Property~\ref{prop:ssa}). More formally, $SSA(s')$ holds.
\end{itemize}

}

\journalreport{
\begin{proof}[Proof Sketch]
By {\tt const-wrapper}, the transformation first collects all concrete reference 
assignments into $\Delta_r$ via the constant propagation judgment, starting 
from $\varnothing$, then applies {\tt sub-wrap} with $\Delta_s' = \Delta_s \cup 
\Delta_r$ to substitute all collected references throughout $s$.

Since $\varnothing \subseteq \Delta$, Lemma~\ref{lem:const-stmt-sound} gives 
$\Delta_r \subseteq \Delta'$. To apply Corollary~\ref{cor:subs-wrap-stmt-sound}, 
we verify its three premises. First, $\forall i \in I'$ no variable in $I'$ 
appears in $Range(\Delta_s')$: for $\Delta_r$ this holds because its range 
consists entirely of concrete references containing no variable occurrences; 
for $\Delta_s$ this follows from $SSA(s)$ (Property~\ref{prop:ssa}) that enforces that variables can only be defined once. Thus, variables defined within $\Delta_s$ cannot be redefined in $\Delta_r$ if $SSA(s)$ holds, which we showed that it does. Also, observe that $I' = Dom(\Delta_r)$ 
originates from within $s$. 

Finally, $\Delta \subseteq 
\Delta$ holds trivially. Applying Corollary~\ref{cor:subs-wrap-stmt-sound} then 
yields all four conclusions directly.
\end{proof}
}
\end{lem}

\section{Path-Merging Fix-Point Algorithm}
\label{sec:fixpoint}
\begin{algorithm}[h!t]
\DontPrintSemicolon
\SetAlgoLined
{\bf Input}: to-summarize statement $s$\;
{\bf Input}: concrete state $\delta=(\Theta,\eta,\Delta,H,\nu,\mu))$\;
{\bf Data}: Ranger State $\omega=(\Theta_t,\eta_s,\Delta_s, H_s, \mu_s,(\pi,\pi_r,\pi_s), (I,O,T), u)$\;
$\omega$:={\tt initialize-state}($\delta$)$\quad$ where $\omega=(\Theta_t,\eta_s,\Delta_s, H_s, \mu_s,(\pi,\pi_r,\pi_s), (I,O,T), u)$\;
$\rangerstepnolookup{s}{s^1}{\gamma-wrap}$\;
$ s^1 \to_{ret-wrap} (\pi_r^2,s^2)$\;
$\rangerstepnolookup{s^2}{(\mu_s^3,s^3)}{final-wrap}$\;
$I,O\vdash \rangerstepnolookup{((\mu_s^3,\pi_r^2,T,u), s^3)}{((T^4, \mu_s^4, \pi_r^4, u^4), s^4)}{rename-wrap}$\;
$ \Delta_s \vdash \rangerstepnolookup{(\pi_r^4,\mu_s^4,I, s^4)}{((\pi_r^5,\mu_s^5,I^5),s^5)}{sub-wrap}$\;
$(\Theta_t',\eta_s',\Delta_s', H_s', \mu_s', (\pi',\pi_r',\pi_s')', (I',O',T'), u'):=((\Theta_t,\eta_s,\Delta_s,H_s,\mu_s^5,(\pi,\pi_r^5,\pi_s),(I^5,O,T^4),u^4),s^5)$\;
$(\omega',s'):=((\Theta_t',\eta_s',\Delta_s', H_s', \mu_s', (\pi',\pi_r',\pi_s')', (I',O',T'), u'),s^5)$\;
        \Repeat{($\omega_{fix}, s_{fix}$) := ($\omega', s'$)}{
        $(\omega_{fix},s_{fix}):=(\omega',s')$\;
        $\Theta_t',\eta_s',\Delta_s' \vdash \rangerstep{(T',u')}{s'}{(T^6,u^6)}{s^6}{inline-wrap}$\;
		$\Delta_s', H_s' \vdash \rangerstep{(I',O',T^6)}{s^6}{(I',O',T^7)}{s^7}{field-wrap}$\;
        $\Delta_s',T^7 \vdash \rangerstep{(\mu_s',\pi_r')}{s^7}{(\pi_r^8,\mu_s^8,)}{s^8}{const-wrap}$\;
        $(\omega',s'):=((\Theta_t',\eta_s',\Delta_s'\cup\Delta_r, H_s', \mu_s^8, (\pi,\pi_r^8,\pi_s^8), (I',O',T^7), u^8),s^8)$\;
		}
{\bf return} ($\omega_{fix},s_{fix}$) 
\caption{\toolshort $ $ Static-Summary-Instantiation Algorithm}
\label{fig:algorithm}
\end{algorithm}

\subsection{Path-Merging fix-point Algorithm is Sound}

We divide the proof of the path-merging Algorithm into three main parts. First, we show that the transformations before hitting the fix-point loop are sound, i.e., line 5-9. Then, we show that the fix-point loop is sound, \soha{if terminated} (line 12-16).

Before we show the proof, we define $TR_i$ to denote the a particular transformation, more formally:
\begin{align*}
TR_1(\omega,s) &= (\omega,s'), 
    &&\text{where } \rangerstepnolookup{s}{s'}{{\gamma}\text{-}wrap} \\[4pt]
TR_2(\omega,s) &= (\omega',s'), 
    &&\text{where } s \to_{ret\text{-}wrap} (\pi_r',s'),\ 
       \omega'=\omega[\pi_r'/\pi_r] \\[4pt]
TR_3(\omega,s) &= (\omega',s'), 
    &&\text{where } \rangerstepnolookup{s}{(\mu_s',s')}{final\text{-}wrap},\ 
       \omega'=\omega[\mu_s'/\mu_s] \\[4pt]
TR_4(\omega,s) &= (\omega',s'), 
    &&\text{where } I,O\vdash \rangerstepnolookup{((\mu_s,\pi_r,T,u), s)}{((T', \mu_s', \pi_r', u'), s')}{rename\text{-}wrap}, \\
               & &&\phantom{\text{where }} \omega'=\omega[\mu_s'/\mu_s, \pi_r'/\pi_r, T'/T, u'/u] \\[4pt]
TR_5(\omega,s) &= (\omega',s'), 
    &&\text{where } \Delta_s \vdash 
       \rangerstepnolookup{(\pi_r,\mu_s,I, s)}{((\pi_r',\mu_s',I'),s')}{sub\text{-}wrap}, \\
               & &&\phantom{\text{where }} \omega'=\omega[ \pi_r'/\pi_r, \mu_s'/\mu_s, I'/I] \\[4pt]
TR_6(\omega,s) &= (\omega',s'), 
    &&\text{where } \Theta_t,\eta_s,\Delta_s \vdash 
       \rangerstep{(T,u)}{s}{(T',u')}{s'}{inline\text{-}wrap}, \\
               & &&\phantom{\text{where }} \omega'=\omega[\pi_r'/\pi_r, T'/T, u'/u] \\[4pt]
TR_7(\omega,s) &= (\omega',s'), 
    &&\text{where } \Delta_s, H_s \vdash 
       \rangerstep{(I,O,T)}{s}{(I',O',T')}{s'}{field\text{-}wrap}, \\
               & &&\phantom{\text{where }} \omega'=\omega[I'/I, O'/O, T'/T] \\[4pt]
TR_8(\omega,s) &= (\omega',s'), 
    &&\text{where } T \vdash 
       \rangerstep{(\mu_s,\pi_r)}{s'}{(\pi_r',\mu_s')}{s'}{const\text{-}wrap}, \\
               & &&\phantom{\text{where }} \omega'=\omega[(\pi_r'/\pi_r, \mu_s'/\mu_s]
\end{align*}
  We use the notation $TR_n$ to denote the application of a set of transformations, such that 

       $$TR_n(\omega,s) = (\omega', s'), \text{ where } (\omega', s') = TR_n \circ TR_{n-1} \circ \cdots \circ TR_1(\omega,s)$$
       
        \begin{lem} 
        Pre-Fixed-Point Transformations (line 5-9) are Sound.
            \label{lem:pre-fix-sound}
        \end{lem}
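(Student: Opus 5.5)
I will make the statement precise and then prove it by chaining the soundness results of the five pre-fix-point transformations. The precise claim: let $\omega$ be produced by {\tt initialize-state}$(\delta)$, so that Properties~\ref{prop:initial-no-return}, \ref{prop:eta-unchanged}, \ref{prop:variable-consistent} and \ref{prop:heap-consistent} hold. Let $(\omega^5,s^5)=TR_5(\omega,s)$ with components $\pi_r^5,\mu_s^5$. Suppose $\Theta\vdash\concsstep{(\eta,\Delta,H,\nu,\bot)}{s}{(\eta',\Delta',H',\nu',\mu')}{}$ and $\Theta\vdash\concsstep{(\eta,\Delta,H,\nu,\bot)}{s^5}{(\eta'',\Delta'',H'',\nu'',\mu'')}{}$. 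Then I will show $\eta''=\eta'$, $H''=H'$, $\nu''=\nu'$, $\mu''=\bot$, every output $o\in O$ satisfies $\Delta''(o)=\Delta'(o)$, $\mu'=\func{evalReturn}(\Delta'',\pi_r^5,\mu_s^5)$, and $(\Delta'')\models(\Delta_s,\pi)$.

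The argument is a forward chain. For each intermediate statement $s^k$ I will track a concrete run $(\eta^k,\Delta^k,H^k,\nu^k,\mu^k)$ from the initial state. Each run exists because the previous soundness result produces it constructively.
\begin{enumerate}
\item Corollary~\ref{cor:gamma-wrap-sound} shows that $s^1$ reaches exactly the final state of $s$.
\item Theorem~\ref{thm:return-wrap-sound} shows that the final state is unchanged by $s^2$.
\item Corollary~\ref{cor:final-ret-sound} gives $\mu^3=\bot$ and $\mu'=\func{evalReturn}(\Delta^3,\pi_r^2,\mu_s^3)$, with all other components equal.
\item Theorem~\ref{thm:renaming-wrap-sound} carries outputs, inputs and the $\func{evalReturn}$ value over to $s^4,\pi_r^4,\mu_s^4$.
\item Corollary~\ref{cor:subs-wrap-stmt-sound}, instantiated with the renamed state and $\Delta^{\prime\prime}:=\Delta$, yields the final equalities, $\Delta^4\subseteq\Delta^5$, preservation of $\models$, and the $\func{evalReturn}$ equation for $\pi_r^5,\mu_s^5$.
\end{enumerate}
I then compose the equalities by transitivity. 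Output preservation comes from the renaming theorem together with the subset relation of step 5.

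At each step I must discharge the side conditions, and each comes from the property guarantees listed after the corresponding transformation. Step 3 needs Properties~\ref{prop:final-s-has-a-specific-form} and \ref{prop:final-pi-r-iff-mu-s} from $TR_2$. Steps 4 and 5 need the absence of returns (Property~\ref{prop:final-ret-no-return}), absence of $\phi$-statements (Property~\ref{prop:no-phi}), and SSA (Property~\ref{prop:ssa}). Step 5 additionally needs $\forall i\in I.\forall x\in Dom(\Delta_s).\neg isUse(i,\Delta_s(x))$, which is Property~\ref{prop:initial-deltas-no-I} transported through renaming.

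The main obstacle is the interface between renaming and substitution. Renaming changes the names in $I$ to $(i,u')$, while $\Delta_s$ and the concrete $\Delta$ are indexed by the old names. The substitution corollary then needs $\Delta\models(\Delta_s,\pi)$ over the renamed inputs. Renaming also prepends copy assignments $(i,u'):=i$, so $s^4$ is not a pure renaming of $s^3$, and one must check that the copies are consistent with $\Delta_s$. My first approach is the following. Apply Lemma~\ref{lem:renaming-expr-sound} to $\mathtt{unique}_{u'}(\Delta)$. Then argue that on inputs the renamed map agrees with $\Delta$ extended by the copies; this is the corollary after Theorem~\ref{thm:renaming-wrap-sound}. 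Finally, verify that $\Delta_s$ evaluated in that extended map still agrees via Property~\ref{prop:variable-consistent}. If this mismatch cannot be closed directly, I will state the lemma with the renamed symbolic map $\mathtt{unique}_{u'}(\Delta_s)$ and push the discrepancy into the initialization assumptions. The remaining steps are routine bookkeeping with Lemma~\ref{lem:delta-ssa} and Lemma~\ref{lem:evalreturn-unchanged-by-extension}.
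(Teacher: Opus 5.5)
Your proposal is correct and is essentially the paper's proof. The paper bundles your steps 1--3 into Theorem~\ref{thm:transform-sound}, then applies Theorem~\ref{thm:renaming-wrap-sound} and Corollary~\ref{cor:subs-wrap-stmt-sound} with $\Delta$ itself in the role of $\Delta''$, and closes by transitivity; its extra conclusions ($\eta_s'\subseteq\eta''$ and heap consistency) hold because lines 5--9 leave $\eta_s,\Delta_s,H_s,\pi$ untouched and $\eta\subseteq\eta'=\eta''$ by Lemma~\ref{lem:delta-ssa}.

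The renaming/substitution mismatch you flag does not actually arise. Line 9 substitutes with respect to the original input set $I$ and the original $\Delta_s$, and after renaming those names occur only in the prepended copies $(i,u'):=i$. So premise~3 and Property~\ref{prop:initial-deltas-no-I} discharge the corollary's hypotheses directly, with no transport through renaming and no need to restate the lemma with $\mathtt{unique}_{u'}(\Delta_s)$.
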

        
        Given a statement to path-merge $s$, a concrete state $\delta$ and a symbolic state $\omega$, such that:

        Let $\delta = (\eta, \Delta, H, \nu, \mu)$, $\delta'=(\eta', \Delta', H', \nu', \mu')$,
and $\delta''=(\eta'', \Delta'', H'', \nu'', \mu'')$.

Let $\omega=(\Theta_t,\eta_s,\Delta_s, H_s, \mu_s,(\pi,\pi_r,\pi_s), (I,O,T), u)={\tt initialize\text{-}state}(\delta)$

$\omega'=(\Theta_t',\eta_s',\Delta_s', H_s', \mu_s',(\pi',\pi_r',\pi_s'), (I',O',T'), u')$.

        we have that 
        \begin{gather*}
        \Theta \vdash \concsstep{\delta}{s}{\delta'}{} \tag{premise 1}\\
        (\omega',s') = TR_5 \circ TR_4 \circ TR_3 \circ TR_2 \circ TR_1 (\omega,s)  \tag{premise 2}\\
        (\Delta) \models (\Delta_s,\pi) \tag{premise 3}\\
        (\Delta,H) \models_H (\Delta_s,H_s,\pi)\tag{premise 4}\\
        \forall i \in \func{getInputs}(s).\forall x \in Dom(\Delta_s).\neg isUse(i,\Delta_s(x)) \tag{premise 5}\\
        \eta_s \subseteq \eta \tag{premise 6}\\
        \Theta \vdash \concsstep{\delta}{s'}{\delta''}{} \tag{premise 7}\\
        \end{gather*}
        
        then
        \begin{gather*}
        \eta'=\eta''\wedge H'=H'' \wedge \nu'=\nu'' \tag{conclusion 1}\\
         \forall i \in I'.\Delta'(i)=\Delta''(i) \tag{conclusion 2} \\
        \forall o\in O'. \Delta'(o) =\Delta''(o) \tag{conclusion 3}\\
        \mu''=\bot \tag{conclusion 4}  \\
        \mu' = evalReturn(\Delta'',\pi_r',\mu_s') \tag{conclusion 5}\\
        (\Delta'') \models (\Delta_s',\pi') \tag{conclusion 6}\\
        \eta_s' \subseteq \eta'' \tag{conclusion 7}\\
        (\Delta,H) \models_H (\Delta_s',H_s',\pi') \tag{conclusion 8}\\
        \end{gather*}
        
     \begin{proof}
            Observe that {\tt path-merge}$_{5-9}$ means executing the following transformations:
            $$\rangerstepnolookup{s}{s^1}{\gamma-wrap}$$
            $$ s^1 \to_{ret-wrap} (\pi_r^2,s^2)$$
            $$\rangerstepnolookup{s^2}{(\mu_s^3,s^3)}{final-wrap} $$
            $$I,O\vdash \rangerstepnolookup{((\mu_s^3,\pi_r^2,T,u), s^3)}{((T^4, \mu_s^4, \pi_r^4, u^4), s^4)}{rename-wrap}$$
            $$ \Delta_s \vdash \rangerstepnolookup{(\pi_r^4,\mu_s^4,I, s^4)}{((\pi_r^5,\mu_s^5,I^5),s^5)}{sub-wrap}$$

            where $s'=s^5, \pi_r'=\pi_r^5,\mu_s'=\mu_s^5, I'=I^5, u'=u^4, T'=T^4$ and the other environment variables of the symbolic state remains unchanged, i.e., $\Theta_t'=\Theta_t, \eta_s'=\eta_s, \Delta_s'=\Delta_s, H_s=H_s', \pi_s'=\pi_s, O'=O$.

            First, 
            observe that  premise 3, 4, 5 and 6 satisfy the Property~\ref{prop:variable-consistent}, \ref{prop:heap-consistent},  \ref{prop:initial-deltas-no-I} and \ref{prop:eta-unchanged}, respectively.  Also, observe {\tt initialize-state}($\delta$) also assumes that $\mu=\bot$ (Property~\ref{prop:initial-no-return}).
            
            Recall that transformations $TR_1, TR_2$ and $TR_3$ corresponds to the static transformations of Java Ranger, which we considered their composition in Theorem~\ref{thm:transform-sound}, such that
            
            if
            $$\Theta \vdash \concsstep{(\eta, \Delta, H, \nu, \mu)}{s^3}{(\eta^3, \Delta^3, H^3, \nu^3, \mu^3)}{} $$
            then

            $$\eta'=\eta^3,\Delta'=\Delta^3,H'=H^3,\nu'=\nu^3$$
            $$ \mu' = evalReturn(\Delta^3,\pi_r^2,\mu_s^3)$$
            $$  \mu^3=\bot$$

            Next, since we have that 
            
            $$\Theta \vdash \concsstep{(\eta, \Delta, H, \nu, \mu)}{s^3}{(\eta^3, \Delta^3, H^3, \nu^3, \mu^3)}{} $$
            $$\mu=\mu^3=\bot$$
            $$I,O\vdash \rangerstepnolookup{((\mu_s^3,\pi_r^2,T,u), s^3)}{((T^4, \mu_s^4, \pi_r^4, u^4), s^4)}{rename-wrap}$$
            $$\Theta \vdash \concsstep{(\eta, \Delta, H, \nu, \mu)}{s^4}{(\eta^4, \Delta^4, H^4, \nu^4, \mu^4)}{} $$
            then, we can apply renaming theorem (knowing that we have shown that $\Delta'=\Delta^3$ from before) (Theorem~\ref{thm:renaming-wrap-sound}) to get

            $$ \forall i \in I, \Delta(i)=\Delta^3(i)=\Delta^4(i)=\Delta'(i) $$
            $$ \forall o \in O, \Delta^3(o)=\Delta^4(o)=\Delta'(o) $$
            $$evalReturn(\Delta^3,\pi_r^2,\mu_s^3)=evalReturn(\Delta^4,\pi_r^4,\mu_s^4)$$

            where $\eta^4=\eta^3=\eta', H^4=H^3=H', \nu^4=\nu^3=\nu'=\bot,\mu^4=\mu^3=\bot$.
            
            Also, observe that we have established from before that $\mu' = evalReturn(\Delta^3,\pi_r^2,\mu_s^3)$, and we also have shown that $$evalReturn(\Delta^3,\pi_r^2,\mu_s^3)=evalReturn(\Delta^4,\pi_r^4,\mu_s^4)$$
            
            thus by transitivity, we have $\mu'=evalReturn(\Delta^4,\pi_r^4,\mu_s^4)$            
            
            Next, consider the {\tt sub-wrap}. From premise 4, 5 of the lemma, we have that 

            $$(\Delta) \models (\Delta_s,\pi)$$
            $$\forall i \in I.\forall x \in Dom(\Delta_s).\neg isUse(i,\Delta_s(x))$$

            then if 
            $$\rangerstepnolookup{((\pi_r^4,\mu_s^4, I),s^4)}{((\pi_r^5,\mu_s^5,I^5),s^5)}{sub-wrap}$$
            $$\Theta \vdash \concsstep{(\eta, \Delta, H, \nu,\mu)}{s^5}{(\eta^5, \Delta^5,H^5,\nu^5,\mu^5)}{}$$
            Then, we can apply Corollary~\ref{cor:subs-wrap-stmt-sound}, reusing $\Delta$ in the position of the free local variable mapping (named $\Delta''$ in Corollary~\ref{cor:subs-wrap-stmt-sound}). Since $\Delta \subseteq \Delta$, we get

        $$\eta^5=\eta^4, H^5=H^4, \nu^5=\nu^4,\mu^5=\mu^4$$
        $$\Delta^4\subseteq\Delta^5$$
        $$\forall i \in I.\neg isUse(i,s^5)$$
        $$(\Delta^5) \models (\Delta_s,\pi^5)$$
        $$\func{evalReturn}(\Delta^4,\pi_r^4,\mu_s^4)=\func{evalReturn}(\Delta^5,\pi_r^5,\mu_s^5)$$

        Observe that since this is the last transformation in {\tt path-merge}$_{5-9}$, then we have 
        
        $\eta''=\eta^5, \Delta''=\Delta^5, H''=H^5, \nu''=\nu^5$. 
        
        and by transitivity, we get 
        
        $\eta''=\eta^5=\eta^4=\eta', H''=H^5=H^4=H', \nu''=\nu^5=\nu^4=\nu'$ (conclusion 1). 

        Here, we know that the set of inputs $I$ and set of outputs $O$ are unchanged by any of the transformation, and also must have mappings in $\Delta^4$. Now, as $\Delta^4 \subseteq\Delta^5$, then we know that all mapping for variables in $I$ and $O$ must be exactly the same between $\Delta^4$ and $\Delta^5$. Therefore we can conclude that

        $$ \forall i \in I, \Delta^4(i)=\Delta^5(i)=\Delta''(i)$$
        $$ \forall o \in O, \Delta^4(o)=\Delta^5(o)=\Delta''(o)$$

        since we have shown that $\Delta^3=\Delta'$, by transitivity we can conclude that 
       
        $ \forall i \in I, \Delta'(i)=\Delta''(i)$, and similarly, we conclude that
        $ \forall o \in O, \Delta'(o)=\Delta''(o)$. And since none of the transformations changed the input or the output, then we can conclude that the new input $I'$ and output $O'$ must have
        $ \forall i \in I', \Delta'(i)=\Delta''(i)$ (conclusion 2), and that, 
        $ \forall o \in O', \Delta'(o)=\Delta''(o)$ (conclusion 3). 
        
        Also, we have that $\mu''=\mu^4=\bot$ (conclusion 4). 
        
        Note that here 
        $\pi_r'=\pi_r^4, \mu_s'=\mu_s^4$.
        Finally, by transitivity we have that
        $$\func{evalReturn}(\Delta^5,\pi_r^5,\mu_s^5)=\func{evalReturn}(\Delta^4,\pi_r^4,\mu_s^4)= evalReturn(\Delta^3,\pi_r^2,\mu_s^3)=\mu' \text{(conclusion 5)}$$

        We also observe that the symbolic variable mapping $\Delta_s$, and reference mappings remain unchanged, i.e., $\Delta_s'=\Delta_s,\eta_s'=\eta$. And we concluded from applying Corollary~\ref{cor:subs-wrap-stmt-sound} that
    
        %
        $$(\Delta^5) \models (\Delta_s',\pi) \qquad \text{(conclusion 6)}$$ 
    Also, as shown above, $\eta''=\eta$, and $\eta_s \subseteq \eta$ from premise 7, thus $\eta_s'\subseteq\eta''$ (conclusion 8).

    Finally, since we have $\Delta_s'=\Delta_s$, $H_s'=H_s$ and $\pi'=\pi$, and from premise 4, we can conclude that we have $(\Delta,H) \models_H (\Delta_s',H_s',\pi')$ (conclusion 8) established.
        
        \end{proof}

      \begin{lem} 
        \label{lem:fix-point-body}
        Applying transformations from line 14-16 is sound.
        \end{lem}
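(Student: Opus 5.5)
I will prove Lemma~\ref{lem:fix-point-body} by mirroring the structure of Lemma~\ref{lem:pre-fix-sound}. The approach is to chain the three wrapper-level soundness results for lines 14--16, namely Lemma~\ref{lem:inline-wrapper-sound} (inlining), Theorem~\ref{lb:field-wrap-theorem} (field SSA) and Lemma~\ref{lem:const-wrap-sound} (constant reference propagation). The precise statement I aim for is an invariant-preservation claim. Suppose the current loop state $(\omega',s')$ satisfies, relative to the original concrete run $\Theta\vdash\concsstep{\delta}{s}{\delta'}{}$, the eight conclusions of Lemma~\ref{lem:pre-fix-sound}. Then the state $(\omega'',s^8)$ produced by one loop iteration satisfies the same eight conclusions, with $\Delta_s'\cup\Delta_r$ in place of $\Delta_s'$ and $(\pi_r^8,\mu_s^8)$ in place of $(\pi_r',\mu_s')$. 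Soundness of the loop body, and by induction on the number of iterations soundness of the fixpoint upon termination, follows from this.

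\textbf{Step 1 (inlining).} Let $\delta^5$ be the state obtained by executing $s'$ from $\delta$. The invariant gives $\eta_s'\subseteq\eta$, $(\Delta)\models(\Delta_s',\pi)$ and $\mu^*=\func{evalReturn}(\Delta^5,\pi_r',\mu_s')=\mu'$. These are exactly the premises of Lemma~\ref{lem:inline-wrapper-sound}. Applying it with $s'$ as the source statement and $s^6$ as the target yields:
\begin{itemize}
\item $\Delta^5\subseteq\Delta^6$;
\item equality of $\eta,H,\nu,\mu$;
\item $\mu'=\func{evalReturn}(\Delta^6,\pi_r',\mu_s')$;
\item preservation of $\models$.
\end{itemize}

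\textbf{Step 2 (field SSA).} The heap-consistency invariant $(\Delta,H)\models_H(\Delta_s',H_s',\pi')$ supplies the premise of Theorem~\ref{lb:field-wrap-theorem}. Applying it to the run of $s^6$ gives $\Delta^6\subseteq\Delta^7$ and $H^7=H^6$, with $\eta,\nu,\mu$ unchanged. Since $\pi_r'$ and $\mu_s'$ are untouched by $TR_7$, Lemma~\ref{lem:evalreturn-unchanged-by-extension} transports the \func{evalReturn} equation to $\Delta^7$. Corollary~\ref{cor:field-out-dont-change} handles the outputs.

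\textbf{Step 3 (constant propagation).} Lemma~\ref{lem:const-wrap-sound} applies directly once its model premise holds. That premise follows from $(\Delta)\models(\Delta_s',\pi)$ together with $\Delta\subseteq\Delta^7$, the latter by Lemma~\ref{lem:delta-ssa} and SSA of $s^7$. The lemma delivers $\Delta^7\subseteq\Delta^8$ and $\mu'=\func{evalReturn}(\Delta^8,\pi_r^8,\mu_s^8)$. Inputs and outputs are preserved because every step only extends $\Delta$ (the $\subseteq$ chains), so the conclusion-2 and conclusion-3 equalities transfer by transitivity. Conclusion~7 follows from $\eta_s'\subseteq\eta\subseteq\eta^8$, using Lemma~\ref{lem:delta-ssa}.

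\textbf{Main obstacle.} The hardest part is re-establishing the model invariant for the enlarged symbolic map, $(\Delta^8)\models(\Delta_s'\cup\Delta_r,\pi)$, and heap consistency for the next iteration. For the first, I would use Lemma~\ref{lem:const-stmt-sound}, which gives $\Delta_r\subseteq\Delta^7\subseteq\Delta^8$. Since $\Delta_r$ maps variables to concrete references, its entries are trivially modelled. Disjointness $Dom(\Delta_r)\cap Dom(\Delta_s')=\varnothing$ comes from SSA. A subtlety is that the next iteration runs from $\delta$, not $\delta^8$. The next-iteration premises concern the rewritten $s^8$, whose temporaries in $Dom(\Delta_r)$ are now defined inside $s^8$ itself (the assignments are retained), so the next round again executes from $\delta$. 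I would therefore state the invariant relative to the final environment of each run rather than the initial one, as in the premises of Lemma~\ref{lem:pre-fix-sound}. I expect some care is needed to show that heap consistency with respect to $H$ (the initial heap) is unaffected, since $H_s'$ is unchanged across iterations.
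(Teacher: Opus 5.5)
Your decomposition matches the paper's inductive step: an invariant anchored by Lemma~\ref{lem:pre-fix-sound}, preserved by chaining Lemma~\ref{lem:inline-wrapper-sound}, Theorem~\ref{lb:field-wrap-theorem} and Lemma~\ref{lem:const-wrap-sound}. The gap is in the model relation you carry from one round to the next. Lemma~\ref{lem:inline-wrapper-sound} and Lemma~\ref{lem:const-wrap-sound} both require $(\Delta)\models(\Delta_s,\pi)$ at the \emph{initial} environment of the run being rewritten, and every run in the loop starts from $\delta$. Your Step~1 uses exactly this. However, the eight conclusions of Lemma~\ref{lem:pre-fix-sound}, which you take as the invariant, contain only the final-environment relation $(\Delta'')\models(\Delta_s',\pi')$. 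In the first round the initial relation comes from premise~3, not from the invariant.

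Your remedy in the last paragraph goes the wrong way. Knowing $(\Delta^8)\models(\Delta_s'\cup\Delta_r,\pi)$ does not give the premise $(\Delta)\models(\Delta_s'\cup\Delta_r,\pi)$ needed at the start of the next round. That relation in fact fails in general: every $x\in Dom(\Delta_r)$ is defined by an assignment $x:=r$ inside the statement, so by SSA $x\notin Dom(\Delta)$, and the clause $v=\Delta(x)$ of Property~\ref{prop:variable-consistent} cannot hold. So the induction does not close as written.

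The paper avoids this in two ways. It puts the initial-environment relation itself into the invariant ($\Delta\models(\Delta_s^i,\pi^i)$, its invariant~5, next to heap consistency at $(\Delta,H)$ and $\eta_s^i\subseteq\eta^i$). It also treats the symbolic environment as untouched by $TR_6$--$TR_8$ (its step ``const~4''), so invariant~5 is inherited unchanged from the base case.

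You can repair your argument the same way while staying faithful to line~16 of the algorithm. Carry $\Delta\models(\Delta_s',\pi)$ for the \emph{original} map $\Delta_s'$; this holds in every round, since $\Delta$, $\Delta_s'$ and $\pi$ never change. Then argue that the added $\Delta_r$ entries do not affect the lemmas you apply. The inlining and field rules never consult their $\Delta_s$ parameter. The substitution inside $TR_8$ looks up its map only at variables in $Dom(\Delta_r)$, where the values are concrete references.

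Two smaller points. Your concern about heap consistency is moot: $\models_H$ (Property~\ref{prop:heap-consistent}) mentions only $\Delta$, $H$ and $H_s$, none of which changes across iterations. In Step~3 you do not need $\Delta\subseteq\Delta^7$, because the model premise of Lemma~\ref{lem:const-wrap-sound} is stated at the initial $\Delta$.
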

        
        Given a statement to path-merge $s$, a concrete state $\delta$ and a symbolic state $\omega$, such that:

        \begin{itemize}
            \item $\delta = (\eta, \Delta, H, \nu, \mu)$
            \item $\delta'=(\eta', \Delta', H', \nu', \mu')$
            \item $\delta''=(\eta'', \Delta'', H'', \nu'', \mu'')$
            \item $\omega=(\Theta_t,\eta_s,\Delta_s, H_s, \mu_s,(\pi,\pi_r,\pi_s), (I,O,T), u)$
            \item $\omega'=(\Theta_t',\eta_s',\Delta_s', H_s', \mu_s',(\pi',\pi_r',\pi_s'), (I',O',T'), u')$.
        \end{itemize}

        we have that 
        \begin{gather*}
        \Theta \vdash \concsstep{\delta}{s}{\delta'}{} \tag{premise 1}\\
        (\omega',s') = TR_8 \circ TR_7 \circ TR_6 (\omega,s)  \tag{premise 2}\\
        (\Delta) \models (\Delta_s,\pi) \tag{premise 3}\\
        \forall i \in \func{getInputs}(s).\forall x \in Dom(\Delta_s).\neg isUse(i,\Delta_s(x)) \tag{premise 4}\\
        \Theta \vdash \concsstep{\delta}{s'}{\delta''}{} \tag{premise 5}\\
        \eta_s \subseteq \eta \tag{premise 6}\\
        \end{gather*}
        
        then
        \begin{gather*}
        \eta'=\eta''\wedge H'=H'' \wedge \nu'=\nu'' \tag{conclusion 1}\\
         \forall i \in I'.\Delta'(i)=\Delta''(i) \tag{conclusion 2} \\
        \forall o\in O'. \Delta'(o) =\Delta''(o) \tag{conclusion 3}\\
        \mu''=\bot \tag{conclusion 4}  \\
        \mu' = evalReturn(\Delta'',\pi_r',\mu_s') \tag{conclusion 5}\\
        \forall i \in I'.\neg isUse(i,s') \tag{conclusion 6}\\
        (\Delta'') \models (\Delta_s',\pi') \tag{conclusion 7}\\
        \eta_s' \subseteq \eta'' \tag{conclusion 8}
        \end{gather*}

    \subsection{Loop Invariant}
    \label{sec:fix-invariant}

    We will define the loop invariant to be a property of the state of the rewriting ($\omega^i$, $s^i$), that says, that if we are in the middle of the writing, such that ($\omega^i$, $s^i$)  is the current rewriting state, then that rewriting is preserving the semantics of the original code, i.e., concretely executing the currently rewritten statement is sound with respect to executing the original statement concretely. We use the superscript $i$ notation to suggest that the loop invariant will hold after $i$ iterations of the fix point loop, in a way that we will define later.

    More formally, we define the loop invariant be a predicate on the state of the rewriting ($\omega^i$, $s^i$), parameterized by the initial concrete state $\delta$, initial symbolic state $\omega$ and the original statement $s$.
    
    We further define:
     \begin{itemize}

         \item we write the components of the original concrete state $\delta$ as $(\eta, \Delta, H, \nu, \mu)$.
         
         \item state after evaluating the original statement: $\delta^\bot = (\eta^\bot, \Delta^\bot, H^\bot, \nu^\bot, \mu^\bot)$, defined by $\Theta \vdash \concsstep{\delta}{s}{\delta^\bot}{} $.

        \item state after evaluating the rewritten statement after executing the fix-loop $i$ times: $\delta^i=(\eta^i, \Delta^i, H^i, \nu^i, \mu^i)$, defined by $\Theta \vdash \concsstep{\delta}{s^i}{\delta^i}{} $.

         \item initial state before any rewriting $\omega=(\Theta_t,\eta_s,\Delta_s, H_s, \mu_s,(\pi,\pi_r,\pi_s), (I,O,T), u)$.  
        \item state of the rewriting statement after executing the fix-loop $i^{th}$ times: $\omega^i=(\Theta_t^i,\eta_s^i,\Delta_s^i, H_s^i, \mu_s^i,(\pi^i,\pi_r^i,\pi_s^i), (I^i,O^i,T^i), u^i)$.
     \end{itemize}

        The loop invariant is the conjunction of the following properties:
        
        \begin{gather*}
        \eta^i=\eta^\bot\wedge H^i=H^\bot \wedge \nu^i=\nu^\bot \tag{invariant 1}\\
        \forall o\in O^i. \Delta^i(o) =\Delta^\bot(o) \tag{invariant 2}\\
        \mu^i=\bot \tag{invariant 3}\\
        \mu^\bot= evalReturn(\Delta^i,\pi_r^i,\mu_s^i) \tag{invariant 4}\\
        \Delta\models (\Delta_s^i,\pi^i) \tag{invariant 5}\\
        (\Delta,H) \models_H (\Delta_s^i,H_s^i,\pi^i) \tag{invariant 6}\\
        \eta_s^i \subseteq \eta^i \tag{invariant 7}\\
        \end{gather*}

We prove that the invariant holds at the entry of the fixpoint loop, i.e., 
before the first iteration. We use superscript $0$ to denote the state upon 
first entering the loop, and superscript $i$ to denote the state after $i$ 
iterations. The invariant is first established at line~13, which marks the 
beginning of the loop body, and we show that it holds at this point before 
any iteration has been executed, i.e., at superscript $0$.

\begin{proof}
    
        We establish the correctness of the loop invariant by induction. 

        \begin{description}
            \item[Base Case:] 
            \begin{itemize} Initially, before any rewriting happens, as per Algorithm~\ref{fig:init-algo}, we have the following assumptions between the concrete state and the symbolic state:
        \begin{itemize}
            \item Assumption Property~\ref{prop:eta-unchanged}: $\eta_s \subseteq \eta$
            \item Assumption Property~\ref{prop:variable-consistent}: $\Delta \models (\Delta_s, \pi)$
            \item Assumption Property~\ref{prop:initial-no-return}: $\mu=\mu_s=\bot$
            \item Assumption Property~\ref{prop:initial-deltas-no-I}: $\forall i \in \func{getInputs}(s).\forall x \in Dom(\Delta_s).\neg isUse(i,\Delta_s(x))$
        \end{itemize}
       
        From Properties~\ref{prop:variable-consistent}, \ref{prop:initial-deltas-no-I}, 
        and \ref{prop:eta-unchanged}, we apply Lemma~\ref{lem:pre-fix-sound} to 
        establish the soundness of the pre-fixpoint transformations $TR_1$--$TR_5$, 
        where $(\omega^0, s^0) = TR_5 \circ TR_4 \circ TR_3 \circ TR_2 \circ TR_1(\omega, s)$, 
        such that
        $$\Theta \vdash \concsstep{\delta}{s}{\delta^\bot}{} $$
        $$\Theta \vdash \concsstep{\delta}{s^0}{\delta^0}{} $$
        $$ \rangerstepnolookup{(\omega, s)}{(\omega^0,s^0)}{TR}$$
        
        Thus, here we get:
        \begin{gather*}
        \eta^0=\eta^\bot\wedge H^0=H^\bot \wedge \nu^0=\nu^\bot \\
        \forall o\in O^0. \Delta^\bot(o) =\Delta^0(o) \\
        \mu^0=\bot   \\
        \mu^\bot = evalReturn(\Delta^0,\pi_r^0,\mu_s^0) \\
       (\Delta^\bot,H^\bot) \models_H (\Delta_s^0,H_s^0,\pi^0) \\
        (\Delta) \models (\Delta_s^0,\pi^0) \\
        \eta_s^0 \subseteq \eta^0 
        \end{gather*}
        which directly satisfies the loop invariant just before entering the loop.
            \end{itemize}
        
        \item [Inductive Case:]
        From the induction hypothesis we know that the loop invariant hold at iteration $i-1$. 
        Now, assuming rewriting has finished $i^{th}$ iteration within the loop. Thus, we will have
        $$ \rangerstepnolookup{(\omega^{i-1}, s^{i-1})}{(\omega^i,s^i)}{TR_{6-8}}$$
        $$\Theta \vdash \concsstep{\delta}{s^{i-1}}{\delta^{i-1}}{} $$
        $$\Theta \vdash \concsstep{\delta}{s^i}{\delta^i}{} $$

        more precisely, we know that

        \begin{gather*}
        \eta^{i-1}=\eta^\bot\wedge H^{i-1}=H^\bot \wedge \nu^{i-1}=\nu^\bot \tag{induction 1}\\
        \forall o\in O^{i-1}. \Delta^{i-1}(o) =\Delta^\bot(o) \tag{induction 2}\\
        \mu^{i-1}=\bot \tag{induction 3}\\
        \mu^\bot= evalReturn(\Delta^{i-1},\pi_r^{i-1},\mu_s^{i-1}) \tag{induction 4}\\
        \Delta\models (\Delta_s^{i-1},\pi^{i-1}) \tag{induction 5}\\
        (\Delta,H) \models_H (\Delta_s^{i-1},H_s^{i-1},\pi^{i-1}) \tag{induction 6}\\
        \eta_s^{i-1} \subseteq \eta^{i-1} \tag{induction 7}
        \end{gather*}

        
        Now we divide the prove induction into three main steps, each of them talking about on of the transformations from  lines 14-16 in Algorithm~\ref{fig:algorithm}, and proving that the invariant holds after each transformation.

        The first transformation within the loop is the method inlining. To use its soundness lemma (Lemma~\ref{lem:inline-wrapper-sound}) we know to satisfy its premise, more precisely, we need to show that,
        given an intermediate concrete and symbolic state after executing the method-inlining as

        $$\Theta \vdash \concsstep{(\eta, \Delta, H, \nu,\mu)}{s^{i-1}}{(\eta^{i-1}, \Delta^{i-1}, H^{i-1} \nu^{i-1},\mu^{i-1})}{} $$
        $${\Theta_t^{i-1},\eta_s^{i-1},\Delta_s^{i-1} \vdash \rangerstep{(T^{i-1},u^{i-1})}{s^{i-1}}{(T_{in},u_{in})}{s_{in}}{inline-wrap}}$$
        $$\Theta \vdash \concsstep{(\eta, \Delta, H, \nu,\mu)}{s_{in}}{(\eta_{in}, \Delta_{in}, H_{in}, \nu_{in}, \mu_{in})}{} $$

        the premise of the inlining wrap lemma (Lemma~\ref{lem:inline-wrapper-sound}) below 
        
        \begin{align*}
        &(\Delta) \models (\Delta_s^{i-1},\pi^{i-1}) &\tag{from induction 5}\\
        &\eta_s^{i-1}\subseteq \eta^{i-1} &\tag{from induction 7}
        \end{align*}

        We can now apply Lemma~\ref{lem:inline-wrapper-sound} for soundness of the inlining wrapper to get the following. 
        
        \begin{align*}
        &\Delta^{i-1}\subseteq \Delta_{in} &\tag{inline 1}\\
        &\eta_{in}=\eta^{i-1},H_{in}=H^{i-1},\nu_{in}=\nu^{i-1},\mu_{in}=\mu^{i-1} &\tag{inline 2}\\
        &\eta_s^{i-1} \subseteq \eta_{in} &\tag{inline 3}\\
        &\func{evalReturn}(\Delta_{i-1},\pi_r^{i-1},\mu_s^{i-1})=\func{evalReturn}(\Delta_{in},\pi_r^{i-1},\mu_s^{i-1}) &\tag{inline 4}\\
        &(\Delta_{in}) \models (\Delta_s^{i-1},\pi^{i-1}) &\tag{inline 5}
        \end{align*}

        Observe that in the inlining transformation, the only two environment variables that are changing is $T$ and $u$ (from $T^{i-1}$ and $u^{i-1}$ to $T_{in}$ and $u_{in}$), the other environment variables stays the same as those in iteration $i-1$. That is 
        \begin{align*}
         \Theta_{t_{in}}=\Theta_t^{i-1},\eta_{s_{in}}=\eta_s^{i-1},\Delta_{s_{in}}=\Delta_s^{i-1}, H_{s_{in}}=H_s^{i-1}, \mu_{s_{in}}=\mu_s^{i-1}\\
         \pi_{in}=\pi^{i-1},\pi_{r_{in}}=\pi_r^{i-1},\pi_{s_{in}}=\pi_s^{i-1},I_{in}=I^{i-1},O_{in}=O^{i-1}   \tag{inline 6}
        \end{align*}

        The inlining transformation preserves the loop invariant. 
        
        - Inline invariant~1 ($\eta_{in} = \eta^\bot \wedge H_{in} = H^\bot \wedge 
        \nu_{in} = \nu^\bot$) follows by combining (inline~2), which gives $\eta_{in} = \eta^{i-1}$, $H_{in} = H^{i-1}$, $\nu_{in} = \nu^{i-1}$, with (induction~1), which gives $\eta^{i-1} = \eta^\bot$, $H^{i-1} = H^\bot$, $\nu^{i-1} = \nu^\bot$, and applying transitivity.
        
        -Inline Invariant~2 ($\forall o \in O_{in}.\ \Delta_{in}(o) = \Delta^\bot(o)$) follows by combining (inline~1), which gives $\Delta^{i-1} \subseteq \Delta_{in}$, with the observation that $O_{in} = O^{i-1}$, ensuring that all output mappings are preserved in $\Delta_{in}$. Applying (induction~2) transitively then establishes that every output variable in $O_{in}$ carries the same mapping as in $\Delta^\bot$.

        - Inline invariant~3 ($\mu_{in}=\bot$) can be easily established from inline 6, which states that $\mu_{s_{in}}=\mu_s^{i-1}$, then we can use transitivity from induction~3 ($\mu^{i-1}=\bot$), to conclude that inline invariant~3 ($\mu_{in}=\bot$) holds after the method inlining transformation.
        
        - Inline Invariant~4 ($\mu^\bot= evalReturn(\Delta_{in},\pi_{r_{in}},\mu_{s_{in}})$) follows directly by transitivity from (inline~4), which gives $\func{evalReturn}(\Delta^{i-1}, \pi_r^{i-1}, \mu_s^{i-1}) = \func{evalReturn}(\Delta_{in}, \pi_r^{i-1}, \mu_s^{i-1})$, and (induction~4), which gives $\mu^\bot = \func{evalReturn}(\Delta^{i-1}, \pi_r^{i-1}, \mu_s^{i-1})$.

        - Inline Invariant~5 ($\Delta \models (\Delta_{s_{in}}, \pi_{in})$) holds because none  of the symbolic states need in this function has changed. More precisely, (inline~6) gives  $\Delta_{s_{in}} = \Delta_s^{i-1}$ and $H_{s_{in}} = H_s^{i-1}$, and  applying transitivity over (induction~5), which gives  $\Delta \models (\Delta_s^{i-1}, \pi^{i-1})$, establishes the invariant.

        - Inline Invariant~6 ($(\Delta,H) \models_H (\Delta_{s_{in}},H_{s_{in}},\pi_{in}) $) holds again since none of the symbolic states needed in this function has changed. More precisely (inline-6) gives $\Delta_{s_{in}}=\Delta_s^{i-1}, H_{s_{in}}=H_s^{i-1}, \pi_{in}=\pi^{i-1}$, then applying transitivity over (induction~6) which gives $(\Delta,H) \models_H (\Delta_s^{i-1},H_s^{i-1},\pi^{i-1}) $, we establish the invariant.

        - Finally, Inline Invariant~7 ($\eta_{s_{in}} \subseteq \eta_{in} $) directly holds from inline~3 which states that $\eta_s^{i-1} \subseteq \eta_{in}$, as well as inline~6 ($\eta_{s_{in}}=\eta_s^{i-1}$).

        Thus, we have established that the method-inlining preserves the loop invariant. More specifically,
            
        \begin{gather*}
        \eta_{in}=\eta^\bot\wedge H_{in}=H^\bot \wedge \nu_{in}=\nu^\bot \tag{inline-inv 1: from inline 2, induction 1}\\
        \forall o\in O_{in}. \Delta_{in}(o) =\Delta^\bot(o) \tag{inline-inv 2: from inline 1, induction 2}\\
        \mu_{in}=\bot \tag{inline-inv 3: from inline 2, induction 3}\\
        \mu^\bot= evalReturn(\Delta_{in},\pi_{r_{in}},\mu_{s_{in}}) \tag{inline-inv 4: from inline 4, induction 4}\\
        \Delta\models (\Delta_{s_{in}},\pi_{in}) \tag{inline-inv 5: from inline 6, induction 5}\\
        (\Delta,H) \models_H (\Delta_{s_{in}},H_{s_{in}},\pi_{in}) \tag{inline-inv 6: from inline 6, induction 6}\\
        \eta_{s_{in}} \subseteq \eta_{in} \tag{inline-inv 7: from inline 3 and  6}\\
        \end{gather*}
        
        The next transformation is the one at line 15, the field SSA transformation. Here, we apply Lemma~\ref{lb:field-wrap-theorem}, this is because from (inline 6) we have satisfied the first premise for Lemma~\ref{lb:field-wrap-theorem} and we know that the field transformation rewriting will produce:
        
         $$\Delta_s^{i-1}, H_s^{i-1} \vdash \rangerstep{(I^{i-1},O^{i-1},T_{in})}{s_{in}}{(I^{i-1},O^{i-1},T_{field})}{s_{field}}{field-wrap}$$

         Finally, executing concretely the field-transformed statement, we get:
         
         $$\Theta \vdash \concsstep{(\eta, \Delta, H, \nu, \mu)}{s_{field}}{(\eta_{field}, \Delta_{field}, H_{field}, \nu_{field}, \mu_{field})}{}$$

        Given the above, we applying Lemma~\ref{lb:field-wrap-theorem} to obtain

         \begin{gather*}
         \eta_{field}=\eta_{in} \quad \wedge \quad \nu_{field}=\nu_{in} \quad \wedge \quad \mu_{field}=\mu_{in} \tag{field 1}\\
         \Delta_{in} \subseteq \Delta_{field} \tag{field 2}\\
         H_{field}=H_{in} \tag{field 3}
        \end{gather*}

        From here we can see that the field transformation is preserving the loop invariant, for similar reasoning that we have established before when presenting the proof that the method inlining preserves the loop invariant. 
        
        More precisely. we have that 
        invariant 1 satisfied by transitivity from field 1 and field 3. Invariant 2 is satisfied from field 2, i.e., output variables are still the same when comparing them after executing the non-rewritten statement, versus the field rewritten statement. Invariant 3, 4, 5, 6 and 7 are all satisfied since none of these symbolic states have changed by field SSA transformation, i.e., 
        \begin{gather*}
        \Theta_{t_{field}}=\Theta_{t_{in}},\eta_{s_{field}}=\eta_{s_{in}},\Delta_{s_{field}}=\Delta_{s_{in}}, H_{s_{field}}=H_{s_{in}}, \mu_{s_{field}}=\mu_{s_{in}},\pi_{field}=\pi_{in}\\
        \pi_{r_{field}}=\pi_{r_{in}},\pi_{s_{field}}=\pi_{s_{in}}, I_{field}=I_{in},O_{filed}=O_{in}, u_{field}=u_{in}    \tag{field 4}
        \end{gather*}
        then we can conclude by transitivity from the previous transformation, that the field transformation also preserves invariants 3-7. 

        More precisely, we have

        \begin{gather*}
        \eta_{field}=\eta^\bot\wedge H_{field}=H^\bot \wedge \nu_{field}=\nu^\bot \tag{field-inv 1: from field 1, 3, inline-inv 1}\\
        \forall o\in O_{field}. \Delta_{field}(o) =\Delta^\bot(o) \tag{field-inv 2: from field 2, inline-inv 2}\\
        \mu_{field}=\bot \tag{field-inv 3: from field 1, inline-inv 3}\\
        \mu^\bot= evalReturn(\Delta_{field},\pi_{r_{field}},\mu_{s_{field}}) \tag{field-inv 4: from field 1, 2, 4, inline-inv 4}\\
        \Delta\models (\Delta_{s_{field}},\pi_{field}) \tag{field-inv 5: from field 4, inline-inv 5}\\
        (\Delta,H) \models_H (\Delta_{s_{field}},H_{s_{field}},\pi_{field}) \tag{field-inv 6: from field  4, inline-inv 6}\\
        \eta_{s_{field}} \subseteq \eta_{field} \tag{field-inv 7: from field 1, inline-inv 7}\\
        \end{gather*}
        
        Next, we want to apply the reference constant propagation transformation. We have already proven that the field transformation maintains invariant 5 (field-inv 5), that is 

        $$\Delta \models (\Delta_{s_{field}},\pi_{field})$$

        and given a constant propagation rewriting and the execution of the its rewritten statement, such that we have


        $$\Delta_{s_{field}},T_{field} \vdash \rangerstep{(\mu_{s_{field}},\pi_{r_{field}})}{s_{field}}{(\mu_{s_{const}},\pi_{r_{const}})}{s}{const-wrap}$$
        $$\Theta \vdash \concsstep{(\eta, \Delta, H, \nu,\mu)}{s_{const}}{(\eta_{const}, \Delta_{const}, H_{const}, \nu_{const},\mu_{const})}{}$$
        $$\mu^*=\func{evalReturn}(\Delta_{const}, \pi_{r_{field}},\mu_{s_{field}})$$
        
        Now, applying the constant propagation lemma~\ref{lem:const-wrap-sound}, we get

        \begin{gather*}
        \Delta_{field}\subseteq \Delta_{const} \tag{const 1}\\
        \eta_{const}=\eta_{field}, H_{const}=H_{field}, \nu_{const}=\nu_{field},\mu_{const}=\mu_{field} \tag{const 2}\\
        \mu^*=\func{evalReturn}(\Delta_{const},\pi_{r_{const}},\mu_{s_{const}}) \tag{const 3}\\
        \end{gather*}

        Observe here that, except from $\pi_{r_{const}},\mu_{s_{const}}$, none of the symbolic environment variables have changed from the constant propagation rewriting, more precisely, we have that
        \begin{align*}
        &\Theta_{t_{const}}=\Theta_{t_{field}},\eta_{s_{const}}=\eta_{s_{field}},\Delta_{s_{const}}=\Delta_{s_{field}},
        H_{s_{const}}=H_{s_{field}}&\\
        &\mu_{s_{const}}=\mu_{s_{field}}, \pi_{const}=\pi_{field},\pi_{r_{const}}=\pi_{r_{field}},\pi_{s_{const}}=\pi_{s_{field}}&\\        &I_{const}=I_{field},O_{filed}=O_{field},T_{const}=T_{field}, u_{const}=u_{field}&\tag{const 4}
        \end{align*}
        
        Now, we can prove that the loop invariant hold from:
        \begin{gather*}
        \eta_{const}=\eta^\bot\wedge H_{const}=H^\bot \wedge \nu_{const}=\nu^\bot \tag{const-inv 1: from const 1, field-inv 1}\\
        \forall o\in O_{const}. \Delta_{const}(o) =\Delta^\bot(o) \tag{const-inv 2: from const 4, field-inv 2}\\
        \mu_{const}=\bot \tag{const-inv 3: from const 2, field-inv 3}\\
        \mu^\bot= evalReturn(\Delta_{const},\pi_{r_{const}},\mu_{s_{const}}) \tag{const-inv 4: from const 3, field-inv 4}\\
        \Delta\models (\Delta_{s_{const}},\pi_{const}) \tag{const-inv 5: from const 1, 4, field-inv 5}\\
        (\Delta,H) \models_H (\Delta_{s_{const}},H_{s_{const}},\pi_{const}) \tag{const-inv 6: from const 1, 4, field-inv 6}\\
        \eta_{s_{const}} \subseteq \eta_{const} \tag{const-inv 7: from const 2, field-inv 7}\\
        \end{gather*}

        Observe here that this is the last transformation within the loop, thus we know that that 
        $\eta^i=\eta_{const}, \Delta^i=\Delta_{const}, H^i=H_{const}, \nu^i=\nu_{const}, \mu^i=\mu_{const}$.
        Similarly, we have $\Theta_t^i=\Theta_{const},\eta_s^i=\eta_{s_{const}},\Delta_s^i=\Delta_{s_{const}}, H_s^i=H_{s_{const}}, \mu_s^i=\mu_{s_{const}},\pi^i=\pi_{const},\pi_r^i=\pi_{r_{const}},\pi_s^i=\pi_{s_{const}}, I^i=I_{const},O^i=O_{const},T^i=T_{const}, u^i=u_{const}$.
        
        This means that we have just proved the loop invariant after iteration $i$.
        
    \end{description}            
\end{proof}

 \soha{Thus, we can conclude that conditioned on exiting the fix-point loop, the Java Ranger's state is sound with respect to the concrete semantics. }

 \soha{\subsection{Termination and Other Limitations}}

 \begin{description}
     \item[- Termination of the Formalized Fix-Point Loop:] \soha{In our formalization, termination is not guaranteed in all cases. Specifically, recursion presents a primary source of non-termination, as recursive methods can undergo unbounded inlining (see inlining rules in Fig.~\ref{fig:inlinerules}).}
 
    \soha{Another source of non-termination arises from arrays with symbolic lengths; because symbolic array bounds can be arbitrarily large integers, summarizing statements over such arrays may not terminate.}
    
     
\item [- Termination and Implementation Limitations in Practice:]
 \soha{In practice, Java Ranger guarantees termination by employing practical bounded heuristics. To prevent unbounded method inlining, Java Ranger restricts recursive calls to a configurable maximum depth. Similarly, to manage symbolic-sized arrays, Java Ranger concretizes array lengths to a predefined set of specific sizes and independently explores program paths under each size.}
 
 \soha{Furthermore, observe that object references do not induce non-termination, because both our formalization and the practical implementation restrict summarization exclusively to code regions where all references are concretely resolved.}
 
\end{description}

\section{\soha{Case Study: Uncovering Field SSA Bug Through Formal Verification}}

\begin{tcolorbox}[
    listing only,
    listing options={style=mystyle},
    colback=white,
    colframe=black!30,
    top=2pt, bottom=2pt
]
\begin{lstlisting}[caption={Reproducing Field SSA-Transformation Bug}, label={lst:fieldSSABug}, captionpos=b]
class Ref {
    public Ref self;
    int x = 0;
    public Ref(int x) {
      this.x = x;
      this.self = this;
    }
  }
  
 private void start() {
    int i = symVar(a) //assigned to symbolic var "a"
    Ref ref = new Ref(i);
    int xCopy = ref.x;
    if (i < 0) {
      ref.self.x += 2;
      xCopy = ref.x;
    }
    assert i < 0 ? xCopy == i + 2 : true; //assertion fails
  }
\end{lstlisting}
\end{tcolorbox}

\soha{Listing~\ref{lst:fieldSSABug} presents a minimal program that reproduces the Field-SSA transformation bug uncovered during the construction of our formal semantics. The underlying functionality is straightforward: increment a field ({\tt x}) by $2$ and then assert its updated value. The program is intentionally structured to expose the flaw in field SSA transformation.}

\soha{More precisely, line~12 allocates an object {\tt ref} on the heap (let us assume it holds a concrete reference $471$) whose field {\tt x} is initialized to a symbolic integer $a$ (line~11), and whose reference field {\tt self} points directly back to {\tt ref} (line~6). Within the {\tt i < 0} branch, the program increments {\tt x} via the chained reference {\tt ref.self.x += 2} (line~15) and immediately copies the updated field to a local variable {\tt xCopy} (line~16). Under correct semantics, $x$ should be incremented by $2$, satisfying the assertion on line~18.}

\soha{This example fails under JR's previous Field-SSA rules because they rewritten field accesses as soon as an operand became locally resolvable, without enforcing strict sequential program order across iterations. At the start of the branch, the receiver {\tt ref.self} in the statement {\tt ref.self.x += 2} requires two levels of reference resolution and is not yet resolved to its concrete heap address ($471$). Rather than stalling subsequent operations, the old rules skipped ahead during the first iteration of the fixpoint loop to the next statement, {\tt xCopy = ref.x}. Because {\tt ref} required only one level of referencing, it was already resolved; JR thus immediately evaluated the read in this first iteration, assigning $xCopy := a$ using the stale initial symbolic value $a$. The write to $x$ ($a + 2$) was deferred and processed only in the second iteration of the fixpoint loop, after {\tt ref.self} was finally resolved to $471$. This out-of-order processing across iterations reordered the heap read before the heap write, completely disconnecting the local copy from the intended update and causing the assertion $xCopy == a + 2$ to fail.}

\soha{Although this example appears simple, this bug remained hidden for years across extensive empirical test suites and benchmarks. The bug only surfaces when two different levels of referencing point to the same underlying heap object, and the chained reference (which requires multiple iterations to resolve) appears before a direct reference to that same object in the program order. In typical benchmarks, objects are either accessed at the same depth, resolved immediately, or cleared before subsequent reads. This subtle aliasing order was exposed only while constructing the formal induction proof for Field-SSA soundness, where maintaining the P-Heap consistency relation required consistency between the symbolic environment and the concrete heap across every fixpoint iteration.}

\soha{To resolve this unsoundness, we updated the formal Field-SSA semantics to enforce sequential evaluation order across un-evaluated accesses. The updated rules suspend field elimination whenever complex field accesses or invocation boundaries are encountered. Technically, the transformation flushes the current field map to the concrete state by emitting $\text{addPuts}(P)$ statements and propagates a control flag $\top$ to halt further SSA rewriting for subsequent statements in the region. This mechanism forces all pending updates back to the heap, ensuring that subsequent fixpoint iterations restore the exact concrete state before evaluating dependent reads.}

\soha{While the formal Field-SSA semantics presented in this paper have been updated to prevent this out-of-order execution, the corresponding engineering fix for the implementation of JR is currently tracked under issue~$\#42$~\cite{bug} and is actively underway.}

\subsection{Conclusion}

In this paper, we presented the first formal treatment of path-merging symbolic 
execution, targeting Java Ranger as a representative tool. We formalized each 
of its code transformations and proved their soundness with respect to a 
simplified version of the Java concrete semantics, establishing that the 
path-merging process preserves program semantics. We hope this work lays the 
groundwork for future efforts toward formally verifying the correctness of 
symbolic execution tools more broadly, enforcing rigorous soundness guarantees 
for tools that are increasingly trusted in safety-critical verification 
contexts.
\begin{appendices}
\section{Properties}

Here, we list properties that are shared among semantics.

\begin{prop}
\label{prop:ssa}
    For a statement $s$ to be processed by Java Ranger, we assume that the statement obeys the Single Static Assignment property (SSA) for all local variable mappings; i.e., $SSA(s)$, such that the $SSA$ predicate is recursively defined as
\begin{align*}
    &SSA(s_1;s_2) = ASSN(s_1) \cap ASSN(s_2) = \emptyset  \, \land SSA(s_1) \, \land SSA(s_2) \\
    &SSA(if \,\,e \,\, then \,\, s_1 \,\, else \,\, s_2 ; \phi_1(x_1,e^1_1,e^2_1);\cdots;\phi(x_n,e^1_n;e^2_n)) 
    = \\
    & \qquad\qquad\qquad\qquad(ASSN(s_1) \cup ASSN(s_2)) \cap \{x_1,\cdots,x_2\} = \emptyset  \, 
    \land SSA(s_1) \, \land SSA(s_2)   \\
    &SSA(s) =\top \qquad \text{otherwise}
\end{align*}
where the $ASSN$ function returns the set of all assigned to variables. Observe that this extends to implicit assignments from the {\tt invoke}, {\tt getfield} and {\tt new} statements that collects the returned value.
\techreport{
\begin{align*}
    &ASSN(x := e) = \{x\} \\
    &ASSN(s_1;s_2) = ASSN(s_1) \cup ASSN(s_2)\\
    &ASSN(skip) = \emptyset\\
    &ASSN(if \,\,e \,\, then \,\, s_1 \,\, else \,\, s_2 ; \phi_1(x_1,e^1_1,e^2_1);\cdots;\phi(x_n,e^1_n;e^2_n))= \\
    & \qquad\qquad\qquad\qquad\qquad\qquad\qquad ASSN(s_1) \cup ASSN(s_2) \cup \{x_1,\cdots,x_n\} \\
    &ASSN(\invoke{x}{z}{g}{e})) = \{x\}\\
    &ASSN(\return{e}) = \emptyset\\
    &ASSN(\putfield{z}{f}{e}) = \emptyset\\
    &ASSN(\newobj{z}{c}) = \{ z\}\\
    &ASSN(\getfield{x}{z}{f}) = \{x\}\\
\end{align*}
}
\end{prop}
\begin{prop}[Statement $s$ always has at most a single return within an if-statement]
\label{prop:final-at-most-final-if-return}, i.e., $\func{atMostSingleIfReturn}(s) $. 
Let us define the function $atMostSingleIfReturn$ such that 
\begin{align*}
\func{atMostSingleIfReturn}(s)=\neg \func{hasReturn}(s) \vee \func{singleIfReturn}(s)\end{align*}

where $\func{singleIfReturn}$ and $\func{hasReturn}$ are defined to capture the case where the statement $s$ has at most a single return or none at all (see technical report for complete detail~\cite{techreport})
\techreport{
\begin{align*}
&\func{singleIfReturn}(s_1;s_2)= \neg \func{hasReturn}(s_1) \wedge \func{singleIfReturn}(s_2)\\
&\func{singleIfReturn}(\myif{e_1}{\return{e_2}}{skip})= true \\
&\func{singleIfReturn}(s) = false \qquad  \text{otherwise} \\
&\func{hasReturn}(\return{e}) = true\\
&\func{hasReturn}(\myif{e}{s_1}{s_2}) = \func{hasReturn}(s_1) \vee \func{hasReturn}(s_2)\\
&\func{hasReturn}(s_1;s_2)= \func{hasReturn}(s_1) \vee \func{hasReturn}(s_2)\\
&\func{hasReturn}(s) = false \qquad \text{otherwise}\\   
\end{align*}
}
\end{prop}

 \begin{prop}
[Methods in $\Theta$ must have a return statement] 
    \label{prop:theta-stmt-has-return}
    All invoked statements in $\Theta$ must contain a {\tt return} statement. More formally, $hasReturns(s)$ must hold.
 \end{prop}

\begin{prop}[Class object references]
\label{prop:class-references}
For each class $c \in C$ there exists a meta-class $c_c \in C$ and a class object reference $r_c \in R$ such that $\eta(r_c) = c_c$ and $\Delta(z_c) = r_c$. We write the class name directly to refer to its class object reference (e.g., $\mathit{Integer}$ denotes $z_{\mathit{Integer}}$). Finally, superscripts on environment variables denote successive versions, i.e., $\Delta^1, \Delta^2, \Delta^3$ abbreviate $\Delta', \Delta'', \Delta'''$ rather than exponentiation.
\end{prop}

\begin{prop}[Formal argument usage]
\label{prop:y-must-be-used-in-s}
All formal arguments of every method body in $\Theta$ are used in its statement:
$$\forall (\overrightarrow{y}.s) \in \mathit{range}(\Theta).\; \forall y \in \overrightarrow{y}.\; \mathit{isUse}(y,s)$$
\end{prop}

\begin{prop}[No initial concrete return value]
\label{prop:initial-no-return}
    The concrete state cannot be in the middle of returning from a method. More formally, it must be that 
    $$\mu=\bot$$
\end{prop}

\begin{prop}[$\eta_s=\eta$]
\label{prop:eta-unchanged}
    The reference typing mapping is carried over unchanged from the concrete semantics, and is never symbolic.
\end{prop}
\begin{prop}
\label{prop:no-sym-ref}
    No symbolic references: all references $r \in R$ in $\Delta_s$ and $H_s$ are concrete. That is, for all $(Id,n) \in \text{dom}(\Delta_s)$, if $\Delta_s(Id,n) \in R$ then $\Delta_s(Id,n)$ is a concrete reference, and similarly for $H_s(r,f) \in R$.
\end{prop}

\begin{prop}[Consistent Local Variable Mapping with Symbolic State]
\label{prop:variable-consistent}
    Concrete local variable mapping $\Delta$ is consistent with the symbolic local variable mapping $\Delta_s$ with respect to the path condition $\pi$. 

    More formally, we have that 
  \begin{align*}
    \Delta &\models (\Delta_s, \pi) \triangleq \concestep{\Delta}{\pi}{true}{} \wedge  \forall x \in \mathrm{Dom}(\Delta_s).\concestep{\Delta}{\Delta_s(x)}{v}{} \wedge v=\Delta(x)
\end{align*}
\end{prop}

\begin{prop}[Concrete Heap Consistent with Symbolic State]
\label{prop:heap-consistent}
    Concrete heap $H$ is consistent with the symbolic local variable mapping $\Delta_s$, and symbolic heap $H_s$ with respect to the path condition $\pi$. 

    More formally, we have that 
  \begin{align*}
    (\Delta,H) &\models_H (H_s,\pi) \triangleq \forall (r,f) \in \mathrm{Dom}(H).\concestep{\Delta}{H_s(r,f)}{v'}{} \wedge v'=H(r,f)
\end{align*}
\end{prop}

\begin{prop}[Symbolic local variable mapping is consistent with respect to the input $I$]
\label{prop:initial-deltas-no-I}
    Input variables are not used in any of the mapped expressions in initial $\Delta_s$.
    More formally, if $I$ is defined as $I=\func{getInput}(s)$, for some statement $s$ that is going to be summarized, we have that 
    $$\forall i \in \func{getInputs}(s).\forall x \in Dom(\Delta_s).\neg isUse(i,\Delta_s(x))$$ 
\end{prop}

\begin{prop}[No $\phi$ statements appear in  $s$]
\label{prop:no-phi}
\end{prop}

Let us define the function $atMostSingleIfReturn$ such that 
\begin{align*}
\func{atMostSingleIfReturn}(s)=\neg \func{hasReturn}(s) \vee \func{singleIfReturn}(s)\end{align*}

where $\func{singleIfReturn}$ and $\func{hasReturn}$ are defined to capture the case where the statement $s$ has at most a single return or none at all.
\journalreport{ See technical report for complete detail~\cite{techreport}}
\techreport{
\begin{align*}
&\func{singleIfReturn}(s_1;s_2)= \neg \func{hasReturn}(s_1) \wedge \func{singleIfReturn}(s_2)\\
&\func{singleIfReturn}(\myif{e_1}{\return{e_2}}{skip})= true \\
&\func{singleIfReturn}(s) = false \qquad  \text{otherwise} \\
&\func{hasReturn}(\return{e}) = true\\
&\func{hasReturn}(\myif{e}{s_1}{s_2}) = \func{hasReturn}(s_1) \vee \func{hasReturn}(s_2)\\
&\func{hasReturn}(s_1;s_2)= \func{hasReturn}(s_1) \vee \func{hasReturn}(s_2)\\
&\func{hasReturn}(s) = false \qquad \text{otherwise}\\   
\end{align*}
}
\begin{prop}[Rewritten statement always has at most a single return within an if-statement]
\label{prop:early-at-most-final-if-return}
 
$$\func{atMostSingleIfReturn}(\hat{s}) $$
\end{prop}

\begin{prop}[If statement $s$ has a return statement then $s$ must have a specific form]
\label{prop:final-s-has-a-specific-form}
For some statement $s''$, we have that

 $$\func{hasReturn}(s) \implies  (s=s'';\myif{\pi_r}{\return{e}}{skip}) \wedge \pi_r \neq false$$

\end{prop}

\begin{prop}[$\pi_r$ is true on all return paths] 
\label{prop:final-pi-r-iff-mu-s}
More formally, 

If
$$\Theta \vdash ((\eta,\Delta,H,\perp),s) \Rightarrow_c (\eta',\Delta',H',\mu) $$
then
$$\mu \neq \bot \text{ iff } (\Delta',\pi_r)\rightsquigarrow_c true$$
\end{prop}

\begin{prop}[ statement $s'$ has no return statements] 
\label{prop:final-ret-no-return}
More formally, $\neg hasReturn(s')$ is valid.
\end{prop}

\begin{prop}[Fresh variables have $u+1$ suffices] 
\label{prop:fresh-vars-increments-u}
    More formally, $(x,\bot)$ is in $s$, then all its occurrences with $s'$ must have been renamed to $(x,u+1)$. 
\end{prop}

\begin{prop}[Old input not in $s'$]
\label{prop:renaming-no-old-input}
More formally, Old input does not appear, i.e., is not used in $s'$. More formally if $(x,j) \in I$, then $\neg isUse((x,j),s')$
\end{prop}

\begin{prop}[Variables in original statement are not indexed]
\label{prop:initial-variables-not-rename}
    Given a statement $s$ that has not yet been proceed by JR, i.e., JR's algorithm has not yet run yet, it must be that all variables in $s$ has not been renamed before. More formally for all $x\in isUse(s)$, then $x$ must be of the form $(a,\bot)$, where $a$ is the name of the variable and $\bot$ location indicates that the variable has not been renamed before.
\end{prop}




\begin{prop}[Invoked statements have been prepared and stored in $\Theta_t$]
\label{prop:inline-theta-is-transformed}
    This transformation assume that all invoked methods have been transformed and stored in $\Theta_t$, where for each $c\in C$, $g \in G$ such that $\overrightarrow{y}.s=\Theta(c,g)$, we have a transformed definition of $\overrightarrow{y}.s$ of the form $(\mu_s,\pi_r),\overrightarrow{y}.s'=transform(s)$(refer to section~\ref{sec:statictransform} for more detail). 
\end{prop}


\begin{prop}[Reference variables always refer to concrete references]
\label{prop:ref-are-conc}
    $\forall z. r=\Delta_s(z).r \in R$, then it must be that $r$ is concrete. 
\end{prop}



\section{Static Transform}
\label{sec:statictransform}
\begin{algorithm}[H]
\DontPrintSemicolon
\SetAlgoLined
{\bf Input}: to-summarize statement $s$\;
    $s^1$ :=  $\gamma$-creation($s$)\;
    $(\pi_r^2,s^2)$ := eliminate early-return($s^1$)\;
    $(\mu_s^3, s^3)$ := remove final-return($s^2$)\;
{\bf return} ($(\mu_s^3,\pi_r^2), s^3$)
\caption{transform definition: an environment variable  without a superscript indicate the initial state}
\label{fig:transform}
\end{algorithm}

\begin{thm}[Transform is Sound]
\label{thm:transform-sound}
For 
any statement $s$, such that $\mu_s=\bot$.
Also, we assume any concrete $\delta=(\eta, \Delta, H, \nu, \mu)$

If

$$\Theta \vdash \concsstep{(\eta, \Delta, H, \nu, \mu)}{s}{(\eta', \Delta', H', \nu, \mu')}{} $$
%
$$((\mu_s,\pi_r),s') =transform(s)$$
$$\Theta \vdash \concsstep{(\eta, \Delta, H, \nu, \mu)}{s'}{(\eta'', \Delta'', H'', \nu'', \mu'')}{} $$
then
$$\eta'=\eta'',\Delta'=\Delta'',H'=H'',\nu'=\nu''$$
$$
\mu' = evalReturn(\Delta'',\pi_r,\mu_s),  \mu''=\mu'
$$

\begin{proof}
To proof this theorem, we apply various transformation from the transform function shown in Fig.~\ref{fig:transform}. 
We know that according to Corollary~\ref{cor:gamma-wrap-sound}, we have that
$$     
\rangerstepnolookup{s}{s^1}{\gamma-wrap}$$
$$\Theta \vdash \concsstep{(\eta, \Delta, H, \nu, \mu)}{s}{(\eta', \Delta', H', \nu', \mu')}{} $$
such that  
$$\Theta \vdash \concsstep{(\eta, \Delta, H, \nu, \mu)}{s^1}{(\eta', \Delta', H', \nu', \mu')}{}$$

The next transformation that the {\tt transform} function has is the eliminate early-return transformation on the rewritten statement $s^1$. In particular, we apply theorem~\ref{thm:return-wrap-sound} to get 
$$ s^1 \to_{ret-wrap} (\pi_r^2,s^2)$$
$$\Theta \vdash ((\eta,\Delta,H,\perp),s^1) \Rightarrow_c (\eta_2,\Delta_2,H_2,\mu_2) 
$$
then
$$\Theta \vdash ((\eta,\Delta,H,\perp),s^2) \Rightarrow_c (\eta_2,\Delta_2,H_2,\mu_2)$$

But, we have just shown from applying the $\gamma$-transformation theorem~\ref{cor:gamma-wrap-sound} that $s^1$ must evaluate in $(\eta,\Delta,H,\perp)$ to $(\eta', \Delta', H', \nu', \mu')$, then we can conclude that $(\eta_2=\eta',\Delta'=\Delta',H_2=H',\mu_2=\mu')$. 

The last transformation applied by the {\tt transform} function is the remove final-return transformation. In particular we can apply Corollary~\ref{cor:final-ret-sound} which states that 

$$\rangerstepnolookup{s^2}{(\mu_s^3,s^3)}{final-wrap} $$
$$\Theta \vdash \concsstep{(\eta, \Delta, H, \nu,\mu)}{s^2}{(\eta_3, \Delta_3, H_3, \nu_3, \mu_3)}{} $$

then 
$$\Theta \vdash \concsstep{(\eta, \Delta, H, \nu,\mu)}{s^3}{(\eta_4, \Delta_4, H_4, \nu_4,\mu_4)}{}$$
$$\eta_4=\eta_3,\Delta_4=\Delta_3,H_4=H_3,\nu_4=\nu_3,\mu^4=\mu$$
$$
\mu_3 = evalReturn(\Delta_4,\pi_r^2,\mu_s^3)$$
Here we have that $\eta''=\eta_4,\Delta''=\Delta_4,H''=H_4,\nu''=\nu_4, \mu''=\mu_4$, and $\pi_r=\pi_r^2,\mu_s=\mu_s^3$.
Given that we just established, from before, that $\Theta \vdash ((\eta,\Delta,H,\perp),s^2) \Rightarrow_c (\eta',\Delta',H',\mu')$. 
Therefore, from Corollary~\ref{cor:final-ret-sound}, we have $\eta''=\eta_4=\eta_3=\eta',\Delta''=\Delta_4=\Delta_3=\Delta',H''=H_4=H_3=H',\nu''=\nu_4=\nu_3=\nu'$, and that $  evalReturn(\Delta_4,\pi_r^2,\mu_s^3)=\mu_3=\mu'$, which is what we wanted to prove. 
\end{proof}
\end{thm}




\end{appendices}

\clearpage
\bibliography{sn-bibliography}

@inproceedings{dart,
  author = {Kasper Luckow and Marko Dimja\v{s}evi\'c and Dimitra Giannakopoulou and Falk Howar
    and Malte Isberner and Temesghen Kahsai and Zvonimir Rakamari\'c and Vishwanath Raman},
  title = {{JDart}: A Dynamic Symbolic Analysis Framework},
  booktitle = {Proceedings of the 22nd International Conference on Tools and Algorithms
    for the Construction and Analysis of Systems (TACAS)},
  series = {Lecture Notes in Computer Science},
  volume = {9636},
  publisher = {Springer},
  address = {Berlin, Heidelberg},
  editor = {Marsha Chechik and Jean-Fran{\c{c}}ois Raskin},
  year = {2016},
  pages = {442--459}
}

@Article{spf,
author="P{\u{a}}s{\u{a}}reanu, Corina S.
and Visser, Willem
and Bushnell, David
and Geldenhuys, Jaco
and Mehlitz, Peter
and Rungta, Neha",
title="Symbolic PathFinder: integrating symbolic execution with model checking for Java bytecode analysis",
journal="Automated Software Engineering",
year="2013",
month="Sep",
day="01",
volume="20",
number="3",
pages="391--425",
issn="1573-7535",
doi="10.1007/s10515-013-0122-2",
url="https://doi.org/10.1007/s10515-013-0122-2"
}

@inproceedings{adaptorsynth,
   author = {{Sharma}, Vaibhav and {Hietala}, Kesha and {McCamant},
		 Stephen},
    title = "{Finding Substitutable Binary Code By Synthesizing Adaptors}",
     year = 2018,
    month = apr,
booktitle = "11th IEEE Conference on Software Testing, Validation and Verification (ICST)"
}

@inproceedings{veritesting,
 author = {Avgerinos, Thanassis and Rebert, Alexandre and Cha, Sang Kil and Brumley, David},
 title = {Enhancing Symbolic Execution with Veritesting},
 booktitle = {Proceedings of the 36th International Conference on Software Engineering},
 series = {ICSE 2014},
 year = {2014},
 isbn = {978-1-4503-2756-5},
 location = {Hyderabad, India},
 pages = {1083--1094},
 numpages = {12},
 NOTurl = {http://doi.acm.org/10.1145/2568225.2568293},
 doi = {10.1145/2568225.2568293},
 acmid = {2568293},
 publisher = {ACM},
 address = {New York, NY, USA}
}

@InProceedings{cute,
author="Sen, Koushik
and Agha, Gul",
editor="Ball, Thomas
and Jones, Robert B.",
title="CUTE and jCUTE: Concolic Unit Testing and Explicit Path Model-Checking Tools",
booktitle="Computer Aided Verification",
year="2006",
publisher="Springer Berlin Heidelberg",
address="Berlin, Heidelberg",
pages="419--423",
isbn="978-3-540-37411-4"
}

@proceedings{feedbackinvariantdiscovery,
title = {ISSTA 2014: Proceedings of the 2014 International Symposium on Software Testing and Analysis},
year = {2014},
isbn = {9781450326452},
publisher = {Association for Computing Machinery},
address = {New York, NY, USA},
location = {San Jose, CA, USA}
}

@INPROCEEDINGS{syminfer,
author={T. {Nguyen} and M. B. {Dwyer} and W. {Visser}},
booktitle={2017 32nd IEEE/ACM International Conference on Automated Software Engineering (ASE)},
title={Symlnfer: Inferring program invariants using symbolic states},
year={2017},
volume={},
number={},
pages={804-814},
doi={10.1109/ASE.2017.8115691},
ISSN={null},
month={Oct}}

@inproceedings{HansenSS2009,
  author    = {Trevor Hansen and
               Peter Schachte and
               Harald S{\o}ndergaard},
  title     = {State Joining and Splitting for the Symbolic Execution of Binaries},
  booktitle = {Runtime Verification, 9th International Workshop, {RV} 2009, Grenoble,
               France, June 26-28, 2009. Selected Papers},
  pages     = {76--92},
  year      = {2009},
  NOTurl       = {https://doi.org/10.1007/978-3-642-04694-0\_6},
  NOTdoi       = {10.1007/978-3-642-04694-0\_6},
}

@inproceedings{kuznetsov,
 author = {Kuznetsov, Volodymyr and Kinder, Johannes and Bucur, Stefan
 and Candea, George},
  title = {Efficient State Merging in Symbolic Execution},
 booktitle = {Proceedings of the 33rd ACM SIGPLAN Conference on
 Programming Language Design and Implementation},
  series = {PLDI '12},
 year = {2012},
  isbn = {978-1-4503-1205-9},
 location = {Beijing, China},
  pages = {193--204},
 numpages = {12},
 publisher = {ACM},
  address = {New York, NY, USA},
}

@inproceedings{multise,
  author = {Sen, Koushik and Necula, George and Gong, Liang and Choi,
		 Wontae},
  title = {MultiSE: Multi-path Symbolic Execution Using Value
		Summaries},
  booktitle = {Proceedings of the 2015 10th Joint Meeting on
	 Foundations of Software Engineering},
  series = {ESEC/FSE 2015},
  year = {2015},
  isbn = {978-1-4503-3675-8},
  location = {Bergamo, Italy},
  pages = {842--853},
  numpages = {12},
  NOTurl = {http://doi.acm.org/10.1145/2786805.2786830},
  doi = {10.1145/2786805.2786830},
  acmid = {2786830},
  publisher = {ACM},
  address = {New York, NY, USA},
}

@inproceedings{necula2000translation,
  author       = {George C. Necula},
  editor       = {Monica S. Lam},
  title        = {Translation validation for an optimizing compiler},
  booktitle    = {Proceedings of the 2000 {ACM} {SIGPLAN} Conference on Programming
                  Language Design and Implementation (PLDI), Vancouver, British Columbia,
                  Canada, June 18-21, 2000},
  pages        = {83--94},
  publisher    = {{ACM}},
  address = {New York, NY, USA},
  year         = {2000},
  NOTurl          = {https://doi.org/10.1145/349299.349314},
  doi          = {10.1145/349299.349314},
  bibsource    = {dblp computer science bibliography, https://dblp.org}
}

@inproceedings{pnueli1998translation,
  author       = {Amir Pnueli and
                  Michael Siegel and
                  Eli Singerman},
  editor       = {Bernhard Steffen},
  title        = {Translation Validation},
  booktitle    = {Tools and Algorithms for Construction and Analysis of Systems, 4th
                  International Conference, {TACAS} '98, Held as Part of the European
                  Joint Conferences on the Theory and Practice of Software, ETAPS'98,
                  Lisbon, Portugal, March 28 - April 4, 1998, Proceedings},
  series       = {Lecture Notes in Computer Science},
  pages        = {151--166},
  publisher    = {Springer},
address="Berlin, Heidelberg",
  year         = {1998},
  NOTurl          = {https://doi.org/10.1007/BFb0054170},
  doi          = {10.1007/BFB0054170},
  bibsource    = {dblp computer science bibliography, https://dblp.org}
}

@article{leroy2009formal,
  author       = {Xavier Leroy},
  title        = {Formal verification of a realistic compiler},
  journal      = {Commun. {ACM}},
  volume       = {52},
  number       = {7},
  pages        = {107--115},
  year         = {2009},
  NOTurl          = {https://doi.org/10.1145/1538788.1538814},
  doi          = {10.1145/1538788.1538814},
  bibsource    = {dblp computer science bibliography, https://dblp.org}
}

@inproceedings{kumar2014cakeml,
  author       = {Ramana Kumar and
                  Magnus O. Myreen and
                  Michael Norrish and
                  Scott Owens},
  editor       = {Suresh Jagannathan and
                  Peter Sewell},
  title        = {{CakeML}: a verified implementation of {ML}},
  booktitle    = {The 41st Annual {ACM} {SIGPLAN-SIGACT} Symposium on Principles of
                  Programming Languages, {POPL} '14, San Diego, CA, USA, January 20-21,
                  2014},
  pages        = {179--192},
  publisher    = {{ACM}},
   address = {New York, NY, USA},
 year         = {2014},
  NOTurl          = {https://doi.org/10.1145/2535838.2535841},
  doi          = {10.1145/2535838.2535841},
  bibsource    = {dblp computer science bibliography, https://dblp.org}
}

@article{bilardi2003algorithms,
  author       = {Gianfranco Bilardi and
                  Keshav Pingali},
  title        = {Algorithms for computing the static single assignment form},
  journal      = {J. {ACM}},
  volume       = {50},
  number       = {3},
  pages        = {375--425},
  year         = {2003},
  NOTurl          = {https://doi.org/10.1145/765568.765573},
  doi          = {10.1145/765568.765573},
  bibsource    = {dblp computer science bibliography, https://dblp.org}
}

@inproceedings{schwartz2010all,
  author       = {Edward J. Schwartz and
                  Thanassis Avgerinos and
                  David Brumley},
  title        = {All You Ever Wanted to Know about Dynamic Taint Analysis and Forward
                  Symbolic Execution (but Might Have Been Afraid to Ask)},
  booktitle    = {31st {IEEE} Symposium on Security and Privacy, {SP} 2010, 16-19 May
                  2010, Berleley/Oakland, California, {USA}},
  pages        = {317--331},
  publisher    = {{IEEE} Computer Society},
  address      = {Washington, DC, USA},
  year         = {2010},
  NOTurl          = {https://doi.org/10.1109/SP.2010.26},
  doi          = {10.1109/SP.2010.26},
  bibsource    = {dblp computer science bibliography, https://dblp.org}
}

@article{deboer2021symbolic,
  author       = {De Boer, Frank S. and
                  Marcello M. Bonsangue},
  title        = {Symbolic execution formally explained},
  journal      = {Formal Aspects Comput.},
  volume       = {33},
  number       = {4-5},
  pages        = {617--636},
  year         = {2021},
  NOTurl          = {https://doi.org/10.1007/s00165-020-00527-y},
  doi          = {10.1007/S00165-020-00527-Y},
  bibsource    = {dblp computer science bibliography, https://dblp.org}
}

@article{voogd2025compositional,
  author       = {Erik Voogd and
                  {\AA}smund Aqissiaq Arild Kl{\o}vstad and
                  Einar Broch Johnsen and
                  Andrzej Wasowski},
  title        = {Compositional symbolic execution semantics},
  journal      = {Theor. Comput. Sci.},
  volume       = {1044},
  pages        = {115263},
  year         = {2025},
  NOTurl          = {https://doi.org/10.1016/j.tcs.2025.115263},
  doi          = {10.1016/J.TCS.2025.115263},
  bibsource    = {dblp computer science bibliography, https://dblp.org}
}

@article{porncharoenwase2022formal,
  author       = {Sorawee Porncharoenwase and
                  Luke Nelson and
                  Xi Wang and
                  Emina Torlak},
  title        = {A formal foundation for symbolic evaluation with merging},
  journal      = {Proc. {ACM} Program. Lang.},
  volume       = {6},
  number       = {{POPL}},
  pages        = {1--28},
  year         = {2022},
  url          = {https://doi.org/10.1145/3498709},
  doi          = {10.1145/3498709},
  bibsource    = {dblp computer science bibliography, https://dblp.org}
}

@inproceedings{lopes2015provably,
  author       = {Nuno P. Lopes and
                  David Menendez and
                  Santosh Nagarakatte and
                  John Regehr},
  editor       = {David Grove and
                  Stephen M. Blackburn},
  title        = {Provably correct peephole optimizations with {Alive}},
  booktitle    = {Proceedings of the 36th {ACM} {SIGPLAN} Conference on Programming
                  Language Design and Implementation, Portland, OR, USA, June 15-17,
                  2015},
  pages        = {22--32},
  publisher    = {{ACM}},
  address = {New York, NY, USA},
  year         = {2015},
  NOTurl          = {https://doi.org/10.1145/2737924.2737965},
  doi          = {10.1145/2737924.2737965},
  bibsource    = {dblp computer science bibliography, https://dblp.org}
}

@inproceedings{lopes2021alive2,
  author       = {Nuno P. Lopes and
                  Juneyoung Lee and
                  Chung{-}Kil Hur and
                  Zhengyang Liu and
                  John Regehr},
  editor       = {Stephen N. Freund and
                  Eran Yahav},
  title        = {{Alive2}: bounded translation validation for {LLVM}},
  booktitle    = {{PLDI} '21: 42nd {ACM} {SIGPLAN} International Conference on Programming
                  Language Design and Implementation, Virtual Event, Canada, June 20-25,
                  2021},
  pages        = {65--79},
  publisher    = {{ACM}},
  address = {New York, NY, USA},
  year         = {2021},
  NOTurl          = {https://doi.org/10.1145/3453483.3454030},
  doi          = {10.1145/3453483.3454030},
  bibsource    = {dblp computer science bibliography, https://dblp.org}
}

@inproceedings{vanhattum2024lightweight,
  author       = {Alexa VanHattum and
                  Monica Pardeshi and
                  Chris Fallin and
                  Adrian Sampson and
                  Fraser Brown},
  editor       = {Rajiv Gupta and
                  Nael B. Abu{-}Ghazaleh and
                  Madan Musuvathi and
                  Dan Tsafrir},
  title        = {Lightweight, Modular Verification for {WebAssembly}-to-Native Instruction
                  Selection},
  booktitle    = {Proceedings of the 29th {ACM} International Conference on Architectural
                  Support for Programming Languages and Operating Systems, Volume 1,
                  {ASPLOS} 2024, La Jolla, CA, USA, 27 April 2024- 1 May 2024},
  pages        = {231--248},
  publisher    = {{ACM}},
  address = {New York, NY, USA},
  year         = {2024},
  NOTurl          = {https://doi.org/10.1145/3617232.3624862},
  doi          = {10.1145/3617232.3624862},
  bibsource    = {dblp computer science bibliography, https://dblp.org}
}

@inproceedings{brown2020towards,
  author       = {Fraser Brown and
                  John Renner and
                  Andres N{\"{o}}tzli and
                  Sorin Lerner and
                  Hovav Shacham and
                  Deian Stefan},
  editor       = {Alastair F. Donaldson and
                  Emina Torlak},
  title        = {Towards a verified range analysis for {JavaScript} {JITs}},
  booktitle    = {Proceedings of the 41st {ACM} {SIGPLAN} International Conference on
                  Programming Language Design and Implementation, {PLDI} 2020, London,
                  UK, June 15-20, 2020},
  pages        = {135--150},
  publisher    = {{ACM}},
  address = {New York, NY, USA},
  year         = {2020},
  NOTurl          = {https://doi.org/10.1145/3385412.3385968},
  doi          = {10.1145/3385412.3385968},
  bibsource    = {dblp computer science bibliography, https://dblp.org}
}

@inproceedings{zhao2013formal,
  author       = {Jianzhou Zhao and
                  Santosh Nagarakatte and
                  Milo M. K. Martin and
                  Steve Zdancewic},
  editor       = {Hans{-}Juergen Boehm and
                  Cormac Flanagan},
  title        = {Formal verification of {SSA}-based optimizations for {LLVM}},
  booktitle    = {{ACM} {SIGPLAN} Conference on Programming Language Design and Implementation,
                  {PLDI} '13, Seattle, WA, USA, June 16-19, 2013},
  pages        = {175--186},
  publisher    = {{ACM}},
  address = {New York, NY, USA},
  year         = {2013},
  NOTurl          = {https://doi.org/10.1145/2491956.2462164},
  doi          = {10.1145/2491956.2462164},
  bibsource    = {dblp computer science bibliography, https://dblp.org}
}

@article{symbolicsurvey,
  author    = {Baldoni, Roberto and Coppa, Emilio and D'Elia, Daniele Cono and Demetrescu, Camil and Finocchi, Irene},
  title     = {A Survey of Symbolic Execution Techniques},
  journal   = {ACM Comput. Surv.},
  volume    = {51},
  number = {3},
  articleno = {50},
  publisher = {ACM},
  address = {New York, NY, USA},
  year = {2018}
}

@inproceedings{javaranger,
author = {Sharma, Vaibhav and Hussein, Soha and Whalen, Michael W. and McCamant, Stephen and Visser, Willem},
title = {Java Ranger: statically summarizing regions for efficient symbolic execution of Java},
year = {2020},
isbn = {9781450370431},
publisher = {Association for Computing Machinery},
address = {New York, NY, USA},
url = {https://doi.org/10.1145/3368089.3409734},
doi = {10.1145/3368089.3409734},
booktitle = {Proceedings of the 28th ACM Joint Meeting on European Software Engineering Conference and Symposium on the Foundations of Software Engineering},
pages = {123–134},
numpages = {12},
location = {Virtual Event, USA},
series = {ESEC/FSE 2020}
}

@misc{svcomp2026,
  title        = {{SV-COMP} 2026: 15th International Competition on Software Verification},
  howpublished = {\url{https://sv-comp.sosy-lab.org/2026/}},
  year         = {2026},
  note         = {Held at {TACAS} 2026, Torino, Italy, April 13, 2026}
}

@misc{svcomp2026rules,
  title        = {{SV-COMP} 2026: Definitions and Rules},
  howpublished = {\url{https://sv-comp.sosy-lab.org/2026/rules.php}},
  year         = {2026},
  note         = {15th International Competition on Software Verification, held at {TACAS} 2026, Torino, Italy}
}

@misc{techreport,
      title={Technical Report: A Formal Semantics for Java Symbolic Evaluation using Large-Block Encoding}, 
      author={Soha Hussein and Stephen McCamant and Kelton OBrien and Kuen-Bang Hou and Michael Whalen and Vaibhav Sharma},
      year={2026},
      eprint={2608.04513},
      archivePrefix={arXiv},
      primaryClass={cs.SC},
      url={https://arxiv.org/abs/2608.04513}, 
}

@inproceedings{ramos,
 author = {Ramos, David A and Engler, Dawson R.},
 title = {Practical, Low-effort Equivalence Verification of Real Code},
 booktitle = {Proceedings of the 23rd International Conference on Computer Aided Verification},
 series = {CAV'11},
 year = {2011},
 isbn = {978-3-642-22109-5},
 location = {Snowbird, UT},
 pages = {669--685},
 url = {http://dl.acm.org/citation.cfm?id=2032305.2032360},
 acmid = {2032360},
 publisher = {Springer-Verlag},
 address = {Berlin, Heidelberg},
}

@inproceedings{contractdr,
author = {Hussein, Soha and Rayadurgam, Sanjai and McCamant, Stephen and Sharma, Vaibhav and Heimdahl, Mats},
title = {Counterexample-Guided Inductive Repair of Reactive Contracts},
year = {2022},
isbn = {9781450392877},
publisher = {Association for Computing Machinery},
address = {New York, NY, USA},
url = {https://doi.org/10.1145/3524482.3527650},
doi = {10.1145/3524482.3527650},
booktitle = {Proceedings of the IEEE/ACM 10th International Conference on Formal Methods in Software Engineering},
pages = {46–57},
numpages = {12},
location = {Pittsburgh, Pennsylvania},
series = {FormaliSE '22}
}

@inproceedings{driller,
  title={{Driller: Augmenting Fuzzing Through Selective Symbolic Execution}},
  author={Nick Stephens and John Grosen and Christopher Salls and Audrey Dutcher and Ruoyu Wang and Jacopo Corbetta and Yan Shoshitaishvili and Christopher Kruegel and Giovanni Vigna},
  booktitle = {23rd Annual Network and Distributed System Security Symposium, {NDSS}
               2016, San Diego, California, USA, February 21-24, 2016},
  month={February},
  address={San Diego, CA},
  year={2016},
  publisher={The Internet Society},
  pages={1--16}
}

@inproceedings{angr,
  author    = {Yan Shoshitaishvili and Ruoyu Wang and Christopher Salls and
               Nick Stephens and Mario Polino and Andrew Dutcher and
               John Grosen and Siji Feng and Christophe Hauser and
               Christopher Kr{\"{u}}gel and Giovanni Vigna},
  title     = {{SOK:} (State of) The Art of War: Offensive Techniques in Binary Analysis},
  booktitle = {{IEEE} Symposium on Security and Privacy, {SP} 2016, San Jose, CA,
               USA, May 22-26, 2016},
  pages     = {138--157},
  publisher = {{IEEE} Computer Society},
  address   = {Los Alamitos, CA, USA},
  year      = {2016},
  url       = {https://doi.org/10.1109/SP.2016.17},
  doi       = {10.1109/SP.2016.17},
  bibsource = {dblp computer science bibliography, https://dblp.org}
}

@inproceedings{transport,
 author = {Sun, Wei and Xu, Lisong and Elbaum, Sebastian},
 title = {Improving the Cost-effectiveness of Symbolic Testing Techniques for Transport Protocol Implementations Under Packet Dynamics},
 booktitle = {Proceedings of the 26th ACM SIGSOFT International Symposium on Software Testing and Analysis},
 series = {ISSTA 2017},
 year = {2017},
 isbn = {978-1-4503-5076-1},
 location = {Santa Barbara, CA, USA},
 pages = {79--89},
 url = {http://doi.acm.org/10.1145/3092703.3092706},
 doi = {10.1145/3092703.3092706},
 acmid = {3092706},
 publisher = {ACM},
 address = {New York, NY, USA},
}

@online{bug,
  author       = {Java Ranger},
  title        = {Java Ranger Issue \#49},
  year         = {2024},
  url          = {https://github.com/vaibhavbsharma/java-ranger/issues/49},
  note         = {GitHub Issue}
}

@INPROCEEDINGS{testjr,
  author={Hussein, Soha and McCamant, Stephen and Sherman, Elena and Sharma, Vaibhav and Whalen, Mike},
  booktitle={2023 IEEE/ACM International Conference on Automation of Software Test (AST)}, 
  title={Structural Test Input Generation for 3-Address Code Coverage Using Path-Merged Symbolic Execution}, 
  year={2023},
  volume={},
  number={},
  pages={79-89},
  doi={10.1109/AST58925.2023.00012}}

@INPROCEEDINGS{jrfuzz,
  author={Hussein, Soha and McCamant, Stephen},
  booktitle={2025 40th IEEE/ACM International Conference on Automated Software Engineering Workshops (ASEW)}, 
  title={Improving Automated Program Verification for Java Programs with Fuzzing}, 
  year={2025},
  volume={},
  number={},
  pages={153-160},
  doi={10.1109/ASEW67777.2025.00037}}

\end{document}